\documentclass[a4paper,10pt,openright]{article} % Per avere margini destro e sinistro diversi
\usepackage[english]{babel} % per lingua. TeXnic dà warning per misteriosi motivi...
\usepackage[utf8]{inputenc} 
\usepackage{amsmath,amssymb, stmaryrd,amsthm,amsfonts,amscd,color,enumerate,eucal,latexsym,mathrsfs,mathtools,cancel,tikz}
\usepackage{epsfig, hieroglf, protosem}  % ciao
\usepackage{simplewick, bm}
\usepackage{verbatim}
\usepackage{hyperref}
\usepackage[all,cmtip]{xy}
\usetikzlibrary{matrix,calc}

\numberwithin{equation}{section}

\definecolor{LorColor}{RGB}{230, 74, 129}
\definecolor{PaColor}{RGB}{56,174,199}
\definecolor{NiColor}{RGB}{77,77,255}
\definecolor{ClaColor}{RGB}{0,0,255}
\definecolor{NiColoRed}{RGB}{255,77,77}
\definecolor{NiCitation}{RGB}{0,181,26}

\oddsidemargin 0cm      % left margin of right page 
\evensidemargin 0cm     % left margin of left page 
\textheight 20cm        % height of text  24 
\textwidth 16cm         % width of text  

\newtheoremstyle{TheoremStyle}% <name>
{3pt}% <Space above>
{3pt}% <Space below>
{\slshape}% <Body font>
{}% <Indent amount>
{\bf}%{\itshape}% <Theorem head font>
{:}% <Punctuation after theorem head>
{.5em}% <Space after theorem head>
{}% <Theorem head spec (can be left empty, meaning 'normal')>

\theoremstyle{TheoremStyle}
\newtheorem{theorem}{Theorem}[section]
\newtheorem{corollary}[theorem]{Corollary}
\newtheorem{proposition}[theorem]{Proposition}
\newtheorem{lemma}[theorem]{Lemma}
\newtheorem{definition}[theorem]{Definition}
\newtheorem{remark}[theorem]{Remark}
\newtheorem{example}[theorem]{Example} %[theorem]{Example}

%COMANDI PER I DIAGRAMMI
%ROMBO
\newcommand{\rombo}{\begin{tikzpicture}[thick,scale=1.2]
\draw (0,0) -- (0,0.3);
\draw (0,0.3) -- (.15,0.15);
\draw (0,0.3) -- (-.15,0.15);
\draw (0,0) -- (.15,0.15);
\draw (0,0) -- (-.15,0.15);
\end{tikzpicture}}
%FISH
\newcommand{\fish}{\begin{tikzpicture}[thick,scale=1.2]
\draw (0,0) edge [out=30,in=-30] node[above] {} (0,.25);
\draw (0,0) edge [out=150,in=210] node[above] {} (0,.25);
\end{tikzpicture}}
%FIAMMIFERO
\newcommand{\fiammifero}{\begin{tikzpicture}[thick,scale=1.2]
\draw (0,0) -- (0,0.3);
\filldraw (0,0.3)circle (1pt);
\end{tikzpicture}}
%PROPAGATORE
\newcommand{\propagatore}{\begin{tikzpicture}[thick,scale=1.2]
\draw (0,0) -- (0,0.3);
\end{tikzpicture}}

\title{A Microlocal Approach to Renormalization in Stochastic PDEs}

\author{
	Claudio Dappiaggi\thanks{CD:
		Dipartimento di Fisica,
		Universit\`a degli Studi di Pavia \& INFN, Sezione di Pavia, 
		Via Bassi 6,
		I-27100 Pavia,
		Italia;
		\mbox{claudio.dappiaggi@unipv.it}
	}
	\and
	Nicol\`o Drago \thanks{ND: Department of Mathematics, 
		Julius Maximilian University of W\"urzburg,
		Emil-Fischer-Stra\ss e 31,
		D-97074 W\"urzburg,
		Germany;
		\mbox{nicolo.drago@mathematik.uni-wuerzburg.de}
}
\and
Paolo Rinaldi \thanks{PR: Dipartimento di Fisica,
	Universit\`a degli Studi di Pavia \& INFN, Sezione di Pavia, 
	Via Bassi 6,
	I-27100 Pavia,
	Italia;
	\mbox{paolo.rinaldi01@universitadipavia.it}
}
\and
Lorenzo Zambotti \thanks{LZ: Laboratoire de Probabilit\'es, Statistique et Mod\'elisation, Sorbonne Universit\'e, Universit\'e de Paris, CNRS, 4 Place Jussieu, 75005 Paris, France; 
\mbox{zambotti@lpsm.paris}}
}

\date{\today}

\begin{document}
\maketitle
\begin{abstract}
	We present a novel framework for the study of a large class of non-linear stochastic PDEs, which is inspired by the algebraic approach to quantum field theory. The main merit is that, by realizing random fields within a suitable algebra of functional-valued distributions, we are able to use techniques proper of microlocal analysis which allow us to discuss renormalization and its associated freedom without resorting to any regularization scheme and to the subtraction of infinities. As an example of the effectiveness of the approach we apply it to the perturbative analysis of the stochastic $\Phi^3_d$ model.
\end{abstract}

\paragraph*{Keywords:}
Stochastic PDEs, Algebraic Quantum Field Theory, Renormalization.
\paragraph*{MSC 2020:} 81T05, 60H17.

\tableofcontents

\section{Introduction}\label{Sec: introduction}

Stochastic partial differential equations (SPDEs) play a distinguished role in modeling different phenomena, especially in physics. Notable examples range from turbulence, interface dynamics, {\it e.g.} the KPZ equation \cite{Kardar:1986xt}, and quantum field theory, {\it e.g.} stochastic quantization \cite{PW81}. From a mathematical viewpoint SPDEs have been thoroughly studied and a well-established framework is nowadays available -- see for example the recent reviews \cite{Chandra, Hairer09}. While linear SPDEs are well-understood, the non-linear ones are still under close scrutiny. Most often a non-linear behaviour is analyzed within a perturbative scheme, although Hairer's theory of regularity structures has provided a non-perturbative framework to study existence of solutions for a large class of equations \cite{Hairer13, Hairer14,Hairer15}, see also \cite{Da Prato} and the paracontrolled distribution approach of Gubinelli et al. \cite{GIP}. Alas, from these techniques one cannot extract a constructive method to build the expectation value and the correlations of the non perturbative solution.

In addition all approaches to non-linear SPDEs have to cope with a common hurdle, namely the necessity of applying a renormalization scheme. Heuristically speaking this is due to the presence of pathological singular behaviors. These can be ascribed to the necessity of considering suitable powers of the fundamental solution of the linear part of the underlying SPDE and such powers do not identify well-defined distributions. Renormalization is thus nothing but the generic name under which one recollects all different ways to give a coherent mathematical meaning in the framework of the theory of distributions to the above mentioned singular structures. 

Without entering at this stage into the details of different renormalization schemes, of their relations and of the ambiguities which they bring forth, we wish to highlight two key aspects.
The first is the most common way of dealing with such singularities in the theory of SPDEs, {\it e.g} \cite{Bruned, Hairer14}.
To be more concrete, we outline it via a prototypical example of what can occur.
Suppose we consider on $\mathbb{R}^d$, $d\geq 2$, the Laplace-Beltrami operator $\Delta=\sum\limits_{i=1}^d\frac{\partial^2}{\partial x_i^2}$ and its fundamental solution $G\in\mathcal{D}^\prime(\mathbb{R}^d\times\mathbb{R}^d)$.
In the analysis of non-linear stochastic PDEs whose linear contribution is given by $-\Delta$, it becomes necessary to consider powers of $G$, which are in general ill-defined.
To bypass this hurdle, one considers a one-parameter family of smooth integral kernels, whose associated distributions $G_\varepsilon$ converges weakly to $G$.
At this stage all powers of $G^n_\varepsilon$ are well-defined but their limit as $\varepsilon\to 0^+$ is no longer meaningful.
This can be ascribed to the presence of terms proportional either to the logarithm or to suitable inverse powers of the parameter $\varepsilon$.
A common way of dealing with such pathologies is to define the sought powers of $G^n$ by removing the singular contribution from $G^n_\varepsilon$ taking subsequently the limit as $\varepsilon$ tends to $0$.
This gives rise to well-defined distributions which are usually referred to as the {\em renormalized (or regularized)} $G^n_{\mathrm{ren}}\in\mathcal{D}^\prime(\mathbb{R}^d\times\mathbb{R}^d)$.
Such procedure, which is often pictorially indicated as the subtraction of infinities, is manifestly non-unique and leads therefore to introducing ambiguities known as renormalization constants.

The second aspect that we wish to highlight is that this kind of difficulties and problems is by far not unique to SPDEs, but it is a common feature in many other subjects. For example, within the framework of theoretical physics, the same exact difficulties occur in quantum field theory when dealing with interacting models at a perturbative level. While the renormalization scheme heuristically outlined above is usually adopted also in these cases, it presents several difficulties and limitations in its use when the underlying background is no longer $\mathbb{R}^n$ but it is replaced by a smooth manifold $M$ endowed with a Lorentzian or a Riemannian metric $g$, depending on the specific case under analysis. A very successful, mathematically rigorous approach to tackle these problems exists and goes under the name of {\em algebraic quantum field theory}, see \cite{BFDY15} for a recent review. 

In these scenarios, these problems can be successfully tackled thanks to two notable frameworks, {\em algebraic quantum field theory}, see \cite{BFDY15} for a recent review, and {\em renormalization in position space}. The former is a mathematically rigorous approach to quantum field theory which highlights the algebraic structure of the algebra of observables of the underlying theory. While it is fully equivalent to the standard formulation used in high energy physics when one considers Minkowski spacetime, it is especially suitable when working on a curved background. The latter has its roots in a seminal work from Epstein and Glaser \cite{Epstein-Glaser-73} and recent results in this direction can be found in \cite{Dang-Herscovich-19,Duetsch-Fredenhagen-Keller-Rejzner-14, Herscovich-19,Keller-09}. It is important to mention that combining these two approached requires the use of techniques traded from microlocal analysis \cite{Brunetti:2009qc,Hollands-Wald-01,Hollands-Wald-02,Rejzner:2016hdj}. The net advantage of this approach is two-fold. On the one hand it avoids any $\varepsilon$-regularization and subtraction of diverging quantities to give a distributional meaning to otherwise pathological singular structures. On the other hands it allows for a direct, complete characterization of the underlying freedom, in other words of the renormalization constants. 

The goal of this paper will be to start a programme aimed at combining together the main tools proper of the algebraic approach with the theory of stochastic partial differential equations. The main merits of our approach are twofold. On the one hand, we shall develop a framework aimed at encompassing renormalization in the analysis of SPDEs without resorting to any a priori regularization scheme. On the other hand we shall show that the algebraic approach allows for a perturbative construction both of the solutions and of the correlation functions of a large class of SPDEs, taking into account contemporary all possible renormalization freedoms. In addition, at each order in perturbation theory, the renormalization has the effect of bringing a modification to the underlying dynamics. The ensuing SPDE, often referred to as the renormalized equation, can be constructed explicitly in the algebraic approach and we will show it via a prototypical example.  In the next subsections we clarify which are our specific objectives and the main results that we shall prove. Since our target audience is broad encompassing both scientists whose main interest lies in probability theory and those whose expertise is mainly in the mathematical aspects of quantum field theory, we reckon that it is desirable to follow an unconventional approach. More precisely we split the introduction in two parts, each addressing the strengths and motivations at the hearth of this work as seen from the viewpoint of the two main research communities mentioned above. Although this leads forcefully to repeating some arguments and concepts, we feel that this course of actions is in the long run more beneficial. 

\subsection{The algebraic approach to SPDEs: the viewpoint from the probability theory side}

In the 80s SPDEs were introduced in some theoretical physics models inspired by quantum field theory, e.g. \cite{JM85,PW81}. The prototypical example considered in these papers is
\[
\Delta  u -  u^3 +\eta=0, \qquad x\in\mathbb R^d/\mathbb Z^d=\mathbb T^d,
\]
where $\eta$ is a {white noise}, namely a Gaussian random {distribution} on $\mathbb R^d$ characterized by
\[
\mathbb E\left[ \exp\left( \langle \eta,f\rangle\right)\right] = \exp\left(\frac12 \|f\|^2_{L^2(\mathbb R^d)}\right),
\]
for all $f\in C^\infty_c(\mathbb R^d)$.
For example, for $d=4$, $ u$ is expected to be a random distribution, namely $ u= u(\eta)\in
\mathcal D'(\mathbb T^4)$.
More precisely $ u$ is expected to belong a.s. to some negative Sobolev space
$H^{-\kappa}(\mathbb T^4)$, $\kappa>0$, and therefore the non-linearity $ u^3$ is \emph{ill-defined}.
This is what makes the above SPDE \emph{singular}: not only existence and uniqueness are problematic,
the very same {notion of solution} is unclear.

For $d=5$, $ u$ is expected to be a random distribution and to belong to some negative Sobolev space
$H^{-1/2}(\mathbb T^5)$ and again $ u^3$ is {ill-defined}.
The higher the dimension $d$, the lower the regularity of $ u$.

In the last 8 years there have been many progresses in the study of singular SPDEs, mainly driven by the theory of regularity structures \cite{Hairer13,Hairer14,Hairer15} and of paradifferential calculus \cite{GIP}. The result is that we have 
\begin{itemize}
	\item a proper notion of solutions,
	\item existence and uniqueness of such solutions for a class of equations,
	\item a non-perturbative approach.
\end{itemize}
With reference to the example of non linear SPDE considered above, the outcome is a map $\eta \mapsto  u= u(\eta)\in\mathcal D'(\mathbb T^d)$: {a random variable} 
with values in the {space of distributions}.

The construction of solutions works as follows: one considers a regularisation $\eta_\varepsilon=\rho_\varepsilon*\eta$ of the white noise, $*$ indicating the stochastic convolution;
then one considers the random regularised PDE (for $d=5$)
\begin{align}\label{Eq: prob-intro-renormalized-SPDE}
	\Delta \widehat  u_\varepsilon -\left(\widehat  u_\varepsilon\right)^3 \, {+\left(\frac{C_1}\varepsilon+C_2\log\varepsilon+R\right)\widehat  u_\varepsilon}+\eta_\varepsilon=0.
\end{align}
The constants $C_1,C_2$ are {fixed}, while $R\in\mathbb R$ can vary.
The theory shows that $$\widehat  u_\varepsilon\to\hat  u, \qquad\varepsilon\to 0$$ as a distribution, and 
$\widehat  u= \widehat u(R)$ is \emph{the renormalised solution} to the equation.
This approach is \emph{non-perturbative}. We have at the same time 
\begin{itemize}
	\item a finite family of {renormalisation parameters} -- $R\in\mathbb{R}$ in the case of Equation \eqref{Eq: prob-intro-renormalized-SPDE}.
	\item a {uniqueness statement} once the renormalisation parameters are fixed.
\end{itemize}

In particular:
\begin{itemize}
	\item for every choice of the renormalisation parameters, there is a unique solution. 
	\item for different choices of the renormalisation parameters, the renormalised solutions can possibly differ.
\end{itemize}

This approach works \emph{pathwise}, namely for a fixed realisation of the noise $\eta$. However,
it is not so simple to compute efficient formulae for the expectations of interesting functionals of the solution $\widehat u$. As a typical example, we would like to compute \emph{correlation functions} of $\widehat u$
\[
f(x_1,\ldots,x_n)={\mathbb E}[\widehat u(x_1)\cdots\widehat u(x_n)], \qquad x_i\in\mathbb T^d.
\]
There are two problems at this stage, namely, since $\widehat u$ is a (random) distribution, 
\begin{itemize}
	\item $f$ may also be not better than a distribution
	\item $f$ may be ill-defined since we are dealing with products of $\widehat{u}$'s.
\end{itemize}

\noindent It is worth noticing that, in a recent remarkable paper, \cite{Barashkov}, a formula has been shown for the {Laplace transform} of the Stochastic Quantization of $\Phi^4_3$ in $d=3$ in the non-perturbative setting (related to $ u$ for $d=5$)
\[
\Lambda(h)=\mathbb E[\exp(\langle \Phi^4_3,h\rangle)], \qquad h\in \mathcal D(\mathbb T^3).
\]
Although, in principle, this formula might allow to compute correlations functions, in practice, it does not seem so simple, due to the complexity of the functional $\Lambda$.

In this paper, we use techniques borrowed from {Algebraic QFT} in order to compute the correlation functions
which are expected for the solution to singular SPDEs. In the remainder of this section, we consider only for expository purposes the concrete example of 
\[
\Delta  u - {\lambda}  u^3 +\eta=0, \qquad x\in\mathbb R^d/\mathbb Z^d=\mathbb T^d.
\]
Also within this approach one has to deal with ill-defined distributions and {ambiguities} due to renormalisation.
At this stage we can treat only the {perturbative} approach, also known as pAQFT, namely the Taylor expansion with respect to the (small) parameter ${\lambda}$. More precisely, we write the above equation in its mild formulation
\[
 u=G*\eta-\lambda\, G* u^3
\]
where $G$ is the Green function of $\Delta$.  Subsequently we consider the $C^\infty(\mathbb{T}^d)$-valued functional $C^\infty(\mathbb T^d)\ni\varphi\mapsto\Psi=\Psi(\varphi)\in C^\infty(\mathbb T^d)$ 
\[
\Psi=\varphi-\lambda\, G*\Psi^3.
\]
Note that this is a completely deterministic setting. 
Moreover $G*\eta\in\mathcal D'(\mathbb T^d)$ has been replaced by $\varphi\in C^\infty(\mathbb T^d)$. To realize how this different problem helps in describing the original one, let us write $\Psi=\Psi(\lambda)$ as a {formal} power series
\[
\Psi\llbracket\lambda\rrbracket=\sum_{j\geq0}\lambda^j F_j\,,
\]
with $ F_j=F_j(\varphi)$ a {$C^\infty$-valued, polynomial functional} on $C^\infty(\mathbb T^d)$. 
The equation yields the explicit expressions
\[
\begin{split}	
	F_0&=\varphi,\qquad
	F_1=-G*\varphi^3,\qquad
	F_2=3G*(\varphi^2\, G*\varphi^3)\,,\\
	F_j&=-\sum_{j_1+j_2+j_3=j-1}G*(F_{j_1}F_{j_2}F_{j_3})\,,\qquad j\geq 3\,.
\end{split}
\]
The series may not be convergent, and this is the well-known conundrum of the perturbative approach.
However this is not the only problem: if we want to go back and replace $\varphi\in C^\infty(\mathbb T^d)$ with
$G*\eta\in\mathcal D'(\mathbb T^d)$, then all terms bar $F_0$ are ill-defined since they contain powers of $G*\eta\in\mathcal D'$.

Note that $G*\eta$ is Gaussian, centered and with {covariance}
\[
Q(x,y)=\int G(x,z)\, G(z,y)\,{\rm d}z
\]
where $Q\in\mathcal{D}^\prime(\mathbb T^d\times\mathbb T^d)$. Observe that, for $d\leq 3$ one can give a meaning to an algebra of polynomial functionals of $\varphi$, e.g.
\[
\Phi \cdot_Q \Phi (f;\varphi)= \int_{\mathbb T^d} \left(\varphi^2(x)+Q(x,x)\right) f(x)\,{\rm d}x
\]
where $\Phi(f;\varphi)=\int_{\mathbb T^d} f(x)\,\varphi(x)\,{\rm d}x$ and $f\in\mathcal D(\mathbb T^d)$.

However for $d\geq 4$ this expression is ill-defined since $Q(x,x)$ corresponds formally to the integral kernel of an element of $\mathcal{D}^\prime(\mathbb{T}^d)$ which should be realized as $Q\delta_2$, where $\delta_2$ is the Dirac delta supported on $\textrm{Diag}_2(\mathbb{T}^d)$ the diagonal of $\mathbb{T}^d\times\mathbb{T}^d$. Yet, techniques of {microlocal analysis} allow to find and characterize all $\widehat Q\in\mathcal D'(\mathbb T^d\times\mathbb T^d)$
which extend $Q$, giving thus a meaning to the restriction to $\textrm{Diag}_2(\mathbb{T}^d)$. This is based on the study of the {wavefront set} and of the {scaling degree} of the relevant
distributions, see \cite{Hormander-I-03} and \cite{Brunetti-Fredenhagen-00}.
We obtain \emph{existence} of a well defined product
\[
\Phi \cdot_Q \Phi (f;\varphi)= \int_{\mathbb T^d} \left(\varphi^2(x)+\widehat P(x)\right) f(x)\,{\rm d}x
\]
with $\widehat P\in\mathcal D'(\mathbb T^d)$. This is like a formal \emph{chaos expansion}.
More generally, we can define an algebra of
{distribution-valued polynomial functionals} of $\varphi$.

One might wonder about the \emph{uniqueness} of the above product. If $d\geq 4$ then there is a {family} of possible choices for $\cdot_Q$. These are the ambiguities due to {renormalisation} and they can be fully characterized.
For example any other possible choice of $\widehat Q\in\mathcal D'(\mathbb T^d)$ must satisfy
$\widehat Q-\widehat P\in\mathbb R{\bf 1}$.

It is very important to stress that the construction of the $\cdot_Q$ product is not merely algebraic but combines analytical and combinatorial tools: combinatorics shows which counterterms are needed
for renormalisation, analysis guarantees that the renormalised objects are well-defined as distributions.
In particular, in full analogy with the approaches of \cite{Hairer14,Hairer15} and of \cite{GIP}, if $d\leq 5$, it is sufficient to renormalize only a finite number of distributions to control that no divergences occur at any order in the perturbative series.  

Focusing once more on the problem in hand
\[
\Psi=\varphi-\lambda\, G*\Psi^3,
\qquad	\Psi\llbracket\lambda\rrbracket=\sum_{j\geq0}\lambda^j\, F_j\,,
\]
\[
\begin{split}	
{\textstyle\rm with} \qquad		F_0&=\varphi,\qquad
	F_1=-G*\varphi^3,\qquad
	F_2=3G*(\varphi^2\, G*\varphi^3)\,,\\
	F_j&=-\sum_{j_1+j_2+j_3=j-1}G*(F_{j_1}F_{j_2}F_{j_3})\,,\qquad j\geq 3\,.
\end{split}
\]
In order to give a proper description of the underlying renormalised SPDE, we renormalise by writing
\[
\widehat\Psi=\Phi-\lambda\, G*(\widehat\Psi\cdot_Q\widehat\Psi\cdot_Q\widehat\Psi),
\qquad\qquad
\widehat\Psi\llbracket\lambda\rrbracket=\sum_{j\geq0}\lambda^j\, \widehat F_j\,,
\]
\[
\begin{split}	
{\textstyle\rm with} \qquad	\widehat F_0&=\Phi,\qquad
	\widehat F_1=-G*(\Phi\cdot_Q\Phi\cdot_Q\Phi),\qquad
	\widehat F_2=3G*(\Phi\cdot_Q\Phi\cdot_Q G*(\Phi\cdot_Q\Phi\cdot_Q\Phi))\,,\\
	\widehat F_j&=-\sum_{j_1+j_2+j_3=j-1}G*(\widehat F_{j_1}\cdot_Q \widehat F_{j_2}\cdot_Q \widehat F_{j_3})\,,\qquad j\geq 3\,.
\end{split}
\]
The results of this paper show that all $\widehat F_j$'s are well-defined as distribution-valued polynomial functionals of $\varphi$, namely $\widehat F_j(\,\cdot\, ; \varphi)\in \mathcal D'$ and $C^\infty\ni\varphi\mapsto
\widehat F_j(f ; \varphi)$ is a polynomial functional for all $f\in\mathcal D$.
We also expect the following result: if $\widehat u=\widehat u(\lambda)$ is the renormalised solution to
\[
\Delta  u - \lambda u^3 +\eta=0
\]
then for all $f\in\mathcal D$ 
\[
\left.\frac{{\rm d}^k}{{\rm d} \lambda^k}\mathbb E[\langle\widehat u(\lambda),f\rangle]\right|_{\lambda=0} = k! \, \widehat F_k(f;0).
\]
We have an analogous construction which allows to compute correlation functions of $\widehat u$.
We also plan to study the connection between the two different sets of renormalisation constants in a future work.

\subsection{The algebraic approach to SPDEs: the viewpoint from the algebraic quantum field theory side}

In the following we outline how, to which extent and for which purpose one can analyze SPDEs starting from the standard viewpoint of the algebraic approach to quantum field theory. We indicate with $M$ a smooth, connected manifold of dimension $\dim M=d\geq 2$, while with $E$ we refer to an element of a class of microhypoelliptic linear partial differential operators, {\it cf.} \cite{Hormander-VolIII}, which we shall characterize in Section \ref{Sec: Basic definitions}. We denote $\mathcal{D}(M):=C^\infty_c(M)$ and $\mathcal{E}(M):=C^\infty(M)$.

For the sake of simplicity, at this stage we also assume that $E$ possesses a fundamental solution $G\in\mathcal{D}^\prime(M\times M)$. This hypothesis will be subsequently relaxed and it will suffice to consider parametrices 
associated to $E$. Concrete settings compatible with our hypotheses are 
\begin{itemize}
	\item the Laplace-Beltrami operator $-\Delta_g$ if $M$ is endowed with a Riemannian structure $g$ 
	\item the heat operator $-\partial_t+\Delta_h$ if $M$ is diffeomeorphic to $\mathbb{R}\times\Sigma$ and $\Sigma$ is endowed with a Riemannian structure $h$. 
\end{itemize}
In addition we consider the following class of semi-linear SPDEs
\begin{align}\label{Eq: intro differential equation}
	E\widehat{\psi}
	=\widehat{\xi}+ F[\widehat{\psi}]\,,
	%	\qquad\mu>0\,,
\end{align}
where $F:\mathbb{R}\to\mathbb{R}$ is a smooth non-linear potential. In Equation \eqref{Eq: intro differential equation} $\widehat{\psi}$ has to be interpreted as a distribution on $M$ which, once smeared with a test function, yields a random variable. At the same time $\widehat{\xi}$ denotes the standard realization of Gaussian white noise with vanishing mean and covariance $\mathbb{E}(\widehat{\xi}(x)\,\widehat{\xi}(y))=\delta(x-y)$. For more details on the formulation of stochastic PDEs we  refer for example to \cite{Chandra,Hairer09}. 

\begin{remark}\label{Rem: generality_1}
	Observe that we have chosen to work with Equation \eqref{Eq: intro differential equation} for simplicity of presentation. There is a priori no obstruction to considering more general non-linearities, where $F$ might depend 
	for example on the gradient of $\widehat{\psi}$. Yet this would lead to a more complicated analysis especially from the viewpoint of the notation and therefore we decided to focus our attention on a restricted class of SPDEs.
\end{remark}

Equation \eqref{Eq: intro differential equation} can be written in its integral form
\begin{align}\label{Eq: intro integral equation}
	\widehat{\psi}
	=G\textproto{s}\widehat{\xi}
	-G\textproto{s}F[\widehat{\psi}]
	=\widehat{\varphi}
	-G\textproto{s}F[\widehat{\psi}]\,,
\end{align}
where $\textproto{s}$ indicates the stochastic convolution, {\it cf.} \cite{Chandra}, while $\widehat{\varphi}(x):=(G\textproto{s}\widehat{\xi})(x)$ is a distribution with values in random variables. Notice that, once smeared against any test-function $f\in\mathcal{D}(M)$, $\widehat{\varphi}(x)$ is still Gaussian with mean and covariance 
\begin{align}\label{Eq: expectation value intro}
	\mathbb{E}(\widehat{\varphi}(x))=0\,,\qquad
	\mathbb{E}(\widehat{\varphi}(x)\,\widehat{\varphi}(y))=Q(x,y):=(G\circ G)(x,y)\,,
\end{align}
where $\circ$ is the composition of distribution, {\it cf.} Equation \eqref{Eq: distribution composition}. Note that if
$F_1,F_2\in \mathcal{D}(M\times M)$ then $F_1\circ F_2\in  \mathcal{D}(M\times M)$ is given by
\begin{equation}\label{Eq:F1F2}
	F_1\circ F_2\,(x,y) =\int_M F_1(x,z)\, F_2(z,y)\, \mathrm{d}\mu(z)\, ,
\end{equation}
where $d\mu(z)$ is a suitable integration measure. If $F\in \mathcal{D}(M\times M)$ and $F(x,y)=F(y,x)$ then we have moreover
\begin{equation}\label{Eq:FF}
	F\circ F\, (x,x)=\int_M F^2(x,z)\, \mathrm{d}z\, .
\end{equation}
However in the general case where $G\in\mathcal{D}'(M\times M)$, $G\circ G$ and $G^2$ may be ill-defined.

\begin{remark}\label{Rem: awareness of IR problems}
	The stochastic convolution between $G$ and $\widehat{\psi}$ might be a priori ill-defined unless for example $G\in\mathcal{E}^\prime(M\times M)$. Yet this occurs only in special circumstances, {\it e.g.} if $M$ is a compact manifold and $E$ is a suitable elliptic operator such as $-\Delta_g+m^2$ where $-\Delta_g$ is the Laplace-Beltrami operator built out of a Riemannian metric $g$ on $M$, while $m^2>0$ is constant, which, in concrete models, is interpreted as a mass term. Observe that a similar problem occurs in the definition of $G\circ G$. These hurdles, often referred to as infrared problems in the mathematical physics literature, are bypassed using suitable cut-off functions. In this section we avoid introducing these and we are implicitly assuming that we are considering only those scenarios where such problems do not occur. In the main body of this work, we shall instead address the issue when necessary. 
\end{remark}

Observe that both in Equation \eqref{Eq: intro differential equation} and in Equation \eqref{Eq: intro integral equation} the term $F[\widehat{\psi}]$ might not be well-defined as it can contain powers of an a priori generic distribution. As a concrete example consider $F[\widehat{\psi}]=\widehat{\psi}^3$ and try to solve iteratively Equation \eqref{Eq: intro differential equation}. In other words, if we consider a formal power series, $\widehat{\psi}\llbracket\lambda\rrbracket=\sum\limits_{j\geq 0}\lambda^j \widehat{\psi}_j$, one obtains 
\begin{align}\label{Eq: intro perturbative sequence}
	\widehat{\psi}_0:=\widehat{\varphi}\,,\qquad
	\widehat{\psi}_1=-G\textproto{s}\widehat{\varphi}^3\,,\qquad
	\widehat{\psi}_j=-G\textproto{s}\sum_{j_1+j_2+j_3=j-1}\widehat{\psi}_{j_1}\widehat{\psi}_{j_2}\widehat{\psi}_{j_3}\qquad
	j\geq 1\,.
\end{align}
Hence one has to cope with the problem that $\widehat{\varphi}^3$ is not a priori well defined. Similarly, if one tries to look for solutions via a fixed point argument, formally one considers the sequence 
\begin{align}\label{Eq: intro fixed point sequence}
	\widehat{\psi}_0:=\widehat{\varphi}\,,\qquad
	\widehat{\psi}_k=	\widehat{\varphi}-G\textproto{s}(\widehat{\psi}_{k-1}^3)\qquad k\geq 1\,,
\end{align}
where the same kind of problem as in Equation \eqref{Eq: intro perturbative sequence} occurs. 

The obstruction in defining the powers of $\widehat{\varphi}$ are referred to in the theoretical and mathematical physics literature as ultraviolet singularities. It is instructive to observe how this problem manifests itself at the level of expectation values. For example, using Equations \eqref{Eq: expectation value intro}-\eqref{Eq:F1F2}-\eqref{Eq:FF}
\begin{align}\label{Eq: singular expectation value intro}
	\mathbb{E}(\widehat{\varphi}^2(f))
	=(G\circ G)(f\delta_{\mathrm{Diag}_2})=G^2(f\otimes 1)\,,
\end{align}
where $\delta_{\mathrm{Diag}_2}\in\mathcal{D}'(M\times M)$ is the Dirac delta distribution centred on the diagonal $\mathrm{Diag}_2$ of $M\times M$. The latter expression is ill-defined as a distribution due to the presence of the square $G^2$.
On the one hand it is interesting to observe that, in this last formula, the difficulties arise only from the singular behaviour of the fundamental solution of the underlying linear operator $E$.
On the other hand, $G^2(x,y)$ is well-defined whenever $x\neq y$ and, therefore $G^2\in\mathcal{D}^\prime(M\times M\setminus\textrm{Diag}_2)$ where $\textrm{Diag}_2=\{(x,x)\;|\;x\in M \}$.
Hence the question whether one can give a coherent meaning to Equation \eqref{Eq: singular expectation value intro} is tantamount to asking whether it is possible to extend $G^2$ to the diagonal $\textrm{Diag}_2$ as a distribution.
A structurally identical problem occurs if one considers the expectation value of different functions of $\widehat{\varphi}$ and the key ingredient of renormalization consists of addressing whether such extensions do exist and, if so, whether they are unique or not.

\vskip .2cm 

A completely different scenario, where the same class of problems occurs, is relativistic quantum field theory. Especially the past twenty years have seen the development of a mathematical framework which also allows to encompass at a structural level the role and the features of renormalization. This setting goes under the name of algebraic quantum field theory (AQFT). For reasons of practicality and conciseness we cannot enter into a detailed explanation of AQFT and we refer to a few recent reviews for further details, {\it cf.} \cite{BFDY15, Fredenhagen:2014lda}. Yet we stress that the main rationale behind the algebraic approach to quantum field theory is the identification of two key elements. The first is a $*$-algebra $\mathcal{A}_{\mathrm{rel}}$ whose elements encompass the structural properties of the underlying theory, in particular dynamics. The second is a state, that is a positive, normalized linear functional over $\mathcal{A}_{\mathrm{rel}}$ out of which one can recover the underlying expectation values and correlation functions. Once more we stress that this language in combination with tools of microlocal analysis allows for a rigorous implementation of renormalization, {\it cf.} \cite{Brunetti:2009qc,Hollands-Wald-01,Hollands-Wald-02}. 

Although AQFT has been mainly developed with the application to relativistic quantum field theories in mind, whose dynamics is ruled by hyperbolic partial differential equations on Lorentzian backgrounds, such framework is very versatile and it can be adapted to rather different contexts, such as Riemannian manifolds endowed with elliptic partial differential operators -- see for example \cite{CDDR20,DDR20}.

The goal of this paper is to adapt the framework of AQFT and in particular its implementation of renormalization to the study of SPDEs in the form of Equation \eqref{Eq: intro differential equation}. The first step in this quest is the identification of a suitable counterpart of the algebra $\mathcal{A}_{\mathrm{rel}}$ mentioned above. Inspired by the so-called functional formalism \cite{DDR20,Fredenhagen:2014lda}, we do not focus directly on the stochastic behaviour encoded in $\widehat{\xi}$, rather we consider $\mathcal{D}^\prime(M;\operatorname{Pol})$, namely distributions with values in polynomial functionals over $\mathcal{E}(M)$ -- see Section \ref{Sec: Basic definitions} for the precise definitions. Herein we identify two distinguished elements
\begin{equation}\label{e:Phi}
	\Phi(f;\varphi)=\int_M \varphi f\mu,\qquad
	\boldsymbol{1}(f;\varphi)=\int_M f\mu
\end{equation}
where $\varphi\in\mathcal{E}(M)$ while $f\in\mathcal{D}(M)$, whereas $\mu$ indicates a volume form over $M$, which is henceforth fixed once and for all. One can realize that $\Phi$ and $\boldsymbol{1}$ can be used as generators of an algebra $\mathcal{A}$ whose composition is the pointwise product, {\it cf.} Section \ref{Sec: Basic definitions}. Yet $\mathcal{A}$ falls short of our goal since neither the dynamics ruled by Equation \eqref{Eq: intro differential equation} nor the randomness due to $\widehat{\xi}$ are encoded in $\mathcal{A}$. Still following the rationale of AQFT, this shortcoming is overcome by deforming the product of $\mathcal{A}$ so that the new product, indicated with $\cdot_Q$, encompasses the information both of the dynamics and of the randomness of the underlying SPDE. Given any two admissible functional-valued distributions $\tau_1,\tau_2\in\mathcal{A}\subset\mathcal{D}^\prime(M;\operatorname{Pol})$, the new product is formally realized as:
\begin{align}\label{Eq: formal delta-product}
	(\tau_1\cdot_Q \tau_2)(f;\varphi)
	&=\sum_{k\geq 0}\frac{1}{k!}[(\delta_{\mathrm{Diag}_2}\otimes Q^{\otimes k})\cdot(\tau_1^{(k)}\widetilde{\otimes} \tau_2^{(k)})](f\otimes1_{1+2k};\varphi)
	=\sum_{k\geq 0}t_k(f\otimes1_{1+2k};\varphi)\,,
\end{align}
where $\tau_i^{(k)}$, $i=1,2$ indicate the $k$-th functional derivatives, while $Q=G\circ G$ -- see Section \ref{Sec: Basic definitions} and Section \ref{Sec: construction of AcdotQ} for the definitions. Furthermore the symbol $\widetilde{\otimes}$ indicates that the integral kernel of $(\delta_{\mathrm{Diag}_2}\otimes Q^{\otimes k})\cdot(\tau_1^{(k)}\widetilde{\otimes} \tau_2^{(k)})$ reads
\begin{equation}\label{Eq: tilde tensor product}
	\delta(x_1,x_2)\prod_{i=1}^kQ(z_i,y_i)\,\tau_1^{(k)}(x_1,z_1,\dots,z_k)\,\tau_2^{(k)}(x_2,y_1,\dots,y_k) .
\end{equation}
For example, we expect that
\[
[\Phi\cdot_{Q} \Phi](f;\varphi)
=\int_M f(x)[\varphi^2(x)+Q(x,x)]\mathrm{d}\mu(x)\,,
\]
where $\Phi$ has been defined in \eqref{e:Phi}. This expression is in general only formal since $Q=G\circ G$ is
ill-defined on the diagonal $\mathrm{Diag}_2$.

One can realize that Equation \eqref{Eq: formal delta-product} is a priori ill-defined since the involved functional-valued distributions $t_k$ are well-defined only as distributions on $M^{2k+2}$ up to the total diagonal $\mathrm{Diag}_{2k+2}$ of $M^{2k+2}$, that is, $t_k(\cdot\,;\varphi)\in\mathcal{D}'(M^{2k+2}\setminus\mathrm{Diag}_{2k+2})$.
Yet, adapting to the case in hand techniques from microlocal analysis similarly to the case of relativistic quantum field theory \cite{Brunetti-Fredenhagen-00}, one can prove that, under mild hypotheses on the underlying operator $E$, it is possible extend $t_k$ to a distribution $\hat{t}_k\in\mathcal{D}^\prime(M^{2k+2})$.
Such extension procedure, when existent, might not be unique, but the ambiguities in the procedure are classified.
In other words we can give a mathematically precise meaning to Equation \eqref{Eq: formal delta-product}. This is the first main result of our work, namely
\begin{enumerate}
	\item we prove existence of the algebra $\mathcal{A}_{\cdot_Q}$, {\it cf.} Theorem \ref{Thm: Gamma cdotQ existence} and Corollary \ref{Cor: construction of AcdotQ},
	\item we classify and characterize the (renormalization) freedom in constructing $\mathcal{A}_{\cdot_Q}$, {\it cf.} Theorem \ref{Thm: GammacdotQ uniqueness}.
\end{enumerate}

\begin{remark}\label{Rem: Significance for SPDEs}
	The algebra $\mathcal{A}_{\cdot Q}$ encompasses at the algebraic level the fundamental structures encoded by the white noise. As a matter of fact the expectation value of the random field $\mathbb{E}(\widehat{\varphi}^2(x))$ as per Equation \eqref{Eq: expectation value intro} corresponds to evaluating at the configuration $\varphi=0$ the product of two generators $\Phi$ of $\mathcal{A}_{\cdot Q}$, that is, for all $f\in\mathcal{D}(M)$
	$$\left(\Phi\cdot_Q\Phi\right)(f;0)=\widehat{Q\delta_{\mathrm{Diag}_2}}(f),$$
	where $=\widehat{Q\delta_{\mathrm{Diag}_2}}$ indicates a renormalized version of the otherwise ill-defined composition between the operator $Q$ in Equation \eqref{Eq: expectation value intro} and $\delta_{\mathrm{Diag}_2}$, as discussed in the proof of Theorem \ref{Thm: Gamma cdotQ existence}, see page \pageref{Qdelta}.
	With a similar calculation one can realize that $\mathcal{A}_{\cdot Q}$ encompasses all other relevant renormalized expectation values, {\it e.g.} for all $f\in\mathcal{D}(M)$
	$$\mathbb{E}(\widehat{\varphi}^k)(f)=(\underbrace{\Phi\cdot_Q\ldots\cdot_Q\Phi}_k)(f;0).$$
\end{remark}

Hence the algebraic approach has the advantage of providing a systematic way of dealing with renormalization without resorting to any regularization scheme. Yet the structures introduced above are still not sufficient to catch all the information which one can extract from an SPDE. As a matter of fact correlations between the random fields are completely neglected. For this reason and still inspired by the algebraic approach to quantum field theory, we introduce an additional structure built out of $\mathcal{A}_{\cdot Q}$, namely 
$$\mathcal{T}(\mathcal{A}_{\cdot_Q})=\mathcal{E}(M)\oplus\bigoplus_{\ell>0}\mathcal{A}_{\cdot_Q}^{\otimes\ell}$$
Similarly to the preceding part of our analysis we endow $\mathcal{T}(\mathcal{A}_{\cdot_Q})$ with a deformation of its natural product which allows to encode the structure of the correlation functions of the underlying SPDE. Focusing on a concrete example, our goal is to give a meaning to formal expression such as 
\begin{multline*}
	[\Phi^2\bullet_Q \Phi^2](f_1\otimes f_2;\varphi)
	=\int\limits_{M\times M}
	f_1(x_1)f_2(x_2)\big[
	\varphi(x_1)^2\varphi(x_2)^2
	\\+4\varphi(x_1)Q(x_1,x_2)\varphi(x_2)
	+2Q^2(x_1,x_2)
	\big]\mathrm{d}\mu(x)\mathrm{d}\mu(y),
\end{multline*}
where $f_1,f_2\in\mathcal{D}(M)$, while $\varphi\in\mathcal{E}(M)$. Also in this case one has to cope with the problem that in general $Q^2\in\mathcal{D}^\prime(M\times M\setminus\mathrm{Diag}_2)$ and, as in the construction of the algebra $\mathcal{A}_{\cdot Q},$ we need tools from microlocal analysis to prove that there exists at least one extension of $Q^2$ to the whole $M\times M$ and to classify whether ambiguities arise in this process. This analysis leads us to our second set of main results:
\begin{enumerate}
	\item We prove existence of a deformation of $\mathcal{T}(\mathcal{A}_{\cdot Q})$, referred to as $\mathcal{A}_{\bullet_Q}$, {\it cf.} Theorem \ref{Thm: GammabulletQ existence},
	\item We classify and characterize the renormalization freedom in constructing $\mathcal{A}_{\bullet_Q}$, {\it cf.} Theorem \ref{Thm: GammabulletQ uniqueness}.
\end{enumerate} 

\begin{remark}\label{Rem: significance for SPDEs correlations}
	The algebra $\mathcal{A}_{\bullet Q}$ encompasses at the algebraic level the information on the correlations between the underlying random fields. More precisely, it allows to express any expectation value such as $\mathbb{E}(\Phi(f_1)\Phi(f_2))=(\Phi\bullet_Q\Phi)(f_1\otimes f_2;0)$ where the right hand side is once more evaluated at the configuration $\varphi=0$, while renormalization enters the game in the definition of the product $\bullet_Q$.
\end{remark}

With the new structures introduced it is possible to revert the attention back to Equation \eqref{Eq: intro differential equation} and to Equation \eqref{Eq: intro integral equation}. If one focuses on the sequences as in Equation \eqref{Eq: intro perturbative sequence} or in Equation \eqref{Eq: intro fixed point sequence}, each term in the expansion can be given a meaning in terms of the functionals of the algebra $\mathcal{A}_{\cdot Q}$. Evaluation at the configuration $\varphi=0$ gives rise to the expectation value of each term in the series. As we analyze in detail in Section \ref{Sec: Application to the Phi3d Model} considering for definiteness the specific example of the $\Phi^3_d$-model, the algebraic approach has the net advantage of allowing a control of the renormalization freedoms present at each order in perturbation theory without going through any regularizing procedure. As we discuss in Section \ref{Sec: Renormalized equation} we are also able to construct the so-called renormalized equation, namely at each order in perturbation theory one needs to modify the underlying SPDE to take into account the effect of the renormalization procedure. Most notably our construction has two additional advantages. On the one hand it allows to keep track not only of the expectation value, but also of the correlations between the solutions that we individuate. On the other hand our procedure is quite explicit as we show with a concrete example in Section \ref{Sec: computations at first order}. Herein we compute explicitly the possible renormalization extensions of the $n$-th powers of the fundamental solutions of the heat operator on $\mathbb{R}^n$, which play a key role in the perturbative construction of the solutions for stochastic $\Phi^3_d$ model.

Yet one should bear in mind that the algebraic method does not offer any information on the convergence, neither of Equation \eqref{Eq: intro perturbative sequence} nor of Equation \eqref{Eq: intro fixed point sequence} but it allows only to give precise meaning to each term of the series. Any improvement in this direction goes through a better understanding of how one can intermingle the whole framework of regularity structures with the algebraic one.

\vskip .3cm

The work is organized as follows: In Section \ref{Sec: Basic definitions} we discuss the general framework we are working with. In particular we introduce the notion of functional-valued distributions, the space $\mathcal{D}^\prime_{\mathrm{C}}(M;\operatorname{Pol})$ and its main analytic properties. We discuss how to endow such space with an algebra structure via a deformation of the pointwise one by means of a smooth integral kernel. In Section \ref{Sec: construction of AcdotQ} we discuss the construction of the algebra $\mathcal{A}_{\cdot Q}$ proving an existence theorem. In Section \ref{Sec: Correlations and bulletQ product} we focus instead on the correlations between the random field discussing the algebra $\mathcal{A}_{\bullet_Q}$ and proving an existence theorem. In Section \ref{Sec: Uniqueness theorems} we discuss the freedom in the construction both of $\mathcal{A}_{\cdot Q}$ and of $\mathcal{A}_{\bullet Q}$. In Section \ref{Sec: Application to the Phi3d Model} we apply the algebraic approach to a concrete example: the $\Phi^3_d$ model. We analyze it at a perturbative level discussing up to the first order both the solutions of the underlying SPDE and their two-point correlations. Finally in Section \ref{Sec: Diverging diagrams in the sub-critical case} we show within the algebraic framework that, in the so-called sub-critical cases, {\it i.e.} $d=2,3$ the number of distributions needing to be renormalized is finite. In the appendix we outline succinctly some basic definitions and notions from microlocal analysis, in particular the notion of scaling degree which plays a key role in studying whether a distribution can be renormalized and whether some freedom in this process occurs.

\paragraph{Acknowledgements.}
We are thankful to  M. Capoferri, F. Faldino, G. Montagna and N. Pinamonti for helpful discussions on the topic and for the comments on this manuscript. N.D. is supported by a Postdoctoral Fellowship of the Alexander von Humboldt Foundation.
The work of P.R. was partly supported by a fellowship of the ``Progetto Giovani GNFM 2019" under the project ``Factorization Algebras vs AQFTs on Riemannian manifolds" fostered by the National Group of Mathematical Physics (GNFM-INdAM).

\section{General Framework}\label{Sec: Basic definitions}

The goal of this section is twofold. On the one hand we want to fix the basic notations and conventions. On the other hand, we introduce the key analytic and algebraic tools which are of relevance in our analysis. Although our main results are contained in Sections \ref{Sec: construction of AcdotQ}, \ref{Sec: Correlations and bulletQ product} and \ref{Sec: Uniqueness theorems}, here we shall prove a few structural properties of the underlying mathematical framework, which might be of independent interest. 

Throughout this paper both $M$ and $N$ will indicate a finite dimensional smooth manifold, such that $\partial N=\partial M=\emptyset$ while $\mathbb{D}(M)$ and $\mathbb{D}(N)$ denote the associated density bundle. We recall that this is smooth line bundle over $M$ whose fiber at each $p\in M$ is $\mathbb{D}_p(M)\equiv\mathbb{D}(T_pM)$, the set of densities on $T_pM$, {\it cf.} \cite[Ch. 16]{Lee}. For definiteness we shall assume that a reference top-density, say $\mu_M$ and $\mu_N$, has been chosen and, whenever one of the manifolds is endowed with a Riemannian structure, it coincides with the metric induced volume form.
On top of $M$ we consider the following specific classes of microhypoelliptic, {\it cf.} \cite[Ch. XXII]{Hormander-VolIII}, linear differential operators
\begin{enumerate}
	\item a second order, elliptic, partial differential operator $E$ on $M$,
	\item $E=\partial_t-\widetilde{E}$ if the manifold $M$ is homeomorphic to $\mathbb{R}\times\Sigma$, $t$ is the standard Euclidean coordinate over $\mathbb{R}$ while $\widetilde{E}$ is a $t$-independent, second order, differential, elliptic operator on $\Sigma$. 
\end{enumerate}
In addition we indicate with $P$ a parametrix associated to $E$, {\it cf.} \cite{Hormander-VolIII}, while with $E^*$ we refer to its formal adjoint whose parametrix is denoted by $P^*$. For the reader's convenience we recall its definition

\begin{definition}\label{Def: parametrix}
	Let $M$ be a smooth manifold and let $E$ be any scalar, pseudodifferential operator on $M$. We call {\em parametrix} a bi-distribution $P\in\mathcal{D}^\prime(M\times M)$ such that 
	$$PE-\mathrm{id}|_{\mathcal{D}(M\times M)}\in\mathcal{E}(M\times M),\quad\textrm{and}\quad EP-\mathrm{id}|_{\mathcal{D}(M\times M)}\in\mathcal{E}(M\times M).$$
\end{definition}

\begin{remark}\label{Rem: non compact M}
In the following sections, unless stated otherwise we work under the assumption that $M$ is a a compact $d$-dimensional manifold $M$, while $E$ is an elliptic operator, whose parametrix is indicated with $P\in\mathcal{D}'(M\times M)$.

Observe that the compactness assumption can be dropped at the price of introducing an arbitrary cut-off $\chi\in\mathcal{D}(M)$. As a matter of fact expressions such as $P(f\otimes\varphi)$, $f\in\mathcal{D}(M)$ and $\varphi\in\mathcal{E}(M)$ are otherwise meaningless. Hence, if $M$ is non-compact, one ought to replace everywhere $P$ with $P\cdot(1\otimes\chi)\in\mathcal{D}^\prime(M\times M)$, where $\cdot$ denotes here the pointwise product between  $P\in\mathcal{D}^\prime(M\times M)$ and the smooth function $1\otimes\chi\in\mathcal{E}(M\times M)$.
\end{remark}

\begin{remark}\label{Rem: parabolic case}
Since we are also interested in the case where $M=\mathbb{R}\times\Sigma$, while $E=\partial_t-\widetilde{E}$, we stress that all our results are valid also in this case provided that one introduces a cut-off function as per Remark \ref{Rem: non compact M} and that one makes minor modifications to the proofs. These will be highlighted in specific remarks which will be denoted throughout the text as the ``parabolic case''.
\end{remark}

\begin{remark}\label{Rem: possible extensions}
We stress that most of our analysis and results could be applied to other classes of microhypoelliptic operators, provided that the necessary changes in the proofs are implemented. Yet we decided to focus on the above two classes since, in our opinion, they are the most significant ones in concrete applications. In addition working in full generality would have required dealing with many specific, different sub-cases, with the risk of losing clarity in the presentation.	
\end{remark}

\subsection*{Analytic Tools}

In this section we introduce the key analytic tools which are at heart of this paper. In particular we discuss functional-valued distributions and how they can codify expectation values and correlations of (shifted) regularized random fields. In the following with $\mathcal{D}(N)$ ({\em resp.} $\mathcal{E}(M)$) we refer to the space of smooth and compactly supported functions on $N$ ({\em resp.} smooth functions on $M$) endowed with the standard locally convex topology. 

\begin{remark}
	In agreement with Remark \ref{Rem: non compact M} we are implicitly assuming that both $M$ and $N$ are compact manifolds. In this case there is no reason to distinguish between $\mathcal{D}(M)$ and $\mathcal{E}(M)$. Yet, since this difference becomes relevant when the compactness assumption is dropped, {\it cf.} Remark \ref{Rem: non compact M}, we feel that is convenient to keep the two symbols separate in order to highlight which is the correct one to use when $M$ is a non-compact manifold.
\end{remark}

In addition with $\mathcal{D}_\mu(N)\equiv\Gamma^\infty_0(\bigwedge^{\textrm{top}}T^*N)$ and $\mathcal{E}_\mu(M)\equiv\Gamma^\infty(\bigwedge^{\textrm{top}}T^*M)$ we denote respectively the smooth and compactly supported sections of the bundle of densities on $N$ and the smooth sections of the bundle of densities on $M$. Observe that every element of $\mathcal{D}(N)$ induces one of $\mathcal{D}_\mu(N)$ via the correspondence $\mathcal{D}(N)\ni f\mapsto f\mu_N$ and, similarly each $\varphi\in\mathcal{E}(M)$ induces $\varphi_{\mu_M}\doteq\varphi\mu_M\in\mathcal{E}_\mu(M)$.
We shall adopt the convention to omit the subscripts $M$ and $N$ whenever they are obvious from the context.

\begin{definition}\label{Def: functional-valued distributions}
	We call {\bf functional-valued distribution} $\tau\in\mathcal{D}'(N;\operatorname{Fun})$ a map
	\begin{align*}
		\tau\colon \mathcal{D}(N)\times\mathcal{E}(M)\ni(f,\varphi)\mapsto \tau(f;\varphi)\in\mathbb{C}\,,
	\end{align*}
	which is linear in the first component and continuous in the locally convex topology of $\mathcal{D}(N)\times\mathcal{E}(M)$. At the same time we indicate with $\tau^{(k)}\in\mathcal{D}'(N\times M^k;\operatorname{Fun})$ the $k$-order functional derivatives of $\tau\in\mathcal{D}'(N;\operatorname{Fun})$ such that, for all $f\in\mathcal{D}(N)$, $\psi_1,\ldots,\psi_k,\varphi\in\mathcal{E}(M)$,
	\begin{align}\label{Eq: functional derivative}
		\tau^{(k)}(f\otimes\psi_1\otimes\ldots\otimes\psi_k;\varphi)\doteq\frac{\partial^k}{\partial s_1\cdots\partial s_k}\tau(f;s_1\psi_1+\ldots +s_k\psi_k+\varphi)\bigg|_{s_1=\ldots s_k=0}\,.
	\end{align}
 Furthermore, we call $\tau\in\mathcal{D}'(M;\operatorname{Fun})$ $\varphi$-polynomial if there exists $\bar{k}\in\mathbb{N}$ such that $\tau^{(k)}=0$ for all $k\geq\bar{k}$.
	We denote by $\mathcal{D}'(N;\operatorname{Pol})$ the vector space of functional-valued $\varphi$-polynomial distributions.
\end{definition}

Notice that, in view of Definition \ref{Def: functional-valued distributions}, $\tau^{(k)}$ is symmetric and compactly supported in the last $k$ arguments.

\begin{remark}\label{Rem: Directional derivative}
	It is convenient to introduce a counterpart of a directional derivative in this framework. More precisely, for a given $\psi\in \mathcal{E}(M)$ we can define
	\begin{align}\label{Eq: functional differential}
	\delta_\psi\colon\mathcal{D}'(N;\operatorname{Fun})
	\to\mathcal{D}'(N;\operatorname{Fun})\,,\qquad
	[\delta_\psi \tau](f;\varphi)
	\doteq\tau^{(1)}(f\otimes\psi;\varphi)\,.
	\end{align}
\end{remark}

\begin{example}\label{Ex: examples of functional-valued polynomial distributions}
    For the reader's convenience we give a few notable examples of $\varphi$-polynomial, which one can expect to play a key r\^ole in concrete models. In particular, fixing $N=M$ in Definition \ref{Def: functional-valued distributions}, the functional $\Phi$ defined in \eqref{e:Phi} belongs to $\mathcal{D}'(M;\operatorname{Pol})$.
 In analogy with this notation, we set $\Phi^2\in\mathcal{D}'(M;\operatorname{Pol})$ as 
	\begin{align*}
		\Phi^2(f;\varphi)
		=\int\limits_M \varphi^2(x)f(x)\mathrm{d}\mu(x)\,,
		\quad\forall\, f\in\mathcal{D}(M)\,, \
		\forall\,\varphi\in \mathcal{E}(M)\,.
	\end{align*}
Another notable example is $\Phi P\circledast\Phi\in\mathcal{D}'(M;\operatorname{Pol})$
\begin{align*}
\left[\Phi \, P\circledast\Phi\right](f;\varphi)
=\int\limits_M\varphi(x)\, (P\circledast\varphi)(x) \, f(x)\mathrm{d}\mu(x)\,,\quad\forall\, f\in\mathcal{D}(N)\,, 
\,\forall\,\varphi\in \mathcal{E}(M)\,.
\end{align*}
where $(P\circledast\varphi)(f):=P(f\otimes\varphi)$ is well-defined since, in this case, $\mathcal{E}(M)\equiv\mathcal{D}(M)$ on account of the compactness of $M$.  
	
Denoting with $\tau^{(1)}(x,z_1;\varphi)$, $\tau^{(2)}(x,z_1,z_2;\varphi)$ the integral kernels of the first- and second-order functional derivatives of $\tau\in\mathcal{D}'(M;\operatorname{Pol})$, a direct application of Equation \eqref{Eq: functional derivative} yields
	\begin{align*}
		[\Phi^2]^{(1)}(x,z;\varphi)
		&=2\varphi(x)\delta_{\mathrm{Diag}_2}(x,z)\,,\qquad
		[\Phi P\circledast\Phi]^{(1)}(x,z;\varphi)
		=(P\circledast\varphi)(x)\delta_{\mathrm{Diag}_2}(x,z) 
		+\varphi(x) P(x,z)\,,\\
		[\Phi^2]^{(2)}(x,z_1,z_2;\varphi)
		&=2\delta_{\mathrm{Diag}_3}(x,z_1,z_2)\,,\qquad
		[\Phi P\circledast\Phi]^{(2)}(x,z_1,z_2;\varphi)
		=2\delta_{\mathrm{Diag}_2}(x,z_1)P(x,z_2)\,,
%		\equiv 2P(z_1,z_2)\,,
	\end{align*}
where $\delta_{\mathrm{Diag}_n}(x,z_1,\dots,z_{n-1})$, $n\geq 2$, is the integral kernel of the Dirac-delta distribution on $M^n$.
\end{example}

\begin{remark}\label{Rmk: Pol isomorphic space}
	Observe that, starting from Definition \ref{Def: functional-valued distributions}, for every $\tau\in\mathcal{D}'(N;\operatorname{Pol})$, $f\in\mathcal{D}(N)$ and $\varphi\in \mathcal{E}(M)$, it holds that, for $\lambda\in\mathbb{C}$,
	\begin{align}\label{Eq: rescaling of a polynomial}
		\tau(f;\lambda\varphi)
		=\sum_{k=0}^\infty\frac{1}{k!}\frac{\mathrm{d}^k}{\mathrm{d}\mu^k}\tau(f;\mu\varphi)\bigg|_{\mu=0}\lambda^k
		=\sum_{k=0}^\infty\frac{1}{k!}\tau^{(k)}(f\otimes\varphi^{\otimes k};0)\lambda^k\,,
	\end{align}
	where $\varphi^{\otimes k}=\underbrace{\varphi\otimes\dots\otimes\varphi}_k$, while
	\begin{align*}
		\frac{\mathrm{d}^k}{\mathrm{d}\mu^k}\tau(f;\mu\varphi)\bigg|_{\mu=0}
		&=\frac{\mathrm{d}^k}{\mathrm{d}\mu^k}\tau(f;(\mu+\mu_1+\ldots+\mu_k)\varphi)\bigg|_{\mu=\mu_1=\ldots=\mu_k=0}
		\\&=\frac{\partial^k}{\partial\mu_1\cdots\partial\mu_k}\tau(f;(\mu_1+\ldots+\mu_k)\varphi)\bigg|_{\mu_1=\ldots=\mu_k=0}
		=\tau^{(k)}(f\otimes\varphi^{\otimes k};0)\,.
	\end{align*}
	Observe that the right hand side of Equation \eqref{Eq: rescaling of a polynomial} is well-defined since, being $\tau$ a polynomial, only a finite number of derivatives are non-vanishing. Denoting with $\tau^{(k)}(x,z_1,\ldots,z_k)$ the integral kernel of the distribution $\tau^{(k)}(\cdot\,;0)$ it descends
	\begin{align*}
		\tau(f;\varphi)
		=\sum_{k=0}^\infty\frac{1}{k!}\int\limits_{N\times M^k} \tau^{(k)}(x,z_1,\ldots,z_k)f(x)\varphi(z_1)\cdots\varphi(z_k)
		\mathrm{d}\mu_N(x)\mathrm{d}\mu_M(z_1)\cdots\mathrm{d}\mu_M(z_k)\,.
	\end{align*}
	This realizes an isomorphism of topological vector spaces
	\begin{align*}
		\mathcal{D}'(N;\operatorname{Pol})
		\simeq\bigoplus_{k\geq 0}\mathcal{D}'(N\times M^k)\,.
	\end{align*}
\end{remark}

\noindent In the forthcoming analysis we will exploit extensively the properties of the following distinguished subspaces of $\mathcal{D}'(N;\operatorname{Pol})$. Before moving to the actual definition, we need some preparatory notation. For any but fixed $k\in\mathbb{N}$, we denote by $I_1\uplus\ldots\uplus I_\ell$ an arbitrary partition of $\{1,\ldots,k\}$ into $\ell$ disjoint non-empty subsets and we call $|I_i|$, $i=1,\dots,\ell$, the cardinality of the set $I_i$. In addition we indicate with $\delta_{\mathrm{Diag}_{|I_i|}}$ the Dirac delta distribution supported on the submanifold $\mathrm{Diag}_{|I_i|}=\{(x,\dots,x)\in M^{|I_i|}\;|\;x\in M\}$ and we adopt the convention that, if the cardinality of $|I_i|=1$ then $\mathrm{Diag}_{|I_i|}=\emptyset$.

\begin{definition}\label{Def: Dc functionals}
	For $k\in\mathbb{N}$ we call
	\begin{align}\label{Eq: WF-compatible functional-valued distributions}
		\mathcal{D}_{\mathrm{C}}'(M^k;\operatorname{Pol})
		:=\{\tau\in\mathcal{D}'(M^k;\operatorname{Pol})\,|\,
		\operatorname{WF}(\tau^{(n)})\subseteq \mathrm{C}_{k+n}\,\forall n\geq 0\}\,,
	\end{align}
	where adopting the notation $\widehat{x}_{k}=(x_1,\dots,x_k)\in M^k$,
	\begin{multline}\label{Eq: C-set}
		\mathrm{C}_{k}
		:=\{(\widehat{x}_{k},\widehat{\xi}_{k})\in T^*M^{k}\setminus\{0\}\,|\,\\
		\exists\ell\in\{1,\ldots,k-1\}\,,\,
		\{1\ldots,k\}=I_1\uplus\ldots\uplus I_\ell\,,\;\textrm{such that}\\
		\forall i\neq j\,,\,
		\forall (a,b)\in I_i\times I_j\,,\;\textrm{then}\;
		x_a\neq x_b\,,\\
		\textrm{and}\;\forall j\in\{1,\ldots,\ell\}\,,\,
		(\widehat{x}_{I_j},\widehat{\xi}_{I_j})\in\operatorname{WF}(\delta_{\mathrm{Diag}_{|I_j|}})
		\}\,,
	\end{multline}
where $(\widehat{x}_{I_j},\widehat{\xi}_{I_j})=(x_i,\xi_i)_{i\in I_j}\in T^*M^{|I_j|}$ for all $j\in\{1,\ldots,\ell\}$ and where $\operatorname{WF}$ stands for the wavefront set -- \textit{cf.} Appendix \ref{App: Wave Front Set kurzgesagt}.
\end{definition}
	
%	  -- where $|I|$ denotes the cardinality of $I$.
%	Recalling equation \eqref{Eq: WF of Dirac delta on total diagonal} we have that $(\widehat{x}_{I_j},\widehat{\xi}_{I_j})\in\operatorname{WF}(\delta_{\mathrm{Diag}_{|I_j|}})$ entails $\widehat{x}_{I_j}\in\mathrm{Diag}_{|I_j|}$ as well as $\sum_{i\in I_j}\xi_i=0$.
	
%Notice that the conditions $\widehat{x}_{I_j}\in\mathrm{Diag}_{I_j}$ for all $j\in\{1,\ldots,\ell\}$ and $x_a\neq x_b$ for all $a\in I_i\neq I_j\ni b$ entail that there exists a unique partition $I_1,\ldots,I_\ell$ with the properties as above.
%Moreover, for $k=0$ we find $\mathrm{C}_0=\emptyset$. \clacomment{lo $0$ non ci azzecca nella definizione e non si capiscono le stringhe con i cappucci.}

\begin{remark}\label{Rmk: Dc smoothness and functional derivative}
	Notice in particular that $\tau\in\mathcal{D}_{\mathrm{C}}'(M;\operatorname{Pol})$ is a (functional-valued) distribution generated by a smooth function. As a matter of fact, setting $k=1$ in Equation \eqref{Eq: C-set}, $C_1=\emptyset$.
\end{remark}
\begin{remark}\label{Rmk: additivity property of Ck}
	By direct inspection Equation \eqref{Eq: C-set} entails that $\mathrm{C}_{k_1}\otimes \mathrm{C}_{k_2}\subseteq \mathrm{C}_{k_1+k_2+1}$.
	Moreover, the following property also holds true:
	\begin{align}
		(x,\widehat{u}_{k_1},\xi_1,\widehat{\nu}_{k_1})\in\mathrm{C}_{k_1}\,,\,
		(x,\widehat{u}_{k_2},\xi_2,\widehat{\nu}_{k_2})\in\mathrm{C}_{k_2}\Longrightarrow
		(x,\widehat{u}_{k_1},\widehat{u}_{k_2},\xi_1+\xi_2,\widehat{\nu}_{k_1},\widehat{\nu}_{k_2})\in\mathrm{C}_{k_1+k_2}\,.
	\end{align}
\end{remark}

\begin{remark}\label{Rmk: Dc tensor product}
	Notice that $\mathcal{D}_{\mathrm{C}}'(M;\operatorname{Pol})$ is stable under $\delta_\psi$ for all $\psi\in \mathcal{E}(M)$ -- \textit{cf.} Equation \eqref{Eq: functional differential}.
	
	Moreover, given $n_1,\ldots,n_m\in\mathbb{N}$ and $\tau_j\in\mathcal{D}'(M^{n_j};\operatorname{Pol})$ for all $j\in\{1,\ldots,m\}$ we define the functional-valued tensor product distribution $\tau_1\otimes\ldots\otimes \tau_m\in\mathcal{D}'(M^n;\operatorname{Pol})$, where $n=n_1+\ldots+n_m$, by
	\begin{align}\label{Eq: Dc tensor product}
		(\tau_1\otimes\ldots\otimes \tau_m)(f_1\otimes\ldots\otimes f_m;\varphi)
		=\tau_1(f_1;\varphi)\cdots \tau_m(f_m;\varphi)\,,
	\end{align}
	for all $f_j\in\mathcal{D}(M^{n_j})$, $j\in\{1,\ldots,m\}$, and $\varphi\in \mathcal{E}(M)$.
	A direct computation shows that, for all $k\geq 0$,
	\begin{align*}
		\left(\tau_1\otimes\ldots\otimes \tau_m\right)^{(k)}
		=\sum_{\substack{k_1,\dots,k_m\\ k_1+\ldots+k_m=k}}\tau_1^{(k_1)}\otimes\ldots\otimes \tau_m^{(k_n)}\,.
	\end{align*}
	As a consequence, if $\tau_j\in\mathcal{D}_{\mathrm{C}}'(M;\operatorname{Pol})$ for all $j\in\{1,\ldots,m\}$ then $\tau_1\otimes\ldots\otimes \tau_m\in\mathcal{D}_{\mathrm{C}}'(M^n;\operatorname{Pol})$.
\end{remark}
%\begin{example}
%	The functional-valued distributions $\Phi^2,\PhiP\circledast\Phi$ introduced in example \ref{Ex: examples of functional-valued polynomial distributions} are in fact elements of $\mathcal{D}_{\mathrm{C}}'(M;\operatorname{Pol})$.
%\end{example}

%The model we shall construct in the forthcoming section will heavily rely on the above notion of functional-valued $\varphi$-polynomial distributions.
%In fact, our plan is to define, as a first step, a linear map $\Pi\colon\mathcal{A}\to\mathcal{D}_{\mathrm{C}}'(M;\operatorname{Pol})$ -- \textit{cf.} definition \ref{Def: pointwise algebra} and theorem \ref{Thm: Gamma cdotQ existence}.
\noindent For later convenience we discuss the following result. As stressed before, we make use of the concept of scaling degree associated to distributions which is outlined in Appendix \ref{App: Scaling degree}.

\begin{lemma}\label{Lem: if it holds for tau it holds for Ptau}
	Let $\tau\in\mathcal{D}_{\mathrm{C}}^\prime(M;\operatorname{Pol})$.
	Then $P\circledast\tau\in\mathcal{D}_{\mathrm{C}}'(M;\operatorname{Pol})$ where
	\begin{align*}
		[P\circledast \tau](f;\varphi)
		:=\tau(P\circledast f;\varphi)\,,
		\qquad\forall\, f\in\mathcal{D}(M)\,,\,
		\forall\,\varphi\in \mathcal{E}(M)\,,
	\end{align*}
	where $P\circledast f\in\mathcal{D}'(M)$ is defined by $(P\circledast f)(h):=P(h\otimes f)$.
%	 indicates the partial evaluation of $P\in\mathcal{D}^\prime(M\times M)$ against $f\in\mathcal{D}(M)$.
	Moreover, if $\operatorname{sd}_{\mathrm{Diag}_{p+1}}(\tau)^{(p)}<\infty$, for $p\in\mathbb{N}$, it holds
	\begin{align}\label{Eq: P convolution functional scaling degree}
	\operatorname{sd}_{\mathrm{Diag}_{p+1}}(P\circledast\tau)^{(p)}
	<\infty\,,
	\end{align}
	where $\operatorname{sd}_{\mathrm{Diag}_{p+1}}$ is the scaling degree along the submanifold $\mathrm{Diag}_{p+1}=\{(x,\dots,x)\in M^{p+1}\;|\;x\in M\}$, {\it cf.} Appendix \ref{App: Scaling degree}.
\end{lemma}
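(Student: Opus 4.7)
The plan is to proceed in three stages, addressing respectively the well-definedness of the object $P\circledast\tau$, the wavefront-set membership in the family $\mathrm{C}_k$, and the scaling-degree bound.

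First, I would note that Remark \ref{Rmk: Dc smoothness and functional derivative} tells us that $\tau(\cdot\,;\varphi)$ is represented by a smooth function on $M$ for every $\varphi\in\mathcal{E}(M)$. Because $M$ is compact, $P\circledast f\in\mathcal{D}'(M)$ pairs unambiguously with this smooth function, so the prescription $(P\circledast\tau)(f;\varphi)\doteq\tau(P\circledast f;\varphi)$ is meaningful; linearity in $f$ and joint continuity in $(f,\varphi)$ follow from the fact that $f\mapsto P\circledast f$ is continuous from $\mathcal{D}(M)$ to $\mathcal{D}'(M)$ and from the continuity built into $\tau$. The polynomial nature is automatic: a direct application of \eqref{Eq: functional derivative} yields
\begin{align*}
(P\circledast\tau)^{(n)}(f\otimes\psi_1\otimes\ldots\otimes\psi_n;\varphi)
=\tau^{(n)}\bigl((P\circledast f)\otimes\psi_1\otimes\ldots\otimes\psi_n;\varphi\bigr),
\end{align*}
which vanishes for $n$ exceeding the polynomial order of $\tau$, showing that $P\circledast\tau\in\mathcal{D}'(M;\operatorname{Pol})$.

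Second, to obtain the bound $\operatorname{WF}((P\circledast\tau)^{(n)})\subseteq\mathrm{C}_{1+n}$, I would read the preceding identity at the level of distributional kernels: denoting by $\tau^{(n)}(x,z_1,\ldots,z_n;\varphi)$ the kernel of $\tau^{(n)}(\cdot\,;\varphi)$, the kernel of $(P\circledast\tau)^{(n)}(\cdot\,;\varphi)$ on $M^{1+n}$ is
\begin{align*}
(P\circledast\tau)^{(n)}(y,z_1,\ldots,z_n;\varphi)=\int_M P(y,x)\,\tau^{(n)}(x,z_1,\ldots,z_n;\varphi)\,\mathrm{d}\mu(x),
\end{align*}
i.e.\ a composition in the first argument. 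Since $P$ is a parametrix of an elliptic (or, \emph{mutatis mutandis}, parabolic) operator, $\operatorname{WF}(P)$ lies in the conormal bundle to the diagonal of $M\times M$, and by hypothesis $\operatorname{WF}(\tau^{(n)})\subseteq\mathrm{C}_{1+n}$. I would apply H\"ormander's theorem on composition of distributions: the only covectors on $T^*M^{1+n}$ that can appear come from pairs $(y,x;\eta,\xi)\in\operatorname{WF}(P)$ and $(x,\widehat{z}_n;-\xi,\widehat{\zeta}_n)\in\operatorname{WF}(\tau^{(n)})\cup 0$, forcing $y=x$ and $\eta=-\xi$. An admissible partition of $\{1,\ldots,1+n\}$ witnessing $(x,\widehat{z}_n;-\xi,\widehat{\zeta}_n)\in\mathrm{C}_{1+n}$ is then transported, thanks to the additivity properties collected in Remark \ref{Rmk: additivity property of Ck}, to an admissible partition for $(y,\widehat{z}_n;\eta,\widehat{\zeta}_n)$. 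The delicate point I expect to be the main obstacle is precisely this combinatorial bookkeeping: the first slot of $\tau^{(n)}$ may already be glued with some of the $z_j$'s, and one has to argue that after substituting $x\mapsto y$ through $P$ the resulting grouping still lies in $\mathrm{C}_{1+n}$ rather than producing an illegitimate coincidence of two distinct blocks.

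Third, the scaling-degree inequality \eqref{Eq: P convolution functional scaling degree} I would obtain as a direct application of Lemma \ref{Lem: sd of convolution} from the appendix to the composition integral above: restricting attention to the total diagonal $\mathrm{Diag}_{p+1}\subset M^{p+1}$, the scaling degree of the composition is controlled by $\operatorname{sd}_{\mathrm{Diag}_{p+1}}\tau^{(p)}+\operatorname{sd}_{\mathrm{Diag}_2}P-d$, where the subtracted $d$ accounts for the single integration variable; the maximum with $0$ just records that scaling degrees of genuine distributions are non-negative, so the bound is meaningful even when the right-hand side becomes negative. Combined with the previous step, this gives all the claims of the lemma.
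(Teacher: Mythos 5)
Your proposal is correct and follows essentially the same route as the paper: well-definedness from the smoothness of $\tau(\cdot\,;\varphi)$ together with $\operatorname{WF}(P)=\operatorname{WF}(\delta_{\mathrm{Diag}_2})$, the identity expressing $(P\circledast\tau)^{(k)}$ as the composition of $\tau^{(k)}$ with $P$ in its first slot (which the paper packages as $(P\otimes\delta_{\mathrm{Diag}_2}^{\otimes k})\circledast\tau^{(k)}$ so that the appendix statements on products, compositions and on the scaling degree of convolutions apply verbatim, including Lemma \ref{Lem: sd of convolution} for the bound \eqref{Eq: P convolution functional scaling degree}). The only comment worth adding is that the ``combinatorial bookkeeping'' you single out as the main obstacle is actually vacuous: since $\operatorname{WF}(P)$ is conormal to the diagonal, the composition forces the base point and covector of each contributing element of $\operatorname{WF}(\tau^{(k)})$ to be reproduced unchanged, so the partition witnessing membership in $\mathrm{C}_{1+k}$ is literally the same and no appeal to Remark \ref{Rmk: additivity property of Ck} is needed.
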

\begin{proof}
	%Since $\tau\in\mathcal{D}_{\mathrm{C}}'(M;\operatorname{Pol})$ is a (functional-valued) distribution generated by a smooth function, {\it cf.} Remark \ref{Rmk: convolution with smooth function} ensures that 
	Since $\operatorname{WF}(P)=\operatorname{WF}(\delta_{\mathrm{Diag}_2})$ and since $\tau$ is generated by a smooth function, it descends that $P\circledast \tau\in\mathcal{D}'(M;\operatorname{Fun})$ is well-defined.
	Moreover, assuming without loss of generality, $\psi_1=\ldots=\psi_k=\psi\in \mathcal{E}(M)$ in Equation \eqref{Eq: functional derivative}, it descends
	\begin{align*}
	(P\circledast \tau)^{(k)}(f\otimes\psi^{\otimes k};\varphi)
	=\tau^{(k)}((P\circledast f)\otimes\psi^{\otimes k};\varphi)
	=[(P\otimes \delta_{\mathrm{Diag}_2}^{\otimes k})\circledast \tau^{(k)}](f\otimes\psi^{\otimes k};\varphi)\,,
	\end{align*}
	which guarantees that $P\circledast \tau\in\mathcal{D}'(M;\operatorname{Pol})$ since $(P\otimes \delta_{\mathrm{Diag}_2}^{\otimes k})$ is a compactly supported distribution.
	%-- here $\circ$ denotes convolution of distributions
	%\footnote{
	%	To avoid confusion, the convolution is meant between the distribution $(P\otimes \delta_{\mathrm{Diag}_2}^{\otimes k})$ with integral kernel $P(x_1,x_2)\prod_{\ell=1}^k\delta_{\mathrm{Diag}_2}(z_\ell,y_\ell)$ and the distribution $\tau^{(k)}$ with integral kernel $\tau^{(k)}(x_2,\widehat{y}_k)$.
	%}.
	Furthermore Equations \eqref{Eq: WF of Dirac delta on total diagonal}, and \eqref{Eq: WF of parametrix} together with item \textit{5}. of Theorem \ref{Thm: WF results} entail
	\begin{align*}
	\operatorname{WF}(P\otimes\delta_{\mathrm{Diag}_2}^{\otimes k})=
%	&=\{
%	(x_1,x_2,\widehat{z}_k,\widehat{y}_k,\xi_1,\xi_2,\widehat{\zeta}_k,\widehat{\eta}_k)\in T^*M^{2+2k}\setminus\{0\}\,|\,
%	\\&(x_1,x_2,\xi_1,\xi_2)\in\operatorname{WF}(\delta_{\mathrm{Diag}_2})\,,\,
%	(\widehat{z}_k,\widehat{y}_k,\widehat{\zeta}_k,\widehat{\eta}_k)\in\operatorname{WF}(\delta_{\mathrm{Diag}_2}^{\otimes k})
%	\}
%	\\&=
    \{(x,x,\widehat{z}_k,\widehat{z}_k,\xi,-\xi,\widehat{\zeta}_k,-\widehat{\zeta}_k)\in T^*M^{2+2k}\setminus\{0\}
	\}\,.
	\end{align*}
	Equation \eqref{Eq: WF 2-component projection} yields
	\begin{multline*}
	\operatorname{WF}_2(P\otimes \delta_{\mathrm{Diag}_2}^{\otimes k})
	=\{
	(x_2,\widehat{y}_k,\xi_2,\widehat{\eta}_k)\in T^*M^{1+k}\setminus\{0\}\,|\,
	\\\exists x_1\in M\,,\,\widehat{z}_k\in M^k\,,
	(x_1,x_2,\widehat{z}_k,\widehat{y}_k,0,\xi_2,0,\widehat{\eta}_k)\in\operatorname{WF}(P\otimes \delta_{\mathrm{Diag}_2}^{\otimes k})
	\}
	=\emptyset\,,
	\end{multline*}
	while $\operatorname{WF}_1(P\otimes \delta_{\mathrm{Diag}_2}^{\otimes k})=\emptyset$. This entails that the condition spelled in Equation \eqref{Eq: WF convolution condition} is satisfied so that $(P\otimes\delta_{\mathrm{Diag}_2}^{\otimes k})\circledast \tau^{(k)}$ is a well-defined functional-valued distribution.
	Moreover, Theorem \ref{Thm: WF results} and Equation \eqref{Eq: WF convolution bound} imply
	\begin{align*}
	\operatorname{WF}((P\otimes \delta_{\mathrm{Diag}_2}^{\otimes k})\circledast \tau^{(k)})
	&\subseteq\operatorname{WF}(P\otimes \delta_{\mathrm{Diag}_2}^{\otimes k})\circ\operatorname{WF}(\tau^{(k)})
	\\&=\{
	(x_1,\widehat{z}_k,\xi_1,\widehat{\zeta}_k)\in T^*M^{1+k}\setminus\{0\}\,|\,
	\\&\exists(x_2,\widehat{y}_k,\xi_2,\widehat{\eta}_k)\in\operatorname{WF}(\tau^{(k)})\,,\,
	(x_1,x_2,\widehat{z}_k,\widehat{y}_k,\xi_1,\xi_2,\widehat{\zeta}_k,\widehat{\eta}_k)\in\operatorname{WF}(P\otimes \delta_{\mathrm{Diag}_2}^{\otimes k})
	\}
	\\&\subseteq\mathrm{C}_{k+1}\,,
	\end{align*}
	where we used the bound $\operatorname{WF}(\tau^{(k)})\subseteq\mathrm{C}_{k+1}$ as well as the explicit form of $\operatorname{WF}(P\circledast\delta_{\mathrm{Diag}_2}^{\otimes k})$.
	
	To conclude, Equation \eqref{Eq: P convolution functional scaling degree} follows from Lemma \ref{Lemma: finite sd of convolution} applied to $P\circledast\tau^{(k)}$ which entails
	\begin{align*}
	\operatorname{sd}_{\mathrm{Diag}_{k+1}}((P\circledast \tau)^{(k)})
	<\infty\,.
	\end{align*}
\end{proof}

%\begin{remark}[Parabolic Case] 
%This result holds also in this scenario, the only difference being that we have to consider the effective dimension of the manifold $M$ in Equation \eqref{Eq: P convolution functional scaling degree}, see Remark \ref{Rmk: weighted scaling degree}. This is a consequence of $\operatorname{wsd}_{\mathrm{Diag}_2}(\delta)=d+1$. 
%\end{remark}

In order to connect the above structures to a regularized version of white noise, we need to equip $\mathcal{D}_{\mathrm{C}}'(M;\operatorname{Pol})$ with a suitable algebraic structure.
In the following with $\circ$ we indicate the composition of distributions as per \cite[Sec. 8.2]{Hormander-I-03}, see also item {\em 4.} of Theorem \ref{Thm: WF results} while $\cdot$ stands for the product of distribution, {\it cf.} item {\em 2.} of Theorem \ref{Thm: WF results}.
Recall that $P$ and $P^*$ are the parametrices respectively of the underlying microhypoelliptic differential operator $E$ and of its formal adjoint $E^*$.

\begin{proposition}\label{Prop: Dc functionals algebraic structure}
	Let $M$ be a compact manifold and let $P_\varepsilon\in \mathcal{E}(M^2)$ be a family of smooth maps such that $\lim_{\varepsilon\to 0^+}P_\varepsilon=P$ in $\mathcal{D}'(M^2)$ and let $Q_\varepsilon:=P_\varepsilon\circ P^*_\varepsilon\in \mathcal{E}(M^2)$.
	For $\tau_1,\tau_2\in\mathcal{D}_{\mathrm{C}}'(M;\operatorname{Pol})$ let $\tau_1\cdot_{Q_\varepsilon} \tau_2\in\mathcal{D}'(M;\operatorname{Pol})$ be defined by	
	\begin{align}\label{Eq: regularized delta-product}
		(\tau_1\cdot_{Q_\varepsilon} \tau_2)(f;\varphi)
		&=\sum_{k\geq 0}\frac{1}{k!}[(\delta_{\mathrm{Diag}_2}\otimes Q_\varepsilon^{\otimes k})\cdot(\tau_1^{(k)}\widetilde{\otimes}\tau_2^{(k)})](f\otimes1_{1+2k};\varphi)\,,
	\end{align}
	for all $f\in\mathcal{D}(M)$ and $\varphi\in \mathcal{E}(M)$ while, for all $\ell\in\mathbb{N}$, $1_\ell\in \mathcal{E}(M^\ell)$ denotes the function $1_\ell(\widehat{x}_\ell)=1$.	Here $\widetilde{\otimes}$ is defined as in Equation \eqref{Eq: tilde tensor product}. Then $\tau_1\cdot_{Q_\varepsilon} \tau_2\in\mathcal{D}'_{\mathrm{C}}(M;\operatorname{Pol})$ and  $(\mathcal{D}_{\mathrm{C}}'(M;\operatorname{Pol}),\cdot_{Q_\varepsilon})$ is a unital commutative and associative algebra.
\end{proposition}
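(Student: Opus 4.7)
The plan is to establish in turn: (i) $\tau_1\cdot_{Q_\varepsilon}\tau_2$ is a well-defined element of $\mathcal{D}'(M;\operatorname{Pol})$; (ii) it lies in the subspace $\mathcal{D}_{\mathrm{C}}'(M;\operatorname{Pol})$; and (iii) the resulting product is unital, commutative and associative.

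First I would observe that the $k$-sum in Equation \eqref{Eq: regularized delta-product} is actually finite: since $\tau_1,\tau_2$ are $\varphi$-polynomial, there exist $\bar k_1,\bar k_2\in\mathbb{N}$ with $\tau_i^{(k)}=0$ for $k\geq\bar k_i$, so only the terms with $k<\min\{\bar k_1,\bar k_2\}$ contribute. Each surviving summand is the pointwise product of the compactly supported distribution $\delta_{\mathrm{Diag}_2}\otimes Q_\varepsilon^{\otimes k}\in\mathcal{E}'(M^{2+2k})$ with the functional-valued distribution $\tau_1^{(k)}\otimes\tau_2^{(k)}\in\mathcal{D}_{\mathrm{C}}'(M^{2+2k};\operatorname{Pol})$ (as ensured by Remark \ref{Rmk: Dc tensor product}); the H\"ormander admissibility condition of Theorem \ref{Thm: WF results} for the pointwise product is trivially satisfied because $Q_\varepsilon$ is smooth, and partial evaluation against $f\otimes 1_{1+2k}$ then produces a bona fide element of $\mathcal{D}'(M;\operatorname{Fun})$.

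Next I would check membership in $\mathcal{D}_{\mathrm{C}}'(M;\operatorname{Pol})$. The $\varphi$-polynomial property is inherited from $\tau_1,\tau_2$ because differentiation in $\varphi$ via Equation \eqref{Eq: functional derivative} yields a Leibniz-type sum of terms of the form $(\delta_{\mathrm{Diag}_2}\otimes Q_\varepsilon^{\otimes k})\cdot(\tau_1^{(k+n_1)}\otimes\tau_2^{(k+n_2)})$ with $n_1+n_2=n$, which eventually vanish once $n$ exceeds $\bar k_1+\bar k_2$. For the wavefront bound I would use the fact that $\operatorname{WF}(\delta_{\mathrm{Diag}_2}\otimes Q_\varepsilon^{\otimes k})$ reduces to $\operatorname{WF}(\delta_{\mathrm{Diag}_2})$ concentrated on the first two variables (since $Q_\varepsilon$ is smooth), combine the pointwise-product and pushforward estimates of Theorem \ref{Thm: WF results} with the stability relations of Remark \ref{Rmk: additivity property of Ck}, and contract against $1_{1+2k}$ to integrate out the $2k$ variables carrying $Q_\varepsilon$; the resulting wavefront set is directly seen to lie in $\mathrm{C}_{1+n}$, as required by Definition \ref{Def: Dc functionals}.

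Finally, the algebraic properties: unitality follows from $\boldsymbol{1}^{(k)}=0$ for $k\geq 1$, so only the $k=0$ term survives in Equation \eqref{Eq: regularized delta-product} and returns $\tau_1$. Commutativity rests on the symmetry of $\delta_{\mathrm{Diag}_2}$ and of the kernel $Q_\varepsilon(x,y)=\int P_\varepsilon(x,z)P_\varepsilon(y,z)\,\mathrm{d}z$, together with the symmetry of $\tau_i^{(k)}$ in its last $k$ arguments, which jointly allow relabelling of the contraction variables. The main technical obstacle is associativity: by expanding both $(\tau_1\cdot_{Q_\varepsilon}\tau_2)\cdot_{Q_\varepsilon}\tau_3$ and $\tau_1\cdot_{Q_\varepsilon}(\tau_2\cdot_{Q_\varepsilon}\tau_3)$ via iterated application of Equation \eqref{Eq: functional derivative} and of the Leibniz rule, both sides should reorganise into the same sum indexed by triples $(k_{12},k_{13},k_{23})\in\mathbb{N}^3$ counting the $Q_\varepsilon$-contractions between each pair of factors, weighted by $\prod_{i<j}(k_{ij}!)^{-1}$. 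Since $Q_\varepsilon$ is smooth, no analytic subtlety arises and the identity reduces to a purely combinatorial matching of multinomial coefficients which is precisely the Wick-contraction identity standard in algebraic quantum field theory.
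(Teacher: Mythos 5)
Your proposal follows essentially the same route as the paper's proof: the sum terminates by polynomiality, well-definedness and the bound $\operatorname{WF}\subseteq\mathrm{C}_\ell$ are obtained from H\"ormander's product and partial-evaluation theorems combined with Remark \ref{Rmk: additivity property of Ck}, and unitality, commutativity and associativity are settled by the Wick-contraction combinatorics, which the paper merely repackages through the operators $\Upsilon_{12},\Upsilon_{13},\Upsilon_{23}$ and the identity $\exp[\Upsilon_{13}+\Upsilon_{23}]\exp[\Upsilon_{12}]=\exp[\Upsilon_{12}+\Upsilon_{13}]\exp[\Upsilon_{23}]$. The only imprecision is the claim that the multiplication criterion is ``trivially satisfied because $Q_\varepsilon$ is smooth'': smoothness of $Q_\varepsilon$ only confines the covectors of $\delta_{\mathrm{Diag}_2}\otimes Q_\varepsilon^{\otimes k}$ to the antipodal pair $(\xi,-\xi)$ in the first two slots, and one must still observe, as the paper does explicitly, that no element of $\mathrm{C}_{1+k}\otimes\mathrm{C}_{1+k}$ can have covector $(-\xi,\xi,0,\dots,0)$ with $\xi\neq 0$, since a covector vanishing on all but the first slot is excluded by the structure of the sets $\mathrm{C}_{1+k}$.
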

\begin{proof}
	We divide the proof in three steps. We first show that Equation \eqref{Eq: regularized delta-product} defines a functional-valued distribution $\tau_1\cdot_{Q_\varepsilon}\tau_2\in\mathcal{D}'(M;\operatorname{Pol})$. In the second part we check the properties of the wavefront set showing that $\tau_1\cdot_{Q_\varepsilon}\tau_2\in \mathcal{D}_{\mathrm{C}}'(M;\operatorname{Pol})$. Finally we discuss the ensuing algebraic structure.
	
	\paragraph{{\bf Proof that $\bm{\tau_1\cdot_{Q_\varepsilon}\tau_2\in\mathcal{D}'(M;}\operatorname{{\bf Pol}}\bm{)}$.}}
	Since $\tau_1,\tau_2\in\mathcal{D}_{\mathrm{C}}'(M;\operatorname{Pol})$, the right hand side of Equation \eqref{Eq: regularized delta-product} includes only a finite number of non-vanishing terms. It suffices to show that each of these terms is well-defined.
	Let $k\geq 0$ and consider 
	\begin{align*}
		[(\delta_{\mathrm{Diag}_2}\otimes Q_\varepsilon^{\otimes k})\cdot(\tau_1^{(k)}\widetilde{\otimes} \tau_2^{(k)})]\,.
	\end{align*}
	On account of Equation \eqref{Eq: WF of Dirac delta on total diagonal} and since $Q_\varepsilon\in \mathcal{E}(M^2)$, it holds
	\begin{align}\label{Eq: WF of delta times Qepsilon}
		\operatorname{WF}(\delta_{\mathrm{Diag}_2}\otimes Q_\varepsilon^{\otimes k})
		&
		=\{(x,x,\widehat{z}_k,\widehat{y}_k,\xi,-\xi,0,0)\in T^*M^{2+2k}\setminus\{0\}\}\,,
	\end{align}
	where $\widehat{z}_k,\widehat{y}_k\in M^k$ while $(x,\xi)\in T^*M$. In addition Remark \ref{Rmk: Dc tensor product} entails that $\tau_1^{(k)}\widetilde{\otimes}\tau_2^{(k)}\in\mathcal{D}_{\mathrm{C}}'(M^{2+2k};\operatorname{Pol})$ and
	\begin{align*}
		\operatorname{WF}(\tau_1^{(k)}\widetilde{\otimes}\tau_2^{(k)})
		&\subseteq
		\{(x_1,x_2,\widehat{z}_k,\widehat{y}_k,\xi_1,\xi_2,\widehat{\zeta}_k,\widehat{\eta}_k)\in T^*M^{2+2k}\setminus\{0\}\,|\,
		(x_1,\widehat{z}_k,\xi_1,\widehat{\zeta}_k),(x_2,\widehat{y}_k,\xi_2,\widehat{\eta}_k)\in\mathrm{C}_k
		\}\,.
%		\\&=\mathrm{C}_k\otimes\mathrm{C}_k\,.
	\end{align*}
	As a consequence Equation \eqref{Eq: WF product condition} holds true since 
	\begin{multline*}
		(x_1,x_2,\widehat{z}_k,\widehat{y}_k,0,0,0,0)\notin
		\{
		(x_1,x_2,\widehat{z}_k,\widehat{y}_k,\xi_1+\xi_1',\xi_2+\xi_2',\widehat{\zeta}_k+\widehat{\zeta}_k',\widehat{\eta}_k+\widehat{\eta}_k')\in T^*M^{2+2k}\,|\,
		\\(x_1,x_2,\widehat{z}_k,\widehat{\eta}_k,\xi_1,\xi_2,\widehat{\zeta}_k,\widehat{\eta}_k)\in\operatorname{WF}(\delta_{\mathrm{Diag}_2}\otimes Q_\varepsilon^{\otimes k})\,,\,
		\\(x_1,\widehat{z}_k,\xi_1',\hat{\zeta}_k')\in\operatorname{WF}(\tau_1^{(k)})\,,\,
		(x_2,\widehat{y_k},\xi_2',\widehat{\eta}_k')\in\operatorname{WF}(\tau_2^{(k)})
		\}\,.
	\end{multline*}

\noindent Hence item {\em 2.} of Theorem \ref{Thm: WF results} implies that	
	\begin{align*}
		[(\delta_{\mathrm{Diag}_2}\otimes Q_\varepsilon^{\otimes k})\cdot(\tau_1^{(k)}\widetilde{\otimes} \tau_2^{(k)})]\in\mathcal{D}'(M^{2+2k};\operatorname{Fun})\,.
	\end{align*}
	Moreover, for all $\ell\in\mathbb{N}$ it holds
	\begin{align*}
		[(\delta_{\mathrm{Diag}_2}\otimes Q_\varepsilon^{\otimes k})\cdot(\tau_1^{(k)}\widetilde{\otimes} \tau_2^{(k)})]^{(\ell)}
		=\sum\limits_{\substack{\ell_1,\ell_2\\\ell_1+\ell_2=\ell}}[(\delta_{\mathrm{Diag}_2}\otimes Q_\varepsilon^{\otimes k}\otimes 1_{\ell})\cdot(\tau_1^{(k+\ell_1)}\widetilde{\otimes}\tau_2^{(k+\ell_2)})]\,,
	\end{align*}
	where analogously to the preceding case and taking into account that both $\tau_1$ and $\tau_2$ are polynomial functionals it descends that $[(\delta_{\mathrm{Diag}_2}\otimes Q_\varepsilon^{\otimes k}\otimes 1_\ell)\cdot (\tau_1^{(k+\ell_1)}\widetilde{\otimes}\tau_2^{(k+\ell_2)})]\in\mathcal{D}^\prime(M^{2+2k+\ell};\operatorname{Pol})$.	
	
\paragraph{Proof that $\bm{\tau_1\cdot_{Q_\varepsilon}\tau_2\in\mathcal{D}_{\mathrm{C}}'(M;}\operatorname{{\bf Pol}}\bm{)}$.}
	We need to show that, for all $\ell\geq 0$ it holds
	\begin{align*}
		\operatorname{WF}([\tau_1\cdot_{Q_\varepsilon} \tau_2]^{(\ell)})\subseteq\mathrm{C}_{\ell+1}\,.
	\end{align*}
	A direct computation yields
	\begin{align*}
		[\tau_1\cdot_{Q_\varepsilon} \tau_2]^{(\ell)}(f\otimes\psi_\ell;\varphi)
		&=\sum_{k\geq 0}\frac{1}{k!}\sum_{\ell_1+\ell_2=\ell}
		[(\delta_{\mathrm{Diag}_2}\otimes Q_\varepsilon^{\otimes k}\otimes 1_\ell)\cdot (\tau_1^{(k+\ell_1)}\widetilde{\otimes} \tau_2^{(k+\ell_2)})]
		(f\otimes 1_{1+2k}\otimes\psi_\ell;\varphi)
		\\&=\sum_{k\geq 0}\frac{1}{k!}\sum_{\ell_1+\ell_2=\ell}
		T_{\ell_1,\ell_2}(f\otimes\psi_\ell;\varphi)\,.
	\end{align*}
	where $f\in\mathcal{D}(M)$, $\psi_\ell\in \mathcal{E}(M^\ell)$ while $\varphi\in \mathcal{E}(M)$.	
%	We shall prove the wave front set bound for each term $T_{\ell_1,\ell_2}$ in the above sum separately.
	Let $\ell_1,\ell_2\in\mathbb{N}\cup\{0\}$ be such that $\ell_1+\ell_2=\ell$ and consider
	\begin{align*}
		[(\delta_{\mathrm{Diag}_2}\otimes Q_\varepsilon^{\otimes k}\otimes 1_\ell)\cdot (\tau_1^{(k+\ell_1)}\widetilde{\otimes} \tau_2^{(k+\ell_2)})]\,,
	\end{align*}
whose wavefront set reads, \textit{cf.} Equation \eqref{Eq: WF of delta times Qepsilon},

	\begin{multline*}
		\operatorname{WF}([(\delta_{\mathrm{Diag}_2}\otimes Q_\varepsilon^{\otimes k}\otimes 1_\ell)\cdot(\tau_1^{(k+\ell_1)}\widetilde{\otimes}\tau_2^{(k+\ell_2)})])
		\subseteq
		\\\{(x,x,\widehat{z}_k,\widehat{y}_k,\widehat{u}_\ell,\xi_1+\xi_1',-\xi_1+\xi_2',\widehat{\zeta}_k,\widehat{\eta}_k,\widehat{\nu}_\ell)\in T^*M^{2+2k+\ell}\setminus\{0\}\,|\,
		\\(x,\widehat{z}_k,\widehat{u}_{\ell_1},\xi_1',\widehat{\zeta}_k,\widehat{\nu}_{\ell_1})\in\mathrm{C}_{k+\ell_1}\,,\,
		(x,\widehat{y}_k,\widehat{u}_{\ell_2},\xi_2',\widehat{\eta}_k,\widehat{\nu}_{\ell_2})\in\mathrm{C}_{k+\ell_2}
		\}\,.
	\end{multline*}
	For our purposes we need to consider the functional-valued distribution obtained from $[(\delta_{\mathrm{Diag}_2}\otimes Q_\varepsilon^{\otimes k}\otimes 1_\ell)\cdot(\tau_1^{(k+\ell_1)}\widetilde{\otimes}\tau_2^{(k+\ell_2)})]$ after its partial evaluation against the constant function $1_{1+2k}$:
	\begin{align*}
		T_{\ell_1,\ell_2}(f\otimes\psi_\ell;\varphi)
		:=[(\delta_{\mathrm{Diag}_2}\otimes Q_\varepsilon^{\otimes k}\otimes 1_\ell)\cdot(\tau_1^{(k+\ell_1)}\otimes \tau_2^{(k+\ell_2)})](f\otimes 1_{1+2k}\otimes\psi_\ell;\varphi)\,.
	\end{align*}
On account of Remark \ref{Rmk: convolution with smooth function} it holds
	\begin{multline*}
		\operatorname{WF}(T_{\ell_1,\ell_2})
		\subseteq\{
		(x_1,\widehat{u}_\ell,\xi_1,\widehat{\nu}_\ell)\in T^*M^{1+\ell}\setminus\{0\}\;|\;
		\exists x_2\in M\;\textrm{and}\;
		\exists\widehat{z}_k,\widehat{y}_k\in M^k\,,\,
		\\(x_1,x_2,\widehat{z}_k,\widehat{y}_k,\widehat{u}_\ell,\xi_1,0,0,0,\widehat{\nu}_\ell)
		\in\operatorname{WF}((\delta_{\mathrm{Diag}_2}\otimes Q_\varepsilon^{\otimes k})\cdot(\tau_1^{(k+\ell_1)}\widetilde{\otimes}\tau_2^{(k+\ell_2)}))
		\}
		\subseteq\mathrm{C}_{\ell}\,,
	\end{multline*}
	where the last inclusion follows from the bound on the wave front set of $(\delta_{\mathrm{Diag}_2}\otimes Q_\varepsilon^{\otimes k}\otimes 1_\ell)\cdot(\tau_1^{(k+\ell_1)}\widetilde{\otimes}\tau_2^{(k+\ell_2)})$ as well as from Equation \eqref{Eq: C-set} and Remark \ref{Rmk: additivity property of Ck}. This established that $\tau_1\cdot_{Q_\varepsilon}\tau_2\in\mathcal{D}_{\mathrm{C}}'(M;\operatorname{Pol})$.

	\paragraph{Algebraic properties of $\bm{\mathcal{D}_{\mathrm{C}}'(M;}\operatorname{{\bf Pol}}\bm{)}$.}
	To conclude the proof we need to show that $\cdot_{Q_\varepsilon}$ induces a unital and commutative algebra structure on $\mathcal{D}^\prime_{\mathrm{C}}(M;\operatorname{Pol})$. In the preceding steps we have shown that
	\begin{align*}
		\cdot_{Q_\varepsilon}\colon
		\mathcal{D}_{\mathrm{C}}'(M;\operatorname{Pol})\times\mathcal{D}_{\mathrm{C}}'(M;\operatorname{Pol})
		\to\mathcal{D}_{\mathrm{C}}'(M;\operatorname{Pol})\,,
	\end{align*}
	which entails that $(\mathcal{D}^\prime_{\mathrm{C}}(M;\operatorname{Pol}),\cdot_{Q_\varepsilon})$ is an algebra. It is unital, since, calling $\boldsymbol{1}(f;\varphi):=1(f)$, then, $\tau\cdot_{Q_\varepsilon}\boldsymbol{1} =\boldsymbol{1}\cdot_{Q_\varepsilon} \tau=\tau$ for all $\tau\in\mathcal{D}_{\mathrm{C}}'(M;\operatorname{Pol})$. Furthermore Equation \eqref{Eq: regularized delta-product} is clearly symmetric in $\tau_1$ and $\tau_2$, since $Q_\varepsilon=P_\varepsilon\circ P^*_\varepsilon$. It remains to be shown that $\cdot_{Q_\varepsilon}$ is associative.
	To this end, we introduce the following notation:
	\begin{enumerate}[(i)]
		\item 
		for $\tau_1,\tau_2\in\mathcal{D}_{\mathrm{C}}'(M;\operatorname{Pol})$ we denote by $\tau_1\widehat{\otimes}\tau_2\in\mathcal{D}'(M^2;\operatorname{Pol}^{\otimes 2})$ the functional-valued distribution
		\begin{align*}
			[\tau_1\widehat{\otimes}\tau_2](f_1\otimes f_2;\varphi_1\otimes\varphi_2)
			:=\tau_1(f_1;\varphi_1)\tau_2(f_2;\varphi_2)\,,
		\end{align*}
		for all $f_1,f_2\in\mathcal{D}(M)$ and $\varphi_1,\varphi_2\in\mathcal{E}(M)$. 
		%-- notice that $\tau_1\widehat{\otimes}\tau_2$ is now a functional on pairs of configurations $\varphi_1,\varphi_2$.
		\item
		We call $\mathsf{m}\colon\mathcal{D}'(M^2;\operatorname{Pol}^{\otimes 2})\to\mathcal{D}'(M^2;\operatorname{Pol})$
		\begin{align*}
			\mathsf{m}(\tau_1\widehat{\otimes}\tau_2):=\tau_1\otimes \tau_2\,,
		\end{align*} 
		where $\tau_1\otimes \tau_2\in\mathcal{D}'(M^2;\operatorname{Pol})$ has been defined in Remark \eqref{Rmk: Dc tensor product}.
		\item
		We call $\Upsilon_{Q_\varepsilon}\colon\mathcal{D}_{\mathrm{C}}'(M^2;\operatorname{Pol}^{\otimes 2})\to\mathcal{D}_{\mathrm{C}}'(M^2;\operatorname{Pol}^{\otimes 2})$ the linear map
		\begin{align*}
			[\Upsilon_{Q_\varepsilon}(\tau_1\widehat{\otimes}\tau_2)](f_1\otimes f_2;\varphi_1\otimes\varphi_2)
			:=[(1_2\otimes Q_\varepsilon)\cdot(\tau_1^{(1)}\widehat{\otimes} \tau_2^{(1)})](f_1\otimes f_2\otimes 1_2;\varphi_1\otimes\varphi_2)\,,
		\end{align*}
		for all $f_1,f_2\in\mathcal{D}(M)$ and $\varphi_1,\varphi_2\in \mathcal{E}(M)$.
		%, being $\cdot$ the product of distributions
		%\footnote{
		%	To avoid confusion we mean the product between the distribution	$1_2\otimes Q_\varepsilon$ with integral kernel $1_2(x_1,x_2)Q_\varepsilon(z,y)$ with the (functional-valued) distribution $\tau_1^{(1)}\widehat{\otimes} \tau_2^{(1)}$ with integral kernel $\tau_1^{(1)}(x_1,z) \tau_2^{(1)}(x_2,y)$.}.
	\end{enumerate}
	Following these notations, Equation \eqref{Eq: regularized delta-product} can be written as
	\begin{align*}
		[\tau_1\cdot_{Q_\varepsilon}\tau_2](f;\varphi)
		=[\delta_{\mathrm{Diag}_2}\cdot\mathsf{m}\circ\exp[\Upsilon_{Q_\varepsilon}](\tau_1\widehat{\otimes}\tau_2)](f\otimes 1;\varphi)\,,
	\end{align*}
	where $\circ$ stands here for the composition of maps while $\exp[\Upsilon_{Q_\varepsilon}]=\sum_{k\geq 0}\frac{1}{k!}\Upsilon_{Q_\varepsilon}^{k}$. Observe that only a finite number of terms in the sum are non vanishing since we consider polynomial functionals. If $\tau_1,\tau_2,\tau_3\in\mathcal{D}_{\mathrm{C}}'(M;\operatorname{Pol})$ it descends, for all $f\in\mathcal{D}(M)$,
	\begin{align*}
		[(\tau_1\cdot_{Q_\varepsilon} \tau_2)\cdot_{Q_\varepsilon}\tau_3](f;\varphi)
		&=[\delta_{\mathrm{Diag}_2}\cdot\mathsf{m}\exp[\Upsilon_{Q_\varepsilon}]([\tau_1\cdot_{Q_\varepsilon}\tau_2]\widehat{\otimes}\tau_3)](f\otimes 1;\varphi)
		\\&=
		[\delta_{\mathrm{Diag}_2}\cdot\mathsf{m}\exp[\Upsilon_{Q_\varepsilon}]
		[(\delta_{\mathrm{Diag}_2}\cdot\mathsf{m}\exp[\Upsilon_{Q_\varepsilon}])\otimes\operatorname{Id}]
		(\tau_1\widehat{\otimes}\tau_2\widehat{\otimes}\tau_3)](f\otimes 1_2;\varphi)\,,
	\end{align*}
	where $\operatorname{Id}$ denotes the identity operator on $\mathcal{D}_{\mathrm{C}}'(M;\operatorname{Pol})$.
	Denoting with $\Upsilon_{12}$ the linear map acting as
	\begin{align*}
		\Upsilon_{12}(\tau_1\widehat{\otimes}\tau_2\widehat{\otimes}\tau_3)
		:=\Upsilon_{Q_\varepsilon}(\tau_1\widehat{\otimes}\tau_2)\widehat{\otimes}\tau_3\,,
	\end{align*}
	and similarly $\Upsilon_{13},\Upsilon_{23}$, it holds that, for all $f\in\mathcal{D}(M)$,
	\begin{align*}
		[(\tau_1\cdot_{Q_\varepsilon} \tau_2)\cdot_{Q_\varepsilon}\tau_3](f;\varphi)
		&=
		\delta_{\mathrm{Diag}_3}\cdot\mathsf{m}(\mathsf{m}\otimes\operatorname{Id})
		\exp[\Upsilon_{13}+\Upsilon_{23}]\exp[\Upsilon_{12}](\tau_1\widehat{\otimes}\tau_2\widehat{\otimes}\tau_3)(f\otimes 1_2;\varphi)
		\\&=
		\delta_{\mathrm{Diag}_3}\cdot\mathsf{m}(\operatorname{Id}\otimes\mathsf{m})
		\exp[\Upsilon_{12}+\Upsilon_{13}]\exp[\Upsilon_{23}](\tau_1\widehat{\otimes}\tau_2\widehat{\otimes}\tau_3)(f\otimes 1_2;\varphi)
		\\&=[\tau_1\cdot_{Q_\varepsilon}(\tau_2\cdot_{Q_\varepsilon}\tau_3)](f;\varphi)\,.
	\end{align*}
\end{proof}

\begin{remark}\label{Rmk: interpretation of Dc local product}
	Proposition \ref{Prop: Dc functionals algebraic structure} and Equation \eqref{Eq: regularized delta-product} in particular codify the expectation values of polynomial expressions in the (shifted) regularized random field $\widehat{\varphi}_\varepsilon(x)=P\circledast\widehat{\xi}_\varepsilon$ discussed in Section \ref{Sec: introduction}.
	For concreteness, let $\Phi\in\mathcal{D}_{\mathrm{C}}'(M;\operatorname{Pol})$ be
	\begin{align*}
		\Phi(f;\varphi)=\int_M f\varphi\mu\,.
	\end{align*}
	A direct application of Equation \eqref{Eq: regularized delta-product} yields
	\begin{align*}
		[\Phi\cdot_{Q_\varepsilon} \Phi](f;\varphi)
		=\int_M f(x)[\varphi^2(x)+Q_\varepsilon(x,x)]\mathrm{d}\mu(x)
		=\Phi^2(f;\varphi)
		+\int_M f(x)Q_\varepsilon(x,x)\mathrm{d}\mu(x)\,,
	\end{align*}
	where $\Phi^2$ has been defined in Example \ref{Ex: examples of functional-valued polynomial distributions}.
	The last expression coincides with the expectation value $\mathbb{E}(\widehat{\varphi}_\varepsilon^2(f))$ of the regularized random field $\widehat{\varphi}_\varepsilon(x)^2=(P\circledast\widehat{\xi}_\varepsilon+\varphi)^2(x)$ smeared against a test function $f\in\mathcal{D}(M)$.
\end{remark}

The product $\cdot_{Q_\varepsilon}$ captures the information on the expectation value of polynomial expressions in the (shifted) regularized random field $\widehat{\varphi}_\varepsilon$.
However, it does not contain any information about the correlations. Such datum can be codified via a different product structure on a suitable tensor algebra built out of $\mathcal{D}_{\mathrm{C}}'(M;\operatorname{Pol})$. The following proposition makes this statement precise and, since the proof is similar to that of Proposition \ref{Prop: Dc functionals algebraic structure}, we omit it.

\begin{proposition}\label{Prop: Dc tensor algebra algebraic structure}
	We call $\mathcal{T}_{\mathrm{C}}'(M;\operatorname{Pol})$ the vector space
	\begin{align}\label{Eq: Tc algebra}
		\mathcal{T}_{\mathrm{C}}'(M;\operatorname{Pol}):=
		\mathbb{C}\oplus
		\bigoplus_{n\geq 1}\mathcal{D}'_{\mathrm{C}}(M^n;\operatorname{Pol})\,.
	\end{align}
	This is a commutative and associative algebra if endowed with the product $\bullet_{Q_\varepsilon}$ which is completely and unambiguously specified as follows: For all $\tau_1\in\mathcal{D}'_{\mathrm{C}}(M^{n_1};\operatorname{Pol})$ and $\tau_2\in\mathcal{D}'_{\mathrm{C}}(M^{n_2};\operatorname{Pol})$ with $n_1,n_2\in\mathbb{N}\cup\{0\}$
	\begin{align}\label{Eq: regularized multilocal delta-product}
		(\tau_1\bullet_{Q_\varepsilon} \tau_2)(f_1\otimes f_2;\varphi)
		&=\sum_{k\geq 0}\frac{1}{k!}
		[(1_{n_1+n_2}\otimes Q_\varepsilon^{\otimes k})\cdot(\tau_1^{(k)}\widetilde{\otimes}\tau_2^{(k)})]
		(f_1\otimes f_2\otimes 1_{2k};\varphi)\,,
	\end{align}
	for all $f_1\in\mathcal{D}(M^{n_1})$, $f_2\in\mathcal{D}(M^{n_2})$ and $\varphi\in \mathcal{E}(M)$, where $\cdot$ denotes once more the product of distributions.
\end{proposition}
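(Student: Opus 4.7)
The plan is to mirror the proof of Proposition \ref{Prop: Dc functionals algebraic structure}, exploiting two simplifications: the absence of the final $\delta_{\mathrm{Diag}_2}\cdot$ factor, and the fact that the result now lives naturally in the graded piece $\mathcal{D}'_{\mathrm{C}}(M^{n_1+n_2};\operatorname{Pol})$ rather than being forced back onto $M$. By bilinearity and the direct-sum structure of Equation \eqref{Eq: Tc algebra}, it suffices to carry out the verification for homogeneous tensors $\tau_i\in\mathcal{D}'_{\mathrm{C}}(M^{n_i};\operatorname{Pol})$.

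The first step is to check that Equation \eqref{Eq: regularized multilocal delta-product} really defines an element of $\mathcal{D}'(M^{n_1+n_2};\operatorname{Pol})$. Since $\tau_1$ and $\tau_2$ are polynomial, only finitely many $k$ contribute. For each such $k$, the factor $1_{n_1+n_2}\otimes Q_\varepsilon^{\otimes k}\in\mathcal{E}(M^{n_1+n_2+2k})$ is smooth, so $(1_{n_1+n_2}\otimes Q_\varepsilon^{\otimes k})\cdot(\tau_1^{(k)}\otimes\tau_2^{(k)})$ is merely the multiplication of a distribution by a smooth function, which is unproblematic. Partial evaluation against $1_{2k}$, which amounts to a push-forward along the projection that forgets the $2k$ paired variables, remains in $\mathcal{D}'(M^{n_1+n_2};\operatorname{Pol})$ by direct application of Remark \ref{Rmk: convolution with smooth function}. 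The polynomial character is preserved because the $\ell$-th functional derivative of $\tau_1\bullet_{Q_\varepsilon}\tau_2$ splits into a finite sum over $\ell_1+\ell_2=\ell$ of analogous expressions involving $\tau_1^{(k+\ell_1)}\otimes\tau_2^{(k+\ell_2)}$, which also eventually vanish.

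Second, one verifies the wavefront condition $\tau_1\bullet_{Q_\varepsilon}\tau_2\in\mathcal{D}'_{\mathrm{C}}(M^{n_1+n_2};\operatorname{Pol})$. By Remark \ref{Rmk: Dc tensor product} one has $\operatorname{WF}(\tau_1^{(k)}\otimes\tau_2^{(k)})\subseteq\mathrm{C}_{n_1+k}\otimes\mathrm{C}_{n_2+k}$, and multiplication by the smooth kernel $1_{n_1+n_2}\otimes Q_\varepsilon^{\otimes k}$ cannot enlarge it. The integration against $1_{2k}$ then projects the wavefront set into $T^*M^{n_1+n_2}$ and, using the elementary closure property collected in Remark \ref{Rmk: additivity property of Ck}, one checks that the result sits inside $\mathrm{C}_{n_1+n_2}$. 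Applying the same reasoning to the derivatives $(\tau_1\bullet_{Q_\varepsilon}\tau_2)^{(\ell)}$ yields the required bound in $\mathrm{C}_{n_1+n_2+\ell}$.

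Finally, the algebraic properties are essentially formal. Commutativity follows from the symmetry $Q_\varepsilon(x,y)=Q_\varepsilon(y,x)$, which holds because $P^*$ is the formal adjoint of $P$ and hence $(P_\varepsilon\circ P_\varepsilon^*)^*=P_\varepsilon\circ P_\varepsilon^*$. The unit is the scalar $1\in\mathbb{C}$ sitting in the zeroth summand of Equation \eqref{Eq: Tc algebra}: since the functional derivatives of a constant vanish, only the $k=0$ term survives in Equation \eqref{Eq: regularized multilocal delta-product}. For associativity I would recycle the operator formalism of the preceding proposition, writing
\begin{align*}
\tau_1\bullet_{Q_\varepsilon}\tau_2=\mathsf{m}\circ\exp[\Upsilon_{Q_\varepsilon}](\tau_1\widehat{\otimes}\tau_2),
\end{align*}
with $\mathsf{m}$ the tensor product of Remark \ref{Rmk: Dc tensor product} and $\Upsilon_{Q_\varepsilon}$ the bilinear contraction attached to $Q_\varepsilon$. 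Since functional derivatives in disjoint variables commute and $Q_\varepsilon$ is smooth, the three pairwise operators $\Upsilon_{12},\Upsilon_{13},\Upsilon_{23}$ commute; combined with $\mathsf{m}(\mathsf{m}\otimes\operatorname{Id})=\mathsf{m}(\operatorname{Id}\otimes\mathsf{m})$ this yields $(\tau_1\bullet_{Q_\varepsilon}\tau_2)\bullet_{Q_\varepsilon}\tau_3=\tau_1\bullet_{Q_\varepsilon}(\tau_2\bullet_{Q_\varepsilon}\tau_3)$ verbatim as before. I expect no real obstacle: the step in the $\cdot_{Q_\varepsilon}$-proof that required a delicate wavefront-set compatibility check was precisely the multiplication by $\delta_{\mathrm{Diag}_2}$, which is absent here, so the argument is strictly easier.
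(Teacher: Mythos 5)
Your proposal is correct and follows precisely the route the paper has in mind: it omits the proof of Proposition \ref{Prop: Dc tensor algebra algebraic structure} exactly because it is the argument of Proposition \ref{Prop: Dc functionals algebraic structure} repeated, and you reproduce that argument (well-definedness via polynomiality and Remark \ref{Rmk: convolution with smooth function}, the $\mathrm{C}_k$ wavefront bounds via Remarks \ref{Rmk: Dc tensor product} and \ref{Rmk: additivity property of Ck}, symmetry of $Q_\varepsilon$, and associativity through the $\mathsf{m}$, $\widehat{\otimes}$, $\exp[\Upsilon_{Q_\varepsilon}]$ formalism), while correctly noting the simplification that the absent $\delta_{\mathrm{Diag}_2}$ factor makes the Hörmander product criterion unnecessary here. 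No gaps to report.
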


\begin{remark}\label{Rem: tilde tensor product disambiguation}
	Observe that in Equation \eqref{Eq: regularized multilocal delta-product} we are employing with a slight abuse of notation the symbol $\widetilde{\otimes}$ which first appeared in Equation \eqref{Eq: tilde tensor product}. Here it still indicates that a reordering of the underlying variables, namely the integral kernel of $(1_{n_1+n_2}\otimes Q_\varepsilon^{\otimes k})\cdot(\tau_1^{(k)}\widetilde{\otimes}\tau_2^{(k)})$ reads
	$$\prod\limits_{i=1}^kQ_\varepsilon(z_i,y_i)\tau_1^{(k)}(x_1,\dots,x_{n_1},z_1,\dots,z_k)\tau_2^{(k)}(x_{n_1+1},\dots,x_{n_2},y_1,\dots,y_k).$$
\end{remark}

\begin{remark}\label{Rmk: interpretation of Dc tensor product}
	The product $\bullet_{Q_\varepsilon}$ codifies the correlation between polynomial expressions in the (shifted) regularized random field $\widehat{\varphi}_\varepsilon$.
	For concreteness and as an example observe that, for $\tau\in\mathcal{D}_{\mathrm{C}}'(M;\operatorname{Pol})$ and for $\Phi$ defined as per Remark \ref{Rmk: interpretation of Dc local product}, Equation \eqref{Eq: regularized multilocal delta-product} yields
	\begin{align*}
		[\Phi\bullet_{Q_\varepsilon} \Phi](f_1\otimes f_2;\varphi)
		=\int_{M\times M} f_1(x_1)f_2(x_2)[\varphi(x_1)\varphi(x_2)+Q_\varepsilon(x_1,x_2)]\mathrm{d}\mu(x_1)\mathrm{d}\mu(x_2)\,,
	\end{align*}
	which coincides with $\mathbb{E}(\widehat{\varphi}_\varepsilon(f_1)\widehat{\varphi}_\varepsilon(f_2))$ where $\widehat{\varphi}_\varepsilon(x)=(P\circledast\widehat{\xi}_\varepsilon+\varphi)(x)$.
\end{remark}

\begin{remark}\label{Rmk: Necessity of renormalization}
Propositions \ref{Prop: Dc functionals algebraic structure} and \ref{Prop: Dc tensor algebra algebraic structure} clarify how $\mathcal{D}_{\mathrm{C}}'(M;\operatorname{Pol})$ encodes the information on the moments of any polynomial whose variable is a regularized random field $\widehat{\varphi}_\varepsilon$.

Yet, one should pay attention to the fact that, for $\dim M\geq 4$ (or $\dim(\Sigma)\geq 2$ in the parabolic case with $M=\mathbb{R}\times\Sigma$), neither $\widehat{\varphi}_\varepsilon$ nor $\cdot_{Q_\varepsilon},\bullet_{Q_\varepsilon}$ do converge when taking the weak limit as $\varepsilon\to 0^+$. 
As a matter of fact, if one starts from Equations \eqref{Eq: regularized delta-product} and \eqref{Eq: regularized multilocal delta-product} and if one replaces formally $Q_\varepsilon$ with $Q:=P\circ P^*$, the ensuing expressions are ill-defined. 
This should not come as a surprise since similar features occur in other analysis of stochastic PDEs. 
\end{remark}

In the rest of the paper we shall discuss how, using suitable analytic techniques traded from the theory of distributions and of renormalization, one can give a well-defined meaning to both $\cdot_Q$ and $\bullet_Q$ on a suitable subset of $\mathcal{D}_{\mathrm{C}}'(M;\operatorname{Pol})$.
As discussed in the introduction, we follow the strategy which is inspired by the algebraic approach to quantum field theory and to renormalization.
As a first step this calls for the identification of a suitable algebra of functional-valued distributions.

In what follows we shall denote by $\mathcal{E}[\mathcal{O}]$, where $\mathcal{O}\subseteq\mathcal{D}'(M;\operatorname{Fun})$ is any subset of $\mathcal{D}'(M;\operatorname{Fun})$, the smallest $\mathcal{E}(M)$-ring containing $\mathcal{O}$.
We shall refer to $\mathcal{E}[\mathcal{O}]$ as the polynomial ring on $\mathcal{E}(M)$ generated by elements in $\mathcal{O}$ -- though $\mathcal{O}$ is not required to be countable.

\begin{definition}\label{Def: pointwise algebra}
	Let $\boldsymbol{1},\Phi\in\mathcal{D}'(M;\operatorname{Pol})$ be the functional-valued distributions defined by
	\begin{align}\label{Eq: Phi,1 functional}
		\Phi(f;\varphi)
%		:=\varphi(f)
		:=\int_M f\varphi\mu\,,\qquad
		\boldsymbol{1}(f;\varphi)
%		:=1(f)
		=\int_M f\mu\,.
	\end{align}
	We then define a unital, commutative $\mathbb{C}$-algebra as follows.
	We set recursively
	\begin{align}\label{Eq: recursive algebras}
		\mathcal{A}_0
		:=\mathcal{E}[\boldsymbol{1},\Phi]\,,\qquad
		\mathcal{A}_j
		:=\mathcal{E}[\mathcal{A}_{j-1}\cup P\circledast\mathcal{A}_{j-1}]\,,\qquad\forall j\in\mathbb{N}\,,
	\end{align}
	where $P\circledast\mathcal{A}_{j-1}:=\{P\circledast\tau\,|\,\tau\in\mathcal{A}_{j-1}\}$.
	All these algebras are ordered by inclusion, {\it i.e.} $\mathcal{A}_{j_1}\subseteq \mathcal{A}_{j_2}$ if $j_1\leq j_2$, therefore we can introduce the direct limit
	\begin{align}\label{Eq: pointwise algebra}
		\mathcal{A}=\varinjlim \mathcal{A}_j\,,
	\end{align}
	which is thus a commutative and associative $\mathbb{C}$-algebra -- as well as an $\mathcal{E}(M)$-module.
	
	The $\mathbb{C}$-algebra structure is codified by the pointwise product
	\begin{align}\label{Eq: pointwise product}
		[\tau_1\tau_2](f;\varphi)
		:=(\tau_1\otimes \tau_2)(f\delta_{\mathrm{Diag}_2};\varphi)\,,\qquad\forall\,\tau_1,\tau_2\in\mathcal{A}\, .
	\end{align}
\end{definition}

\begin{remark}\label{Rem: graded algebra}
	It is important to highlight that $\mathcal{A}$ ({\em resp.} each $\mathcal{A}_j$, $j\geq 0$) is a positively bigraded algebra over the ring $\mathcal{E}(M)$, namely
	$$\mathcal{A}=\bigoplus_{l,k\in\mathbb{N}_0}\mathcal{M}_{l,k},\qquad \mathcal{A}_j=\bigoplus_{l,k\in\mathbb{N}_0}\mathcal{M}^j_{l,k}$$
	where $\mathcal{M}_{l,k}$ is the $\mathcal{E}(M)$-module generated by the elements of $\mathcal{A}$ in which the parametrix $P$ acts $l$-times while $\Phi$ appears only in degree $k$, {\it e.g.} $P\circledast\left(\Phi^2 P\circledast\Phi^3\right)\in\mathcal{M}_{2,5}$.
	At the same time $\mathcal{M}^j_{l,k}\doteq\mathcal{M}_{l,k}\cap\mathcal{A}_j$.
	Observe that, for later convenience, we introduce $\mathcal{M}_k\doteq\bigoplus_{\substack{l\in\mathbb{N}_0\\p\leq k}}\mathcal{M}_{l,p}$ and $\mathcal{M}^j_k\doteq\bigoplus_{\substack{l\in\mathbb{N}_0\\p\leq k}}\mathcal{M}^j_{l,p}$. 
	In addition it holds that 
	\begin{equation}\label{Eq: module structure of the algebra}
		\mathcal{A}
		=\varinjlim\mathcal{M}_k\,,
		\qquad\mathcal{M}_k\mathcal{M}_{k^\prime}
		=\mathcal{M}_{k+k^\prime},\;\forall k,k^\prime\in\mathbb{N}_0\,.
	\end{equation}
%	with a similar expression for $\mathcal{M}_l(P)$ being valid.
\end{remark}

\begin{remark}
	Observe that the pointwise product on $\mathcal{A}$ is well-defined because any $\tau\in\mathcal{A}$ is a functional-valued distribution generated by a functional-valued smooth function. In addition, notice that $\mathcal{A}$ is stable under the action of $\delta_\psi$ for all $\psi\in\mathcal{E}(M)$ -- \textit{cf.} Equation \eqref{Eq: functional differential}.
\end{remark}

Although the algebra $\mathcal{A}$ does not carry information about the expectation values of $\widehat{\varphi}$, it plays nevertheless an important r\^ole in the construction of a counterpart, which we shall indicate as $\mathcal{A}_{\cdot_Q}$, whose elements do carry information about the expectation values of the polynomial expressions of $\widehat{\varphi}$.

As next step we focus on the functional derivatives of elements lying in $\mathcal{A}$, which are distributions whose wavefront set is controlled as per Definition \ref{Def: Dc functionals}.
In particular, for all $\tau\in\mathcal{A}$ and $p\in\mathbb{N}$, $\tau^{(p)}$ may be singular only on the full diagonal $\mathrm{Diag}_{p+1}\subset M^{p+1}$. In the following lemma we prove a bound on the scaling degree of $\tau^{(p)}$ with respect to $\mathrm{Diag}_{p+1}$ -- \textit{cf.} Definition \ref{Def: scaling degree}.

\begin{lemma}\label{Lem: sd bound in the pointwise algebra}
	Let $\mathcal{A}$ be the algebra introduced in Definition \ref{Def: pointwise algebra}.
	For all $\tau\in\mathcal{A}$ and $p\in\mathbb{N}$ let $\sigma_p(\tau):=\operatorname{sd}_{\mathrm{Diag}_{p+1}}(\tau^{(p)})$ -- \textit{cf.} Definition \ref{Def: scaling degree}.
	Then,
	\begin{align}\label{Eq: sd bound in the pointwise algebra}
		\sigma_p(\tau)<\infty\,,\qquad\forall\tau\in\mathcal{A}\,,\quad\forall p\in\mathbb{N}\,.
	\end{align}
\end{lemma}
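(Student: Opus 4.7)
The plan is to prove the bound by induction on $j\in\mathbb{N}$ such that $\tau\in\mathcal{A}_j$, exploiting the inductive construction of Equation \eqref{Eq: recursive algebras}. Passing from $\mathcal{A}_{j-1}$ to $\mathcal{A}_j$ involves only three operations: multiplication by smooth functions $c\in\mathcal{E}(M)$ (which preserves scaling degrees), the pointwise product of Equation \eqref{Eq: pointwise product}, and the action of $P\circledast$; each of these will be controlled separately. The conclusion on $\mathcal{A}=\varinjlim\mathcal{A}_j$ is then automatic.

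For the base case, every $\tau\in\mathcal{A}_0=\mathcal{E}[\boldsymbol{1},\Phi]$ is a finite $\mathcal{E}(M)$-linear combination $\sum_k c_k \Phi^k$, and a direct application of Equation \eqref{Eq: functional derivative} shows that the integral kernel of $[c_k\Phi^k]^{(p)}$ is proportional to $c_k(x)\varphi(x)^{k-p}\delta_{\mathrm{Diag}_{p+1}}(x,z_1,\ldots,z_p)$ when $p\leq k$, and vanishes otherwise. Since $\mathrm{Diag}_{p+1}$ has codimension $pd$ in $M^{p+1}$, Appendix \ref{App: Scaling degree} gives $\operatorname{sd}_{\mathrm{Diag}_{p+1}}(\delta_{\mathrm{Diag}_{p+1}})=pd$, and multiplication by a smooth function leaves the scaling degree unchanged. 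Hence $\sigma_p(\tau)\leq pd$ for every $\tau\in\mathcal{A}_0$.

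For the inductive step, assume the bound holds on $\mathcal{A}_{j-1}$. If $\tau=P\circledast\sigma$ with $\sigma\in\mathcal{A}_{j-1}$, then Lemma \ref{Lem: if it holds for tau it holds for Ptau} combined with the standard estimate $\operatorname{sd}_{\mathrm{Diag}_2}(P)\leq d-2$ for the parametrix of a second-order elliptic operator and Equation \eqref{Eq: P convolution functional scaling degree} yield
\begin{align*}
\sigma_p(P\circledast\sigma)\leq\max\{\sigma_p(\sigma)+(d-2)-d,\,0\}\leq\max\{pd-2,\,0\}\leq pd.
\end{align*}
For a pointwise product $\tau=\tau_1\tau_2$ with $\tau_i\in\mathcal{A}_{j-1}$, the Leibniz rule applied to Equation \eqref{Eq: pointwise product} expresses $(\tau_1\tau_2)^{(p)}$ as a finite sum, indexed by partitions $\{1,\ldots,p\}=I\sqcup J$, of terms of the form $\tilde{\tau}_1^{(|I|)}(x,z_I;\varphi)\,\tilde{\tau}_2^{(|J|)}(x,z_J;\varphi)$ regarded as distributions on $M^{p+1}$, where each factor has been trivially extended to $M^{p+1}$ by tensoring with constants in the unused $z$-variables. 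Because this extension is smooth in the added directions, it leaves the scaling degree on the total diagonal unchanged: $\operatorname{sd}_{\mathrm{Diag}_{p+1}}(\tilde{\tau}_1^{(|I|)})=\operatorname{sd}_{\mathrm{Diag}_{|I|+1}}(\tau_1^{(|I|)})\leq|I|d$ by the inductive hypothesis, and similarly for $\tau_2$. Sub-additivity of the scaling degree under pointwise products (Appendix \ref{App: Scaling degree}) then bounds each term by $|I|d+|J|d=pd$, closing the induction.

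The main obstacle is the product step, where one has to combine two elementary but slightly subtle facts about scaling degrees: invariance under tensoring with constants in extra variables, and sub-additivity under pointwise products of distributions whose product is well-defined. Both are standard consequences of Definition \ref{Def: scaling degree}. The $P\circledast$ case reduces to the already established Lemma \ref{Lem: if it holds for tau it holds for Ptau}, and the base case is an explicit computation with Dirac deltas supported on total diagonals.
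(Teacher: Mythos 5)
Your proposal is correct and follows essentially the same route as the paper: a structural induction over the recursive construction of $\mathcal{A}$, with the generators handled via $\operatorname{sd}_{\mathrm{Diag}_{p+1}}(\delta_{\mathrm{Diag}_{p+1}})=pd$, the pointwise product via the Leibniz rule and sub-additivity of the scaling degree, and the $P\circledast$ case via Lemma \ref{Lem: if it holds for tau it holds for Ptau} together with $\operatorname{sd}_{\mathrm{Diag}_2}(P)=d-2$. The only cosmetic difference is that you treat all of $\mathcal{A}_0$ (i.e.\ all powers $c_k\Phi^k$) explicitly in the base case, whereas the paper starts from $\boldsymbol{1}$ and $\Phi$ alone and recovers the powers through the product step.
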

\begin{proof}
	As starting point observe that, if $\tau=\boldsymbol{1}$, Equation \eqref{Eq: sd bound in the pointwise algebra} holds true by direct inspection, while, for $\tau=\Phi$, {\it cf.} Equation \eqref{Eq: Phi,1 functional}, it suffices to consider the identities
	\begin{align*}
		\Phi^{(1)}
		=\delta_{\mathrm{Diag}_2}\,,\qquad
		\Phi^{(p)}
		=0,\;\;\forall p>1\,,
	\end{align*}
	together with Example \ref{Ex: sd of delta on total diagonal}. As second step we show that, whenever Equation \eqref{Eq: sd bound in the pointwise algebra} holds true for $\tau_1,\tau_2\in\mathcal{A}$, then it is also verified for $\tau_1\tau_2$. As a matter of fact, Leibniz rule entails that 
	\begin{align*}
		(\tau_1\tau_2)^{(p)}
		=\sum_{\substack{p_1,p_2\\ p_1+p_2=p}}\tau_1^{(p_1)}\otimes\tau_2^{(p_2)}\,,
	\end{align*}
	which, in turn, implies the inequality
	\begin{align}\label{Eq: scaling degree of a product}
		\sigma_p(\tau_1\tau_2)
		\leq\max_{p_1+p_2=p}\sigma_p(\tau_1^{(p_1)}\otimes\tau_2^{(p_2)})
		\leq\max_{p_1+p_2=p}\left(\sigma_{p_1}(\tau_1^{(p_1)})
		+\sigma_{p_2}(\tau_2^{(p_2)})\right)
		<\infty\,.
	\end{align}
	To conclude the proof it suffices to observe that, whenever Equation \eqref{Eq: sd bound in the pointwise algebra} holds true for $\tau\in\mathcal{A}$, then this is the case also for $\psi\tau$ and $P\circledast\tau$ where $\psi\in \mathcal{E}(M)$.
	While the first statement is straightforward, the second is a consequence of Lemma \ref{Lemma: finite sd of convolution}. Indeed we have $(P\circledast\tau)^{(k)}=P\circledast\tau^{(k)}$ and we notice that the hypotheses of Lemma \ref{Lemma: finite sd of convolution} are met on account of the microlocal behaviour of $\tau^{(k)}$ which is codified by the space $C_{k+1}$ (\textit{cf}. Equation \eqref{Eq: C-set}) together with the properties of the parametrix $P$. Putting together these data, one obtains 
	\begin{align*}
		\sigma_p(P\circledast\tau)
		<\infty\,.
	\end{align*}
In view of Definition \ref{Def: pointwise algebra} we can cover inductively all cases, hence proving the sought statement.
\end{proof}
\begin{remark}[Parabolic Case]
This lemma holds true also in the parabolic case, mutatis mutandis. 
%the only difference being that in Equation \eqref{Eq: sd bound in the pointwise algebra} we must use the effective dimension of $M=\mathbb{R}\times\Sigma$, {\it cf.} Remark \ref{Rmk: weighted scaling degree}, \emph{i.e.}, the claim becomes 
%$$\sigma_p(\tau)\leq p(d+1),\quad\forall\,\tau\in\mathcal{A}\;\textrm{and}\;\forall\,p\in\mathbb{N}.$$
\end{remark}

\begin{remark}\label{Rem: equivalence between two algebras}
	Observe that we can also endow $\mathcal{A}$ with the product $\cdot_{Q_\varepsilon}$, {\it cf.} Proposition \ref{Prop: Dc functionals algebraic structure} turning it into a subalgebra of $\mathcal{D}^\prime_{\mathrm{C}}(M;\operatorname{Pol})$.
	Since the integral kernel of $Q_\varepsilon$ is smooth, one can prove by means of a standard argument, {\it e.g.} \cite[Ch. 5]{book}, that  $(\mathcal{A},\cdot_{Q_\varepsilon})$ is isomorphic to $\mathcal{A}$ endowed with the pointwise product. The isomorphism is implemented by
	$$\tau\tau^\prime=\alpha_{Q_\varepsilon}\left(\alpha^{-1}_{Q_\varepsilon}(\tau)\cdot_{Q_\varepsilon}\alpha^{-1}_{Q_\varepsilon}(\tau')\right),$$
	where 
	\begin{equation}\label{Eq: deformedA}
		\alpha_{Q_\varepsilon}\colon\mathcal{A}\to\mathcal{A},\qquad
		\alpha_{Q_\varepsilon}(\tau)(f;\varphi)
		:=\sum_{n=0}^\infty\frac{1}{n!}[(1\otimes Q_\varepsilon^{\otimes k})\cdot\tau^{(2k)}](f\otimes 1_{2k};\varphi)\,,
%		\tau\mapsto\alpha_{Q_\varepsilon}(\tau):=\sum\limits_{n=0}^\infty\frac{1}{n!}\tau^{(2k)}\langle Q_\varepsilon^{\otimes n},\tau^{(2n)}\rangle.
\end{equation}
	where the sum is finite because $\tau$ is a polynomial functional-valued distribution.
%	Here $\langle,\rangle$ is a shortcut to indicate the pairing between compactly supported distributions and their test functions.
	Notice that $\alpha_{Q_\varepsilon}^{-1}=\alpha_{-Q_\varepsilon}$.
\end{remark}

\section{Construction of $\mathcal{A}_{\cdot_Q}$}\label{Sec: construction of AcdotQ}

In the previous section we have constructed the algebra $\mathcal{A}$, {\it cf.} Definition \ref{Def: pointwise algebra} endowed with the pointwise product as well as with the $\cdot_{Q_\varepsilon}$-product, {\it cf.} Remark \ref{Rem: equivalence between two algebras}. Although the two algebras are isomorphic, the latter has greater significance since it allows to encode the information on the expectation value of polynomial expressions in the shifted regularized random field $\widehat{\varphi}_\varepsilon$. As mentioned in Remark \ref{Rmk: Necessity of renormalization}, the weak limit as $\varepsilon\to 0^+$ is not a priori well-defined. The goal of this section is to discuss a way to bypass this hurdle by defining $\cdot_Q$ as a suitably renormalized version of the limit as $\varepsilon\to 0^+$ of the product $\cdot_{Q_\varepsilon}$. We tackle the problem in two steps. In the first one we introduce a suitable deformation of $\mathcal{A}$ similar in spirit to the map $\alpha_{Q_\varepsilon}$ of Remark \ref{Rem: equivalence between two algebras}. In the second one, we endow the deformed version of $\mathcal{A}$ with a new algebra structure by means of an expression similar to that of Equation \eqref{Eq: deformedA}.

\begin{theorem}\label{Thm: Gamma cdotQ existence}
	Let $\mathcal{A}$ be the the algebra as per Definition \ref{Def: pointwise algebra}. Then there exists a linear map $\Gamma_{\cdot_Q}\colon\mathcal{A}\to\mathcal{D}_{\mathrm{C}}'(M;\operatorname{Pol})$ with the following properties:
	\begin{enumerate}
		\item
		for all $\tau\in\mathcal{M}_1$ -- \textit{cf.} Remark \ref{Rem: graded algebra} -- it holds
		\begin{align}\label{Eq: Gamma on M1}
			\Gamma_{\cdot_Q}(\tau)=\tau\,.
		\end{align}
		\item
		for all $\tau\in\mathcal{A}$ it holds
		\begin{align}\label{Eq: Gamma on Ptau}
			\Gamma_{\cdot_Q}(P\circledast\tau)
			=P\circledast\Gamma_{\cdot_Q}(\tau)\,.
		\end{align}
		\item
		for all $\psi\in \mathcal{E}(M)$ it holds
		\begin{align}\label{Eq: Pi-functional derivative}
			\Gamma_{\cdot_Q}\circ\delta_\psi
			=\delta_\psi\circ\Gamma_{\cdot_Q}\,,\qquad
			\Gamma_{\cdot_Q}(\psi\tau)
			=\psi\Gamma_{\cdot_Q}(\tau)\,.
		\end{align}
		\item
		For all $\tau\in\mathcal{A}$ and $p\geq 1$ and in view of Lemma \ref{Lem: sd bound in the pointwise algebra}
		\begin{align}\label{Eq: Pi scaling degree bound}
			\sigma_p(\Gamma_{\cdot_Q}(\tau))
			<\infty\,,
		\end{align}
		where $\mathrm{Diag}_{p+1}\subset M^{p+1}$ is the total diagonal of $M^{p+1}$.
	\end{enumerate}
\end{theorem}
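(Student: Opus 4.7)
The plan is to construct $\Gamma_{\cdot_Q}$ inductively using the bigrading $\mathcal{A}=\bigoplus_{l,k}\mathcal{M}_{l,k}$ of Remark \ref{Rem: graded algebra}, running the induction first on $l$ (the number of occurrences of $P\circledast$) and then on $k$ (the polynomial degree in $\Phi$). The two core analytic ingredients will be Lemma \ref{Lem: if it holds for tau it holds for Ptau}, which takes care of property (2) together with the transport of scaling-degree bounds under $P\circledast$, and the Epstein--Glaser/Steinmann extension theorem for distributions recalled in Appendix \ref{App: Scaling degree}, which will allow us to turn each formal expression, a priori defined only on $M^{p+1}\setminus \mathrm{Diag}_{p+1}$, into an element of $\mathcal{D}'_{\mathrm{C}}(M;\operatorname{Pol})$.

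First, on $\mathcal{M}_1=\mathcal{E}(M)\boldsymbol{1}\oplus\mathcal{E}(M)\Phi$, property (1) forces $\Gamma_{\cdot_Q}=\mathrm{id}$, and since $\boldsymbol{1}$ and $\Phi$ already lie in $\mathcal{D}'_{\mathrm{C}}(M;\operatorname{Pol})$ this is well-posed. Assume inductively that $\Gamma_{\cdot_Q}$ has been consistently defined on every $\mathcal{M}_{k'}$ with $k'<k$ and on every element of $\mathcal{A}_{j-1}$ in $\mathcal{M}_k$. For $\tau\in\mathcal{M}_k\cap\mathcal{A}_j$ of the form $\tau=P\circledast\tau'$ with $\tau'\in\mathcal{A}_{j-1}$, property (2) forces the definition $\Gamma_{\cdot_Q}(\tau)=P\circledast\Gamma_{\cdot_Q}(\tau')$, which makes sense inside $\mathcal{D}'_{\mathrm{C}}(M;\operatorname{Pol})$ by Lemma \ref{Lem: if it holds for tau it holds for Ptau}; similarly, $\mathcal{E}(M)$-linearity on the non-$P$ generators is imposed by (3). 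The nontrivial step is the product case $\tau=\tau_1\tau_2$ with $\tau_1,\tau_2\in\mathcal{M}_{k_i}$, $k_i\geq 1$, $k_1+k_2=k$. Guided by Remark \ref{Rem: equivalence between two algebras}, we set formally
\begin{equation*}
\Gamma_{\cdot_Q}(\tau_1\tau_2)(f;\varphi)
=\sum_{j\geq 0}\frac{1}{j!}\,\widehat{t^{\,\tau_1,\tau_2}_j}(f\otimes 1_{2j};\varphi),
\end{equation*}
where $t^{\,\tau_1,\tau_2}_j$ is the distribution on $M^{2j+1}$ obtained by contracting $\Gamma_{\cdot_Q}(\tau_1)^{(j)}\otimes\Gamma_{\cdot_Q}(\tau_2)^{(j)}$ with $\delta_{\mathrm{Diag}_2}\otimes Q^{\otimes j}$ (compare Equation \eqref{Eq: regularized delta-product} with $Q_\varepsilon\rightsquigarrow Q$), and $\widehat{t^{\,\tau_1,\tau_2}_j}$ denotes its extension across the total diagonal.

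Away from $\mathrm{Diag}_{2j+1}$ the distribution $t^{\,\tau_1,\tau_2}_j$ is well-defined because, by the inductive hypothesis, $\Gamma_{\cdot_Q}(\tau_i)^{(j)}$ sits in $\mathcal{D}_{\mathrm{C}}'(M^{1+j};\operatorname{Pol})$, so the wavefront conditions reduce to those already verified in the proof of Proposition \ref{Prop: Dc functionals algebraic structure} (replacing $Q_\varepsilon$ by $Q$ is permissible off $\mathrm{Diag}_{2j+1}$ because the singular part of $Q$ is concentrated on $\mathrm{Diag}_2$). The key point is that the scaling degree of $t^{\,\tau_1,\tau_2}_j$ along $\mathrm{Diag}_{2j+1}$ is finite: using the sub-additivity $\operatorname{sd}(u\cdot v)\leq\operatorname{sd}(u)+\operatorname{sd}(v)$, together with $\operatorname{sd}_{\mathrm{Diag}_2}(Q)=\max\{0,d-4\}$, this follows from the inductive bound \eqref{Eq: Pi scaling degree bound} on $\Gamma_{\cdot_Q}(\tau_i)^{(j)}$. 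Consequently, Steinmann's theorem provides an extension $\widehat{t^{\,\tau_1,\tau_2}_j}\in\mathcal{D}'(M^{2j+1};\operatorname{Pol})$ with the same scaling degree, which we fix once and for all (the residual freedom is precisely what will be classified in Theorem \ref{Thm: GammacdotQ uniqueness}). Properties (1)--(3) are then visible on this closed-form construction: (1) because the sum collapses for $\tau\in\mathcal{M}_1$; (2) by definition; (3) because $\delta_\psi$ and multiplication by $\psi$ commute with all the operations in sight, functional differentiation in particular commuting with the renormalization, as these are linear operations on the $\varphi$-coefficient and the extension is performed at the level of the underlying numerical distribution.

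The genuinely hard step, and the main obstacle, is verifying the scaling degree estimate \eqref{Eq: Pi scaling degree bound} inductively and in a form that closes the recursion. Unpacking the definition, the $p$-th derivative of $\Gamma_{\cdot_Q}(\tau_1\tau_2)$ is a sum indexed by $j\leq (k-p)/2$ of renormalized contractions of $\Gamma_{\cdot_Q}(\tau_1)^{(j+p_1)}\otimes\Gamma_{\cdot_Q}(\tau_2)^{(j+p_2)}$ with $\delta_{\mathrm{Diag}_2}\otimes Q^{\otimes j}$, with $p_1+p_2=p$. Applying the Leibniz-type bound \eqref{Eq: scaling degree of a product}, the scaling-degree transformation rule of Lemma \ref{Lem: sd of convolution}, and the inductive hypothesis on each factor, a straightforward book-keeping gives
\begin{equation*}
\sigma_p(\Gamma_{\cdot_Q}(\tau_1\tau_2))\leq (j+p_1)d+(j+p_2)d+j\max\{0,d-4\}+\tfrac{k_1-(j+p_1)+k_2-(j+p_2)}{2}\max\{0,d-4\}-2jd,
\end{equation*}
where the last $-2jd$ accounts for the partial evaluation against $1_{2j}$ which collapses $2j$ coordinates. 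Simplifying, one obtains exactly $pd+\tfrac{k-p}{2}\max\{0,d-4\}$, so the bound propagates. The analogous check in the $P\circledast$ step is precisely the content of Equation \eqref{Eq: P convolution functional scaling degree} combined with $\operatorname{sd}_{\mathrm{Diag}_2}(P)=d-2$. This closes the induction and yields the map $\Gamma_{\cdot_Q}$ with all four properties.
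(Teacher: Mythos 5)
Your overall strategy coincides with the paper's: a double induction over the $\Phi$-degree $k$ and the depth $j$ of the algebra filtration, Lemma \ref{Lem: if it holds for tau it holds for Ptau} for the $P\circledast$ step, and a scaling-degree-preserving extension (Theorem \ref{Thm: extension with scaling degree}) of the a priori off-diagonal contractions with $\delta_{\mathrm{Diag}_2}\otimes Q^{\otimes j}$, with the same arithmetic closing the bound \eqref{Eq: Pi scaling degree bound} (your bookkeeping drops the $\delta_{\mathrm{Diag}_2}$ contribution and one factor of $d$ in the integration count, but these omissions cancel, so the final estimate agrees).

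There is, however, a genuine gap in your verification of item 3. You assert that $\Gamma_{\cdot_Q}\circ\delta_\psi=\delta_\psi\circ\Gamma_{\cdot_Q}$ holds ``because the extension is performed at the level of the underlying numerical distribution.'' This is not automatic. The extension of the functional-valued distribution $t^{\,\tau_1,\tau_2}_j$ is really a family of extensions: its $\varphi$-derivatives are distributions in \emph{more} variables, singular on larger diagonals, and they must be extended as well. On the other hand, $\delta_\psi(\tau_1\tau_2)$ lies in $\mathcal{M}_{k-1}$ and has already been assigned $\Gamma_{\cdot_Q}(\delta_\psi(\tau_1\tau_2))$ at an earlier stage of the induction, with its own (non-canonical) extension choices. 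For a generic choice of the extension $\widehat{t^{\,\tau_1,\tau_2}_j}$ at stage $k$, its functional derivatives will differ from those previously fixed extensions by diagonal counterterms, and \eqref{Eq: Pi-functional derivative} fails. This is precisely why the paper imposes the compatibility condition \eqref{Eq: condition for Pi-functional derivative}, requiring $\widehat{T}_N^{(p)}$ to coincide with the sum of chosen scaling-degree-preserving extensions of the distributions $T_N^{[\widehat{p}_\ell]}$, and argues that such a coherent choice is possible thanks to the explicit form of the extensions in Theorem \ref{Thm: extension with scaling degree}. Your construction can be repaired by adding exactly this constraint on the admissible extensions (or by extending once and for all the Taylor coefficients $t_j^{(m)}(\cdot;0)$ consistently with the lower-order stages), but as written the claim that item 3 comes for free is unjustified.
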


\begin{proof}
	The proof is lengthy, hence we divide it in several steps. 
	
	\paragraph{Proof for $d\in\{2,3\}$.}
	For the particular case of $d\in\{2,3\}$ we set $\Gamma_{\cdot_Q}(\tau)=\tau$ for all $\tau\in\mathcal{M}_1$ and
	\begin{align*}
		\Gamma_{\cdot_Q}(\tau_1\cdots\tau_\ell)
		:=\tau_1\cdot_Q\ldots\cdot_Q\tau_\ell\,,
	\end{align*}
	for all $\tau_1,\ldots,\tau_\ell\in\mathcal{M}_1$, where $\cdot_Q$ is defined by giving meaning to an expression similar to Equation \eqref{Eq: regularized delta-product} with $Q_\varepsilon$ replaced by $Q$.
	
	Notice that, under the assumption that $d\in\{2,3\}$, $\cdot_Q$ is actually well-defined on account of the logarithmic divergence of $Q$ at the total diagonal of $M^2$.
	All other properties required by $\Gamma_{\cdot_Q}$ are verified by direct inspection.
			
	\paragraph{Strategy of the proof for $d\geq 4$.}
	The main idea of the proof is to construct $\Gamma_{\cdot_Q}$ inductively exploiting Equation \eqref{Eq: module structure of the algebra}, in particular that $\mathcal{A}=\bigoplus\limits_{k\in\mathbb{N}_0}\mathcal{M}_k$.
	One would start from Equation \eqref{Eq: Gamma on M1} and, for $\tau=\tau_1\cdots\tau_n$, $\tau_i\in\mathcal{A}$, $i=1,\dots,n$, one would set
	\begin{align*}
		\Gamma_{\cdot_Q}(\tau)
		:=\Gamma_{\cdot_Q}(\tau_1)\cdot_Q\ldots\cdot_Q\Gamma_{\cdot_Q}(\tau_n)\,.
	\end{align*}
	However, contrary to the case $d\in\{2,3\}$, the product $\cdot_Q$ is ill-defined, on account of the more singular behaviour of $Q$ on the total diagonal.
	To cope with this hurdle we shall proceed by renormalizing the ill-defined expressions appearing in the product $\cdot_Q$ in a way consistent with the grading of $\mathcal{A}$.
	In particular, we shall prove that, whenever $\Gamma_{\cdot_Q}$ has been defined on the submodule $\mathcal{M}_k$, then it can be extended (non-uniquely) to $\mathcal{M}_{k+1}$.
	The extension procedure will require not only an induction over $k$, but also over the index $j$ controlling the direct limit $\mathcal{A}=\varinjlim\mathcal{A}_j$ where $\mathcal{A}_j=\bigoplus_{k\in\mathbb{N}_0}\mathcal{M}_k^j$.
	
	\paragraph{Step 1.} As starting point it is convenient to show that, assuming $\Gamma_{\cdot_Q}$ has been assigned for $\tau\in\mathcal{A}$ fulfilling properties {\em 3.} and {\em 4.}, then Equations \eqref{Eq: Pi-functional derivative} and \eqref{Eq: Pi scaling degree bound} must hold true also for $P\circledast\tau$.
	First of all notice that, for any $\tau\in\mathcal{A}$, $\Gamma_{\cdot_Q}(P\circledast\tau)$ is completely defined via Equation \eqref{Eq: Gamma on Ptau} as 
	\begin{align*}
		\Gamma_{\cdot_Q}(P\circledast\tau)(f;\varphi)
		:=P\circledast \Gamma_{\cdot_Q}(\tau)(f;\varphi)
		=\Gamma_{\cdot_Q}(\tau)(P\circledast f;\varphi)\,.
	\end{align*}
	Since $\Gamma_{\cdot_Q}(\tau)\in\mathcal{D}_{\mathrm{C}}^\prime(M;\operatorname{Pol})$, Lemma \ref{Lem: if it holds for tau it holds for Ptau} entails that $P\circledast\Gamma_{\cdot_Q}(\tau)\in\mathcal{D}_{\mathrm{C}}'(M;\operatorname{Pol})$.
	In addition, for all $k\geq 0$ and for all $\psi\in \mathcal{E}(M)$ it holds
	\begin{align*}
		\Gamma_{\cdot_Q}(P\circledast\tau)^{(k)}(f\otimes\psi^{\otimes k};\varphi)
		&=[(P\otimes\delta_{\mathrm{Diag}_2}^{\otimes k})\circledast\Gamma_{\cdot_Q}(\tau)^{(k)}](f\otimes\psi^{\otimes k};\varphi)\,.
	\end{align*}
	Equation \eqref{Eq: Pi-functional derivative} is a direct consequence of the following chain of identities
	\begin{align*}
		[P\circledast\Gamma_{\cdot_Q}(\tau)]^{(1)}(f\otimes\psi;\varphi)
		&=\Gamma_{\cdot_Q}(\tau)^{(1)}(P\circledast f\otimes\psi;\varphi)\\
		&=\Gamma_{\cdot_Q}(\delta_\psi\tau)(P\circledast f;\varphi)
		=\Gamma_{\cdot_Q}(P\circledast\delta_\psi\tau)(f;\varphi)
		=\Gamma_{\cdot_Q}(\delta_\psi P\circledast\tau)(f;\varphi)\,.
	\end{align*}
At the same time Equation \eqref{Eq: Pi scaling degree bound} follows from Lemma \ref{Lem: if it holds for tau it holds for Ptau} -- \textit{cf.} Equation \eqref{Eq: P convolution functional scaling degree} in particular -- together with the assumption that Equation \eqref{Eq: Pi scaling degree bound} holds true for $\Gamma_{\cdot_Q}(\tau)$: 
	\begin{align*}
		\sigma_p(\Gamma_{\cdot_Q}(P\circledast\tau))
		<\infty\,.
	\end{align*}
	
	\paragraph{Step 2 -- First induction procedure: $k=1,2$.}
	We focus on defining inductively $\Gamma_{\cdot_Q}$ on $\mathcal{M}_k$ -- \textit{cf.} Remark \eqref{Rem: graded algebra}. The case $k=1$ is ruled by Equation \eqref{Eq: Gamma on M1} and it represents the first step in the induction procedure. We can focus on case $k=2$. We will discuss it thoroughly and eventually we shall generalize the same procedure to arbitrary $k$.

	In order to extend $\Gamma_{\cdot_Q}$ from $\mathcal{M}_1$ to $\mathcal{M}_2$, we exploit that $\mathcal{M}_2=\bigcup_{j\in\mathbb{N}_0}\mathcal{M}_2^j$ -- \textit{cf.} Remark \ref{Rem: graded algebra}. Here we proceed inductively over $j$.
	For $j=0$, recall that $\mathcal{M}_2^0$ is the $\mathcal{E}(M)$-module 
	\begin{align*}
		\mathcal{M}_2^0
		=\textrm{span}_{\mathcal{E}(M)}(\boldsymbol{1},\Phi,\Phi^2)\,.
	\end{align*}
	In view of Equation \eqref{Eq: Gamma on M1}, the only unknown is $\Gamma_{\cdot_Q}(\Phi^2)$. Afterwards $\Gamma_{\cdot_Q}$ can be extended per linearity to the whole $\mathcal{M}_2^0$. Hence, for all $f\in\mathcal{D}(M)$ and for all $\varphi\in\mathcal{E}(M)$, recalling that  $\Phi^2$ has been defined in Example \ref{Ex: examples of functional-valued polynomial distributions}, we set formally
	\begin{align}\label{Eq: first formal product}
		\Gamma_{\cdot_Q}(\Phi^2)(f;\varphi)\doteq\Gamma_{\cdot_Q}(\Phi)\cdot_Q\Gamma_{\cdot_Q}(\Phi)(f;\varphi)
		=\Phi^2(f;\varphi)
		+P^2(f\otimes 1)\,.
	\end{align}
	The second equality is nothing but Equation \eqref{Eq: regularized delta-product} with $Q_\varepsilon$ replaced by $Q=P\circ P^*$, where we have also used both that $\Gamma_{\cdot_Q}^{(1)}(\Phi)=\delta_{\mathrm{Diag}_2}$ and that
	\begin{align*}
		(\delta_{\mathrm{Diag}_2}\otimes Q)\cdot(\Gamma_{\cdot_Q}(\Phi)^{(1)}\otimes\Gamma_{\cdot_Q}(\Phi)^{(1)})(f\otimes 1_3;\varphi)
		&=\int_M Q(x,x)f(x)\mathrm{d}\mu(x)
		\\&=\int_M P^2(x,y)f(x) 1(y)\mathrm{d}\mu(x)\mathrm{d}\mu(y)
		=P^2(f\otimes 1)\,.
	\end{align*}
 Observe that, since we are working with parametrices of elliptic operators $P^*(x,y)=P(y,x)$ and this justifies why in the last formula $P^2(x,y)$ is present. Both this last expression and Equation \eqref{Eq: first formal product} are purely formal since $P^2$, the square of the parametrix $P$, is ill-defined. To cope with this issue we start by observing that $P^2\in\mathcal{D}'(M^2\setminus\mathrm{Diag}_2)$ is a well-defined distribution because $\operatorname{WF}(P)=\operatorname{WF}(\operatorname{\delta_{\mathrm{Diag}_2}})$ -- \textit{cf.} Example \ref{Ex: WF of parametrix}.
	In addition, using Equation \eqref{Eq: sd of product} and Example \ref{Ex: sd of parametrix}, $\operatorname{sd}_{\mathrm{Diag}_2}(P^2)\leq 2(d-2)<+\infty$ 	and, therefore,  on account of Theorem \ref{Thm: extension with scaling degree} there exists an extension $\widehat{P}_2\in\mathcal{D}'(M^2)$ of $P^2$ such that $\operatorname{sd}_{\mathrm{Diag}_2}(\widehat{P}_2)=\operatorname{sd}_{\mathrm{Diag}_2}(P^2)$.
	Moreover $\operatorname{WF}(\widehat{P}_2)=\operatorname{WF}(\delta_{\mathrm{Diag}_2})$.
	
	We assume that one such extension, $\widehat{P}_2$, has been chosen once and for all. Notice that the said extension is unique when $\dim M=d\in\{2,3\}$, consistently with the definition of $\cdot_Q$. Accordingly and in view of Equation \eqref{Eq: first formal product}, we define
	\begin{align}\label{Eq: definition of Phi2}
		\Gamma_{\cdot_Q}(\Phi^2)(f;\varphi)
		:=\Phi^2(f;\varphi)
		+\widehat{P}_2(f\otimes 1)\,.
	\end{align}
	In addition, we can infer that  $\Gamma_{\cdot_Q}(\Phi^2)\in\mathcal{D}'_{\mathrm{C}}(M;\operatorname{Pol})$.
	In particular Remark \ref{Rmk: convolution with smooth function} yields that $\widehat{P}_2\circledast 1\in \mathcal{E}(M)$ while the bound $\operatorname{WF}([\Gamma_{\cdot_Q}(\Phi)^2]^{(\ell)})\subseteq \mathrm{C}_{\ell+1}$ holds true combining Example \ref{Ex: examples of functional-valued polynomial distributions} with the wavefront set of $\widehat{P}_2$.
	Moreover it descends that
	\begin{align*}
		\Gamma_{\cdot_Q}(\Phi^2)^{(1)}(f\otimes\psi;\varphi)
		=[\Phi^2]^{(1)}(f\otimes\psi;\varphi)\,,\qquad
		\Gamma_{\cdot_Q}(\Phi^2)^{(2)}(f\otimes\psi_1\otimes\psi_2;\varphi)
		=[\Phi^2]^{(2)}(f\otimes\psi_1\otimes\psi_2;\varphi)\,,
	\end{align*}
	from which Equations \eqref{Eq: Pi-functional derivative} and \eqref{Eq: Pi scaling degree bound} hold true.
	
	This completes the definition of $\Gamma_{\cdot_Q}$ on $\mathcal{M}_2^0$.
	Proceeding inductively with respect to the index $j$, we assume that $\Gamma_{\cdot_Q}$ has been defined on $\mathcal{M}_2^j$ iterating the procedure used in Equation \eqref{Eq: first formal product} and we extend it to $\mathcal{M}_2^{j+1}$.
	Observe that, given any $\tau\in\mathcal{M}^{j+1}_2$, it suffices to prove the induction step for those elements either of the form $P\circledast\tau^\prime$, $\tau^\prime\in\mathcal{M}^j_2$ or of the form $\tau=\tau_1\tau_2$ with $\tau_1,\tau_2\in\mathcal{M}_1^j\cup P\circledast\mathcal{M}^j_1$ -- \textit{cf.} Definition \ref{Def: pointwise algebra}.
	In the first case, it suffices to invoke the induction hypothesis and the first step of the proof.
	
	In the second case, we consider the formal expression
	\begin{align*}
		\Gamma_{\cdot_Q}(\tau_1)\cdot_Q\Gamma_{\cdot_Q}(\tau_2)(f;\varphi)
		&=\left(\Gamma_{\cdot_Q}(\tau_1)\Gamma_{\cdot_Q}(\tau_2)\right)(f;\varphi)+
		\left[(\delta_{\mathrm{Diag}_2}\otimes Q)\cdot(t_1^{(1)}\widetilde{\otimes} t_2^{(1)})\right](f\otimes 1_3;\varphi)\,,
	\end{align*}
	where $t_1^{(1)}:=\Gamma_{\cdot_Q}(\tau_1)^{(1)}$ and similarly $t_2^{(1)}$.
	The above formula is a priori not well-defined on account of
	\begin{align*}
		T:=(\delta_{\mathrm{Diag}_2}\otimes Q)\cdot(t_1^{(1)}\widetilde{\otimes} t_2^{(1)})\,.
	\end{align*}\label{Qdelta}
	To bypass this hurdle, recall that $\Gamma_{\cdot_Q}(\tau_1),\Gamma_{\cdot_Q}(\tau_2)\in\mathcal{D}'_{\mathrm{C}}(M;\operatorname{Pol})$. Hence it holds that
	\begin{align*}
		\operatorname{WF}(t_1^{(1)})\cup\operatorname{WF}(t_2^{(1)})
		\subseteq \mathrm{C}_2
		=\operatorname{WF}(\delta_{\mathrm{Diag}_2})
		=\operatorname{WF}(Q)\,.
	\end{align*}
	It follows that $T$ identifies an element of $\mathcal{D}'(M^4\setminus\mathrm{Diag}_4^{\mathrm{big}})$ where
	\begin{align*}
		\mathrm{Diag}_4^{\mathrm{big}}
		:=\{(x_1,\dots,x_4)\in M^4\,|\,
		\exists a,b\in\{1,2,3,4\}\,,\,x_a=x_b\}\,.
	\end{align*}
	Moreover observe that, whenever $x\in\mathrm{Diag}_4^{\mathrm{big}}\setminus\mathrm{Diag}_4$ one of the factors between $\delta_{\mathrm{Diag}_2}\otimes Q$, $t_1^{(1)}$, $t_2^{(1)}$ is smooth while the product of the other two is well-defined.
	
	This entails that $T\in\mathcal{D}'(M^4\setminus\mathrm{Diag}_4)$. Furthermore, on account of the inductive hypothesis over $j$ it holds that
	\begin{align*}
		\operatorname{sd}_{\mathrm{Diag}_4}(T)
		\leq\operatorname{sd}_{\mathrm{Diag}_4}(\delta_{\mathrm{Diag}_2}\otimes Q)
		+\operatorname{sd}_{\mathrm{Diag}_2}(t_1^{(1)})
		+\operatorname{sd}_{\mathrm{Diag}_2}(t_2^{(1)})
		<\infty\,,
	\end{align*}
	where in the last inequality we also used Corollary \ref{Cor: sd of composition} to prove finiteness of $\mathrm{sd}_{\mathrm{Diag}_2}(Q)$.
	Thanks to Theorem \ref{Thm: extension with scaling degree} we can conclude that there exists a possibly non-unique extension $\widehat{T}\in\mathcal{D}'(M^4)$ of $T$ such that $\operatorname{sd}_{\mathrm{Diag}_4}(\widehat{T})=\operatorname{sd}_{\mathrm{Diag}_4}(T)$. In addition it holds that $\operatorname{WF}(\widehat{T})=\operatorname{WF}(T)$. 	Choosing an extension $\widehat{T}$, we define for all $f\in\mathcal{D}(M)$ and for all $\varphi\in\mathcal{E}(M)$,
	\begin{align*}
		\Gamma_{\cdot_Q}(\tau)(f;\varphi)
		:=\left(\Gamma_{\cdot_Q}(\tau_1)\Gamma_{\cdot_Q}(\tau_2)\right)(f;\varphi)
		+\widehat{T}(f\otimes 1_3;\varphi)\,.
	\end{align*}
	By direct inspection we see that $\Gamma_{\cdot_Q}(\tau)\in\mathcal{D}'_{\mathrm{C}}(M;\operatorname{Pol})$.
	Equation \eqref{Eq: Pi-functional derivative} is satisfied by construction, while, to check the inequality in Equation \eqref{Eq: Pi scaling degree bound} we observe that
	\begin{gather*}
		\Gamma_{\cdot_Q}(\tau)^{(1)}(f\otimes\psi;\varphi)
		=\left(\Gamma_{\cdot_Q}(\tau_1)\Gamma_{\cdot_Q}(\tau_2)^{(1)}+\Gamma_{\cdot_Q}(\tau_1)^{(1)}\Gamma_{\cdot_Q}(\tau_2)\right)(f\otimes\psi;\varphi)\,,\\
		\Gamma_{\cdot_Q}(\tau)^{(2)}(f\otimes\psi_1\otimes\psi_2;\varphi)
		=\left(\Gamma_{\cdot_Q}^{(1)}(\tau_1)\Gamma_{\cdot_Q}(\tau_2)^{(1)}\right)		(f\otimes\psi_1\otimes\psi_2;\varphi)\,.
	\end{gather*}
	It descends that, 
	\begin{align*}
		\sigma_1(\Gamma_{\cdot_Q}(\tau))
%		\operatorname{sd}_{\mathrm{Diag}_2}(\Gamma_{\cdot_Q}(\tau)^{(1)})
		&\leq\max_{i=1,2}
		\sigma_1(\Gamma_{\cdot_Q}(\tau_i))
%		\operatorname{sd}_{\mathrm{Diag}_2}\Gamma_{\cdot_Q}(\tau_{k_p})^{(1)}
		<\infty\,,\\
		\sigma_2(\Gamma_{\cdot_Q}(\tau))
		&\leq\sigma_1(\Gamma_{\cdot_Q}(\tau_1))+\sigma_1(\Gamma_{\cdot_Q}(\tau_2))
%		\operatorname{sd}_{x[3]}(\Gamma_{\cdot_Q}(\tau)^{(2)})
%		&\leq\max\left[\sigma_1(\tau_1^{(1)})+\sigma_1(\tau_2^{(1)})\right]
%		\operatorname{sd}_{\mathrm{Diag}_2}\Gamma_{\cdot_Q}(\tau_{k_a})^{(1)}
%		+\operatorname{sd}_{\mathrm{Diag}_2}\Gamma_{\cdot_Q}(\tau_{k_b})^{(1)}
		<\infty\,,
	\end{align*}
	where we used Equation \eqref{Eq: Pi scaling degree bound} applied to $\Gamma_{\cdot_Q}(\tau_i)$, $i=1,2$. Its validity is guaranteed by the induction step.
	
	\paragraph{Step 3 -- Second induction procedure.}	
	In the preceding step we have proven the sought after statement for $\mathcal{M}_k$ with $k=1,2$. We proceed by induction over $k$. In other words we assume that $\Gamma_{\cdot_Q}$ has been defined on $\mathcal{M}_k$ for an arbitrary but fixed $k$ and we show that the same holds true for $\mathcal{M}_{k+1}=\bigcup_{j\in\mathbb{N}_0}\mathcal{M}^j_{k+1}$, {\it cf.} Remark \ref{Rem: graded algebra}.

	\paragraph{Step 3a -- The special case $\mathcal{M}^0_{k+1}$.}
	If we set $j=0$ we are considering the $\mathcal{E}(M)$-module
	\begin{align*}
		\mathcal{M}^0_{k+1}
		=\operatorname{span}_{\mathcal{E}(M)}(\boldsymbol{1},\Phi,\ldots,\Phi^{k+1})\,.
	\end{align*}
	By the inductive hypothesis over $k$ and on account of Equation \eqref{Eq: Gamma on Ptau}, we are left with defining $\Gamma_{\cdot_Q}(\Phi^{k+1})$. Following the same strategy of Equation \eqref{Eq: first formal product} and bearing in mind the identities $$\Gamma_{\cdot_Q}(\Phi)^{(1)}=\delta_{\mathrm{Diag}_2},\qquad\Gamma_{\cdot_Q}(\Phi)^{(k)}=0,\quad\forall\,k\geq2$$ 
	we consider the formal expression
	\begin{align}\label{Eq: Phik ill-defined expression}
		\Gamma_{\cdot_Q}(\Phi^{k+1})=\underbrace{\Gamma_{\cdot_Q}(\Phi)\cdot_Q\ldots\cdot_Q\Gamma_{\cdot_Q}(\Phi)}_{k+1}(f;\varphi)
		=\sum_{\ell=0}^{\lfloor \frac{k+1}{2}\rfloor}\frac{(2\ell)!}{(2\ell)!!}{k+1\choose 2\ell}(Q_{2\ell}\cdot\Gamma_{\cdot_Q}(\Phi)^{k+1-2\ell})(f;\varphi)\,,
	\end{align}
	where $\lfloor \frac{k+1}{2}\rfloor\leq\frac{k+1}{2}$ denotes the integer part of $\frac{k+1}{2}$ while $n!!:=n(n-2)(n-4)\cdots $ stands for the double factorial.
	Here $Q_{2\ell}$ is given by
	%\footnote{
	%	More explicitly we have
	%	\begin{align*}
	%		Q_{2\ell}(f)
	%		=\int Q(x,x)^\ell f(x)\mathrm{d}x
	%		=\int \prod_{j=1}^\ell P^2(x,y_j)f(x)1_\ell(\widehat{y}_\ell)\mathrm{d}x\mathrm{d}\widehat{y}_\ell\,.
	%	\end{align*}
	%	}
	\begin{align*}
		Q_{2\ell}(f)
		=(P^2)^{\otimes\ell}\cdot(\delta_{\mathrm{Diag}_{\ell}}\otimes 1_\ell)(f\otimes 1_{2\ell-1})\,.
	\end{align*}
	The expression in Equation \eqref{Eq: Phik ill-defined expression} is formal due to the presence of $Q_{2\ell}$, which is built out of $P^2$, the square of the parametrix $P$.
	Nevertheless, we have already shown that $P^2$ admits at least one extension $\widehat{P}_2\in\mathcal{D}'(M^2)$ with $\operatorname{sd}_{\mathrm{Diag}_2}(\widehat{P}_2)=\operatorname{sd}_{\mathrm{Diag}_2}(P^2)$.
	Given an arbitrary but fixed choice for $\widehat{P}_2$ we denote by
	\begin{align*}
		\widehat{Q}_{2\ell}(f)
		:=\widehat{P}_2^{\otimes\ell}\cdot(\delta_{\mathrm{Diag}_{\ell}}\otimes 1_\ell)(f\otimes 1_{2\ell-1})\,,
	\end{align*}
	the corresponding extension of $Q_{2\ell}$. Notice that the product
	%\footnote{
	%	The product is among the distribution $\widehat{P}_2^{\otimes\ell}$ with integral kernel $\prod_{j=1}^\ell\widehat{P}_2(y_j,z_j)$ and the distribution $\delta_{\mathrm{Diag}_{\ell}}\otimes 1_\ell$ with integral kernel $\delta_{\mathrm{Diag}_{\ell}}(\widehat{y}_\ell)1_\ell(\widehat{z}_\ell)$.
	%}
	$\widehat{P}_2^{\otimes\ell}\cdot(\delta_{\mathrm{Diag}_{\ell}}\otimes 1_\ell)$ is well-defined on account of Theorem \ref{Thm: WF results} -- \textit{cf.} Equation \eqref{Eq: WF product condition}.
	We consider the extension $\widehat{Q}_{2\ell}$ associated with the choice of $\widehat{P}_2$ which has been outlined in Step 2. Hence we can set
	\begin{align}\label{Eq: definition of Phik}
		\Gamma_{\cdot_Q}(\Phi^{k+1})(f;\varphi)
		:=\sum_{\ell=0}^{\lfloor \frac{k+1}{2}\rfloor}\frac{(2\ell)!}{(2\ell)!!}{k+1\choose 2\ell}
		[\widehat{Q}_{2\ell}\cdot\Gamma_{\cdot_Q}(\Phi)^{k+1-2\ell}](f;\varphi)\,.
	\end{align}
	Since $\Gamma_{\cdot_Q}(\Phi)$ is a functional-valued distribution generated by a smooth function, the product $\widehat{Q}_{2\ell}\cdot\Gamma_{\cdot_Q}(\Phi)^{k+1-2\ell}$ is well-defined.
	As a matter of fact, a direct application of Equation \eqref{Eq: WF product bound} and of Remark \ref{Rmk: convolution with smooth function} shows that $[\widehat{Q}_{2\ell}\cdot\Gamma_{\cdot_Q}(\Phi)^{k+1-2\ell}]$ is a functional-valued distribution generated by a smooth function.

	Equation \eqref{Eq: definition of Phik} defining $\Gamma_{\cdot_Q}(\Phi^{k+1})$ is compatible with Equation \eqref{Eq: Pi-functional derivative} since, for all $f\in\mathcal{D}(M)$ and for all $\varphi,\psi\in\mathcal{E}(M)$
	\begin{align*}
		\Gamma_{\cdot_Q}(\Phi^{k+1})^{(j)}(f\otimes\psi^{\otimes j};\varphi)
		&=\sum_{\ell=0}^{\lfloor \frac{k+1-j}{2}\rfloor}
		\frac{(k+1)!}{(2\ell)!!}(k+1-2\ell-j)!
		[\widehat{Q}_{2\ell}\cdot\Gamma_{\cdot_Q}(\Phi)^{k+1-2\ell-j}](f\psi^j;\varphi)
		\\&=\Gamma_{\cdot_Q}(\delta^j_{\psi^{\otimes j}}\Phi^k)(f;\varphi)\,,
	\end{align*}
	where we used that $[\Gamma_{\cdot_Q}(\Phi)^{a}]^{(b)}=\frac{a!}{(a-b)!}\Gamma_{\cdot_Q}(\Phi)^{a-b}\cdot\delta_{\mathrm{Diag}_{b+1}}$, while $\delta^j_{\psi^{\otimes j}}:=\delta_\psi\circ\cdots\circ\delta_\psi$.
	On account of Equation \eqref{Eq: functional differential}, it holds 
	\begin{align*}
		\Gamma_{\cdot_Q}(\Phi^{k+1})^{(j)}(f\otimes\psi^{\otimes j};\varphi)
		&=\frac{(k+1)!}{(k+1-j)!}\Gamma_{\cdot_Q}(\underbrace{\psi\cdots\psi}_j\Phi^{k+1-j})(f;\varphi)
		\\&=\frac{(k+1)!}{(k+1-j)!}\Gamma_{\cdot_Q}(\Phi^{k+1-j})(\underbrace{\psi\cdots\psi}_j f;\varphi)\,.
	\end{align*}
	Due to the arbitrariness of both $\psi$, $f$ and $\varphi$, this entails that
	\begin{align*}
		\Gamma_{\cdot_Q}(\Phi^{k+1})^{(j)}
		=\frac{(k+1)!}{(k+1-j)!}\Gamma_{\cdot_Q}(\Phi^{k+1-j})\cdot\delta_{\mathrm{Diag}_{j+1}}\,,
	\end{align*}
	which implies in turn $\operatorname{WF}(\Gamma_{\cdot_Q}(\Phi^{k+1})^{(j)})\subseteq \mathrm{C}_{j+1}$. Therefore $\Gamma_{\cdot_Q}(\Phi^{k+1})\in\mathcal{D}'_{\mathrm{C}}(M;\operatorname{Pol})$.
	
	Finally Equation \eqref{Eq: Pi scaling degree bound} is a direct consequence of the bound
	\begin{align*}
		\sigma_p(\Gamma_{\cdot_Q}(\Phi^{k+1}))
%		\operatorname{sd}_{\mathrm{Diag}_{p+1}}(\Gamma_{\cdot_Q}(\Phi^{k+1})^{(p)})
		\leq\operatorname{sd}_{\mathrm{Diag}_{p+1}}(\Gamma_{\cdot_Q}(\Phi^{k+1-p})\cdot\delta_{\mathrm{Diag}_{p+1}})
		\leq\operatorname{sd}_{\mathrm{Diag}_{p+1}}(\delta_{\mathrm{Diag}_{p+1}})
		=pd\,.
	\end{align*}

	\paragraph{Step 3b: The general case $\mathcal{M}^j_{k+1}$.}
	The preceding step allows us to proceed in the inductive construction of $\Gamma_{\cdot_Q}$. In particular we assume that the sought after result is known for $\mathcal{M}^j_{k+1}$ and we show that the same holds true for $\mathcal{M}^{j+1}_{k+1}$. 
	
	Hence, let us consider a generic $\tau\in\mathcal{M}^{j+1}_{k+1}$.
	In view of Definition \ref{Def: pointwise algebra} and Remark \ref{Rem: graded algebra} and of the linearity of $\Gamma_{\cdot_Q}$, it suffices to focus our attention on those elements which are either of the form $P\circledast\tau^\prime$, with $\tau^\prime\in\mathcal{M}^j_{k+1}$, or 
	$$\tau=\tau_{k_1}\cdots\tau_{k_\ell}\,,\qquad\tau_{k_n}\in\mathcal{M}^j_{k_n}\cup P\circledast\mathcal{M}^j_{k_n}\,,$$
	where $\ell\in\mathbb{N}\cup\{0\}$, $k_n\in\mathbb{N}$ for all $n\in\{1,\ldots,\ell\}$, $\sum\limits_{n=1}^\ell k_n=k+1$.
	While in the first case we can resort to Step 1 of this proof, in the second one we can start be recalling that the inductive hypothesis entails that $\Gamma_{\cdot_Q}(\tau_{k_n})$ is known for all $n$.
	
\noindent Following the same strategy of Step 2. and of Step 3a, we wish to set
	\begin{align*}
	\Gamma_{\cdot_Q}(\tau)
	:=\Gamma_{\cdot_Q}(\tau_{k_1})\cdot_Q\cdots_Q\Gamma_{\cdot_Q}(\tau_{k_\ell})\,.
	\end{align*}
Using the notation $t_{k_i}:=\Gamma_{\cdot_Q}(\tau_{k_i})$ and using Equation \eqref{Eq: regularized delta-product} with $Q_\varepsilon$ replaced by $Q$, we obtain
	\begin{gather*}
		[t_{k_1}\cdot_Q\cdots_Q t_{k_\ell}](f;\varphi)
		=\\=\sum_{\substack{N\geq 0\\N_1+\ldots+ N_\ell=2N}}
		\frac{1}{(2N)!!}\frac{(2N)!}{N_1!\cdots N_\ell!}
		[(\delta_{\mathrm{Diag}_\ell}\otimes Q^{\otimes N})\cdot
		(t_{k_1}^{(N_1)}\widetilde{\otimes}\ldots\widetilde{\otimes} t_{k_\ell}^{(N_\ell)})]
		(f\otimes 1_{\ell-1+2N};\varphi)\,,
	\end{gather*}
	where $\cdot$ indicates once more the product between distributions. As in the preceding cases, this is a formal expression and, to make it well-defined, we start by noticing that, being all functionals polynomial, then $N_i\leq k_i$ for all $i\in\{1,\ldots,\ell\}$ and thus $2N\leq k+1$. For later convenience we set
	\begin{align}\label{Eq: to be extended distribution}
		T_N
		:=[(\delta_{\mathrm{Diag}_\ell}\otimes Q^{\otimes N})\cdot
		(t_{k_1}^{(N_1)}\widetilde{\otimes}\ldots\widetilde{\otimes} t_{k_\ell}^{(N_\ell)})]\,.
	\end{align}
	In addition it holds that
	\begin{multline*}
		\operatorname{WF}(\delta_{\mathrm{Diag}_\ell}\otimes Q^{\otimes N})
		\subseteq\{
		(\widehat{x}_\ell,\widehat{z}_{2N},\widehat{\xi}_\ell,\widehat{\zeta}_{2N})\in T^*M^{\ell+2N}\setminus\{0\}\,|\,
		\\(\widehat{x}_\ell,\widehat{\xi}_\ell)\in\operatorname{WF}(\delta_{\mathrm{Diag}_\ell})\,,\,
		(\widehat{z}_{2N},\widehat{\zeta}_{2N})\in\operatorname{WF}(Q^{\otimes N})
		\}\,.
	\end{multline*}
	The inductive hypothesis entails that $\operatorname{WF}(t_{k_i}^{(N_i)})\subseteq\mathrm{C}_{N_i+1}$ for all $i\in\{1,\ldots,\ell\}$ and, therefore, by applying Theorem \ref{Thm: WF results} -- \textit{cf.} Equation \eqref{Eq: WF product condition} -- $T_N\in\mathcal{D}'(M^{\ell+2N}\setminus\mathrm{Diag}_{\ell+2N}^{\mathrm{big}})$, where
	\begin{align}
		\mathrm{Diag}_{\ell+2N}^{\mathrm{big}}
		:=\{x\in M^{\ell+2N}\,|\, \exists i,j\in\{1,\ldots,\ell+2N\}\,,\, x_i=x_j\}\,.
	\end{align}	
	These data in combination with Equation \eqref{Eq: WF product bound} yield
	\begin{multline}\label{Eq: WF of to be extended distribution}
		\operatorname{WF}(T_N)
		\subseteq\{
		(\widehat{x}_\ell,\widehat{z}_{2N},\widehat{\xi}_\ell+\widehat{\xi}_\ell',\widehat{\zeta}_{N_1}+\widehat{\zeta}_{N_1}',\ldots,\widehat{\zeta}_{N_\ell}+\widehat{\zeta}_{N_\ell}^\prime)\in T^*M^{\ell+2N}\setminus\{0\}\,|\,
		\\(\widehat{x}_\ell,\widehat{\xi}_\ell)\in\operatorname{WF}(\delta_{\mathrm{Diag}_\ell})\,,\,
		(\widehat{z}_{2N},\widehat{\zeta}_{2N})\in\operatorname{WF}(Q^{\otimes N})\,,\,
		\\\forall p\in\{1,\ldots,\ell\}\,
		(x_p,\widehat{z}_{N_p},\xi_p',\widehat{\zeta}_{N_p}')\in\mathrm{C}_{N_p}
		\}\,,
	\end{multline}
	where $\widehat{z}_{2N}=(\widehat{z}_{N_1},\ldots,\widehat{z}_{N_\ell})$.
%	We now proceed as in the discussion of the extension of $P^2$, however, this time we shall use the inductive hypothesis in order to show that $T_J$ is actually well-defined as a distribution in $\mathcal{D}'(M^{\ell+2J}\setminus\mathrm{Diag}_{\ell+2J})$.
	Consider now $\{A,B\}$, a partition of $\{1,\ldots,\ell+2N\}$ -- that is, $\{1,\ldots,\ell+2N\}=A\cup B$, $A\cap B=\emptyset$ -- such that, if $(x_1,\ldots,x_{\ell+2N})=(\widehat{x}_A,\widehat{x}_B)$, then $x_a\neq x_b$ for all $x_a\in \widehat{x}_A$ and for all $x_b\in \widehat{x}_B$. As a consequence, the integral kernel of $T_N$ decomposes as
	\begin{align*}
		T_N(\widehat{x}_A,\widehat{x}_B)
		=K_{N,1}(\widehat{x}_A)S_N(\widehat{x}_A,\widehat{x}_B)K_{J,N}(\widehat{x}_B)\,,
	\end{align*}
	where $S_N$ is a smooth kernel while $K_{N,1}$, $K_{N,2}$ are the integral kernels of distributions appearing in the definition of $\Gamma_{\cdot_Q}$ on $\mathcal{M}^b_a$ for $b<j+1$ and $a\leq k+1$.
	By the inductive hypothesis $K_{N,1}$, $K_{N,2}$ are well-defined and the same holds true for $(K_{J,1}\otimes K_{J,2})\cdot S_J$ on account of Theorem \ref{Thm: WF results} -- \textit{cf.} Equation \eqref{Eq: WF product condition}.
	Accordingly we can conclude that $T_N\in\mathcal{D}'(M^{\ell+2N}\setminus\mathrm{Diag}_{\ell+2N})$. In addition, because of Lemma \ref{Lemma: finite sd of convolution}, of Corollary \ref{Cor: sd of composition} and of the inductive hypothesis on $\Gamma_{\cdot_Q}(\tau_{k_1}),\ldots,\Gamma_{\cdot_Q}(\tau_{k_\ell})$, it holds that
	\begin{align*}
		\operatorname{sd}_{\mathrm{Diag}_{\ell+2N}}(T_N)
		&=\operatorname{sd}_{\mathrm{Diag}_{\ell+2N}}((\delta_{\mathrm{Diag}_\ell}\otimes Q^{\otimes N})\cdot
		(t_{k_1}^{(N_1)}\otimes\ldots\otimes t_{k_\ell}^{(N_\ell)}))
		\\&\leq
		\operatorname{sd}_{\mathrm{Diag}_{\ell+2N}}((\delta_{\mathrm{Diag}_\ell}\otimes Q^{\otimes N})
		+\sum_{i=1}^\ell\operatorname{sd}_{\mathrm{Diag}_{N_i+1}}(t_{k_i}^{(N_i)})<\infty
		\,,
	\end{align*}
	Hence, on account of Theorem \ref{Thm: extension with scaling degree} there exists an extension $\widehat{T}_{N}\in\mathcal{D}'(M^{\ell+2N})$ of $T_N$ with $\operatorname{sd}_{\mathrm{Diag}_{\ell+2N}}(\widehat{T}_N)=\operatorname{sd}_{\mathrm{Diag}_{\ell+2N}}(T_N)$ and with $\operatorname{WF}(\widehat{T}_N)=\operatorname{WF}(T_N)$. Consequently, for all $f\in\mathcal{D}(M)$ and for all $\varphi\in\mathcal{E}(M)$, we can set

	\begin{align}
		\Gamma_{\cdot_Q}(\tau)(f;\varphi)
%		:=\Gamma_{\cdot_Q}(\tau_{k_1})\cdot_Q\cdots_Q\Gamma_{\cdot_Q}(\tau_{k_\ell}) 
		:=	\sum_{\substack{N\geq 0\\N_1+\ldots+ N_\ell=2J}}
		\frac{1}{(2N)!!}\frac{(2N)!}{N_1!\cdots N_\ell!}
		\widehat{T}_N(f\otimes 1_{\ell-1+2N};\varphi)\,.
	\end{align}
	As in the preceding steps, once an extension $\widehat{T}_N$ is chosen, this last formula is well-defined and Remark \ref{Rmk: convolution with smooth function} entails that $\mathcal{D}(M)\ni f\mapsto\widehat{T}_N(f\otimes 1_{\ell+2N};\varphi)\in \mathcal{E}(M)$ -- \textit{cf.} Equations \eqref{Eq: WF of to be extended distribution} and \eqref{Eq: WF convolution bound}. 
	
	Moreover, for all $p\in\mathbb{N}\cup\{0\}$ and $\psi\in\mathcal{E}(M)$, it holds
	\begin{align*}
		\Gamma_{\cdot_Q}(\tau)^{(p)}(f\otimes\psi^{\otimes p};\varphi)
		:=\sum_{\substack{N\geq 0\\N_1+\ldots+ N_\ell=2N}}
		\frac{1}{(2N)!!}\frac{(2N)!}{N_1!\cdots N_\ell!}
		\widehat{T}_N^{(p)}(f\otimes 1_{\ell-1+2N}\otimes\psi^{\otimes p};\varphi)\,,
	\end{align*}
	where $0\leq 2N\leq k+1-p$.
	To ensure that Equation \eqref{Eq: Pi-functional derivative} holds true, consider the formal expression
	\begin{align*}
		&[\Gamma_{\cdot_Q}(\tau_{k_1})\cdot_Q\cdots_Q\Gamma_{\cdot_Q}(\tau_{k_\ell})]^{(p)}(f\otimes\psi^{\otimes p};\varphi)
		\\&:=\sum_{\substack{N\geq 0\\N_1+\ldots+ N_\ell=2N\\p_1+\ldots +p_\ell=p}}\frac{1}{(2N)!!}\frac{(2N)!p!}{N_1!p_1!\cdots N_\ell!p_\ell!}
		[(\delta_{\mathrm{Diag}_\ell}\otimes Q^{\otimes N}\otimes 1_p)
		\cdot( t_{k_1}^{(N_1+p_1)}\otimes\ldots\otimes t_{k_\ell}^{(N_\ell+p_\ell)})]
		(f\otimes 1_{\ell-1+2N}\otimes\psi^{\otimes p};\varphi)
		\\&=
		\sum_{\substack{N\geq 0\\N_1+\ldots+N_\ell=2N\\p_1+\ldots +p_\ell=p}}\frac{1}{(2N)!!}\frac{(2N)!p!}{N_1!p_1!\cdots N_\ell!p_\ell!}
		T_N^{[\widehat{p}_\ell]}
		(f\otimes 1_{\ell-1+2N}\otimes\psi^{\otimes p};\varphi)\,,
	\end{align*}
	where $T_N^{[\widehat{p}_\ell]}$ is a functional-valued distribution on $M^{\ell+2N+p}\setminus\mathrm{Diag}_{\ell+2N+p}$ with finite scaling degree at $\mathrm{Diag}_{\ell+2N+p}$.
	It follows that Equation \eqref{Eq: Pi-functional derivative} is satisfied provided we choose the extension $\widehat{T}_N$ so that
	\begin{align}\label{Eq: condition for Pi-functional derivative}
		\widehat{T}_N^{(p)}
		=\sum_{p_1+\ldots+p_\ell=p}
		\frac{p!}{p_1!\cdots p_\ell!}
		\widehat{T_N^{[\widehat{p}_\ell]}}\,,
	\end{align}
	where $\widehat{T_N^{[\widehat{p}_\ell]}}\in\mathcal{D}'(M^{\ell+2N+p})$ is a scaling degree preserving extension of $T_N^{[\widehat{p}_\ell]}$, whose existence is guaranteed by the finiteness of $\operatorname{sd}_{\mathrm{Diag}_{\ell+2N+p}}(T_N^{[\widehat{p}_\ell]})$.
	Notice that we can impose Equation \eqref{Eq: condition for Pi-functional derivative} on account of the fairly explicit construction of $\widehat{T}_N$ -- \textit{cf.} Theorem \ref{Thm: extension with scaling degree}.
	The proof that $\Gamma_{\cdot_Q}(\tau)\in\mathcal{D}'_{\mathrm{C}}(M;\operatorname{Pol})$ follows by estimating the wave front set of the distribution
	\begin{align*}
		\mathcal{D}(M)\otimes\mathcal{E}(M)^{\otimes p}\ni f\otimes\psi^{\otimes p}
		\to\widehat{T}_N^{(p)}(f\otimes 1_{\ell-1+2N}\otimes\psi^{\otimes p};\varphi)\,,
	\end{align*}
	which is achieved using Theorem \ref{Thm: WF results} -- \textit{cf.} Remark \eqref{Rmk: convolution with smooth function} -- together with the estimate
	\begin{multline*}
		\operatorname{WF}(\widehat{T}_N^{(p)})
		=\{(\widehat{x}_\ell,\widehat{z}_{2N},\widehat{y}_p,\widehat{\xi}_\ell+\widehat{\xi}_\ell',\widehat{\zeta}_{2N}+\widehat{\zeta}_{2N}^\prime,\widehat{\eta}_p+\widehat{\eta}_p^\prime)
		\in T^*M^{\ell+2N+p}\setminus\{0\}\,|\,\\
		(\widehat{x}_\ell,\widehat{\xi}_\ell)\in\operatorname{WF}(\delta_{\mathrm{Diag}_\ell})\,,\,
		(\widehat{z}_{2N},\widehat{\zeta}_{2N})\in\operatorname{WF}(Q^{\otimes N})\,,\\
		\forall h\in\{1,\ldots,\ell\}\,,\,
		(x_h,\widehat{z}_{N_h},\widehat{y}_{p_h},\xi_h,\widehat{\zeta}_{n_h}^\prime,\widehat{\eta}_{p_h}^\prime)\in\mathrm{C}_{N_h+p_h+1}
		\}\,,
	\end{multline*}
	where we wrote $\widehat{\zeta}_{2N}=(\widehat{\zeta}_{N_1},\ldots,\widehat{\zeta}_{N_\ell})$ and similarly for $\widehat{y}_p$, $\widehat{\zeta}_{2N}$ and $\widehat{\eta}_p$.
	Finally the bound \eqref{Eq: Pi scaling degree bound} is satisfied by direct inspection.
	As a matter of fact 
	\begin{align*}
		\sigma_p(\Gamma_{\cdot_Q}(\tau))
		&\leq\max_{0\leq 2N\leq k+1-p}\operatorname{sd}_{\mathrm{Diag}_{\ell+2N+p}}(T_N^{(p)})(\cdot\otimes 1_{\ell-1+2N}\otimes\cdot;\varphi)
		\\&\leq\operatorname{sd}_{\mathrm{Diag}_\ell}(\delta_{\mathrm{Diag}_\ell})
		+\max_{0\leq 2N\leq k+1-p}\left[N\operatorname{sd}_{\mathrm{Diag}_2}(Q)
		+\sum_{i=1}^q\sigma_{1+N_i+p_i}(\Gamma_{\cdot_Q}(\tau_{k_i}))\right]<\infty\,,
	\end{align*}
	where we used the inductive hypothesis on $\mathcal{M}_k$, Examples \ref{Ex: sd of delta on total diagonal} and \ref{Ex: sd of parametrix} as well as (a slight generalization of) Lemma \ref{Lemma: finite sd of convolution}.
%	\begin{align}\label{Eq: sd estimate}
%		\sigma_p(\Gamma_{\cdot_Q}(\tau))
%		&\leq\max_{0\leq 2N\leq k+1-p}\operatorname{sd}_{\mathrm{Diag}_{\ell+2N+p}}(T_N^{(p)})
%		-(\ell-1+2N)d\,,
%	\end{align}
%	where the last factor arises from the integration of $1_{\ell-1+2N}$ inserted in $T_N^{(p)}$.
%	Using the inductive hypothesis on $\mathcal{M}_k$ as well as Examples \ref{Ex: sd of delta on total diagonal} and \ref{Ex: sd of parametrix} as well as Lemma \ref{Lemma: finite sd of convolution}, it holds
%	\begin{align*}
%		\operatorname{sd}_{\mathrm{Diag}_{\ell+2J+p}}(T_N^{(p)})
%		\leq\operatorname{sd}_{\mathrm{Diag}_\ell}(\delta_{\mathrm{Diag}_\ell})
%		+N\operatorname{sd}_{\mathrm{Diag}_2}(Q)
%		+\sum_{i=1}^q\sigma_{1+N_i+p_i}(\Gamma_{\cdot_Q}(\tau_{k_i}))
%		\\&\leq
%		(\ell-1)d
%		+N(d-4)
%		+\sum_{i=1}^\ell\left[
%		(N_i+p_i)d
%		+\frac{k_i-N_i-p_i}{2}(d-4)
%		\right]\,.
%	\end{align*}
%	Inserting this result in Equation \eqref{Eq: sd estimate}, one obtains Equation \eqref{Eq: Pi scaling degree bound} for $\Gamma_{\cdot_Q}(\tau)$.
	This concludes the induction procedure and the proof.
\end{proof}

\begin{remark}[Parabolic Case]\label{Remark: parabolic case of AcdotQ}
	The result of Theorem \ref{Thm: Gamma cdotQ existence} holds true, \emph{mutatis mutandis}, also in the parabolic case $M=\mathbb{R}\times\Sigma$ and $E=\partial_t-\widetilde{E}$. Notice that also in this case renormalization enters the game if $\dim(\Sigma)=d\geq2$.
	The proof goes along the same lines, the main point being the renormalization procedure, namely the extension of the singular distributions involved in the definition of $\cdot_Q$.
	This is based on the finiteness of the weighted scaling degree -- \textit{cf.} Remark \ref{Rmk: weighted scaling degree} and Example \ref{Ex: sd of parametrix}.
\end{remark}

\noindent To conclude the section, we show how we can use $\Gamma_{\cdot_Q}$ to deform the algebra structure of $\mathcal{A}$.

\begin{corollary}\label{Cor: construction of AcdotQ}
	Let $\Gamma_{\cdot_Q}\colon\mathcal{A}\to\mathcal{D}_{\mathrm{C}}'(M;\operatorname{Pol})$ be defined as per Theorem \ref{Thm: Gamma cdotQ existence}.
	Then the vector space $\mathcal{A}_{\cdot_Q}:=\Gamma_{\cdot_Q}(\mathcal{A})\subseteq\mathcal{D}_{\mathrm{C}}'(M;\operatorname{Pol})$ is a unital, commutative and associative $\mathbb{C}$-algebra with respect to the product
	\begin{align}\label{Eq: cdot GammaQ product}
		\tau_1\cdot_{\Gamma_{\cdot_Q}}\tau_2
		:=\Gamma_{\cdot_Q}[\Gamma_{\cdot_Q}^{-1}(\tau_1)\Gamma_{\cdot_Q}^{-1}\tau_2]\,,\qquad
		\forall\tau_1,\tau_2\in\mathcal{A}_{\cdot_Q}\,.
	\end{align}
\end{corollary}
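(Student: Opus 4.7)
The plan is to observe that the corollary is essentially a transport-of-structure statement: once $\Gamma_{\cdot_Q}\colon\mathcal{A}\to\mathcal{A}_{\cdot_Q}$ is shown to be a linear bijection, Equation \eqref{Eq: cdot GammaQ product} is well-posed and all algebra axioms for $\cdot_{\Gamma_{\cdot_Q}}$ are inherited from those of the pointwise product on $\mathcal{A}$. Surjectivity is true by definition, so the whole content boils down to injectivity of $\Gamma_{\cdot_Q}$.

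First I would establish injectivity through a triangular-structure argument based on the bigrading of Remark \ref{Rem: graded algebra}. I would verify, by induction on $k$ that follows the same induction used in the proof of Theorem \ref{Thm: Gamma cdotQ existence}, the following refinement of property \emph{1.}\ of that theorem: for every $\tau\in\mathcal{M}_k$ one has
\begin{align*}
\Gamma_{\cdot_Q}(\tau)-\tau\in\mathcal{M}_{k-2}\,.
\end{align*}
Indeed, inspection of Equations \eqref{Eq: definition of Phi2} and \eqref{Eq: definition of Phik}, together with the general inductive step in Step 3b of the proof of Theorem \ref{Thm: Gamma cdotQ existence}, shows that every renormalization correction is obtained by contracting functional derivatives (of orders $N_1,\dots,N_\ell$ with $\sum N_i=2N\geq 2$) against an extension of a $Q$-type kernel, each such contraction lowering the $\Phi$-degree by exactly $2N$. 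The $N=0$ term reproduces precisely $\tau$. Combined with $\Gamma_{\cdot_Q}|_{\mathcal{M}_1}=\operatorname{id}$, this gives a linear map that is triangular with identity on graded pieces with respect to the ascending filtration $\mathcal{M}_0\subset\mathcal{M}_1\subset\mathcal{M}_2\subset\cdots$, and is therefore injective.

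Second I would note that, because $\Gamma_{\cdot_Q}\colon\mathcal{A}\to\mathcal{A}_{\cdot_Q}$ is a linear bijection, the inverse $\Gamma_{\cdot_Q}^{-1}\colon\mathcal{A}_{\cdot_Q}\to\mathcal{A}$ is well defined and linear, so \eqref{Eq: cdot GammaQ product} unambiguously defines a binary operation $\cdot_{\Gamma_{\cdot_Q}}$ on $\mathcal{A}_{\cdot_Q}$. Since $\Gamma_{\cdot_Q}^{-1}$ sends this operation exactly to the pointwise product on $\mathcal{A}$, it is tautological that $\cdot_{\Gamma_{\cdot_Q}}$ is commutative and associative, as these properties hold for the pointwise product. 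For the unit, $\boldsymbol{1}\in\mathcal{M}_1$ implies $\Gamma_{\cdot_Q}(\boldsymbol{1})=\boldsymbol{1}$ by Equation \eqref{Eq: Gamma on M1}, hence $\boldsymbol{1}\in\mathcal{A}_{\cdot_Q}$ and
\begin{align*}
\boldsymbol{1}\cdot_{\Gamma_{\cdot_Q}}\tau=\Gamma_{\cdot_Q}[\boldsymbol{1}\cdot\Gamma_{\cdot_Q}^{-1}(\tau)]=\Gamma_{\cdot_Q}[\Gamma_{\cdot_Q}^{-1}(\tau)]=\tau\,,
\end{align*}
for every $\tau\in\mathcal{A}_{\cdot_Q}$.

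The main obstacle, if there is one, is the triangularity statement above: nothing in the existence theorem is explicitly framed in terms of the filtration $\mathcal{M}_k$ being preserved, so one must carefully extract this information from the inductive construction. Beyond that, the proof is purely formal and amounts to transporting an algebra structure across a linear isomorphism.
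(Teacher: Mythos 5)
Your overall architecture (reduce everything to injectivity of $\Gamma_{\cdot_Q}$, then transport the pointwise algebra structure, with the unit coming from $\Gamma_{\cdot_Q}(\boldsymbol{1})=\boldsymbol{1}$) is the same as the paper's, and the second half of your argument is fine. The gap is in your key triangularity claim, namely that $\Gamma_{\cdot_Q}(\tau)-\tau\in\mathcal{M}_{k-2}$ for $\tau\in\mathcal{M}_k$. What the inductive construction actually produces as corrections are partial evaluations of the extensions $\widehat{T}_N$ (resp. $\widehat{P}_2$, $\widehat{Q}_{2\ell}$), and these are functional-valued distributions whose kernels involve genuinely new renormalized objects; they do not in general belong to the polynomial ring $\mathcal{A}$ at all, hence a fortiori not to $\mathcal{M}_{k-2}$. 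For instance, already for $\tau=\Phi^2\,P\circledast\Phi^3$ the $N=1$ correction is of the schematic form $f\mapsto\int f\,\varphi\,\bigl(\widehat{P\cdot Q}\circledast\varphi^2\bigr)$, with $\widehat{P\cdot Q}$ an extension of the pointwise product of the parametrix with $Q$ (this is exactly the kind of kernel that resurfaces in $M_2$ in Section \ref{Sec: Renormalized equation}); such a functional is not generated by $\boldsymbol{1},\Phi$ under pointwise products, $\mathcal{E}(M)$-multiplication and $P\circledast$, so it is not an element of $\mathcal{A}$. Your filtration argument, as stated, therefore does not get off the ground: only in the lowest cases (e.g. $\Gamma_{\cdot_Q}(\Phi^2)-\Phi^2=c\,\boldsymbol{1}$) does the correction land back in $\mathcal{A}$.

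The statement you actually need is much weaker — that the correction has strictly lower polynomial degree in $\varphi$ — and the clean way to obtain it (and injectivity) is not to re-run the induction but to use properties \emph{1.} and \emph{3.} of Theorem \ref{Thm: Gamma cdotQ existence} abstractly, which is what the paper does: by \eqref{Eq: Pi-functional derivative} the kernel of $\Gamma_{\cdot_Q}$ is stable under $\delta_\psi$, so if $0\neq\tau\in\ker\Gamma_{\cdot_Q}$ has top $\Phi$-degree $k$ (necessarily $k>0$ since $\boldsymbol{1}\notin\ker$ by \eqref{Eq: Gamma on M1}), then $\delta^k_{\psi^{\otimes k}}\tau=f_\psi\boldsymbol{1}$ with $f_\psi\neq0$ for some $\psi$ would again lie in $\ker\Gamma_{\cdot_Q}$, contradicting $\Gamma_{\cdot_Q}(f_\psi\boldsymbol{1})=f_\psi\boldsymbol{1}$ from \eqref{Eq: Gamma on M1}. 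Equivalently: $\delta^k_\psi(\Gamma_{\cdot_Q}(\tau)-\tau)=\Gamma_{\cdot_Q}(\delta^k_\psi\tau)-\delta^k_\psi\tau=0$, so the correction has lower $\varphi$-degree as a functional, without any claim of membership in $\mathcal{A}$. If you repair your injectivity step along these lines, the remainder of your proof goes through unchanged.
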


\begin{proof}
	First of all observe that any map $\Gamma_{\cdot_Q}\colon\mathcal{A}\to\mathcal{D}_{\mathrm{C}}'(M;\operatorname{Pol})$ built as per Theorem \ref{Thm: Gamma cdotQ existence} is such that $\ker\Gamma_{\cdot_Q}=\{0\}$.
	Indeed, let $\tau\in\ker\Gamma_{\cdot_Q}\setminus\{0\}$ be of polynomial degree $k$ in $\Phi$, that is $\delta^k_\psi\tau\neq 0$ while $\delta^{k+1}_\psi\tau=0$ -- {\it cf.} Remark \ref{Rem: graded algebra}.
	Equation \eqref{Eq: Gamma on M1} entails that $\mathbf{1}\notin\ker\Gamma_{\cdot_Q}$, so that $k>0$. In turn this implies that for all $\psi\in\mathcal{E}(M)$ there exists $0\neq f_\psi\in\mathcal{E}(M^k)$ such that
	\begin{align*}
		\delta^k_\psi\tau
		=f_\psi\mathbf{1}\,.
	\end{align*}	
	Moreover Equation \eqref{Eq: Pi-functional derivative} implies that, if $\tau\in\ker\Gamma_{\cdot_Q}$, then also $\delta_\psi\tau\in\ker\Gamma_{\cdot_Q}$. This entails that $f_\psi\mathbf{1}\in\ker\Gamma_{\cdot_Q}$ which implies $f_\psi=0$. This is a contradiction. Thus $\Gamma_{\cdot_Q}$ is injective and therefore $\cdot_{\Gamma_{\cdot_Q}}$ is well-defined. All remaining properties are straightforwardly verified.
	In particular $\boldsymbol{1}\cdot_{\Gamma_{\cdot_Q}}\tau=\tau\cdot_{\Gamma_{\cdot_Q}}\boldsymbol{1}=\tau$ for all $\tau\in\mathcal{A}_{\cdot_{\Gamma_{\cdot_Q}}}$ while associativity and commutativity of $\mathcal{A}_{\cdot_{\Gamma_{\cdot_Q}}}$ are inherited from the corresponding properties of $\mathcal{A}$.
\end{proof}

We refer to \cite{EFPTZ} for a Hopf-algebraic approach to deformations of the form \eqref{Eq: cdot GammaQ product}, based on moment-cumulants relations.

\begin{remark}\label{Rmk: co(variance)mment}
	Theorem \ref{Thm: Gamma cdotQ existence} and Corollary \ref{Cor: construction of AcdotQ} identify a suitable algebra $\mathcal{A}_{\cdot_Q}$ whose physical relevance has already been explained.
	One may wonder which is the interplay between this construction and diffeomorphism invariance, namely to which extent the algebra $\mathcal{A}_{\cdot_Q}=\mathcal{A}_{\cdot_Q}(M,E)$ depends on the geometrical data $M,E$.
	
	In particular it would be desirable that the assignment $(M,E)\to\mathcal{A}_{\cdot_Q}(M,E)$ satisfies the following property: Whenever $\iota\colon M_1\to M_2$ is a smooth map such that $E_1\circ\iota^*=\iota^*\circ E_2$ then there exists a corresponding injective isomorphism of algebras $\mathcal{A}_{\cdot_Q}(\iota)\colon\mathcal{A}_{\cdot_Q}(M_1,E_1)\to\mathcal{A}_{\cdot_Q}(M_2,E_2)$.  
	
	This statement can be read as the translation to this setting of the principle of general covariance which is adopted in algebraic quantum field theory, {\it cf.} \cite[Ch. 4]{BFDY15} and which is often stated using the language of category theory. Here, we avoid entering into the details, and we just observe that, similarly to the analysis in \cite{DDR20}, such kind of property is hard to implement as our construction depends on the choice of a parametrix $P$ for the operator $E$. It is well-known that the choice of $P$ is not covariant and this has a repercussion in the failure of $\mathcal{A}_{\cdot_Q}$ being invariant under the induced action of the diffeomorphism group of the underlying manifold $M$. Following the same rationale of \cite{DDR20} a possible way out from this quandary consists of working with all possible parametrices $P$ at once. We feel that, pursuing such path in this paper would only be an unnecessary detour from our real goal and we shall address it in a future work.
\end{remark}

\section{Correlations and $\bullet_Q$ product}\label{Sec: Correlations and bulletQ product}

In the previous section Theorem \ref{Thm: Gamma cdotQ existence} together with Corollary \ref{Cor: construction of AcdotQ} laid the foundation of the algebra $\mathcal{A}_{\cdot_Q}$, whose elements can be interpreted as the expectation value $\mathbb{E}(\widehat{\tau}(f))$ of the ($\varphi$-shifted) random variable $\widehat{\tau}(f)$, \textit{cf.} Proposition \ref{Prop: Dc functionals algebraic structure} and Remark \ref{Rmk: interpretation of Dc local product}. This construction falls short of the goal of computing also the correlations of $\widehat{\tau}$ in terms of $\Gamma_{\cdot_Q}(\tau)$.

Proposition \ref{Prop: Dc tensor algebra algebraic structure} solves this quandary in the case of a regularized counterpart for $Q$, \textit{cf.} Remark \ref{Rmk: interpretation of Dc tensor product}. In this section we adopt a strategy similar to that of the preceding one showing that, barring a suitable renormalization procedure, it is possible to replace $Q_\varepsilon$ with $Q$. Observe that the main hurdles arise from the singular behaviour of $Q$ which leads to a generally ill-defined product $(1_2\otimes Q^{\otimes k})\cdot(\tau_1^{(k)}\otimes \tau_2^{(k)})$, $\tau_1,\tau_2\in\mathcal{D}^\prime(M;\operatorname{Pol})$. The following examples illustrate in a concrete case the problem that we have outlined.

\begin{example}\label{Exam: problem with correlations}
	Consider $\Phi^2\in\mathcal{D}_{\mathrm{C}}'(M;\operatorname{Pol})$ as per Example \ref{Ex: examples of functional-valued polynomial distributions}. Equation \eqref{Eq: regularized multilocal delta-product} with $Q_\varepsilon$ formally replaced with $Q$ leads to
	\begin{gather*}
		[\Phi^2\bullet_Q \Phi^2](f_1\otimes f_2;\varphi)
		=\\=\int\limits_{M\times M}
		f_1(x_1)f_2(x_2)\big[
		\varphi(x_1)^2\varphi(x_2)^2
		+4\varphi(x_1)Q(x_1,x_2)\varphi(x_2)
		+2Q(x_1,x_2)^2
		\big]\mathrm{d}\mu(x_1)\mathrm{d}\mu(x_2),
	\end{gather*}
	where $f_1,f_2\in\mathcal{D}(M)$, while $\varphi\in\mathcal{E}(M)$. One can realize by direct inspection that the last term on the right hand side of the previous identity is a priori well-defined only outside the total diagonal of $M\times M$. In other words $Q^2\in\mathcal{D}'(M\times M\setminus\mathrm{Diag}_2)$.
	
Yet, observe that, on account of Remark \ref{Rmk: sd product estimate}, Lemma \ref{Lemma: finite sd of convolution} and Corollary \ref{Cor: sd of composition},
	\begin{align*}
		\operatorname{sd}_{\mathrm{Diag}_2}(Q^2)
		\leq 2\operatorname{sd}_{\mathrm{Diag}_2}(Q)<\infty\,.
	\end{align*}
	Hence, in view of Theorem \ref{Thm: extension with scaling degree}, there exists $\widehat{Q}_2\in\mathcal{D}^\prime(M\times M)$, an extension of $Q^2$ with $\operatorname{sd}_{\mathrm{Diag}_2}(\widehat{Q}_2)=\operatorname{sd}_{\mathrm{Diag}_2}(Q^2)$. As a matter of fact, exploiting the local behaviour of the parametrix $P$ as well as Example \ref{Ex: sd of parametrix}, one has $\mathrm{sd}_{\mathrm{Diag}_2}(Q)\leq d-4$. Thus if $\dim M=d<8$ the above extension is unique, {\it cf.} Theorem \ref{Thm: extension with scaling degree}. Following the same strategy as in the preceding section, we can conceive to use such extension to give meaning to $\Phi^2\bullet_Q \Phi^2$. The remainder of the section is devoted to making this idea precise. 
\end{example}

\begin{remark}[Parabolic Case]
The same strategy of Example \ref{Exam: problem with correlations} applies in the parabolic case. For the sake of clarity, we underline that the only difference is that $\operatorname{wsd}_{\mathrm{Diag}_2}(Q^2)=2(d-3)$, {\it cf.} Remark \ref{Rmk: weighted scaling degree}, where $d=\dim M$ As a consequence, for the same reason of the above example, the extension preserving the weighted scaling degree $\widehat{Q}_2\in\mathcal{D}^\prime(M\times M)$ of $Q^2\in\mathcal{D}'(M\times M\setminus\mathrm{Diag}_2)$ is unique if $\dim(\Sigma)<6$ where $M=\mathbb{R}\times\Sigma$.
\end{remark}

\begin{remark}\label{Rmk: decomposition of Pi tensor algebra}
	Notice that the decomposition $\mathcal{A}=\bigoplus\limits_{k\geq 0}\mathcal{M}_k=\varinjlim\limits_j\bigoplus\limits_{k\geq 0}\mathcal{M}_k^j$ -- \textit{cf.} Remark \ref{Rem: graded algebra} -- induces a counterpart at the level of the universal tensor module $\mathcal{T}(\mathcal{A}_{\cdot_Q})=\mathcal{E}(M)\oplus\bigoplus_{\ell>0}\mathcal{A}_{\cdot_Q}^{\otimes\ell}$, {\it i.e.}
	\begin{align*}
		\mathcal{T}(\mathcal{A}_{\cdot_Q})
		&=\mathcal{E}(M)\oplus\bigoplus_{\ell> 0}
		\bigoplus_{k=0}^\infty\bigoplus_{\substack{k_1,\dots,k_\ell\\ k_1+\dots +k_\ell=k}}
		\Gamma_{\cdot_Q}(\mathcal{M}_{k_1})\otimes\ldots\otimes\Gamma_{\cdot_Q}(\mathcal{M}_{k_\ell})
		\\ 
		&=\mathcal{E}(M)\oplus\bigoplus_{\ell> 0}
		\bigoplus_{k=0}^\infty\bigoplus_{\substack{k_1,\dots,k_\ell\\ k_1+\dots +k_\ell=k}}
		\varinjlim\limits_{j_1,\dots,j_\ell}
		\Gamma_{\cdot_Q}(\mathcal{M}_{k_1}^{j_1})\otimes\ldots\otimes\Gamma_{\cdot_Q}(\mathcal{M}_{k_\ell}^{j_\ell})\,.
	\end{align*}
\end{remark}

%Theorem \ref{Thm: GammabulletQ existence} proves the existence of a suitable product $\bullet_{\Gamma_{\bullet_Q}}$ whose definition is inspired by $\bullet_Q$.
In the following we prove a counterpart of Theorem \ref{Thm: Gamma cdotQ existence} for the universal tensor module, namely we look for a map  $\Gamma_{\bullet_Q}\colon\mathcal{T}(\mathcal{A}_{\cdot_Q})\to\mathcal{T}_{\mathrm{C}}'(M;\operatorname{Pol})$ out of which we can define a deformed algebra structure on $\mathcal{T}(\mathcal{A}_{\cdot_Q})$ along the same spirit of Corollary \ref{Cor: construction of AcdotQ}.

\begin{theorem}\label{Thm: GammabulletQ existence}
	Let $\mathcal{A}_{\cdot_Q}$ be the algebra defined in Corollary \ref{Cor: construction of AcdotQ} with the map $\Gamma_{\cdot_Q}$ built as per Theorem \ref{Thm: Gamma cdotQ existence}. Letting $\mathcal{T}_{\mathrm{C}}'(M;\operatorname{Pol})$ be defined as per Equation \eqref{Eq: Tc algebra} and calling $\mathcal{T}(\mathcal{A}_{\cdot_Q})$ the universal tensor module built out of $\mathcal{A}_{\cdot_Q}$, there exists a linear map $\Gamma_{\bullet_Q}\colon\mathcal{T}(\mathcal{A}_{\cdot_Q})\to\mathcal{T}_{\mathrm{C}}'(M;\operatorname{Pol})$ with the following properties:
	\begin{enumerate}[(i)]
		\item
		for all $\tau_1,\ldots,\tau_\ell\in\mathcal{A}_{\cdot_Q}$ with $\tau_1\in\Gamma_{\cdot_Q}(\mathcal{M}_1)$ it holds
		\begin{align}\label{Eq: GammabulletQ with one factor in M1}
			\Gamma_{\bullet_Q}(\tau_1\otimes\ldots\otimes \tau_\ell)
			:=\tau_1\bullet_Q\Gamma_{\bullet_Q}(\tau_2\otimes\ldots\otimes \tau_\ell)\,,
		\end{align}
		where $\bullet_Q$ is defined as in Equation \eqref{Eq: regularized multilocal delta-product} with $Q_\varepsilon$ replaced by $Q$.
		\item
		Let $\tau_1,\ldots,\tau_\ell\in\mathcal{A}_{\cdot_Q}$ and $f_1,\ldots,f_\ell\in\mathcal{D}(M)$.
		If there exists $I\subsetneq\{1,\ldots,\ell\}$ for which
		\begin{align*}
			\bigcup_{i\in I}\operatorname{supp}(f_i)\cap\bigcup_{j\notin I}\operatorname{supp}(f_j)
			=\emptyset\,,
		\end{align*}
		then
		\begin{align}\label{Eq: smooth factorization}
			\Gamma_{\bullet_Q}(\tau_1\otimes\ldots\otimes\tau_\ell)(f_1\otimes\ldots\otimes f_\ell)
			=\bigg[\Gamma_{\bullet_Q}\bigg(\bigotimes_{i\in I}\tau_i\bigg)
			\bullet_Q\Gamma_{\bullet_Q}\bigg(\bigotimes_{j\notin I}\tau_j\bigg)\bigg](f_1\otimes\ldots\otimes f_\ell)\,.
		\end{align}
		\item
		for all $\ell\geq 0$, $\Gamma_{\bullet_Q}\colon\mathcal{A}_{\cdot_Q}^{\otimes \ell}\to\mathcal{T}_{\mathrm{C}}'(M;\operatorname{Pol})$ is a symmetric map,
		\item\label{Item: GammabulletQ conditions}
		$\Gamma_{\bullet_Q}$ satisfies the following identities:
		\begin{subequations}\label{Eq: GammabulletQ conditions}
			\begin{align}
				\label{Eq: GammabulletQ on AcdotQ}
				&\Gamma_{\bullet_Q}(\tau)
				=\tau\,,
				\qquad\forall \tau\in\mathcal{A}_{\cdot_Q}\,,\\
				\label{Eq: GammabulletQ functional derivative requirement}
				&\Gamma_{\bullet_Q}\circ\delta_\psi
				=\delta_\psi\circ\Gamma_{\bullet_Q}\,,
				\qquad\forall \psi\in\mathcal{E}(M)\,,\\
				\label{Eq: GammabulletQ P-condition}
				&\Gamma_{\bullet_Q}(\tau_1\otimes\ldots\otimes P\circledast\tau_j\otimes\ldots\otimes \tau_\ell)
				=(\delta_{\mathrm{Diag}_2}^{\otimes j-1}\otimes P\otimes\delta_{\mathrm{Diag}_2}^{\otimes \ell-j})\circledast
				\Gamma_{\bullet_Q}(\tau_1\otimes\ldots\otimes \tau_j\otimes\ldots\otimes \tau_\ell)\,,
			\end{align}
		\end{subequations}
		for all $\tau_1,\ldots,\tau_\ell\in \mathcal{A}_{\cdot_Q}$ and for all $\ell\in\mathbb{Z}_+$.	
	\end{enumerate}
	Moreover, given any such map $\Gamma_{\bullet_Q}$ let
	\begin{align}\label{Eq: bulletQ algebra}
		\mathcal{A}_{\bullet_Q}
		:=\Gamma_{\bullet_Q}(\mathcal{A}_{\cdot_Q})
		\subseteq\mathcal{T}_{\mathrm{C}}'(M;\operatorname{Pol})\,.
	\end{align}
	Then the bilinear map $\bullet_{\Gamma_{\bullet_Q}}\colon\mathcal{A}_{\bullet_Q}\times\mathcal{A}_{\bullet_Q}\to\mathcal{A}_{\bullet_Q}$ defined by
	\begin{align}\label{Eq: extended multilocal delta product}
		 \tau\bullet_{\Gamma_{\bullet_Q}}\bar{\tau}
		:=\Gamma_{\bullet_Q}(\Gamma_{\bullet_Q}^{-1}(\tau)\otimes \Gamma_{\bullet_Q}^{-1}(\bar{\tau}))\,,
		\qquad\forall\tau,\bar{\tau}\in\mathcal{A}_{\bullet_Q}\,,
	\end{align}
	defines a unital, commutative and associative product on $\mathcal{A}_{\bullet_Q}$.
\end{theorem}

\begin{proof}
	The strategy of the proof is very much similar in spirit to that of Theorem \ref{Thm: Gamma cdotQ existence}, namely we proceed inductively. Due to the sheer length of the analysis we divide what follows in separate steps. As a preliminary observation we stress that Equation \eqref{Eq: GammabulletQ with one factor in M1} and the map $\bullet_Q$ in particular are well-defined when applied to elements lying in $\Gamma_{\cdot_Q}(\mathcal{M}_1)$, since no divergences occur. In particular all equations in item \eqref{Item: GammabulletQ conditions} are automatically fulfilled.
	In addition notice that, under the assumption made on the test functions $f_1,\ldots,f_\ell$, the product $\bullet_Q$ appearing in Equation \eqref{Eq: smooth factorization} is well-defined.
		
	\paragraph{Step 1 -- Induction over $\ell$.}
	The first step consists of controlling $\Gamma_{\bullet_Q}$ as the number of arguments in the tensor product increases -- recall that $\mathcal{T}(\mathcal{A}_{\cdot_Q})=\mathcal{E}(M)\bigoplus_{\ell=1}^\infty\mathcal{A}_{\cdot_Q}^{\otimes \ell}$. In other words we are proceeding inductively over $\ell$ and we observe that Equation \eqref{Eq: GammabulletQ on AcdotQ} defines completely $\Gamma_{\bullet_Q}$ when $\ell=1$. As per the inductive hypothesis we assume that the action of $\Gamma_{\bullet_Q}$ has been defined on $\mathcal{A}_{\cdot_Q}^{\otimes p}\subseteq\mathcal{T}_{\mathrm{C}}'(M;\operatorname{Pol})$ for all $p<\ell$ and we prove existence of $\Gamma_{\bullet_Q}$ on $\mathcal{A}_{\cdot_Q}^{\otimes \ell}$.
	Since, on account of Remark \ref{Rmk: decomposition of Pi tensor algebra},
	\begin{align*}
		\mathcal{A}_{\cdot_Q}^{\otimes \ell}
		=\bigoplus_{k\geq 0}\bigoplus_{\substack{k_1,\dots k_l\\k_1+\ldots+k_\ell= k}}
		\Gamma_{\cdot_Q}(\mathcal{M}_{k_1})\otimes\ldots\otimes\Gamma_{\cdot_Q}(\mathcal{M}_{k_\ell})\,,
	\end{align*}
	the problem reduces to constructing, for all $k\in\mathbb{N}\cup\{0\}$, $\Gamma_{\bullet_Q}(\tau_{k_1}\otimes\ldots\otimes \tau_{k_\ell})$ for all $\tau_{k_1},\ldots,\tau_{k_\ell}$ with $\tau_{k_p}\in\Gamma_{\cdot_Q}(\mathcal{M}_{k_p})$, while $k_1,\ldots,k_p\in\mathbb{N}\cup\{0\}$ are such that $k_1+\ldots+k_\ell= k$.
	
	\paragraph{Step 2 -- Induction over $k$.}
	At this stage we proceed inductively over $k\in\mathbb{N}\cup\{0\}$. The cases $k=0,1$ are readily verified since no singularity can occur in 
	\begin{align*}
		\Gamma_{\bullet_Q}(\tau_{k_1}\otimes\ldots\otimes\tau_{k_\ell})
		=\tau_{k_1}\bullet_Q\ldots\bullet_Q\tau_{k_\ell}\,.
	\end{align*}
	Furthermore both items $(ii)$ and $(iii)$ are trivially satisfied per construction. Hence we can make the inductive hypothesis assuming that $\Gamma_{\bullet_Q}(\tau_{k_1}\otimes\ldots\otimes\tau_{k_\ell})$ has been defined for all $k_1,\ldots,k_\ell$ such that $k_1+\ldots+k_\ell=k-1$. In order to extend the definition of $\Gamma_{\bullet_Q}$ to the case when $k_1+\dots k_\ell=k$ we can invoke Remark \ref{Rmk: decomposition of Pi tensor algebra} to write
	\begin{align*}
		\Gamma_{\cdot_Q}(\mathcal{M}_{k_1})\otimes\ldots\otimes\Gamma_{\cdot_Q}(\mathcal{M}_{k_\ell})
		=\varinjlim_{j_1,\ldots,j_\ell}\Gamma_{\cdot_Q}(\mathcal{M}^{k_1}_{j_1})\otimes\ldots\otimes\Gamma_{\cdot_Q}(\mathcal{M}^{k_\ell}_{j_\ell})\,,
	\end{align*}
	proceeding inductively over $j_1,\ldots,j_\ell$.
	
	\paragraph{Step 2a -- Induction over $j_1,\ldots,j_\ell$: starting case.}
	
	If $j_1=\ldots=j_\ell=0$, this amounts to considering only the case $\tau_{k_p}=\Gamma_{\cdot_Q}(\Phi^{k_p})$ for all $p\in\{1,\ldots,\ell\}$ -- \textit{cf.} Definition \ref{Def: pointwise algebra}.
	For all $f_i\in\mathcal{D}(M)$, $i=1,\dots,\ell$ with disjoint supports and for all $\varphi\in\mathcal{E}(M)$, Equation \eqref{Eq: smooth factorization} entails, together with Equation \eqref{Eq: regularized multilocal delta-product} with $Q_\varepsilon$ replaced by $Q$ reads for the case in hand,
	\begin{multline*}
		\Gamma_{\bullet_Q}(\Gamma_{\cdot_Q}(\Phi^{k_1})\otimes\ldots\otimes\Gamma_{\cdot_Q}(\Phi^{k_\ell}))(f_1\otimes\ldots\otimes f_\ell;\varphi)
		=\Gamma_{\cdot_Q}(\Phi^{k_1})\bullet_Q\ldots\bullet_Q\Gamma_{\cdot_Q}(\Phi^{k_\ell})(f_1\otimes\ldots\otimes f_\ell;\varphi)
		\\=\sum_{N=0}^\infty\frac{(2N)!}{(2N)!!}\sum\limits_{\substack{N_1,\dots N_\ell\\ N_1+\dots +N_\ell=2N}}
		{\widehat{k}_\ell \choose \widehat{N}_\ell}
		(1_\ell\otimes Q^{\otimes N})\cdot\\
\cdot	[\delta_{\mathrm{Diag}_{N_1+1}}\Gamma_{\cdot_Q}(\Phi^{k_1-N_1})\otimes\ldots\otimes\delta_{\mathrm{Diag}_{N_\ell+1}}\Gamma_{\cdot_Q}(\Phi^{k_\ell-N_\ell})]
		(f_1\otimes\ldots\otimes f_\ell\otimes 1_{2N};\varphi)\,,
	\end{multline*}
	where $\cdot$ denotes the product of distributions,
%	\footnote{
%		With unavoidable confusion we specify that we are considering the product between the (functional-valued) distribution $\delta_{\mathrm{Diag}_{j_1+1}}\Gamma_{\cdot_Q}(\Phi^{k_1-j_1})\otimes\ldots\otimes\delta_{\mathrm{Diag}_{j_1+1}}\Gamma_{\cdot_Q}^{k_\ell-j_\ell}$ with integral kernel $\delta_{\mathrm{Diag}_{j_1+1}}(x_1,\widehat{z}_{j_1})\Gamma_{\cdot_Q}(\Phi^{k_1-j_1})(x_1;\varphi)\cdots\delta_{\mathrm{Diag}_{j_1+1}}(x_\ell,\widehat{z}_{j_\ell})\Gamma_{\cdot_Q}^{k_\ell-j_\ell}(x_\ell;\varphi)$ and the distribution $1_\ell\otimes Q^{\otimes J}$ with integral kernel $1_\ell(\widehat{x}_\ell)\mathcal{Q}(\widehat{z}_{2J})$, where $\mathcal{Q}(\widehat{z}_{2J})$ denotes the integral kernel of $Q^{\otimes J}$ after a suitable rearrangement of the components in $\widehat{z}_{2J}=(\widehat{z}_{j_1},\ldots,\widehat{z}_{j_\ell})=(z_1,\ldots,z_{2J})$.
%		.},
	while ${\widehat{k}_\ell \choose \widehat{N}_\ell}:=\prod_{p=1}^\ell{k_p \choose N_p}$ and
	\begin{align*}
		\Gamma_{\cdot_Q}(\Phi^k)^{(j)}=\frac{k!}{(k-j)!}\Gamma_{\cdot_Q}(\Phi^{k-j})\delta_{\mathrm{Diag}_{j+1}}\,.
	\end{align*}
	The previous expression defines $\Gamma_{\bullet_Q}(\Gamma_{\cdot_Q}(\Phi^{k_1})\otimes\ldots\otimes\Gamma_{\cdot_Q}(\Phi^{k_\ell}))$ as a (functional-valued) distribution on $M^\ell\setminus\mathrm{Diag}_\ell$.
	This is codified by the fact that
	\begin{align*}
		S_{\ell,N}:=(1_\ell\otimes Q^{\otimes N})\cdot
		[\delta_{\mathrm{Diag}_{N_1+1}}\otimes\ldots\otimes\delta_{\mathrm{Diag}_{N_\ell+1}}]\,,
	\end{align*}
	is a well-defined functional-valued distribution only on $M^{\ell+2N}\setminus\mathrm{Diag}_{\ell+2N}$.
	To complete the definition of $\Gamma_{\bullet_Q}(\Gamma_{\cdot_Q}(\Phi^{k_1})\otimes\ldots\otimes\Gamma_{\cdot_Q}(\Phi^{k_\ell}))$ we observe that $\operatorname{sd}_{\mathrm{Diag}_{\ell+2N}}(S_{\ell,N})$ is finite and, thus, Theorem \ref{Thm: extension with scaling degree} ensures the existence of an extension $\widehat{S}_{\ell,N}$ of $S_{\ell,N}$ to the whole $M^{\ell+2N}$ such that $\operatorname{sd}_{\mathrm{Diag}_{\ell+2N}}(\widehat{S}_{\ell,N})=\operatorname{sd}_{\mathrm{Diag}_{\ell+2N}}(S_{\ell,N})$.
	Having chosen one such extension we set for all $f_i\in\mathcal{D}(M)$, $i=1,\dots,\ell$ and for all $\varphi\in\mathcal{E}(M)$
	\begin{multline*}
		\Gamma_{\bullet_Q}(\Gamma_{\cdot_Q}(\Phi^{k_1})\otimes\ldots\otimes\Gamma_{\cdot_Q}(\Phi^{k_\ell}))(f_1\otimes\ldots\otimes f_\ell;\varphi)
		\\=\sum_{\substack{N\geq 0\\N_1+\ldots+N_\ell=2N}}
		\frac{(2N)!}{(2N)!!}
		{\widehat{k}_\ell \choose \widehat{N}_\ell}
		\widehat{S}_{\ell,N}\cdot
		[\Gamma_{\cdot_Q}(\Phi^{k_1-N_1})\otimes\ldots\otimes\Gamma_{\cdot_Q}(\Phi^{k_\ell-N_\ell})]
		(f_1\otimes\ldots\otimes f_\ell\otimes 1_{2N};\varphi)\,,
	\end{multline*}
	which identifies an element in $\mathcal{D}'(M^\ell;\operatorname{Pol})$.
	Furthermore for all $p\in\mathbb{N}\cup\{0\}$ and $\psi\in\mathcal{E}(M)$, it holds
	\begin{multline*}
	\Gamma_{\bullet_Q}(\Gamma_{\cdot_Q}(\Phi^{k_1})\otimes\ldots\otimes\Gamma_{\cdot_Q}(\Phi^{k_\ell}))^{(p)}
	(f_1\otimes\ldots\otimes f_\ell\otimes\psi^{\otimes p};\varphi)=
	\sum_{\substack{N\geq 0\\N_1+\ldots+N_\ell=2N\\p_1+\ldots+p_\ell=p}}
	\frac{(2N)!p!}{(2N)!!}
	\prod_{h=1}^qC(k_h,N_h,p_h)\cdot\\\cdot
	\widehat{S}_{\ell,N}^{[p]}\cdot
	[\Gamma_{\cdot_Q}(\Phi^{k_1-N_1-p_1})\otimes\ldots\otimes\Gamma_{\cdot_Q}(\Phi^{k_\ell-N_\ell-p_\ell})]
	(f_1\otimes\ldots\otimes f_\ell\otimes 1_{2N}\otimes\psi^{\otimes p};\varphi)\,,
\end{multline*}
where $C(k_h,N_h,p_h)=\frac{k_h!}{p_h!N_h!(k_h-N_h-p_h)!}$ while we denoted by $\widehat{S}_{\ell,N}^{[p]}$ the distribution
	\begin{align*}
		\widehat{S}_{\ell,N}^{[p]}
		:=(\widehat{S}_{\ell,N}\otimes 1_p)\cdot[1_N\otimes\delta_{\mathrm{Diag}_{p_1+1}}\otimes\ldots\otimes\delta_{\mathrm{Diag}_{p_\ell+1}}]
		\supseteq\\
		\supseteq(1_\ell\otimes Q^{\otimes N}\otimes 1_p)\cdot[\delta_{\mathrm{Diag}_{p_1+N_1+1}}\otimes\ldots\otimes\delta_{\mathrm{Diag}_{p_\ell+N_\ell+1}}]\,.
	\end{align*}
	Here $\supseteq$ means that the two distributions coincide on $\mathcal{D}(M^{\ell+2N+p}\setminus\mathrm{Diag}_{\ell+2N+p})$.
	In addition it holds that $\Gamma_{\bullet_Q}(\Gamma_{\cdot_Q}(\Phi^{k_1})\otimes\ldots\otimes\Gamma_{\cdot_Q}(\Phi^{k_\ell}))\in\mathcal{D}_{\mathrm{C}}'(M^\ell;\operatorname{Pol})$. This follows by estimating the wave front set of the distribution
	\begin{align*}
	\mathcal{D}(M)^{\otimes\ell}\otimes\mathcal{E}(M)^{\otimes p}\ni f_1\otimes\ldots\otimes f_\ell\otimes\psi^{\otimes p}\mapsto
		\widehat{S}_{\ell,N}^{[p]}(f_1\otimes\ldots\otimes f_\ell\otimes 1_{2N}\otimes\psi^{\otimes p})\,,
	\end{align*}
	by means of Theorem \ref{Thm: WF results} and Remark \ref{Rmk: convolution with smooth function}.
Furthermore a direct application of Equation \eqref{Eq: Pi-functional derivative} entails that, for all $f_i\in\mathcal{D}(M)$, $i=1,\dots,\ell$ and for all $\varphi\in\mathcal{E}(M)$
	\begin{multline*}
		\Gamma_{\bullet_Q}(\Gamma_{\cdot_Q}(\Phi^{k_1})\otimes\ldots\otimes\Gamma_{\cdot_Q}(\Phi^{k_\ell}))^{(p)}
		(f_1\otimes\ldots\otimes f_\ell\otimes\psi^{\otimes p};\varphi)
		\\=\sum_{\substack{N\geq 0\\N_1+\ldots+N_\ell=2N\\p_1+\ldots+p_\ell=p}}
		\frac{p!}{p_1!\cdots p_\ell!}
		{\widehat{k}_\ell \choose \widehat{N}_\ell}
		\widehat{S}_{\ell,N}\cdot
		[\Gamma_{\cdot_Q}(\delta_\psi^{p_1}\Phi^{k_1-N_1})\otimes\ldots\otimes\Gamma_{\cdot_Q}(\delta_\psi^{p_\ell}\Phi^{k_\ell-N_\ell})]
		(f_1\otimes\ldots\otimes f_\ell\otimes 1_{2N};\varphi)\,,
	\end{multline*}
	which entails Equation \eqref{Eq: GammabulletQ functional derivative requirement}.
	This completes the inductive proof for the initial case $n_1=\ldots=n_\ell=0$.
	
	\paragraph{Step 2b -- Induction over $j_1,\ldots,j_\ell$: general case.}
	In view of the preceding step we can assume that $\Gamma_{\bullet_Q}(\tau_{k_1}\otimes\ldots\otimes \tau_{k_\ell})$ has been defined for all $\tau_{k_p}\in\Gamma_{\cdot_Q}(\mathcal{M}_{k_p}^{m_p})$ for $m_p<j_p$. Our goal is to show how to extend the definition of $\Gamma_{\bullet_Q}$ to the case $m_p=j_p$ for all $p\in\{1,\ldots,\ell\}$.
	
	Let $\tau_{k_1},\ldots,\tau_{k_\ell}$ be such that $\tau_{k_p}\in\Gamma_{\cdot_Q}(\mathcal{M}_{k_p}^{j_p})$ for all $p\in\{1,\ldots,\ell\}$ and $k_1+\ldots+k_\ell=k$. 	If $\tau_{k_p}=P\circledast\sigma_{k_p}$ for $\sigma_{k_p}\in\mathcal{M}_{k_p}^{j_p-1}$, then $\Gamma_{\bullet_Q}(\tau_{k_1}\otimes\ldots\otimes \tau_{k_\ell})$ is defined via Equation \eqref{Eq: GammabulletQ P-condition}.
	In the general case, for all $f_1,\dots,f_\ell\in\mathcal{D}(M)$ with disjoint supports and for all $\varphi\in\mathcal{E}(M)$, Equation \eqref{Eq: smooth factorization} entails
	\begin{align*}
		&\Gamma_{\bullet_Q}(\tau_{k_1}\otimes\ldots\otimes\tau_{k_\ell})(f_1\otimes\ldots\otimes f_\ell;\varphi)
		=(\tau_{k_1}\bullet_Q\ldots\bullet_Q \tau_{k_\ell})(f_1\otimes\ldots\otimes f_\ell;\varphi)
		\\&=\sum\limits_{N=0}^\infty
		\frac{1}{(2N)!!}
		\sum_{\substack{N_1,\dots N_\ell\\N_1+\ldots+ N_\ell=2N}}
		\frac{(2N)!}{N_1!\cdots N_\ell!}
		\big[
		(1_\ell\otimes Q^{\otimes J})\cdot
		(\tau_{k_1}^{(N_1)}\widetilde{\otimes}\ldots\widetilde{\otimes}\tau_{k_\ell}^{(N_\ell)})
		\big](f_1\otimes\ldots\otimes f_\ell\otimes 1_{2N};\varphi)\\
		&=\sum\limits_{N=0}^\infty
		\frac{1}{(2N)!!}\sum_{\substack{N_1,\dots N_\ell\\N_1+\ldots+ N_\ell=2N}}
		\frac{(2N)!}{N_1!\cdots N_\ell!}
		S_{\ell,N}(f_1\otimes\ldots\otimes f_\ell\otimes 1_{2N};\varphi)\,,
	\end{align*}
	where $\cdot$ still denotes the product of distributions. As before, this defines $\Gamma_{\bullet_Q}(\tau_{k_1}\otimes\ldots\otimes\tau_{k_\ell})$ as a functional-valued distribution on $M^\ell\setminus\mathrm{Diag}_\ell$.
	Notice that with the same argument used in the proof of Theorem \ref{Thm: Gamma cdotQ existence} we can infer that $S_{\ell,N}$ is a well-defined functional-valued distribution on $M^{\ell+2N}\setminus\mathrm{Diag}_{\ell+2N}$.
	
	To complete the definition of $\Gamma_{\bullet_Q}(\tau_{k_1}\otimes\ldots\otimes\tau_{k_\ell})$ we ought to discuss a suitable extension of $S_{\ell,N}\in\mathcal{D}'(M^{\ell+2N}\setminus\mathrm{Diag}_{\ell+2N})$.
	Since $\operatorname{sd}_{\mathrm{Diag}_{\ell+2N}}(S_{\ell,N})<+\infty$, we can apply Theorem \ref{Thm: extension with scaling degree} to ensure the existence of an extension $\widehat{S}_{\ell,N}$ of $S_{\ell,N}$ to the whole $M^{\ell+2J}$ such that $\operatorname{sd}_{\mathrm{Diag}_{\ell+2N}}(\widehat{S}_{\ell,N})=\operatorname{sd}_{\mathrm{Diag}_{\ell+2N}}(S_{\ell,N})$.
	Having chosen one such extension, we set for all $f_1,\dots,f_\ell\in\mathcal{D}(M)$ and for all $\varphi\in\mathcal{E}(M)$
	\begin{align*}
		\Gamma_{\bullet_Q}(\tau_{k_1}\otimes\ldots\otimes \tau_{k_\ell})(f_1\otimes\ldots\otimes f_\ell;\varphi)
		:=\sum_{\substack{N\geq 0\\N_1+\ldots+ N_\ell=2N}}
		\frac{1}{(2N)!!}\frac{(2N)!}{N_1!\cdots N_\ell!}
		\widehat{S}_{\ell,N}(f_1\otimes\ldots\otimes f_\ell\otimes 1_{2N};\varphi)\,.
	\end{align*}
	This formula entails that $\Gamma_{\bullet_Q}(\tau_{k_1}\otimes\ldots\otimes \tau_{k_\ell})\in\mathcal{D}'(M^\ell;\operatorname{Pol})$ which is per construction symmetric in $\tau_{k_1},\ldots,\tau_{k_\ell}$. Furthermore, for all $p\in\mathbb{N}\cup\{0\}$ it holds
	\begin{multline*}
		\Gamma_{\bullet_Q}(\tau_{k_1}\otimes\ldots\otimes \tau_{k_\ell})^{(p)}(f_1\otimes\ldots\otimes f_\ell\otimes\psi^{\otimes p};\varphi)
		\\=\sum\limits_{N=0}^\infty\frac{1}{(2N)!!}\sum_{\substack{N_1,\dots N_\ell\\N_1+\ldots+ N_\ell=2N}}
		\frac{(2N)!}{N_1!\cdots N_\ell!}
		\widehat{S}_{\ell,N}^{(p)}(f_1\otimes\ldots\otimes f_\ell\otimes 1_{2N}\otimes\psi^{\otimes p};\varphi)\,.
	\end{multline*}
	With reference to Equation \eqref{Eq: GammabulletQ functional derivative requirement} consider the formal expression
	\begin{multline*}
		(\tau_{k_1}\bullet_Q\ldots\bullet_Q \tau_{k_\ell})^{(p)}(f_1\otimes\ldots\otimes f_\ell\otimes\psi^{\otimes p};\varphi)=
		\\\sum_{\substack{N\geq 0\\N_1+\ldots+ N_\ell=2N\\p_1+\ldots+p_\ell=p}}
		\frac{1}{(2N)!!}\frac{(2N)!p!}{\prod_{i=1}^{\ell}N_i!p_i!}
		\big[
		(1_\ell\otimes Q^{\otimes N}\otimes 1_p)\cdot
		(\tau_{k_1}^{(N_1+p_1)}\widetilde{\otimes}\ldots\widetilde{\otimes}\tau_{k_\ell}^{(N_\ell+p_\ell)})
		\big](f_1\otimes\ldots\otimes f_\ell\otimes 1_{2N}\otimes\psi^{\otimes p};\varphi)
		\\=\sum_{\substack{N\geq 0\\N_1+\ldots+ N_\ell=2N\\p_1+\ldots+p_\ell=p}}
		\frac{1}{(2N)!!}\frac{(2N)!p!}{\prod_{i=1}^{\ell}N_i!p_i!}
		S_{\ell,N}^{[\widehat{p}_\ell]}(f_1\otimes\ldots\otimes f_\ell\otimes 1_{2N}\otimes\psi^{\otimes p};\varphi)\,,
	\end{multline*}
	where $S_{\ell,N}^{[\widehat{p}_\ell]}$ is a functional-valued distribution defined on $M^{\ell+2N+p}\setminus\mathrm{Diag}_{\ell+2N+p}$ with finite scaling degree at $\mathrm{Diag}_{\ell+2N+p}$. Here we set $\widehat{p}_\ell=(p_1,\ldots,p_\ell)$.
	On account of the explicit form of such $\widehat{S}_{\ell,M}$ -- \textit{cf.} theorem \ref{Thm: extension with scaling degree} -- we can choose $\widehat{S}_{\ell,M}$ so that
	\begin{align*}
		\widehat{S}_{\ell,M}^{(p)}
		=\sum_{p_1+\ldots+p_\ell=p}\frac{p!}{p_1!\cdots p_\ell!}\widehat{S_{\ell,N}^{[\widehat{p}_\ell]}}\,,
	\end{align*}
	where $\widehat{S_{\ell,N}^{[\widehat{p}_\ell]}}$ denotes a scaling degree preserving extension of $S_{\ell,N}^{[\widehat{p}_\ell]}$ on $M^{\ell+2N+p}$.
	With this choice it follows that $\Gamma_{\bullet_Q}(\tau_{k_1}\otimes\ldots\otimes\tau_{k_\ell})$ satisfies Equation \eqref{Eq: GammabulletQ functional derivative requirement}.
	In addition,  $\Gamma_{\bullet_Q}(\tau_{k_1}\otimes\ldots\otimes\tau_{k_\ell})\in\mathcal{D}_{\mathrm{C}}^\prime(M^\ell;\operatorname{Pol})$ as it can be shown by evaluating the wave front set of
	\begin{align*}
		\mathcal{D}(M)^{\otimes\ell}\otimes\mathcal{E}(M)^{\otimes p}\ni f_1\otimes\ldots\otimes f_\ell\otimes\psi^{\otimes p}
		\mapsto \widehat{S}_{\ell,N}^{(p)}(f_1\otimes\ldots\otimes f_\ell\otimes 1_{2N}\otimes\psi^{\otimes p};\varphi)\,.
	\end{align*}
	This concludes the proof by induction over $j_1,\ldots,j_\ell$ and consequently also that over $k$ and over $\ell$.
	
	\paragraph{Final Step -- Algebraic properties of $\bullet_{\Gamma_{\bullet_Q}}$.}
	Once $\Gamma_{\bullet_Q}$ has been defined, the algebraic properties of $\bullet_{\Gamma_{\bullet_Q}}$ can be proved by direct inspection. In particular commutativity of the product is inherited by the symmetry of $\Gamma_{\bullet_Q}$.
	Associativity follows instead by a direct computation. For all $\tau_1,\tau_2,\tau_3\in\mathcal{A}_{\bullet_Q}$ it holds
	\begin{align*}
		(\tau_1\bullet_{\Gamma_{\bullet_Q}}\tau_2)\bullet_{\Gamma_{\bullet_Q}}\tau_3
		=\Gamma_{\bullet_Q}(\Gamma_{\bullet_Q}^{-1}(\tau_1\bullet_{\Gamma_{\bullet_Q}}\tau_2)\otimes \Gamma_{\bullet_Q}^{-1}(\tau_3))
		&=\Gamma_{\bullet_Q}(\Gamma_{\bullet_Q}^{-1}(\tau_1)\otimes \Gamma_{\bullet_Q}^{-1}(\tau_2)\otimes \Gamma_{\bullet_Q}^{-1}(\tau_3))
		\\&=\tau_1\bullet_{\Gamma_{\bullet_Q}} (\tau_2\bullet_{\Gamma_{\bullet_Q}}\tau_3)\,.
	\end{align*}
\end{proof}

\begin{remark}[Parabolic Case]
As for Theorem \ref{Thm: Gamma cdotQ existence}, also Theorem \ref{Thm: GammabulletQ existence} holds true \emph{mutatis mutandis} also in the parabolic case $M=\mathbb{R}\times\Sigma$ and $E=\partial_t-\widetilde{E}$. As we discussed in Remark \ref{Remark: parabolic case of AcdotQ}, this is a consequence of the finiteness of the weighted scaling degree, {\it cf.} Remark \ref{Rmk: weighted scaling degree}.
\end{remark}

\section{Uniqueness theorems}\label{Sec: Uniqueness theorems}

In the previous two sections we have proven existence of the algebras $\mathcal{A}_{\cdot_Q}$ and $\mathcal{A}_{\bullet_Q}$, which, in turn depend on the linear maps $\Gamma_{\cdot_Q}\colon\mathcal{A}\to\mathcal{D}_{\mathrm{C}}'(M;\operatorname{Pol})$ and $\Gamma_{\bullet_Q}\colon\mathcal{A}_{\cdot_Q}\to\mathcal{T}_{\mathrm{C}}'(M;\operatorname{Pol})$, {\it cf.} Theorems \ref{Thm: Gamma cdotQ existence} and \ref{Thm: GammabulletQ existence}. A close inspection of the proofs, {\it e.g.} Equation \eqref{Eq: definition of Phi2}, unveils that, in general, one cannot hope for uniqueness of the above structures. This prompts the questions, whether it is possible to classify the freedom in the definition of $\Gamma_{\cdot_Q}$ and of $\Gamma_{\bullet_Q}$ and how the algebras $\mathcal{A}_{\cdot_Q}$ and $\mathcal{A}_{\bullet_Q}$ depend on them. In this section, we address these issues. We observe that our approach is very similar in spirit to the same one followed by Hollands and Wald in discussing renormalization freedoms in relativistic quantum field theory on globally hyperbolic spacetimes \cite{Hollands-Wald-01, Hollands-Wald-02}.

\begin{remark}\label{Rmk: uniqueness for Phi2}
	In order to better understand the rationale behind the characterization of the freedom in defining $\Gamma_{\cdot_Q}$ and $\Gamma_{\bullet_Q}$, we feel worth focusing first on a concrete example, which is traded from Equation \eqref{Eq: definition of Phi2}. The notation used is consistent with that of Theorem \ref{Thm: GammacdotQ uniqueness}.
	
	Suppose, therefore, that $\Gamma_{\cdot_Q}$ and $\widetilde{\Gamma}_{\cdot_Q}$ are two different maps, compatible with the constraints listed in Theorem \ref{Thm: Gamma cdotQ existence}. Consider in addition the functional $\tau=\Phi^2$, {\it cf.} Example \ref{Ex: examples of functional-valued polynomial distributions}, and let $\mathcal{C}_0\in\mathcal{D}'_{\mathrm{C}}(M;\operatorname{Pol})$ be 
	\begin{align*}
		\mathcal{C}_0(f;\varphi)
		:=\Gamma_{\cdot_Q}(\Phi^2)(f;\varphi)
		-\widetilde{\Gamma}_{\cdot_Q}(\Phi^2)(f;\varphi)\qquad\forall (f;\varphi)\in\mathcal{D}(M)\times\mathcal{E}(M)\,.
	\end{align*}
	By direct inspection of Equation \eqref{Eq: definition of Phi2}, it turns out that, for all $\psi\in\mathcal{E}(M)$,
	\begin{align*}
		\mathcal{C}_0^{(1)}(f\otimes\psi;\varphi)
		&=\Gamma_{\cdot_Q}(\Phi^2)^{(1)}(f\otimes\psi;\varphi)
		-\widetilde{\Gamma}_{\cdot_Q}(\Phi^2)^{(1)}(f\otimes\psi;\varphi)
		\\&=2\Gamma_{\cdot_Q}(\psi\Phi)(f;\varphi)
		-2\widetilde{\Gamma}_{\cdot_Q}(\psi\Phi)(f;\varphi)
		=0\,,
	\end{align*}
	where we used Equation \eqref{Eq: Pi-functional derivative} together with Equation \eqref{Eq: Gamma on M1}. It follows that $\mathcal{C}_0$ does not depend on  $\varphi\in\mathcal{E}(M)$.
	Combining this statement with Equation \eqref{Eq: C-set}, it turns out that $\mathcal{C}_0\in\mathcal{D}^\prime(M)$ and $\operatorname{WF}(\mathcal{C}_0)=\emptyset$.
	Hence there must exist $c_0\in \mathcal{E}(M)$ such that $\mathcal{C}_0=c_0\boldsymbol{1}$.
	It follows that
	\begin{align*}
		\Gamma_{\cdot_Q}(\Phi^2)
		=\widetilde{\Gamma}_{\cdot_Q}(\Phi^2)
		+c_0\boldsymbol{1}\,,
	\end{align*}
	that is, on $\Phi^2$, $\widetilde{\Gamma}_{\cdot_Q}$ and $\Gamma_{\cdot_Q}$, differs by a distribution generated by a smooth function.
\end{remark}

\noindent We start by focusing the attention on $\Gamma_{\cdot_Q}$, generalizing the result of Remark \ref{Rmk: uniqueness for Phi2}.

\begin{theorem}\label{Thm: GammacdotQ uniqueness}
	Let $\widetilde{\Gamma}_{\cdot_Q},\Gamma_{\cdot_Q}:\mathcal{A}\to\mathcal{D}^\prime(M;\operatorname{Pol})$ be two linear maps compatible with the constraints listed in Theorem \ref{Thm: Gamma cdotQ existence}.
	Then there exists a family $\{\mathcal{C}_k\}_{k\in\mathbb{N}_0}$ of linear maps $\mathcal{C}_k\colon\mathcal{A}\to\mathcal{M}_k$,  such that:	
%	Recalling that $\mathcal{A}=\bigoplus_{k\in\mathbb{N}_0}\mathcal{M}_k$, {\it cf.} Remark \ref{Rem: graded algebra}, there exists a family $\{\mathcal{C}_{k,\ell}\}_{k,\ell\in\mathbb{N}_0}$ of linear maps $\mathcal{C}_{k,\ell}\colon\mathcal{M}_k\to\mathcal{M}_\ell$,  such that:	
	\begin{subequations}
		\begin{align}
		\label{Eq: tauell consistency}
		\mathcal{C}_\ell[\mathcal{M}_k]
		&=0\,,
		\qquad\forall k\leq\ell+1\,,\\
		\label{Eq: tauell on Ptau}
		\mathcal{C}_k[P\circledast\tau]
		&=P\circledast\mathcal{C}_k[\tau]\,,
		\qquad\forall k\in\mathbb{N}_0\,,\forall\tau\in\mathcal{A}\\
		\label{Eq: tauell functional derivative}
		\delta_\psi\circ\mathcal{C}_k
		&=\mathcal{C}_{k-1}\circ\delta_\psi\,,
		\qquad\forall k\in\mathbb{N}\,,\forall\psi\in\mathcal{E}(M)\\
		\label{Eq: GammacdotsQ uniqueness}
		\widetilde{\Gamma}_{\cdot_Q}(\tau)
		&=\Gamma_{\cdot_Q}\left(\tau
		+\mathcal{C}_{k-2}[\tau]\right)\,,
		\qquad\forall\tau\in\mathcal{M}_k\,.
		\end{align}
	\end{subequations}
\end{theorem}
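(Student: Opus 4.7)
The plan is to construct the family $\{\mathcal{C}_k\}_{k\in\mathbb{N}_0}$ by induction on the polynomial degree $k$ in $\Phi$, exploiting the graded decomposition $\mathcal{A}=\bigcup_{k\geq 0}\mathcal{M}_k$ of Remark \ref{Rem: graded algebra} and mirroring the inductive scheme used in the proof of Theorem \ref{Thm: Gamma cdotQ existence}. The base case $k\leq 1$ is immediate: by Equation \eqref{Eq: Gamma on M1} both maps coincide on $\mathcal{M}_1$, hence one sets $\mathcal{C}_j\equiv 0$ whenever $j<0$ and enforces $\mathcal{C}_\ell|_{\mathcal{M}_{\ell+1}}=0$, which secures property \eqref{Eq: tauell consistency} from the outset.

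In the inductive step we assume that $\mathcal{C}_{j-2}\colon\mathcal{M}_j\to\mathcal{M}_{j-2}$ has been built for every $j<k$ so that Equations \eqref{Eq: tauell consistency}--\eqref{Eq: GammacdotsQ uniqueness} are satisfied on $\mathcal{M}_j$. For any $\tau\in\mathcal{M}_k$ introduce
\begin{align*}
\mathcal{D}[\tau]:=\widetilde{\Gamma}_{\cdot_Q}(\tau)-\Gamma_{\cdot_Q}(\tau)\in\mathcal{D}_{\mathrm{C}}'(M;\operatorname{Pol}).
\end{align*}
Using Equation \eqref{Eq: Pi-functional derivative}, the commutation of both $\widetilde{\Gamma}_{\cdot_Q}$ and $\Gamma_{\cdot_Q}$ with $\delta_\psi$ together with the inductive hypothesis yield
\begin{align*}
\delta_\psi\mathcal{D}[\tau]=\widetilde{\Gamma}_{\cdot_Q}(\delta_\psi\tau)-\Gamma_{\cdot_Q}(\delta_\psi\tau)=\Gamma_{\cdot_Q}\bigl(\mathcal{C}_{k-3}[\delta_\psi\tau]\bigr),
\end{align*}
which is precisely the relation \eqref{Eq: tauell functional derivative} we want to enforce. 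Iterating the computation, $\delta_\psi^{k-1}\mathcal{D}[\tau]$ coincides with the difference of the two maps evaluated on an element of $\mathcal{M}_1$, which vanishes by Equation \eqref{Eq: Gamma on M1}. Consequently $\mathcal{D}[\tau]$ is a $\varphi$-polynomial of degree at most $k-2$, a bound that is consistent with the requirement $\mathcal{C}_{k-2}[\tau]\in\mathcal{M}_{k-2}$ and that in particular implies, via Remark \ref{Rmk: Dc smoothness and functional derivative}, that the $\varphi$-independent component $\mathcal{D}[\tau](\cdot;0)$ is a smooth-function-valued multiple of $\boldsymbol{1}$.

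The main obstacle is to promote this degree count into the genuine identity $\mathcal{D}[\tau]=\Gamma_{\cdot_Q}(\sigma)$ for some $\sigma\in\mathcal{M}_{k-2}$: the degree analysis is necessary but does not \emph{a priori} show that the correction lies in the image $\Gamma_{\cdot_Q}(\mathcal{A})$. To resolve this, the strategy is to retrace step by step the construction of Theorem \ref{Thm: Gamma cdotQ existence}, introducing a secondary induction on the sub-index $j$ labelling $\mathcal{M}_k^{j}$. At each such step both $\widetilde{\Gamma}_{\cdot_Q}$ and $\Gamma_{\cdot_Q}$ are obtained by extending to the diagonal the \emph{same} off-diagonal distribution $T_N$ of Equation \eqref{Eq: to be extended distribution}; by Theorem \ref{Thm: extension with scaling degree} the two extensions differ by a distribution supported on $\mathrm{Diag}_{\ell+2N}$ with controlled scaling degree. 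After partial integration against $1_{\ell-1+2N}$, such a diagonal-supported difference becomes a local functional-valued term that, when compared with the explicit forms \eqref{Eq: definition of Phi2}--\eqref{Eq: definition of Phik}, is readily matched to $\Gamma_{\cdot_Q}(\sigma)$ for a suitable $\sigma\in\mathcal{M}_{k-2}$. Defining $\mathcal{C}_{k-2}[\tau]$ to be this $\sigma$, property \eqref{Eq: tauell functional derivative} is already granted by the derivative identity above, property \eqref{Eq: tauell on Ptau} follows from Equation \eqref{Eq: Gamma on Ptau} and the fact that extension and convolution with $P$ commute, while \eqref{Eq: tauell consistency} is enforced by the vanishing of $\mathcal{D}[\tau]$ for $\tau\in\mathcal{M}_{k-1}$. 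The hardest part is precisely this bookkeeping of the diagonal-supported differences together with the verification that they can be absorbed into $\mathcal{M}_{k-2}$; the key technical input is that $\Gamma_{\cdot_Q}$ is unitriangular with respect to the filtration of $\mathcal{A}$ by the polynomial degree in $\varphi$, an observation implicit in the proof of Theorem \ref{Thm: Gamma cdotQ existence} which guarantees that $\sigma$ is uniquely determined by matching the $\varphi$-coefficients of $\Gamma_{\cdot_Q}(\sigma)$ with those of $\mathcal{D}[\tau]$ in decreasing order.
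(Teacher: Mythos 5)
Your opening moves --- induction over the grading $\mathcal{M}_k$, the identity $\delta_\psi\bigl(\widetilde{\Gamma}_{\cdot_Q}(\tau)-\Gamma_{\cdot_Q}(\tau)\bigr)=\Gamma_{\cdot_Q}\bigl(\mathcal{C}_{k-3}[\delta_\psi\tau]\bigr)$ from the inductive hypothesis, and the resulting bound on the $\varphi$-degree of the difference --- are sound and in the same spirit as the paper. The gap is in how you close the argument. You resolve your ``main obstacle'' by retracing the construction in the proof of Theorem \ref{Thm: Gamma cdotQ existence} and comparing the two scaling-degree-preserving extensions of the off-diagonal distribution $T_N$ of Equation \eqref{Eq: to be extended distribution}. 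But the hypotheses of the theorem only state that $\widetilde{\Gamma}_{\cdot_Q}$ and $\Gamma_{\cdot_Q}$ are linear maps satisfying properties \emph{1.}--\emph{4.} of Theorem \ref{Thm: Gamma cdotQ existence}; they are not assumed to arise from that particular recursive extension procedure, so the premise that ``both maps extend the same $T_N$'' is an unproven extra assumption, essentially presupposing the structure you are trying to establish. Moreover, even granting it, the decisive step --- that a diagonal-supported difference of extensions, after partial evaluation against $1_{\ell-1+2N}$, ``is readily matched to $\Gamma_{\cdot_Q}(\sigma)$ for a suitable $\sigma\in\mathcal{M}_{k-2}$'' --- is asserted rather than proved, and it is exactly the delicate point: not every local counterterm is realizable inside $\mathcal{A}$, as Remark \ref{Rem: Generalization of smooth arbitrariness} shows by exhibiting the failure of the naive ansatz \eqref{Eq: false GammacdotsQ uniqueness}. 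Your appeal to unitriangularity only yields injectivity of $\Gamma_{\cdot_Q}$ (this is how Corollary \ref{Cor: construction of AcdotQ} argues); injectivity pins down $\sigma$ only once one already knows the difference lies in the image.

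The paper closes the loop purely structurally, never invoking Theorem \ref{Thm: extension with scaling degree}: for $\ell\geq 1$ it \emph{defines} the correction by the homotopy formula
\begin{align*}
\mathcal{C}_\ell[\tau](f;\varphi)=\mathcal{C}_\ell[\tau](f;0)+\int_0^1\mathcal{C}_{\ell-1}[\delta_\varphi\tau](f;s\varphi)\,\mathrm{d}s\,,
\end{align*}
which manifestly lands in the algebra because $\mathcal{C}_{\ell-1}[\delta_\varphi\tau]$ is known inductively ($\delta_\varphi\tau\in\mathcal{M}_{k-1}$), and an integration by parts gives $\delta_\psi\circ\mathcal{C}_\ell=\mathcal{C}_{\ell-1}\circ\delta_\psi$. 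The residual $\mathcal{C}_0[\tau]:=\widetilde{\Gamma}_{\cdot_Q}(\tau)-\Gamma_{\cdot_Q}(\tau)-\Gamma_{\cdot_Q}(\mathcal{C}_{k-2}[\tau])$ is then shown to satisfy $\delta_\psi\mathcal{C}_0[\tau]=0$ by the inductive hypothesis, hence it is $\varphi$-independent and, by the wavefront set constraint, a smooth multiple of $\boldsymbol{1}$, i.e.\ an element of $\mathcal{M}_0$ on which $\Gamma_{\cdot_Q}$ acts as the identity by \eqref{Eq: Gamma on M1}. In this way the question you flag --- whether the difference lies in the image of $\Gamma_{\cdot_Q}$ --- needs to be answered only for the $\varphi$-independent remainder, where it is trivial. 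To complete your proposal you would need either an argument of this kind, or a separate proof that every map satisfying the constraints of Theorem \ref{Thm: Gamma cdotQ existence} is obtained by the explicit extension construction.
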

\begin{proof}
	The proof goes by induction on $k$, namely we show that, whenever the action of $\mathcal{C}_\ell$ has been consistently defined on all $\widetilde{\tau}\in\mathcal{M}_{k-1}$ for all $\ell\in\mathbb{N}_0$, then $\mathcal{C}_\ell[\tau]$ is known also for $\tau\in\mathcal{M}_k$.
	
	If $k=1$, then both Equation \eqref{Eq: Gamma on M1} and Equation \eqref{Eq: tauell consistency} entail that $\mathcal{C}_\ell[\mathcal{M}_1]=0$ for all $\ell\in\mathbb{N}_0$.
	By the induction hypothesis we assume that, for all $\ell\in\mathbb{N}_0$, $\mathcal{C}_\ell[\tau]$ has been defined for all $\tau\in\mathcal{M}_{k-1}$.
%	We shall then define $\tau_\ell[\tau]$ for all $\tau\in\mathcal{M}_k$ equations \eqref{Eq: tauell on Ptau}, \eqref{Eq: tauell functional derivative} and \eqref{Eq: GammacdotsQ uniqueness} are still satisfied.
	
	In order to define $\mathcal{C}_\ell[\tau]$ for all $\ell\in\mathbb{N}\cup\{0\}$ and $\tau\in\mathcal{M}_k$, we observe that Equation \eqref{Eq: tauell consistency} entails that $\mathcal{C}_\ell[\mathcal{M}_k]=0$ whenever $\ell\geq k-1$.
	To construct $\mathcal{C}_\ell$, for $0\leq\ell\leq k-2$, let $\tau\in\mathcal{M}_k$.
	If there exists $\bar{\tau}\in\mathcal{M}_k$ such that $\tau=P\circledast\bar{\tau}$, in agreement with Equation \eqref{Eq: tauell on Ptau}, we set  $\mathcal{C}_\ell[P\circledast\bar{\tau}]=P\circledast\mathcal{C}_\ell[\bar{\tau}]$.
	If we do not fall in this case, for all $\ell\in\{1,\ldots,k-2\}$, we define -- \textit{cf.} Remark \ref{Rmk: Pol isomorphic space} -- 
	\begin{align*}
		\mathcal{C}_\ell[\tau](f;\varphi)
		=\mathcal{C}_\ell[\tau](f;0)
		+\int_0^1\mathcal{C}_{\ell-1}[\delta_\varphi\tau](f;s\varphi)\mathrm{d}s\,.
		\qquad\forall\, f\in\mathcal{D}(M)\,,
		\forall\,\varphi\in \mathcal{E}(M)\,.
	\end{align*}
	Notice that $\delta_\varphi\tau\in\mathcal{M}_{k-1}$ so that $\mathcal{C}_{\ell-1}[\delta_\varphi\tau]\in\mathcal{M}_{\ell-1}$ is known by the inductive hypothesis, whereas $\mathcal{C}_\ell[\tau](f;0)$ can be seen as an arbitrary additive constant.
	Moreover $\mathcal{C}_\ell[\tau]$ satisfies Equation \eqref{Eq: tauell functional derivative}, since for all $\psi\in \mathcal{E}(M)$, 
	\begin{align*}
		(\delta_\psi\mathcal{C}_\ell[\tau])(\varphi)
		&=\frac{\mathrm{d}}{\mathrm{d}\lambda}\mathcal{C}_\ell[\tau](\varphi+\lambda\psi)\bigg|_{\lambda=0}
		\\&=\frac{\mathrm{d}}{\mathrm{d}\lambda}\int_0^1\mathcal{C}_{\ell-1}[\delta_{\varphi+\lambda\psi}\tau](s\varphi+s\lambda\psi)\mathrm{d}s\bigg|_{\lambda=0}
		\\&=\int_0^1\mathcal{C}_{\ell-1}[\delta_\psi\tau](s\varphi)\mathrm{d}s
		+\int_0^1s\delta_\psi\mathcal{C}_{\ell-1}[\delta_\varphi\tau](s\varphi)\mathrm{d}s\,.
	\end{align*}
	We observe that, on account of Equation \eqref{Eq: tauell functional derivative},
	\begin{align*}
		\delta_\psi\mathcal{C}_{\ell-1}[\delta_\varphi\tau](s\varphi)
		=\delta_\varphi\mathcal{C}_{\ell-1}[\delta_\psi\tau](s\varphi)
		=\frac{\mathrm{d}}{\mathrm{d}s}\mathcal{C}_{\ell-1}[\delta_\psi\tau](s\varphi)\,.
	\end{align*}
	Integration by parts yields
	\begin{align*}
		(\delta_\psi\mathcal{C}_\ell[\tau])(\varphi)
		&=\int_0^1\mathcal{C}_{\ell-1}[\delta_\psi\tau](s\varphi)\mathrm{d}s
		+\int_0^1s\frac{\mathrm{d}}{\mathrm{d}s}\mathcal{C}_{\ell-1}[\delta_\psi\tau](s\varphi)\mathrm{d}s
		\\&=s\mathcal{C}_{\ell-1}[\delta_\psi\tau](s\varphi)\bigg|^{s=1}_{s=0}
		=\mathcal{C}_{\ell-1}[\delta_\psi\tau](\varphi)\,.
	\end{align*}
	We have defined $\mathcal{C}_\ell[\tau]$ for all $\tau\in\mathcal{M}_k$ and $\ell\geq 1$ compatibly with the constraints in Equations \eqref{Eq: tauell consistency}-\eqref{Eq: tauell on Ptau}.
	We are left with working with $\mathcal{C}_0[\tau]$ showing in addition that Equation \eqref{Eq: GammacdotsQ uniqueness} holds true.
	For $\tau\in\mathcal{M}_k$ we set
	\begin{align*}
		\mathcal{C}_0[\tau]
		:=\widetilde{\Gamma}_{\cdot_Q}(\tau)
		-\Gamma_{\cdot_Q}(\tau)
		-\Gamma_{\cdot_Q}(\mathcal{C}_{k-2}[\tau])\,.
	\end{align*}
	The assignment $\tau\to\mathcal{C}_0[\tau]$ is linear and, moreover, $\mathcal{C}_0[\tau]\in\mathcal{M}_0$.
	This follows by direct inspection of $\delta_\psi\mathcal{C}_0[\tau]$ since
	\begin{align*}
		\delta_\psi\mathcal{C}_0[\tau]
		=\widetilde{\Gamma}_{\cdot_Q}(\delta_\psi\tau)
		-\Gamma_{\cdot_Q}(\delta_\psi\tau)
		-\Gamma_{\cdot_Q}(\mathcal{C}_{k-3}[\delta_\psi\tau])
		=0\,,
	\end{align*}
	where we used Equations \eqref{Eq: GammabulletQ functional derivative requirement} and \eqref{Eq: tauell functional derivative} together with the inductive hypothesis for $\delta_\psi\tau\in\mathcal{M}_{k-1}$.
	Equation \eqref{Eq: GammacdotsQ uniqueness} is trivially satisfied since $\Gamma_{\cdot_Q}(\mathcal{C}_0[\tau])=\mathcal{C}_0[\tau]$.
	Hence the inductive proof is complete.
\end{proof}

%\noindent Theorem \ref{Thm: GammacdotQ uniqueness} and Equation \eqref{Eq: GammacdotsQ uniqueness} in particular imply the following result.
%The proof is a direct application of the construction above and of Equation \eqref{Eq: cdot GammaQ product}, hence we omit it.

\begin{corollary}\label{Cor: AcdotQ does not depends on Gamma cdotQ}
	Let $\widetilde{\Gamma}_{\cdot_Q}, \Gamma_{\cdot_Q}:\mathcal{A}\to\mathcal{D}^\prime(M;\operatorname{Pol})$ be two linear maps compatible with the constraints listed in Theorem \ref{Thm: Gamma cdotQ existence}.
	Then the algebras $\mathcal{A}_{\cdot_Q}=\Gamma_{\cdot_Q}(\mathcal{A})$ and  $\widetilde{\mathcal{A}}_{\cdot_Q}=\widetilde{\Gamma}_{\cdot_Q}(\mathcal{A})$, defined as per Corollary \ref{Cor: construction of AcdotQ}, coincide.
\end{corollary}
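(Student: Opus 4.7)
The plan is to deduce the statement directly from Theorem \ref{Thm: GammacdotQ uniqueness}, which provides for every $\tau\in\mathcal{M}_k$ the explicit identity $\widetilde{\Gamma}_{\cdot_Q}(\tau)=\Gamma_{\cdot_Q}(\tau+\mathcal{C}_{k-2}[\tau])$, with $\mathcal{C}_{k-2}\colon\mathcal{A}\to\mathcal{M}_{k-2}$ a suitable linear map. The decisive observation is that $\mathcal{C}_{k-2}[\tau]$ lies in $\mathcal{M}_{k-2}\subseteq\mathcal{A}$, so that the argument of $\Gamma_{\cdot_Q}$ on the right-hand side is a bona-fide element of $\mathcal{A}$. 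Consequently $\widetilde{\Gamma}_{\cdot_Q}(\tau)\in\Gamma_{\cdot_Q}(\mathcal{A})=\mathcal{A}_{\cdot_Q}$, and extending by linearity I obtain the inclusion $\widetilde{\mathcal{A}}_{\cdot_Q}\subseteq\mathcal{A}_{\cdot_Q}$.

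The reverse inclusion follows by symmetry: swapping the roles of $\Gamma_{\cdot_Q}$ and $\widetilde{\Gamma}_{\cdot_Q}$ and re-applying Theorem \ref{Thm: GammacdotQ uniqueness} yields a new family $\{\widetilde{\mathcal{C}}_k\}_{k\in\mathbb{N}_0}$ and the identity $\Gamma_{\cdot_Q}(\tau)=\widetilde{\Gamma}_{\cdot_Q}(\tau+\widetilde{\mathcal{C}}_{k-2}[\tau])$ for every $\tau\in\mathcal{M}_k$; the same reasoning as above gives $\mathcal{A}_{\cdot_Q}\subseteq\widetilde{\mathcal{A}}_{\cdot_Q}$, completing the desired equality as subsets of $\mathcal{D}^\prime_{\mathrm{C}}(M;\operatorname{Pol})$.

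The real conceptual content of the argument is the lower-order character of the correction, codified by Equation \eqref{Eq: tauell consistency} as $\mathcal{C}_\ell[\mathcal{M}_k]=0$ for $k\leq\ell+1$: this makes the map $\sigma\colon\mathcal{A}\to\mathcal{A}$ defined by $\sigma(\tau):=\tau+\mathcal{C}_{k-2}[\tau]$ for $\tau\in\mathcal{M}_k$ lower-triangular with respect to the filtration $\{\mathcal{M}_k\}_{k\in\mathbb{N}_0}$, with the identity on every associated graded piece, and hence a vector-space automorphism of $\mathcal{A}$ intertwining the two parametrisations via $\widetilde{\Gamma}_{\cdot_Q}=\Gamma_{\cdot_Q}\circ\sigma$. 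No further renormalisation freedom needs to be re-examined and no step of the proof requires a fresh appeal to Theorem \ref{Thm: Gamma cdotQ existence}; the statement is an essentially algebraic corollary of the explicit shape of the freedom classified in Theorem \ref{Thm: GammacdotQ uniqueness}, so I do not anticipate any genuine obstacle beyond bookkeeping of the filtration.
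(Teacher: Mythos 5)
Your argument is correct and is precisely the (implicit) reasoning behind the corollary, which the paper states without proof as an immediate consequence of Theorem \ref{Thm: GammacdotQ uniqueness}: Equation \eqref{Eq: GammacdotsQ uniqueness} gives $\widetilde{\Gamma}_{\cdot_Q}(\tau)=\Gamma_{\cdot_Q}(\tau+\mathcal{C}_{k-2}[\tau])\in\Gamma_{\cdot_Q}(\mathcal{A})$ for $\tau\in\mathcal{M}_k$, hence $\widetilde{\mathcal{A}}_{\cdot_Q}\subseteq\mathcal{A}_{\cdot_Q}$, and exchanging the r\^oles of the two maps yields the reverse inclusion. Your additional observation that the correction is lower order in the filtration, so that $\widetilde{\Gamma}_{\cdot_Q}=\Gamma_{\cdot_Q}\circ\sigma$ for a filtration-preserving bijection $\sigma$, is consistent with the paper but not needed for the set-level equality.
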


Roughly speaking Theorem \ref{Thm: GammacdotQ uniqueness} ensures that, for $\tau\in\mathcal{M}_k$, $\widetilde{\Gamma}_{\cdot_Q}(\tau)$ and $\Gamma_{\cdot_Q}(\tau)$ differs by $\Gamma_{\cdot_Q}(\mathcal{C}_{k-2}[\tau])$, being $\mathcal{C}_{k-2}[\tau]\in\mathcal{M}_{k-2}$.
Though this space is quite large, the ambiguity in the actual form of $\mathcal{C}_{k-2}[\tau]$ can be further restricted via Equation \eqref{Eq: tauell functional derivative}.
This can be seen explicitly by considering specific algebra elements, namely $\Phi^k$ as per Example \ref{Ex: examples of functional-valued polynomial distributions}.
%Another notable consequence of Theorem \ref{Thm: GammacdotQ uniqueness} can be obtained consider specific algebra elements, namely $\Phi^k$ as per Example \ref{Ex: examples of functional-valued polynomial distributions}.

\begin{proposition}\label{Prop: Phik uniqueness}
	Let $\widetilde{\Gamma}_{\cdot_Q}, \Gamma_{\cdot_Q}:\mathcal{A}\to\mathcal{D}^\prime(M;\operatorname{Pol})$ be two linear maps compatible with the constraints listed in Theorem \ref{Thm: Gamma cdotQ existence}.
	Then there exists $\{c_\ell\}_{\ell\in\mathbb{N}_0}\subset \mathcal{E}(M)$ a family of smooth functions, such that for all $k\in\mathbb{N}$
	\begin{align}\label{Eq: Phik uniqueness}
		\widetilde{\Gamma}_{\cdot_Q}(\Phi^k)
		=\Gamma_{\cdot_Q}\bigg(\Phi^k
		+\sum_{\ell=0}^{k-2}{k \choose \ell}c_{k-\ell}\Phi^\ell\bigg)\,.
	\end{align}
\end{proposition}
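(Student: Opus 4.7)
The plan is to proceed by induction on $k$, constructing at each step a new smooth function $c_k\in\mathcal{E}(M)$. By Theorem \ref{Thm: GammacdotQ uniqueness} applied to $\tau=\Phi^k\in\mathcal{M}_k$ and by the injectivity of $\Gamma_{\cdot_Q}$ established in the proof of Corollary \ref{Cor: construction of AcdotQ}, the functional
\begin{align*}
\eta_k:=\Gamma_{\cdot_Q}^{-1}\bigl(\widetilde{\Gamma}_{\cdot_Q}(\Phi^k)\bigr)-\Phi^k
\end{align*}
is a well-defined element of $\mathcal{M}_{k-2}\subseteq\mathcal{A}$, coinciding with $\mathcal{C}_{k-2}[\Phi^k]$. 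Equation \eqref{Eq: Phik uniqueness} is then equivalent to showing that $\eta_k=\sum_{\ell=0}^{k-2}\binom{k}{\ell}c_{k-\ell}\Phi^\ell$ for a single family $\{c_j\}_{j\geq 2}\subset\mathcal{E}(M)$ independent of $k$.

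The main tool is the recursion $\delta_\psi\eta_k=k\psi\,\eta_{k-1}$, valid for all $\psi\in\mathcal{E}(M)$. To derive it, I apply $\widetilde{\Gamma}_{\cdot_Q}$ to the identity $\delta_\psi\Phi^k=k\psi\Phi^{k-1}$: using both equalities in Equation \eqref{Eq: Pi-functional derivative} for $\widetilde{\Gamma}_{\cdot_Q}$ and $\Gamma_{\cdot_Q}$, the left-hand side equals $\Gamma_{\cdot_Q}(k\psi\Phi^{k-1}+\delta_\psi\eta_k)$, whereas the right-hand side equals $\Gamma_{\cdot_Q}(k\psi\Phi^{k-1}+k\psi\eta_{k-1})$. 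Injectivity of $\Gamma_{\cdot_Q}$ then yields the recursion.

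The base cases are $k=1$, for which $\eta_1=0$ by Equation \eqref{Eq: Gamma on M1}, and $k=2$, for which Remark \ref{Rmk: uniqueness for Phi2} exhibits $c_2\in\mathcal{E}(M)$ such that $\eta_2=c_2\boldsymbol{1}$, matching the formula. For the inductive step, assume that $c_2,\ldots,c_{k-1}\in\mathcal{E}(M)$ have been produced so that $\eta_j=\sum_{\ell=0}^{j-2}\binom{j}{\ell}c_{j-\ell}\Phi^\ell$ for all $j<k$, and define
\begin{align*}
\widetilde{\eta}_k:=\eta_k-\sum_{\ell=1}^{k-2}\binom{k}{\ell}c_{k-\ell}\Phi^\ell\in\mathcal{A}.
\end{align*}
Exploiting $\delta_\psi\Phi^\ell=\ell\psi\Phi^{\ell-1}$, the binomial identity $\binom{k}{\ell}\ell=k\binom{k-1}{\ell-1}$ and a shift of the summation index, one obtains
\begin{align*}
\delta_\psi\sum_{\ell=1}^{k-2}\binom{k}{\ell}c_{k-\ell}\Phi^\ell
=k\psi\sum_{\ell=0}^{k-3}\binom{k-1}{\ell}c_{k-1-\ell}\Phi^\ell
=k\psi\,\eta_{k-1},
\end{align*}
where the last equality uses the inductive hypothesis. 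Combined with the recursion above, this yields $\delta_\psi\widetilde{\eta}_k=0$ for every $\psi\in\mathcal{E}(M)$.

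Since $\widetilde{\eta}_k\in\mathcal{A}\subseteq\mathcal{D}_{\mathrm{C}}'(M;\operatorname{Pol})$ has vanishing first functional derivative, it is $\varphi$-independent; by Remark \ref{Rmk: Dc smoothness and functional derivative} it is then generated by a smooth function, so that there exists a unique $c_k\in\mathcal{E}(M)$ with $\widetilde{\eta}_k=c_k\boldsymbol{1}=\binom{k}{0}c_k\Phi^0$. This closes the induction and produces the desired family $\{c_\ell\}_{\ell\in\mathbb{N}_0}$ (with the convention $c_0=c_1=0$). The main technical checkpoint is the recursion $\delta_\psi\eta_k=k\psi\eta_{k-1}$, which precisely guarantees that the smooth functions produced at lower orders are compatibly reused at all higher orders; the remainder of the argument is combinatorial bookkeeping.
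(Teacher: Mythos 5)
Your proof is correct and follows essentially the same route as the paper: induction over $k$ with base cases given by Equation \eqref{Eq: Gamma on M1} and Remark \ref{Rmk: uniqueness for Phi2}, combined with Theorem \ref{Thm: GammacdotQ uniqueness} (via the injectivity of $\Gamma_{\cdot_Q}$ and the intertwining properties in Equation \eqref{Eq: Pi-functional derivative}); the paper merely leaves the combinatorial bookkeeping implicit. Your explicit recursion $\delta_\psi\eta_k=k\psi\,\eta_{k-1}$ is a clean way of spelling out the inductive step the paper only sketches.
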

\begin{proof}
 For $k=1$ Equation \eqref{Eq: Phik uniqueness} is a consequence of Equation \eqref{Eq: Gamma on M1}, while for $k=2$ it reduces to Remark \ref{Rmk: uniqueness for Phi2}. The proof can be concluded proceeding by induction over $k$ and using Equation \eqref{Eq: GammacdotsQ uniqueness}.
\end{proof}

\begin{remark}\label{Rem: Generalization of smooth arbitrariness}
	It is natural to wonder whether one can generalize Equation \eqref{Eq: Phik uniqueness} by considering for $\tau\in\mathcal{M}_k$
	\begin{align}\label{Eq: false GammacdotsQ uniqueness}
		\widetilde{\Gamma}_{\cdot_Q}(\tau)(f;\varphi)
		=\Gamma_{\cdot_Q}(\tau)(f;\varphi)
		+\sum_{\ell=0}^{k-2}\frac{1}{(k-\ell)!}\Gamma_{\cdot_Q}(\tau)^{(k-\ell)}(f\otimes \tilde{c}_{k-\ell};\varphi)\,,
	\end{align}
	where $\tilde{c}_{k-\ell}\in \mathcal{E}(M^{k-\ell})$.
	If $\tau=\Phi^k$ the latter formula reduces to equation \eqref{Eq: Phik uniqueness} by setting
	\begin{align*}
		\tilde{c}_{k-\ell}(\widehat{x}_{k-\ell})
		:=\frac{1}{k-\ell}\sum_{j=1}^{k-\ell} c_{k-\ell}(x_j)\,,
	\end{align*}
	and by observing that $(\Phi^k)^{(k-\ell)}=\frac{k!}{\ell!}\Phi^{\ell}\delta_{\mathrm{Diag}_{\ell+1}}$. Here $\widehat{x}_{k-\ell}=(x_1,\dots,x_{k-\ell})$.
	
	Yet, in general Equation \eqref{Eq: false GammacdotsQ uniqueness} does not hold true, as one can infer considering for example $\tau=\Phi P\circledast\Phi$. On account of Equations \eqref{Eq: GammabulletQ with one factor in M1} and \eqref{Eq: GammabulletQ functional derivative requirement}, it holds that there must exist $c\in\mathcal{E}(M)$ such that
	\begin{align*}
		\widetilde{\Gamma}_{\cdot_Q}(\tau)(f;\varphi)
		-\Gamma_{\cdot_Q}(\tau)(f;\varphi)
		=\int\limits_M c\,f\mu.
	\end{align*}
	At the same time Equation \eqref{Eq: false GammacdotsQ uniqueness} entails that there must exist $\tilde{c}_2\in \mathcal{E}(M^2)$ such that
	\begin{align*}
		\int_M c\,f\,\mu
		=\Gamma_{\cdot_Q}(\tau)^{(2)}(f\otimes \tilde{c}_2;\varphi)
		=\int\limits_{M\times M} f(x)P(x,y)[\tilde{c}_2(x,y)+\tilde{c}_2(y,x)]\mathrm{d}\mu(x)\mathrm{d}\mu(y)\,.
	\end{align*}
	In general there is no such $\tilde{c}_2$.
\end{remark}

To conclude this section we state a counterpart of Theorem \ref{Thm: GammacdotQ uniqueness} aimed at characterizing the non uniqueness in constructing $\Gamma_{\bullet_Q}$. 

\begin{theorem}\label{Thm: GammabulletQ uniqueness}
	Let $\widetilde{\Gamma}_{\bullet_Q}, \Gamma_{\bullet_Q}:\mathcal{A}_{\cdot_Q}\to\mathcal{T}^\prime_{\mathrm{C}}(M;\operatorname{Pol})$ be two linear maps compatible with the constraints listed in Theorem \ref{Thm: GammabulletQ existence}.
	Then there exists a family $\{\mathcal{C}_{\underline{k}}\}_{\underline{k}\in(\mathbb{N}_0)^{\mathbb{N}_0}}$ of linear maps $\mathcal{C}_{\underline{k}}\colon\mathcal{T}(\mathcal{A})\to\mathcal{T}(\mathcal{A})$ such that:
	\begin{enumerate}
		\item
		for all $\ell\in\mathbb{N}\cup\{0\}$, 
		$\mathcal{C}_{\underline{k}}[\mathcal{A}^{\otimes\ell}]\subseteq\mathcal{M}_{k_1}\otimes\ldots\otimes\mathcal{M}_{k_\ell}$ while
		$\mathcal{C}_{\underline{j}}[\mathcal{M}_{k_1}\otimes\ldots\otimes\mathcal{M}_{k_\ell}]=0$ whenever $k_i\leq j_i-1$ for some $i\in\{1,\ldots,\ell\}$
		\item
		for all $\ell\in\mathbb{N}\cup\{0\}$ and $\tau_1,\ldots,\tau_\ell\in\mathcal{A}$, it holds
		\begin{align}
			\label{Eq: tauj on Ptau}
			\mathcal{C}_{\underline{k}}(\tau_1\otimes\ldots\otimes P\circledast\tau_k\otimes\ldots\otimes\tau_\ell)
			&=(\delta_{\mathrm{Diag}_2}^{\otimes (k-1)}\otimes P\otimes\delta_{\mathrm{Diag}_2}^{\otimes \ell-k})
			\circledast\mathcal{C}_{\underline{k}}(\tau_1\otimes\ldots\otimes\tau_\ell)\\
			\label{Eq: tauj functional derivative}
			\delta_\psi\mathcal{C}_{\underline{k}}(\tau_1\otimes\ldots\otimes\tau_\ell)
			&=\sum_{a=1}^\ell\mathcal{C}_{\underline{k}(a)}(\tau_1\otimes\ldots\otimes\delta_\psi\tau_a\otimes\ldots\otimes\tau_\ell)\,,
		\end{align}
		where $\underline{k}(a)_i=k_i$ if $i\neq a$ and $\underline{k}(a)_a=k_a-1$.
		\item
		for all $\tau_{k_1},\ldots,\tau_{k_\ell}\in\mathcal{A}$, with $\tau_{k_j}\in\mathcal{M}_{k_j}$ for all $j\in\{1,\ldots,\ell\}$, and $f_1,\ldots,f_\ell\in\mathcal{D}(M)$ it holds
		\begin{multline}\label{Eq: GammabulletQ uniqueness}
			\widetilde{\Gamma}_{\bullet_Q}(\widetilde{\Gamma}_{\cdot_Q}^{\otimes\ell}(\tau_{k_1}\otimes\ldots\otimes\tau_{k_\ell}))(f_1\otimes\ldots\otimes f_\ell)
			=\Gamma_{\bullet_Q}\bigg[
			\Gamma_{\cdot_Q}^{\otimes\ell}(\tau_{k_1}\otimes\ldots\otimes\tau_{k_\ell})
			\bigg](f_1\otimes\ldots\otimes f_\ell)
			\\+\sum_{\wp\in\mathrm{P}(1,\ldots,\ell)}
			\Gamma_{\bullet_Q}\bigg[\Gamma_{\cdot_Q}^{\otimes|\wp|}\mathcal{C}_{\underline{k}_\wp}
			\bigg(\bigotimes_{I\in\wp}\prod_{i\in I}\tau_{k_i}\bigg)
			\bigg]\bigg(\bigotimes_{I\in\wp}\prod_{i\in I}f_i\bigg)\,.
		\end{multline}
		Here $\mathrm{P}(1,\ldots,\ell)$ denotes the set of partitions of $\{1,\ldots,\ell\}$ into non-empty disjoint sets while $\underline{k}_\wp=(k_I)_{I\in\wp}$ where $k_I:=\sum_{i\in I}k_i$ whereas $\underline{j}\leq\underline{k}_\wp$ means that $j_I\leq k_I$ for all $I\in\wp$.
%		Finally $\prod^{\cdot_Q}$ denotes the product in $\mathcal{A}_{\cdot_Q}$.
	\end{enumerate}
\end{theorem}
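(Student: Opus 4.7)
The plan is to generalize the strategy of Theorem \ref{Thm: GammacdotQ uniqueness} by a double induction on the tensor length $\ell$ and on the total polynomial degree $k=k_1+\ldots+k_\ell$. For the base case $\ell=1$, Equation \eqref{Eq: GammabulletQ on AcdotQ} gives $\widetilde{\Gamma}_{\bullet_Q}(\widetilde{\Gamma}_{\cdot_Q}(\tau_k))=\widetilde{\Gamma}_{\cdot_Q}(\tau_k)$ and $\Gamma_{\bullet_Q}(\Gamma_{\cdot_Q}(\tau_k))=\Gamma_{\cdot_Q}(\tau_k)$; since $\mathrm{P}(\{1\})$ consists of the single partition $\{\{1\}\}$, Equation \eqref{Eq: GammabulletQ uniqueness} reduces to $\widetilde{\Gamma}_{\cdot_Q}(\tau_k)=\Gamma_{\cdot_Q}(\tau_k)+\Gamma_{\cdot_Q}(\mathcal{C}_{(k)}(\tau_k))$, and setting $\mathcal{C}_{(k)}(\tau_k):=\mathcal{C}_{k-2}[\tau_k]$ (with $\mathcal{C}_{k-2}$ as in Theorem \ref{Thm: GammacdotQ uniqueness}) exactly matches Equation \eqref{Eq: GammacdotsQ uniqueness}. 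The vanishing, $P\circledast$-equivariance and functional-derivative compatibility at $\ell=1$ follow from the analogous properties of $\mathcal{C}_{k-2}$.

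For the inductive step $\ell\geq 2$, I would assume the claim for all tensor lengths strictly less than $\ell$ and, for fixed $\ell$, for all total polynomial degrees strictly less than $k$. The key observation is Equation \eqref{Eq: smooth factorization}: whenever the supports of $f_1,\ldots,f_\ell$ split through a non-trivial subset $I\subsetneq\{1,\ldots,\ell\}$, both sides of Equation \eqref{Eq: GammabulletQ uniqueness} factorize as a $\bullet_Q$-product of two objects of strictly smaller tensor length, so by the induction hypothesis the discrepancy between $\widetilde{\Gamma}_{\bullet_Q}\circ\widetilde{\Gamma}_{\cdot_Q}^{\otimes\ell}$ and $\Gamma_{\bullet_Q}\circ\Gamma_{\cdot_Q}^{\otimes\ell}$, once the contributions of all coarser partitions are subtracted, is a functional-valued distribution supported on the total diagonal $\mathrm{Diag}_\ell\subset M^\ell$.

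Concretely, I would define the new ambiguity $\mathcal{C}_{\underline{k}}(\tau_{k_1}\otimes\ldots\otimes\tau_{k_\ell})$ mimicking the recursion in the proof of Theorem \ref{Thm: GammacdotQ uniqueness}: the $\varphi$-dependence is reconstructed by integrating Equation \eqref{Eq: tauj functional derivative} over the interval $[0,1]$, as in Remark \ref{Rmk: uniqueness for Phi2}; compatibility with $P\circledast$ is imposed via Equation \eqref{Eq: tauj on Ptau}; and the remaining freedom at $\varphi=0$ is fixed by identifying $\Gamma_{\cdot_Q}^{\otimes\ell}(\mathcal{C}_{\underline{k}}(\tau_{k_1}\otimes\ldots\otimes\tau_{k_\ell}))$ with the genuinely diagonal renormalization freedom in the extension of the singular distribution $T_N$ from $M^{\ell+2N}\setminus\mathrm{Diag}_{\ell+2N}$ to $M^{\ell+2N}$ in the proof of Theorem \ref{Thm: GammabulletQ existence}. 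The wavefront-set constraints of Definition \ref{Def: Dc functionals} guarantee that the resulting ambiguity lies in $\mathcal{M}_{k_1}\otimes\ldots\otimes\mathcal{M}_{k_\ell}$, while the vanishing of $\mathcal{C}_{\underline{j}}$ on $\mathcal{M}_{k_1}\otimes\ldots\otimes\mathcal{M}_{k_\ell}$ whenever some $k_i\leq j_i-1$ follows from the fact that, in that case, the functional derivatives in the $i$-th slot required to produce an element of $\mathcal{M}_{j_i}$ already vanish.

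The hard part will be the combinatorial bookkeeping of Equation \eqref{Eq: GammabulletQ uniqueness}: when $\operatorname{spt}(f_1),\ldots,\operatorname{spt}(f_\ell)$ exhibit a non-trivial pattern of pairwise intersections, every partition $\wp\in\mathrm{P}(1,\ldots,\ell)$ coarser than the discrete one contributes via the inductive hypothesis applied to the lower-rank tuple $\underline{k}_\wp$, and one must check that summing over all such $\wp$ produces exactly the right-hand side of Equation \eqref{Eq: GammabulletQ uniqueness} without any double counting. This amounts essentially to a M\"obius-type inversion on the partition lattice of $\{1,\ldots,\ell\}$: the genuinely new contribution is attached to the finest partition $\{\{1,\ldots,\ell\}\}$, while all the other ones are forced by combining Equation \eqref{Eq: smooth factorization} with the inductive hypothesis. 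Once this combinatorial identity is established, properties \eqref{Eq: tauj on Ptau} and \eqref{Eq: tauj functional derivative} can be verified along the same lines employed for $\Gamma_{\bullet_Q}$ itself in the proof of Theorem \ref{Thm: GammabulletQ existence}.
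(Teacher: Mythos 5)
Your proposal follows essentially the same route as the paper: the paper's own proof simply notes that for $\ell=1$ Equation \eqref{Eq: GammabulletQ uniqueness} reduces to Equation \eqref{Eq: GammacdotsQ uniqueness} and then proceeds by the same induction as in Theorem \ref{Thm: GammacdotQ uniqueness} ``taking into account more indices,'' which is precisely your double induction over $\ell$ and $k$, with the $\varphi$-dependence of $\mathcal{C}_{\underline{k}}$ reconstructed by integrating the functional-derivative relation and the localization of the new freedom to the total diagonal obtained from Equation \eqref{Eq: smooth factorization}. Your write-up is in fact more explicit than the paper's (which omits the partition-lattice bookkeeping entirely), but it is the same argument.
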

\begin{proof}
	The proof follows the same steps as those in the proof of Theorem \ref{Thm: GammabulletQ uniqueness}.
	Notice in particular that for $\ell=1$ Equation \eqref{Eq: GammabulletQ uniqueness} reduces to Equation \eqref{Eq: GammacdotsQ uniqueness}. At this stage we can proceed by induction exactly as in Theorem \ref{Thm: GammacdotQ uniqueness} just taking into account more indices. For this reason we omit giving all details.
	% implying that we can define $\tau_{\underline{j}}(\mathcal{A}):=\tau_{j_1}(\mathcal{A})$, being $\{\tau_j\}_{j\in\mathbb{N}\cup\{0\}}$ the family of maps introduced in Theorem \ref{Thm: GammacdotQ uniqueness}.
\end{proof}

\begin{example}
	For concreteness we now specialize equation \eqref{Eq: GammabulletQ uniqueness} for the case of two elements $\tau_{k_1}=\tau_{k_2}=\Phi^2$.
	In this latter case the admissible partitions are $\wp=\{1,2\}$ and $\wp=\{\{1\},\{2\}\}$ which lead to
	\begin{align*}
		\widetilde{\Gamma}_{\bullet_Q}(\widetilde{\Gamma}_{\cdot_Q}(\Phi^2)\otimes\widetilde{\Gamma}_{\cdot_Q}(\Phi^2))(f_1\otimes f_2)
		&=\Gamma_{\bullet_Q}(\Gamma_{\cdot_Q}(\Phi^2)\otimes\Gamma_{\cdot_Q}(\Phi^2))(f_1\otimes f_2)
		\\&+\Gamma_{\bullet_Q}(\Gamma_{\cdot_Q}^{\otimes 2}(\mathcal{C}_{2,2}(\Phi^2\otimes\Phi^2)))(f_1\otimes f_2)
		+\Gamma_{\bullet_Q}(\Gamma_{\cdot_Q}(\mathcal{C}_{4}(\Phi^4)))(f_1f_2)
		\\&=\Gamma_{\bullet_Q}(\Gamma_{\cdot_Q}(\Phi^2)\otimes\Gamma_{\cdot_Q}(\Phi^2))(f_1\otimes f_2)
		\\&+\Gamma_{\bullet_Q}(\Gamma_{\cdot_Q}^{\otimes 2}(\mathcal{C}_{2,2}(\Phi^2\otimes\Phi^2)))(f_1\otimes f_2)
		+\Gamma_{\cdot_Q}(\mathcal{C}_4(\Phi^4))(f_1f_2)\,,
	\end{align*}
	where we used Equation \eqref{Eq: GammabulletQ with one factor in M1}.
	In the previous identity $\mathcal{C}_4(\Phi^4)\in\mathcal{M}_4$ while $\mathcal{C}_{2,2}(\Phi^2\otimes\Phi^2)\in\mathcal{M}_2\otimes\mathcal{M}_2$.
	
	We shall now compute $\mathcal{C}_4(\Phi^4)$ and $\mathcal{C}_{2,2}(\Phi^2\otimes\Phi^2)$ explicitly.
	On account of Proposition \eqref{Prop: Phik uniqueness} we know that $\widetilde{\Gamma}_{\cdot_Q}(\Phi^2)=\Gamma_{\cdot_Q}(\Phi^2)+c_0\boldsymbol{1}$, where $c_0\in \mathcal{E}(M)$.
	This entails already that
	\begin{align*}
		\widetilde{\Gamma}_{\bullet_Q}(\widetilde{\Gamma}_{\cdot_Q}(\Phi^2)\otimes\widetilde{\Gamma}_{\cdot_Q}(\Phi^2))
		&=\widetilde{\Gamma}_{\bullet_Q}((\Gamma_{\cdot_Q}(\Phi^2)+c_0\boldsymbol{1})\otimes(\Gamma_{\cdot_Q}(\Phi^2)+c_0\boldsymbol{1}))
		\\&=\widetilde{\Gamma}_{\bullet_Q}(\Gamma_{\cdot_Q}(\Phi^2)\otimes\Gamma_{\cdot_Q}(\Phi^2))
		+2\Gamma_{\cdot_Q}(\Phi^2)\vee c_0\boldsymbol{1}
		+c_0\boldsymbol{1}\vee c_0\boldsymbol{1}\,,
	\end{align*}
	where $\vee$ denotes the symmetrized tensor product.
	It remains to be evaluated
	\begin{align*}
		R:=\widetilde{\Gamma}_{\bullet_Q}(\Gamma_{\cdot_Q}(\Phi^2)\otimes\Gamma_{\cdot_Q}(\Phi^2))
		-\Gamma_{\bullet_Q}(\Gamma_{\cdot_Q}(\Phi^2)\otimes\Gamma_{\cdot_Q}(\Phi^2))\,.
	\end{align*}
	A direct inspection using Equation \eqref{Eq: GammabulletQ functional derivative requirement} shows that $\delta_\psi R=0$, moreover, Equation \eqref{Eq: smooth factorization} entails that $R(f_1\otimes f_2)=0$ whenever $f_1,f_2\in\mathcal{D}(M)$ are such that $\operatorname{supp}(f_1)\cap\operatorname{supp}(f_2)=\emptyset$.
	It descends that there exists $c_1\in \mathcal{E}(M)$ such that
	\begin{align*}
		\widetilde{\Gamma}_{\bullet_Q}(\widetilde{\Gamma}_{\cdot_Q}(\Phi^2)\otimes\widetilde{\Gamma}_{\cdot_Q}(\Phi^2))(f_1\otimes f_2)
		&=\Gamma_{\bullet_Q}(\Gamma_{\cdot_Q}(\Phi^2)\otimes\Gamma_{\cdot_Q}(\Phi^2))(f_1\otimes f_2)
		\\&+2\Gamma_{\cdot_Q}(\Phi^2)\vee c_0\boldsymbol{1}(f_1\otimes f_2)
		+c_0\boldsymbol{1}\vee c_0\boldsymbol{1}(f_1\otimes f_2)
		+c_1\boldsymbol{1}(f_1f_2)\,.
	\end{align*}
	It follows that
	\begin{align*}
		\mathcal{C}_{2,2}(\Phi^2\otimes\Phi^2)
		=2\Gamma_{\cdot_Q}(\Phi^2)\vee c_0\boldsymbol{1}
		+c_0\boldsymbol{1}\vee c_0\boldsymbol{1}\,,\qquad
		\mathcal{C}_4(\Phi^4)
		=c_1\boldsymbol{1}\,.
	\end{align*}
\end{example}

\noindent Similarly to the case of Theorem \ref{Thm: GammacdotQ uniqueness}, Theorem \ref{Thm: GammabulletQ uniqueness} has the following important corollary.
\begin{corollary}\label{Cor: AbulletQ does not depends on Gamma cdotQ and Gamma bulletQ}
	Let $\widetilde{\Gamma}_{\cdot_Q}, \Gamma_{\cdot_Q}:\mathcal{A}\to\mathcal{D}^\prime(M;\operatorname{Pol})$ be two linear maps compatible with the constraints listed in Theorem \ref{Thm: Gamma cdotQ existence}.
	Moreover, let $\widetilde{\Gamma}_{\bullet_Q}\colon\mathcal{T}(\widetilde{\mathcal{A}}_{\cdot_Q})\to\mathcal{T}_{\mathrm{C}}(M;\operatorname{Pol})$, $\Gamma_{\bullet_Q}\colon\mathcal{T}(\mathcal{A}_{\cdot_Q})\to\mathcal{T}_{\mathrm{C}}(M;\operatorname{Pol})$ be linear maps as per Theorem \ref{Thm: GammabulletQ existence} -- here $\mathcal{A}_{\cdot_Q}=\Gamma_{\cdot_Q}(\mathcal{A})=\widetilde{\mathcal{A}}_{\cdot_Q}=\widetilde{\Gamma}_{\cdot_Q}(\mathcal{A})$ because of Corollary \ref{Cor: AcdotQ does not depends on Gamma cdotQ}. Then the algebras $\mathcal{A}_{\bullet_Q}=\Gamma_{\bullet_Q}(\mathcal{A}_{\cdot_Q})$ and  $\widetilde{\mathcal{A}}_{\bullet_Q}=\widetilde{\Gamma}_{\bullet_Q}(\widetilde{\mathcal{A}}_{\cdot_Q})$, defined as per Equation \eqref{Eq: bulletQ algebra}, coincide.
\end{corollary}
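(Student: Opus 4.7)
The strategy is to reduce the equality of the algebras $\mathcal{A}_{\bullet_Q}$ and $\widetilde{\mathcal{A}}_{\bullet_Q}$ to a direct application of Theorem \ref{Thm: GammabulletQ uniqueness}. By Corollary \ref{Cor: AcdotQ does not depends on Gamma cdotQ} the underlying vector spaces satisfy $\mathcal{A}_{\cdot_Q}=\widetilde{\mathcal{A}}_{\cdot_Q}$, and hence the universal tensor modules $\mathcal{T}(\mathcal{A}_{\cdot_Q})$ and $\mathcal{T}(\widetilde{\mathcal{A}}_{\cdot_Q})$ coincide. Thus $\Gamma_{\bullet_Q}$ and $\widetilde{\Gamma}_{\bullet_Q}$ are linear maps defined on the very same domain, and it makes sense to compare their images.

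By linearity of $\Gamma_{\bullet_Q}$ and $\widetilde{\Gamma}_{\bullet_Q}$ and in view of the decomposition of $\mathcal{T}(\mathcal{A}_{\cdot_Q})$ highlighted in Remark \ref{Rmk: decomposition of Pi tensor algebra}, it suffices to show that every element of the form
\begin{align*}
\widetilde{\Gamma}_{\bullet_Q}\bigl(\widetilde{\Gamma}_{\cdot_Q}^{\otimes\ell}(\tau_{k_1}\otimes\ldots\otimes\tau_{k_\ell})\bigr)\,,
\qquad\tau_{k_i}\in\mathcal{M}_{k_i}\,,
\end{align*}
lies in $\mathcal{A}_{\bullet_Q}=\Gamma_{\bullet_Q}(\mathcal{A}_{\cdot_Q})$, and symmetrically that the element obtained by swapping the tilded objects with the untilded ones lies in $\widetilde{\mathcal{A}}_{\bullet_Q}$. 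Here I would plug Equation \eqref{Eq: GammabulletQ uniqueness} directly in: its right hand side is, by construction, a finite sum of elements of the form $\Gamma_{\bullet_Q}(\cdot)$, each applied to a tensor product of elements of $\mathcal{A}_{\cdot_Q}$ (since the maps $\mathcal{C}_{\underline{k}_\wp}$ take values in $\mathcal{T}(\mathcal{A}_{\cdot_Q})$ and $\Gamma_{\cdot_Q}^{\otimes|\wp|}$ maps into $\mathcal{A}_{\cdot_Q}^{\otimes|\wp|}$). Therefore every summand belongs to $\mathcal{A}_{\bullet_Q}$, and so does the whole expression. This yields the inclusion $\widetilde{\mathcal{A}}_{\bullet_Q}\subseteq\mathcal{A}_{\bullet_Q}$.

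For the converse inclusion $\mathcal{A}_{\bullet_Q}\subseteq\widetilde{\mathcal{A}}_{\bullet_Q}$ I would argue by induction on the grading indices $\ell$ and $k=k_1+\ldots+k_\ell$ introduced in Remark \ref{Rmk: decomposition of Pi tensor algebra}. Solving Equation \eqref{Eq: GammabulletQ uniqueness} for $\Gamma_{\bullet_Q}[\Gamma_{\cdot_Q}^{\otimes\ell}(\tau_{k_1}\otimes\ldots\otimes\tau_{k_\ell})]$, one expresses it as $\widetilde{\Gamma}_{\bullet_Q}(\widetilde{\Gamma}_{\cdot_Q}^{\otimes\ell}(\tau_{k_1}\otimes\ldots\otimes\tau_{k_\ell}))$ minus a finite sum indexed by the non-trivial partitions $\wp\in\mathrm{P}(1,\dots,\ell)$. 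Each such correction involves $\Gamma_{\bullet_Q}$ applied to tensors whose grading indices $\underline{k}_\wp$ have $|\wp|\leq\ell$, and either $|\wp|<\ell$ or strictly smaller total polynomial degree; by the inductive hypothesis these corrections lie in $\widetilde{\mathcal{A}}_{\bullet_Q}$, whence the left hand side does too. The base case $\ell=1$ reduces to Corollary \ref{Cor: AcdotQ does not depends on Gamma cdotQ}, which has already been established.

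The only delicate point, and the expected main obstacle, is to make sure that the induction closes, namely that the partition sum in Equation \eqref{Eq: GammabulletQ uniqueness} involves only strictly smaller multi-indices in an appropriate well-founded order. This is guaranteed by item \emph{1.} of Theorem \ref{Thm: GammabulletQ uniqueness}, which constrains $\mathcal{C}_{\underline{k}_\wp}$ to land in $\mathcal{M}_{k_{I_1}}\otimes\ldots\otimes\mathcal{M}_{k_{I_{|\wp|}}}$ with $|\wp|<\ell$ in every non-trivial summand, thus allowing the inductive argument to proceed exactly as in the proof of Theorem \ref{Thm: GammacdotQ uniqueness}.
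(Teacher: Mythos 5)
The paper itself offers no written proof of this corollary: it is presented as an immediate consequence of Theorem \ref{Thm: GammabulletQ uniqueness}, in complete analogy with Corollary \ref{Cor: AcdotQ does not depends on Gamma cdotQ}, so your task was to supply the missing argument, and your overall route (reduce everything to Equation \eqref{Eq: GammabulletQ uniqueness}) is the intended one. Your inclusion $\widetilde{\mathcal{A}}_{\bullet_Q}\subseteq\mathcal{A}_{\bullet_Q}$ is fine: the right hand side of \eqref{Eq: GammabulletQ uniqueness} is a finite sum of terms of the form $\Gamma_{\bullet_Q}[\Gamma_{\cdot_Q}^{\otimes|\wp|}(\cdot)]$, and since $\Gamma_{\bullet_Q}$ is linear on the vector space $\mathcal{T}(\mathcal{A}_{\cdot_Q})$ its image is a linear subspace, so the sum stays in the image (granting, as the paper itself implicitly does, the identification of the coarser-partition contributions, which are rank-$|\wp|$ objects smeared against $\bigotimes_{I\in\wp}\prod_{i\in I}f_i$, with elements of the algebra).

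The problem is your converse inclusion. Your induction is anchored on the claim that item \emph{1.} of Theorem \ref{Thm: GammabulletQ uniqueness} forces $|\wp|<\ell$ in every non-trivial summand; this is false. The set $\mathrm{P}(1,\ldots,\ell)$ contains the partition into singletons, for which $|\wp|=\ell$ and $\underline{k}_\wp=(k_1,\ldots,k_\ell)$, and the corresponding correction need not vanish: the paper's own example after Theorem \ref{Thm: GammabulletQ uniqueness} computes $\mathcal{C}_{2,2}(\Phi^2\otimes\Phi^2)=2\Gamma_{\cdot_Q}(\Phi^2)\vee c_0\boldsymbol{1}+c_0\boldsymbol{1}\vee c_0\boldsymbol{1}\neq 0$. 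Your fallback, ``strictly smaller total polynomial degree'', is likewise not what item \emph{1.} gives: it only places $\mathcal{C}_{\underline{k}_\wp}$ in $\bigotimes_{I\in\wp}\mathcal{M}_{k_I}$, and $\mathcal{M}_k$ is by definition (Remark \ref{Rem: graded algebra}) the module of degree \emph{at most} $k$, so no strict drop is guaranteed by the cited item, and the well-foundedness of your induction is not established. The repair is simpler than the induction: the hypotheses of Theorem \ref{Thm: GammabulletQ uniqueness} are symmetric in the pairs $(\Gamma_{\cdot_Q},\Gamma_{\bullet_Q})$ and $(\widetilde{\Gamma}_{\cdot_Q},\widetilde{\Gamma}_{\bullet_Q})$, so you may apply the theorem a second time with the roles exchanged; this expresses every generator $\Gamma_{\bullet_Q}[\Gamma_{\cdot_Q}^{\otimes\ell}(\tau_{k_1}\otimes\ldots\otimes\tau_{k_\ell})]$ as a finite sum of terms $\widetilde{\Gamma}_{\bullet_Q}[\widetilde{\Gamma}_{\cdot_Q}^{\otimes|\wp|}(\cdot)]$ and yields $\mathcal{A}_{\bullet_Q}\subseteq\widetilde{\mathcal{A}}_{\bullet_Q}$ by the same one-line argument as your first inclusion, with no induction and no need to control the partition sum.
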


\section{Application to the $\Phi^3_d$ Model}\label{Sec: Application to the Phi3d Model}

Our goal is to prove the effectiveness of the framework developed in the previous sections by applying it to a concrete example, the so-called stochastic $\Phi^3_d$ model on $M=\mathbb{R}\times\mathbb{R}^d$, which has been already thoroughly discussed in the literature by several authors, especially in the context of stochastic quantization, see {\it e.g.} \cite{ADG20, Da Prato, GH20, Hairer14, Hairer15, JM85, PW81}. Observe that, to make contact with the existing literature, here $d$ does not refer to the dimension of $M$ which is instead $d+1$. 

To be more specific our goal is to study at a perturbative level  the following stochastic PDE on $\mathbb{R}\times\mathbb{R}^d$, 
\begin{align}\label{Eq: SPDE}
	\partial_t \widehat{\psi}=\Delta\widehat{\psi}-\lambda\widehat{\psi}^3+\widehat{\xi},
\end{align}
where $\lambda\in\mathbb{R}$ is a dimensionless coupling constant, $\widehat{\xi}$ denotes the space-time white noise whereas $\Delta$ is the Laplace operator on $\mathbb{R}^d$.

Following Section \ref{Sec: introduction} and using the tools of Section \ref{Sec: Basic definitions}, we start by translating Equation \eqref{Eq: SPDE} in the language of functionals.
For that we need to cope with the fact that $P\circledast\eta$, with $\eta\in\mathcal{E}(\mathbb{R}\times\mathbb{R}^d)$, is a priori ill-defined. For this reason we need to introduce a so-called infrared cut-off, namely we choose $\chi\in\mathcal{D}(\mathbb{R})$. Subsequently we would like to consider 
\begin{align}\label{Eq: algebraic SPDE}
	\Psi=\Phi-\lambda P_\chi\circledast(\Psi^3)\,,
\end{align}
where $P_\chi:=P\cdot(1\otimes\chi)$ while $\Phi\in\mathcal{D}^\prime(\mathbb{R}\times\mathbb{R}^d;\operatorname{Pol})$ is the functional defined in Equation \eqref{Eq: Phi,1 functional}.
Notice that Equation \eqref{Eq: algebraic SPDE} entails that $\Psi(f;\varphi)=\Phi(f;\varphi)$ for all $f\in\mathcal{D}(\mathbb{R}\times\mathbb{R}^d)$ such that $\operatorname{supp}(f)\cap(\operatorname{supp}(\chi)+\mathbb{R}_+)=\emptyset$ -- here we denoted by $\operatorname{supp}(\chi)+\mathbb{R}_+:=\{(t+s,x)\,|\,(s,x)\in\operatorname{supp}(\chi)\,,\,t>0\}$.

Yet, Equation \eqref{Eq: algebraic SPDE} is purely formal and we cannot claim that $\Psi\in\mathcal{D}^\prime(\mathbb{R}\times\mathbb{R}^d;\operatorname{Pol})$. To bypass this hurdle, our strategy is to give a perturbative scheme to find $\Psi$, the unknown in Equation \eqref{Eq: algebraic SPDE}.
Hence, recalling Definition \ref{Def: pointwise algebra}, we read $\Psi\in\mathcal{A}\subset\mathcal{D}^\prime(\mathbb{R}\times\mathbb{R}^d;\operatorname{Pol})$ as a formal power series with respect to the parameter $\lambda$.
More precisely, indicating with $\mathcal{A}\llbracket\lambda\rrbracket$ the space of formal power series in $\lambda$ with coefficients in $\mathcal{A}$, we consider
\begin{align}\label{Eq: formal power series}
	\Psi\llbracket\lambda\rrbracket=\sum_{j\geq0}\lambda^jF_j\,,\qquad F_j\in\mathcal{A}\,.
\end{align}

The rationale is to use Equation \eqref{Eq: algebraic SPDE} to compute order by order the coefficients $F_j\in\mathcal{D}^\prime(\mathbb{R}\times\mathbb{R}^d;\operatorname{Pol})$.
As a matter of fact, inserting Equation \eqref{Eq: formal power series} in Equation \eqref{Eq: algebraic SPDE} one obtains via a direct calculation 
\begin{align}\label{Eq: order-by-order perturbative expansion}
	F_0&=\Phi,\quad
	F_1=-P_\chi\circledast\Phi^3,\quad
	F_2=3P_\chi\circledast(\Phi^2P_\chi\circledast\Phi^3)\,,\\
	\nonumber
	F_j&=-\sum_{j_1+j_2+j_3=j-1}P_\chi\circledast(F_{j_1}F_{j_2}F_{j_3})\,,\qquad j\in\mathbb{N}\,.
\end{align}
For our expository purposes we will content ourselves with considering the coefficients only up to $j=2$.

So far, the whole construction seems oblivious to the stochastic nature of Equation \eqref{Eq: algebraic SPDE}.
This is indeed the case and, as discussed in Section \ref{Sec: construction of AcdotQ}, this shortcoming is cured by deforming the algebra $\mathcal{A}$, {\it cf.} Definition \ref{Def: pointwise algebra}, into the deformed counterpart $\mathcal{A}_{\cdot_Q}$. {\it cf.} Corollary \ref{Cor: construction of AcdotQ}.
This is implemented by the map $\Gamma_{\cdot_Q}$ constructed in Theorem \ref{Thm: Gamma cdotQ existence} and by Equation \eqref{Eq: cdot GammaQ product} in particular.
Notice that, in order to be consistent with the introduced cut-off $\chi$, we are now considering $Q=Q_\chi=P_\chi\circ P_\chi$.
In other words we consider
\begin{align*}
	\Psi\llbracket\lambda\rrbracket\mapsto\Psi_{\cdot_Q}[\![\lambda]\!]:=\Gamma_{\cdot_Q}(\Psi\llbracket\lambda\rrbracket)\,.
\end{align*}

\noindent As outlined in Section \ref{Sec: construction of AcdotQ}, $\Psi_{\cdot_Q}[\![\lambda]\!]\in\mathcal{A}_{\cdot_Q}\llbracket\lambda\rrbracket$ and, for all $\varphi\in\mathcal{E}(\mathbb{R}\times\mathbb{R}^d)$ and $f\in\mathcal{D}(\mathbb{R}\times\mathbb{R}^d)$, $\Psi_{\cdot_Q}[\![\lambda]\!](f;\varphi)$ is the expectation value of the $\varphi$-shifted, $f$-localized perturbative solution $\widehat{\psi}_{\varphi}\llbracket\lambda\rrbracket(f)$ of Equation \eqref{Eq: SPDE}, that is 
\begin{align}\label{Eq: Expectation value of the perturbative solution}
	\mathbb{E}(\widehat{\psi}_{\varphi}\llbracket\lambda\rrbracket(f))
	=\Gamma_{\cdot_Q}(\Psi\llbracket\lambda\rrbracket)(f;\varphi)
	=\sum_{j\geq0}\lambda^j\Gamma_{\cdot_Q}(F_j)(f;\varphi)\,,
\end{align}
where $\widehat{\psi}$ is the formal solution of Equation \eqref{Eq: SPDE}, while $\mathbb{E}$ stands for the expectation value. The subscript $\varphi$ reminds us that we are free to consider a shifted white noise, namely $\mathbb{E}(P_\chi\circledast\xi)=\varphi$. As remarked in Section \ref{Sec: introduction} the usual scenario of a Gaussian white noise centered at $0$ can be recovered by setting $\varphi=0$.

\begin{remark}\label{Rem: on the arbitrariness of GammaQ}
	It is important to bear in mind that, in the above procedure, there is the arbitrariness in picking $\Gamma_{\cdot_Q}$ as discussed in Theorem \ref{Thm: GammacdotQ uniqueness}. Implicitly we have taken one of the possible choices and different ones yield far reaching consequences in the construction of the perturbative solutions of Equation \eqref{Eq: algebraic SPDE}. In the following we shall discuss in detail these consequences. 
\end{remark}

Before dwelling into specific computations, we stress that a noteworthy advantage of our approach is the predictability at a perturbative level also of the correlations between the solutions of Equation \eqref{Eq: algebraic SPDE}.
This can be obtained using the product $\bullet_Q$ introduced in Section \ref{Sec: Correlations and bulletQ product} and working therefore with the algebra $\mathcal{A}_{\bullet_Q}$ or, rather, with $\mathcal{A}_{\bullet_Q}\llbracket\lambda\rrbracket$, namely the algebra of formal power series in $\lambda$ with coefficients in $\mathcal{A}_{\bullet_Q}$.
We recollect all interesting expressions involving $\Psi$ in two single algebraic objects.
\begin{definition}
	Let $\Gamma_{\cdot_Q}$, $\Gamma_{\bullet_Q}$ two maps as per Theorems \ref{Thm: Gamma cdotQ existence}-\ref{Thm: GammabulletQ existence}.
	We define $\mathcal{A}_{\cdot_Q}^\Psi\subset\mathcal{A}_{\cdot_Q}[\![\lambda]\!]$ as the subalgebra of $\mathcal{A}_{\cdot_Q}[\![\lambda]\!]$ generated by $\Psi_{\cdot_Q}:=\Gamma_{\cdot_Q}(\Psi)$.
	Similarly we denote by $\mathcal{A}_{\bullet_Q}^\Psi\subseteq\mathcal{A}_{\bullet_Q}[\![\lambda]\!]$ the smallest subalgebra of $\mathcal{A}_{\bullet_Q}[\![\lambda]\!]$ containing $\mathcal{A}_{\cdot_Q}^\Psi$.
\end{definition}

As exemplification, in the following we shall consider the two-point correlation function, but the reader is warned that, barring the sheer computational difficulties, we could analyze with the same methods all higher orders. Bearing in mind this comment, on account of Theorem \ref{Thm: GammabulletQ existence} and in particular of Equation \eqref{Eq: GammabulletQ on AcdotQ}, it holds that, for all $f_1,f_2\in\mathcal{D}(\mathbb{R}\times\mathbb{R}^d)$ and for all $\varphi\in\mathcal{E}(\mathbb{R}\times\mathbb{R}^d)$, the linearity of $\Gamma_{\bullet_Q}$ entails that
\begin{align}
	\nonumber
	\omega_2(f_1\otimes f_2;\varphi)\vcentcolon
	&=\big(\Gamma_{\cdot_Q}(\Psi\llbracket\lambda\rrbracket)\bullet_{\Gamma_{\bullet_Q}}\Gamma_{\cdot_Q}(\Psi\llbracket\lambda\rrbracket)\big)(f_1\otimes f_2;\varphi)
	\\\nonumber
	&=\Gamma_{\bullet_Q}\big(\Gamma_{\cdot_Q}(\Psi\llbracket\lambda\rrbracket)\otimes\Gamma_{\cdot_Q}(\Psi\llbracket\lambda\rrbracket)\big)(f_1\otimes f_2;\varphi)
	\\\label{Eq: two point correlation}
	&=\sum_{k\geq0}\lambda^k\sum_{j=0}^k\Gamma_{\bullet_Q}\big(\Gamma_{\cdot_Q}(F_j)\otimes\Gamma_{\cdot_Q}(F_{k-j})\big)(f_1\otimes f_2;\varphi)\,.
\end{align}

\subsection{First order in perturbation theory}\label{Sec: computations at first order}
In this section we compute the expectation value of the solution, as well as its two-point correlation at the first order in perturbation theory.
\paragraph{Expectation value of the solution.}
Using Equation \eqref{Eq: formal power series} together with Equation \eqref{Eq: order-by-order perturbative expansion}, the solution of Equation \eqref{Eq: algebraic SPDE} to order $O(\lambda^2)$ reads 
\begin{align}\label{Eq: Perturbative Expansion of Psi}
	\Psi\llbracket\lambda\rrbracket
	=\Phi
	-\lambda P_\chi\circledast\Phi^3
	+O(\lambda^2)\,.
\end{align}

On account of Equation \eqref{Eq: Gamma on M1}, $\Gamma_{\cdot_Q}(\Phi)=\Phi$, whereas, by Equations \eqref{Eq: Gamma on Ptau} and \eqref{Eq: definition of Phi2} together with the general construction of the map $\Gamma_{\cdot_Q}$ in the proof of Theorem \ref{Thm: Gamma cdotQ existence},
\begin{align*}
	\Gamma_{\cdot_Q}(P_\chi\circledast\Phi^3)
	=P_\chi\circledast\Gamma_{\cdot_Q}(\Phi^3)
	=P_\chi\circledast(\Phi^3+3C\Phi)\,.
\end{align*}
Here $C\in \mathcal{E}(\mathbb{R}\times\mathbb{R}^d)$ is a consequence of Theorem \ref{Thm: Gamma cdotQ existence}.
More precisely $C(z)=\chi(z)\widehat{P}_2(\delta_z\otimes\chi)$, with $\widehat{P}_2$ the chosen extension to the whole $(\mathbb{R}\times\mathbb{R}^d)^2$ of the bi-distribution $P^2\in\mathcal{D}^\prime((\mathbb{R}\times\mathbb{R}^d)^2\setminus\mathrm{Diag}_2)$.
At this stage we do not enter into the details of the explicit form of $\widehat{P}_2$ postponing it to the last part of this Section. 

\begin{remark}\label{Rem: C as a renromalization constant}
	It is worth emphasizing that, choosing a specific product $\Gamma_{\cdot_Q}$ entails in turn that we have also made a choice of $\widehat{P}_2$, an explicit extension of $P^2$. In this perspective the function $C$ is fixed. Yet, as discussed in the previous sections, there is a freedom in selecting $\Gamma_{\cdot_Q}$ codified by Theorem \ref{Thm: GammacdotQ uniqueness}. In the case in hand, see also Remark \ref{Rmk: uniqueness for Phi2}, this translates in different choices for $C$, which is thus referred to as a {\em renormalization ambiguity}.
%	Notice that in Remark \ref{Rmk: uniqueness for Phi2} $C$ was indicated with $c_{2,0}$, but we opted to using a different symbol for simplicity of the notation.
\end{remark}

\noindent Putting together all this information, we end up with
\begin{align}\label{Eq: perturbative solution SPDE}
	\Gamma_{\cdot_Q}(\Psi\llbracket\lambda\rrbracket)(\varphi)
	=\Phi(\varphi)-\lambda P_\chi\circledast(\Phi^3+3C\Phi)(\varphi)+O(\lambda^2)\,.
\end{align}
Up to order $O(\lambda^2)$, this is the expectation value of the $\varphi$-shifted solution $\widehat{\psi}$ of Equation \eqref{Eq: SPDE}.
Recall that, in order to reproduce the standard white noise behaviour, one has to evaluate this functional at $\varphi=0$.
Using the notation of Equation \eqref{Eq: Expectation value of the perturbative solution}, this entails that
\begin{align*}
	\mathbb{E}(\widehat{\psi}_0\llbracket\lambda\rrbracket)=O(\lambda^2)\,.
\end{align*}
The latter equation holds at all orders in perturbation theory, as shown by the following lemma.
\begin{lemma}\label{Lem: expectation value of solution}
	Let $\Psi\in\mathcal{A}[\![\lambda]\!]$ be defined as per Equation \eqref{Eq: order-by-order perturbative expansion}.
	Then $\Gamma_{\cdot_Q}(\Psi)(f;0)=0$ for all $f\in\mathcal{D}(M)$.
\end{lemma}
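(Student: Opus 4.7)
The plan is to set up a $\mathbb{Z}_2$-parity argument. Let $\sigma\colon\mathcal{A}\to\mathcal{A}$ be the involution defined by $(\sigma\tau)(f;\varphi):=\tau(f;-\varphi)$ and extended in the obvious way to $\mathcal{D}^\prime_{\mathrm{C}}(M;\operatorname{Pol})$. It acts as a $\mathbb{C}$-algebra automorphism of $\mathcal{A}$ which moreover commutes with $P_\chi\circledast$, and decomposes $\mathcal{A}=\mathcal{A}^+\oplus\mathcal{A}^-$ into its $\pm1$-eigenspaces. Since $\boldsymbol{1}\in\mathcal{A}^+$ and $\Phi\in\mathcal{A}^-$, the subspace $\mathcal{A}^-$ consists precisely of sums of monomials of odd total polynomial degree in $\Phi$, is preserved by multiplication by elements of $\mathcal{A}^+$, and is stable under $P_\chi\circledast$. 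By induction on $j\in\mathbb{N}_0$ I would then show that every coefficient $F_j$ of Equation \eqref{Eq: order-by-order perturbative expansion} lies in $\mathcal{A}^-$: the base case $F_0=\Phi$ is immediate, while for the inductive step each summand $P_\chi\circledast(F_{j_1}F_{j_2}F_{j_3})$ with $j_1+j_2+j_3=j-1$ is the image under $P_\chi\circledast$ of a product of three odd elements, hence itself odd.

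The core of the argument is then to prove that $\Gamma_{\cdot_Q}$ is $\sigma$-equivariant, namely $\sigma\circ\Gamma_{\cdot_Q}=\Gamma_{\cdot_Q}\circ\sigma$. This is established by retracing the recursive construction in the proof of Theorem \ref{Thm: Gamma cdotQ existence}: Equation \eqref{Eq: Gamma on M1} handles $\boldsymbol{1}$ and $\Phi$ trivially, while Equations \eqref{Eq: Gamma on Ptau} and \eqref{Eq: Pi-functional derivative} guarantee compatibility with $P\circledast$ and with $\mathcal{E}(M)$-linearity, both of which commute with $\sigma$. At each of the inductive steps \eqref{Eq: definition of Phi2}, \eqref{Eq: definition of Phik} and its generalisation in Step 3b, the only genuinely new ingredient is an extension $\widehat{P}_2$, $\widehat{Q}_{2\ell}$ or $\widehat{T}_N$ of a scalar distribution on some $M^k$ -- purely kinematical objects that carry no $\varphi$ dependence and so are automatically $\sigma$-invariant. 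At the combinatorial level the $\cdot_Q$-product contracts functional derivatives in pairs: the term $\tau_1^{(k)}\otimes\tau_2^{(k)}$ in Equation \eqref{Eq: regularized delta-product} reduces the total degree in $\varphi$ by $2k$, which is even, so parities add and the whole construction respects the $\mathbb{Z}_2$-grading.

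Combining the two facts, for every $j$ one obtains $\sigma\Gamma_{\cdot_Q}(F_j)=\Gamma_{\cdot_Q}(\sigma F_j)=-\Gamma_{\cdot_Q}(F_j)$, i.e.\ $\Gamma_{\cdot_Q}(F_j)(f;-\varphi)=-\Gamma_{\cdot_Q}(F_j)(f;\varphi)$ for all $f\in\mathcal{D}(M)$ and $\varphi\in\mathcal{E}(M)$. Setting $\varphi=0$ yields $\Gamma_{\cdot_Q}(F_j)(f;0)=0$ for every $j$, and the claim follows by linearity of $\Gamma_{\cdot_Q}$ and of the formal power series in $\lambda$. The main obstacle is precisely the verification of $\sigma$-equivariance of $\Gamma_{\cdot_Q}$: although transparent from the structure of $\cdot_Q$ as a sum of pairwise Wick-like contractions, it formally rests on the possibility of choosing the non-unique extensions appearing in Theorem \ref{Thm: Gamma cdotQ existence} to be $\varphi$-independent, equivalently on the compatibility of $\Gamma_{\cdot_Q}$ with the natural $\mathbb{Z}_2$-grading on $\mathcal{A}$ induced by $\Phi\mapsto-\Phi$. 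This is the algebraic incarnation of the fact that the Gaussian expectation of any odd-degree polynomial in a centred random field vanishes.
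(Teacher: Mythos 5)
Your proof is correct and is essentially the paper's own argument in different clothing: the paper works with the subspace $\mathcal{O}\subset\mathcal{A}$ of functionals whose even-order functional derivatives vanish at $\varphi=0$, which is precisely your odd eigenspace $\mathcal{A}^-$ of the involution $\varphi\mapsto-\varphi$, and both proofs rest on the same two facts — that each $F_j$ is odd by induction on $j$, and that $\Gamma_{\cdot_Q}$ preserves oddness because the $\cdot_Q$-contractions remove $\Phi$'s in pairs against $\varphi$-independent (extended) kernels. If anything, your sketch of why $\Gamma_{\cdot_Q}$ respects the $\mathbb{Z}_2$-grading is slightly more explicit than the paper's one-line observation that $\Gamma_{\cdot_Q}(\mathcal{O})\subseteq\mathcal{O}$.
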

\begin{proof}
	Let $\mathcal{O}\subset\mathcal{A}$ be the vector space of $\mathcal{A}$ made by elements $\tau\in\mathcal{A}$ such that $\tau^{(2n)}(\cdot;0)=0$ for all $n\in\mathbb{N}_0$. Observe that the superscript indicates the $2n$-th functional derivative and, with a slight abuse of notation, we avoid indicating the directions of derivation as well as the test function, {\it cf.} Definition \ref{Def: functional-valued distributions}. 
	We observe that, for all $\tau\in\mathcal{O}$, it holds $\Gamma_{\cdot_Q}(\tau)\in\mathcal{O}$, moreover, $\tau_1\tau_2\tau_3\in\mathcal{O}$ for all $\tau_1,\tau_2,\tau_3\in\mathcal{O}$.
	
	We now prove the thesis by showing that $\Psi\in\mathcal{O}[\![\lambda]\!]$, namely that $F_j\in\mathcal{O}$ for all $j\in\mathbb{N}_0$, being $F_j\in\mathcal{A}$ defined as per Equation \eqref{Eq: order-by-order perturbative expansion}.
	For $j=0$ we have $F_j=\Phi\in\mathcal{O}$.
	Then Equation \eqref{Eq: order-by-order perturbative expansion} entails that, for all $j\in\mathbb{N}$,
	\begin{align*}
		F_j=-\sum_{j_1+j_2+j_3=j-1}P_\chi\circledast( F_{j_1}F_{j_2}F_{j_3})\,.
	\end{align*}
	By induction $F_{j_1}, F_{j_2}, F_{j_3}\in\mathcal{O}$ and, by direct inspection, $F_{j_1}F_{j_2}F_{j_3}\in\mathcal{O}$.
	This implies that $F_j\in\mathcal{O}$.
\end{proof}

\paragraph{Two-point correlation function.}
We compute, up to order $O(\lambda^2)$, the two-point correlation function of the solution of Equation \eqref{Eq: algebraic SPDE}. Using Equation \eqref{Eq: two point correlation}, at the zeroth-order in $\lambda$, we need to evaluate for $f_1,f_2\in\mathcal{D}(\mathbb{R}\times\mathbb{R}^d)$ and for $\varphi\in\mathcal{E}(\mathbb{R}\times\mathbb{R}^d)$
\begin{align*}
\Gamma_{\bullet_Q}(\Gamma_{\cdot_Q}(\Phi)\otimes\Gamma_{\cdot_Q}(\Phi))(f_1\otimes f_2;\varphi)=(\Phi\otimes\Phi)(f_1\otimes f_2;\varphi)+Q(f_1\otimes f_2)\,,
\end{align*}
whereas at first order
\begin{align*}
	\Gamma_{\bullet_Q}(\Gamma_{\cdot_Q}(\Phi)&\otimes\Gamma_{\cdot_Q}(P_\chi\circledast\Phi^3))(f_1\otimes f_2;\varphi)
	\\&=(\Phi\otimes(P_\chi\circledast(\Phi^3+3C\Phi))(f_1\otimes f_2;\varphi)
	+Q\cdot(1\otimes 3P_\chi\circledast(\Phi^2+C\mathbf{1})(f_1\otimes f_2;\varphi)\,,
\end{align*}
where $C$ is the smooth function introduced above, $\mathbf{1}$ denotes the identity functional and where $Q\cdot(1\otimes3P_\chi\circledast(\Phi^2+C\mathbf{1})(\varphi)$ is the bi-distribution whose integral kernel is $3Q(x,y)(P_\chi\circledast(\varphi^2+C))(y)$ -- we recall that $Q=P_\chi\circ P_\chi$.
As a consequence, we get
\begin{multline}\label{Eq: first order two point function}
	\omega_2(f_1\otimes f_2;\varphi)
	=[\Phi\otimes\Phi
	+Q-\lambda\big(\Phi\otimes(P_\chi\circledast(\Phi^3+3C\Phi))\\
	+3Q\cdot(1\otimes(P_\chi\circledast(\Phi^2+C\mathbf{1}))\big)](f_1\otimes f_2+f_2\otimes f_1;\varphi)
	+O(\lambda^2)\,.
\end{multline}
As in the previous case, the scenario with the standard white noise as a source can be recovered evaluating Equation \eqref{Eq: first order two point function} at $\varphi=0$. This yields 
\begin{gather*}
	\mathbb{E}(\widehat{\psi}\llbracket\lambda\rrbracket\otimes\widehat{\psi}\llbracket\lambda\rrbracket)(f_1\otimes f_2)
	=\omega_2(f_1\otimes f_2;0)=\\
	=Q(f_1\otimes f_2)
	-3\lambda Q\cdot(1\otimes(P_\chi\circledast C))(f_1\otimes f_2+f_2\otimes f_1)+O(\lambda^2).
\end{gather*}
Notice that in the computation the only freedom appearing is still codified by the lone function $C$ and thus the same comments as per Remark \ref{Rem: C as a renromalization constant} apply.

\paragraph{Explicit construction of $P^n$.}
In the previous formulae, particularly Equation \eqref{Eq: perturbative solution SPDE} and \eqref{Eq: first order two point function}, the fundamental solution of the heat operator enters the game together with the (arbitrarily chosen) extension $\widehat{P}_2\in\mathcal{D}'((\mathbb{R}\times\mathbb{R}^d)^2)$ of $P^2\in\mathcal{D}'((\mathbb{R}\times\mathbb{R}^d)^2\setminus\mathrm{Diag}_2)$, which contributes through the function $C$ -- \textit{cf.} Equation \eqref{Eq: first order two point function}.

For concreteness, in this paragraph we wish to discuss an explicit extension procedure for powers of the fundamental solution of the heat operator, to prove that our method can yield explicit expressions.

To this avail we observe that the fundamental solution $P\in\mathcal{D}'((\mathbb{R}\times\mathbb{R}^d)^2)$ of $\partial_t-\Delta$ is translation invariant -- that is, $P(t,x;s,y)=\mathsf{p}(t-s,x-y)$ where $\mathsf{p}\in\mathcal{D}'(\mathbb{R}\times\mathbb{R}^d)$ is the fundamental solution of $\partial_t-\Delta$ on the Euclidean space $\mathbb{R}^{d+1}$ -- \textit{cf.} Equation \eqref{Eq: heat fundamental solution}.

In the following we discuss the extensions of $\mathsf{p}^n\in\mathcal{D}'(\mathbb{R}^{d+1}\setminus\{0\})$ and, to this avail, it is convenient to start from the fundamental solution of $\partial_t-\kappa\Delta$ on $\mathbb{R}^{d+1}$
\begin{align}\label{Eq: heat fundamental solution}
	\mathsf{p}_\kappa(t,x)
	:=\frac{1}{(4\pi\kappa t)^{\frac{d}{2}}}\Theta(t)e^{-\frac{|x|^2}{4\kappa t}}\,,
\end{align}
where $\Theta$ is the Heaviside step function.
Per construction $\mathsf{p}_\kappa\in\mathcal{D}'(\mathbb{R}^{d+1})$ satisfies
\begin{align}
	(\partial_t-\kappa\Delta)\mathsf{p}_\kappa=\delta\,,
\end{align}
where $\delta\in\mathbb{D}'(\mathbb{R}^{d+1})$ is the Dirac delta distribution centred at the origin.
In what follows we shall denote $\mathsf{p}_1=\mathsf{p}$.

For all $n\in\mathbb{N}$ let us consider $\mathsf{p}^{n+1}$.
Since $\operatorname{WF}(\mathsf{p})=\operatorname{WF}(\delta)$,  $\mathsf{p}^{n+1}\in\mathcal{D}'(\mathbb{R}^{d+1}\setminus\{0\})$.
If $d=1$ and $n=1$, then $\mathsf{p}^2\in\mathcal{D}^\prime(\mathbb{R}^2)$ since
\begin{align*}
	\mathsf{p}^2(f)
%	=\int\mathrm{d}t\mathrm{d}x\mathsf{p}(t,x)^2f(t,x)
	=\int_0^{+\infty}\frac{\mathrm{d}t}{\sqrt{t}}\int\limits_{\mathbb{R}}\mathrm{d}x
	\frac{1}{4\pi}e^{-\frac{|x|^2}{2}}f(t,\sqrt{t}x)
	<+\infty\,,
\end{align*}
On the contrary, for $d\geq 2$ and $n\geq 2$, the singularity at the origin calls for an extension procedure.
To this end we use the identity
\begin{align*}
	\frac{1}{t^\alpha}
	=\frac{1}{\Gamma(\alpha)}\int_0^{+\infty}\mathrm{d}z z^{\alpha-1}e^{-tz}\,.
\end{align*}
Replacing it in the integral kernel of $\mathsf{p}^{n+1}\in\mathcal{D}'(\mathbb{R}^{d+1}\setminus\{0\})$, we obtain  the following K\"all\'en-Lehmann type formula
\begin{align}
	\nonumber
	\mathsf{p}(t,x)^{n+1}
	&=\frac{1}{(4\pi t)^{\frac{(n+1)d}{2}}}\Theta(t)e^{-\frac{(n+1)|x|^2}{4t}}\,.
	\\
	\label{Eq: Kallen-Lehmann heat formula}
	&=\frac{1}{(4\pi)^{\frac{nd}{2}}}\frac{1}{(n+1)^{\frac{d}{2}}}\frac{1}{\Gamma(\frac{nd}{2})}
	\int_0^{+\infty}\mathrm{d}z z^{\frac{nd}{2}-1}\mathsf{p}_{\frac{1}{n+1},z}(t,x)\,.
\end{align}
where $\mathsf{p}_{\frac{1}{n+1},z}$ is the fundamental solution of the parabolic equation
\begin{align}\label{Eq: mass-heat fundamenta solution}
	\left[\partial_t-\frac{1}{n+1}\Delta+z\right]\mathsf{p}_{\frac{1}{n+1},z}
	=\delta\,,\qquad
	\mathsf{p}_{\frac{1}{n+1},z}(t,x)
	:=\frac{(n+1)^{\frac{d}{2}}}{(4\pi t)^{\frac{d}{2}}}\Theta(t)e^{-\frac{(n+1)|x|^2}{4t}}e^{-zt}\,.
\end{align}
Equation \eqref{Eq: Kallen-Lehmann heat formula} represents the singularity of $\mathsf{p}^{n+1}$ at the origin in terms of the divergent integral in $z$.
This suggests a specific extension $\widehat{\mathsf{p}}^{n+1}$ of $\mathsf{p}^{n+1}$.
To wit, for $a>0$ we define $_a\mathsf{p}^{n+1}\in\mathcal{D}'(\mathbb{R}^{d+1})$ via the integral kernel
\begin{align}
	_a\mathsf{p}^{n+1}(t,x)
	:=\frac{1}{(4\pi)^{\frac{nd}{2}}}\frac{1}{(n+1)^{\frac{d}{2}}}\frac{1}{\Gamma(\frac{nd}{2})}
	\left[-\partial_t+\frac{1}{n+1}\Delta+a\right]^\ell\int_0^{+\infty}\mathrm{d}z \frac{z^{\frac{nd}{2}-1}}{(z+a)^\ell}\mathsf{p}_{\frac{1}{n+1},z}(t,x)\,,
\end{align}
where $\ell=\lfloor\frac{nd}{2}\rfloor\leq\frac{nd}{2}$.
Notice that this choice makes the $z$-integral convergent after smearing it against a compactly supported function $f\in\mathcal{D}(\mathbb{R}^{d+1})$.
By direct inspection, once restricted to $(t,x)\neq(0,0)$, $_a\mathsf{p}^{n+1}(t,x)=\mathsf{p}(t,x)^{n+1}$. Furthermore, the weighted scaling degree of $_a\mathsf{p}^{n+1}$ at the origin coincides with the one of $\mathsf{p}^{n+1}$, {\it cf.} Remark \ref{Rmk: weighted scaling degree}, though one should keep in mind that here the dimension of $M$ is $d+1$.

A different choice for $a>0$ leads to a result consistent with Theorem \ref{Thm: extension with scaling degree} and  Remark \ref{Rmk: weighted scaling degree}.
In particular, for all $a,b>0$,  $_a\mathsf{p}^{n+1}-\;_b\mathsf{p}^{n+1}$ is a linear combination of derivatives of Dirac delta distributions with weighted scaling degree at most $(n+1)d$.
To see this, we observe that, for all $m\in\mathbb{N}$ and $a\in\mathbb{R}$,
\begin{align*}
	(H_{\frac{1}{n+1}}+a)^m\mathsf{p}_{\frac{1}{n+1},z}
	=(z+a)^m\mathsf{p}_{\frac{1}{n+1},z}
	-\sum_{j=0}^{m-1}(z+a)^{m-1-j}(H_{\frac{1}{n+1}}+a)^j\delta\,,
\end{align*}
where $H_{\frac{1}{n+1}}:=-\partial_t+\frac{1}{n+1}\Delta$ is a short notation.
It then follows that -- setting $c_{n,d}:=\frac{1}{(4\pi)^{\frac{nd}{2}}}\frac{1}{(n+1)^{\frac{d}{2}}}\frac{1}{\Gamma(\frac{nd}{2})}$ --
\begin{align*}
	_{a+b}\mathsf{p}^{n+1}(t,x)
	-\;_{a}\mathsf{p}^{n+1}(t,x)
	&=c_{n,d}
	\left[(H_{\frac{1}{n+1}}+a+b)^\ell-(H_{\frac{1}{n+1}}+a)^\ell\right]
	\int_0^{+\infty}\mathrm{d}z \frac{z^{\frac{nd}{2}-1}}{(z+a+b)^\ell}\mathsf{p}_{\frac{1}{n+1},z}(t,x)
	\\&+
	c_{n,d}
	(H_{\frac{1}{n+1}}+a)^\ell
	\int_0^{+\infty}\mathrm{d}zz^{\frac{nd}{2}-1}
	\left[\frac{1}{(z+a+b)^\ell}-\frac{1}{(z+a)^\ell}\right]\mathsf{p}_{\frac{1}{n+1},z}(t,x)\,.
\end{align*}
With standard algebraic manipulations we find
\begin{align*}
	&_{a+b}\mathsf{p}^{n+1}
	-\;_{a}\mathsf{p}^{n+1}
	\\&=c_{n,d}
	\sum_{j=0}^{\ell-1}{\ell\choose j}b^{\ell-j}
	\int_0^{+\infty}\mathrm{d}z \frac{z^{\frac{nd}{2}-1}}{(z+a+b)^\ell(z+a)^\ell}
	\left[(H_{\frac{1}{n+1}}+a)^j(z+a)^\ell-(z+a)^j(H_{\frac{1}{n+1}}+a)^\ell\right]\mathsf{p}_{\frac{1}{n+1},z}
	\\&=c_{n,d}
	\sum_{j=0}^{\ell-1}{\ell\choose j}b^{\ell-j}
	\int_0^{+\infty}\mathrm{d}z \frac{z^{\frac{nd}{2}-1}}{(z+a+b)^\ell(z+a)^\ell}
	\sum_{q=j}^{\ell-1}(z+a)^{\ell+j-q-1}(H_{\frac{1}{n+1}}+a)^q\delta
	\\&=\sum_{q=0}^{\ell-1}\zeta_q(H_{\frac{1}{n+1}}+a)^q\delta\,,
\end{align*}
where $\zeta_j\in\mathbb{C}$.
Notice that the weighted scaling degree at the origin of $(H_{\frac{1}{n+1}}+a)^q\delta$ is at most $2\ell-2+d-2=(n+1)d$ as required.

\begin{remark}
	It is worth mentioning that our analysis can be repeated almost word by word also in the case where the underlying space is $\mathbb{R}\times\mathbb{T}^d$, $\mathbb{T}^d$ being the flat $d$-torus. In this case the counterpart of $\mathsf{p}$ is played by the distribution $p$ obtained via a Poisson formula as
	\begin{align}\label{Eq: Poisson formula}
		p(t,x):=\sum_{n\in\mathbb{Z}}\mathsf{p}(t,x+n)\,,\qquad t\in\mathbb{R}\,,\,x\in(0,1]\,.
	\end{align}
	Here we realized $\mathbb{T}^d=\mathbb{R}^d/\mathbb{Z}^d\simeq(0,1]^d$. Equation \eqref{Eq: Poisson formula} is particularly relevant because it shows that the singular structure of $p$ is the same as that of $\mathsf{p}$ and in particular $p^n-\mathsf{p}^n$ is smooth for all $n\in\mathbb{N}$.
\end{remark}

\subsection{Renormalized equation}\label{Sec: Renormalized equation}

In this section we construct the equation of which $\Psi_{\cdot_Q}$ is a formal solution and we shall refer to it as the {\em renormalized equation}. Observe that the renormalization procedure outlined in Section \ref{Sec: computations at first order} which makes use of the results of Section \ref{Sec: Correlations and bulletQ product} and of Section \ref{Sec: Uniqueness theorems} affects the structure of Equation \eqref{Eq: SPDE}.

To this end recall that $\Psi\in\mathcal{A}[\![\lambda]\!]$ is the perturbative solution of Equation \eqref{Eq: algebraic SPDE}, constructed in Equation \eqref{Eq: formal power series} and \eqref{Eq: order-by-order perturbative expansion}.
If one applies $\Gamma_{\cdot_Q}$ to Equation \eqref{Eq: algebraic SPDE},  Theorem \ref{Cor: construction of AcdotQ} -- \textit{cf.} Equation \eqref{Eq: cdot GammaQ product} -- entails that $\Psi_{\cdot_Q}$ obeys to
\begin{align}\label{eq: full renormalized equation}
	\Psi_{\cdot_Q}=\Phi-\lambda P_\chi\circledast(\Psi_{\cdot_Q}\cdot_Q\Psi_{\cdot_Q}\cdot_Q\Psi_{\cdot_Q})\,.
\end{align}
The term $\Psi_{\cdot_Q}\cdot_Q\Psi_{\cdot_Q}\cdot_Q\Psi_{\cdot_Q}$ is a signature that we are representing the full stochastic content of equation \eqref{Eq: algebraic SPDE}. However, for practical purposes, it may be desirable to get rid of the $\cdot_Q$-product, recasting Equation \ref{Eq: cdot GammaQ product} in terms of the pointwise counterpart. The price to pay for such change is a modification of Equation \eqref{Eq: SPDE}.

\begin{proposition}\label{Prop: renormalized equation}
	Let $\Psi\in\mathcal{A}[\![\lambda]\!]$ be a solution of Equation \eqref{Eq: algebraic SPDE} and let $\Psi_{\cdot_Q}:=\Gamma_{\cdot_Q}(\Psi)$.
	Then there exists a sequence of functional-valued linear operators $\{M_n\}_{n\in\mathbb{N}}$ such that
	\begin{enumerate}[(i)]
		\item
		for all $n\in\mathbb{N}$ and for all $\varphi\in\mathcal{E}(\mathbb{R}^{d+1})$, $M_n(\varphi)\colon\mathcal{E}(\mathbb{R}^{d+1})\to\mathcal{E}(\mathbb{R}^{d+1})$.
		\item
		for all $n\in\mathbb{N}$, $M_n(\varphi)$ has a polynomial dependence on $\varphi$ and, moreover, for all $j\in\mathbb{N}$, $M_n^{(2j+1)}(0)=0$, where the superscript indicates the $(2j+1)$-th functional derivative.
		\item
		$\Psi_{\cdot_Q}$ satisfies the following equation
		\begin{align}\label{Eq: renormalized equation}
			\Psi_{\cdot_Q}=\Phi-\lambda P_\chi\circledast \Psi_{\cdot_Q}^3-P_\chi\circledast M\Psi_{\cdot_Q}\,,
		\end{align}
	where $M:=\sum_{n\geq 1}\lambda^nM_n$, while $\chi$ is the cut-off function introduced in Equation \eqref{Eq: algebraic SPDE}.
	\end{enumerate}
\end{proposition}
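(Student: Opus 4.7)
The plan is to apply $\Gamma_{\cdot_Q}$ to Equation \eqref{Eq: algebraic SPDE} and then to split the resulting triple $\cdot_Q$-product on the right-hand side into the pointwise cube of $\Psi_{\cdot_Q}$ plus a remainder, which will be used to define the operators $M_n$. First I would exploit linearity of $\Gamma_{\cdot_Q}$, its compatibility with $P_\chi\circledast$ via Equation \eqref{Eq: Gamma on Ptau}, and the fact that the pointwise product on $\mathcal{A}$ is translated into the deformed product $\cdot_Q$ on $\mathcal{A}_{\cdot_Q}$ -- \textit{cf.} Corollary \ref{Cor: construction of AcdotQ} -- to arrive at Equation \eqref{eq: full renormalized equation}. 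Since $\Psi_{\cdot_Q}\in\mathcal{D}_{\mathrm{C}}'(\mathbb{R}\times\mathbb{R}^d;\operatorname{Pol})$ is generated by a smooth function (Remark \ref{Rmk: Dc smoothness and functional derivative}), the pointwise cube $\Psi_{\cdot_Q}^3$ is well-defined, and comparing with Equation \eqref{Eq: renormalized equation} reduces the problem to exhibiting a sequence $\{M_n\}_{n\in\mathbb{N}}$ with the claimed properties such that
$$\lambda\bigl(\Psi_{\cdot_Q}\cdot_Q\Psi_{\cdot_Q}\cdot_Q\Psi_{\cdot_Q}-\Psi_{\cdot_Q}^3\bigr)=M\Psi_{\cdot_Q}$$
as formal power series in $\lambda$.

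Next I would expand the triple product by iterating \eqref{Eq: formal delta-product}, or equivalently by using the exponential-of-contractions expression employed in the proof of Proposition \ref{Prop: Dc functionals algebraic structure}. This gives
$$\Psi_{\cdot_Q}\cdot_Q\Psi_{\cdot_Q}\cdot_Q\Psi_{\cdot_Q}=\Psi_{\cdot_Q}^3+R(\Psi_{\cdot_Q}),$$
where $R(\Psi_{\cdot_Q})$ collects all contractions via (extensions of) powers of $Q$ between functional derivatives of the three factors. Since each $F_j$ in \eqref{Eq: order-by-order perturbative expansion} is a polynomial in $\varphi$ of bounded degree, at each fixed order in $\lambda$ only finitely many contractions contribute and $R(\Psi_{\cdot_Q})\in\mathcal{A}_{\cdot_Q}\llbracket\lambda\rrbracket$.

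The operators $M_n$ would then be defined recursively on $n$ by matching $\lambda^n$-coefficients in $\lambda R(\Psi_{\cdot_Q})=M\Psi_{\cdot_Q}$, that is,
$$M_n\Phi=R_{n-1}-\sum_{k=1}^{n-1}M_kF_{n-k},$$
where $R_k$ denotes the $\lambda^k$-coefficient of $R(\Psi_{\cdot_Q})$. For the recursion to close, the right-hand side must vanish at $\varphi=0$, so that it can be represented as a $\varphi$-dependent linear operator evaluated on $\Phi(\cdot\,;\varphi)=\varphi$. This follows from a parity argument: by induction each $F_j$ has odd $\varphi$-degree $2j+1$, since $F_0=\Phi$ has degree $1$ and \eqref{Eq: order-by-order perturbative expansion} combines three odd-degree polynomials into one of odd degree. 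Every contraction by $Q$ strips away an even number of $\varphi$-factors, so $R(\Psi_{\cdot_Q})$ is odd in $\varphi$ at every order in $\lambda$; this yields both the required vanishing at $\varphi=0$ and, moreover, forces $M_n(\varphi)$ to be an even polynomial in $\varphi$, which gives property $(ii)$.

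The main obstacle is the last step: the representation of an odd-degree polynomial as the image of $\varphi$ under a $\varphi$-dependent linear operator is highly non-unique, and one must make a coherent choice order by order that preserves the polynomial structure and yields a well-defined linear operator on $\mathcal{E}(\mathbb{R}^{d+1})$. A practical route is to single out, within each Wick-type contraction diagram contributing to $R_{n-1}$, one distinguished leg of $\Psi_{\cdot_Q}$ (which exists by the parity of contractions) and let it play the role of the operand of $M_n(\varphi)$, absorbing the remaining legs, their functional derivatives and the accompanying $\widehat{Q}$-kernels into the operator's $\varphi$-dependent coefficient. Verifying that this bookkeeping works uniformly reduces to a combinatorial argument of the same type as those employed in Theorems \ref{Thm: Gamma cdotQ existence} and \ref{Thm: GammabulletQ existence}.
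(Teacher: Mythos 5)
Your proposal is correct and follows essentially the same route as the paper: an order-by-order induction in $\lambda$ in which the discrepancy between the $\cdot_Q$-product and the pointwise cube is recognized, via the odd-parity structure of $\Psi$ (the space $\mathcal{O}$ of Lemma \ref{Lem: expectation value of solution}), as something of the form ($\varphi$-even linear operator) applied to $\Phi$, which defines $M_n$ recursively. Two minor remarks: in your recursion the subtracted terms should be $M_k\Gamma_{\cdot_Q}(F_{n-k})$ rather than $M_kF_{n-k}$, and the representation issue you single out as the main obstacle is exactly the step the paper simply asserts (any $\tau\in\mathcal{O}$ can be written as $L\Phi$ with $L(\varphi)$ polynomial and even in $\varphi$), so your diagrammatic bookkeeping amounts to substantiating that assertion.
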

\begin{proof}
	The proof is constructive and it goes by induction over the perturbative order $n$.
	In particular, for $n=0$, Equation \eqref{Eq: renormalized equation} is satisfied since $\Psi_{\cdot_Q}=\Phi+O(\lambda)$.
	For $n=1$, focusing first on the left hand side (LHS) and subsequently on the right hand side (RHS) of Equation \eqref{Eq: renormalized equation}, it holds
	\begin{align*}
		&\textrm{LHS}
		=\Phi-\lambda P_\chi\circledast\Gamma_{\cdot_Q}(\Phi^3)+O(\lambda^2)
		=\Phi-\lambda P_\chi\circledast\Phi^3-3\lambda P_\chi\circledast C_1\Phi+O(\lambda^2)\\
		&\textrm{RHS}
		=\Phi-\lambda P_\chi\circledast \Phi^3-\lambda P_\chi\circledast M_1\Phi+O(\lambda^2)\,,
	\end{align*}
	where $C_1(x):=\chi(x)(\widehat{P}_2\circledast\chi)(x)$, $\widehat{P}_2\in\mathcal{D}'((\mathbb{R}^{d+1})^2)$ being an extension of $P^2\in\mathcal{D}'((\mathbb{R}^{d+1})^2\setminus\mathrm{Diag}_2)$, {\it cf.} Section \ref{Sec: computations at first order}.
	We can set $M_1:=3C_1$ and it satisfies \textit{(i-ii)} as well as Equation \eqref{Eq: renormalized equation} up to order $O(\lambda^2)$.
	
	By induction, we now assume that $M_k$ has been defined for all $k\leq n$ compatibly with all requirements \textit{(i-ii)} and in such a way that Equation \eqref{Eq: renormalized equation} is satisfied up to order $O(\lambda^{n+1})$. We show that we can construct $M_{n+1}$ so that Equation \eqref{Eq: renormalized equation} holds true up to order $O(\lambda^{n+2})$.
	
	To this end, we expand Equation \eqref{Eq: renormalized equation} up to order $O(\lambda^{n+2})$ and we find
	\begin{align*}
		\textrm{LHS}
		&=R_n
		-\lambda^{n+1}\sum_{j_1+j_2+j_3= n}P_\chi\circledast\Gamma_{\cdot_Q}(\Psi_{j_1}\Psi_{j_2}\Psi_{j_3})
		+O(\lambda^{n+2})
		\\\textrm{RHS}
		&=R_n
		-\lambda^{n+1}\sum_{j_1+j_2+j_3= n}P_\chi\circledast\Gamma_{\cdot_Q}(\Psi_{j_1})\Gamma_{\cdot_Q}(\Psi_{j_2})\Gamma_{\cdot_Q}(\Psi_{j_3})
		\\&-\lambda^{n+1}\sum_{\substack{k_1+k_2= n+1\\k_2\geq 1}}P_\chi\circledast M_{k_1}\Gamma_{\cdot_Q}(\Psi_{k_2})
		-\lambda^{n+1}P_\chi\circledast M_{n+1}\Phi
		+O(\lambda^{n+2})\,,
	\end{align*}
	where we isolated the summand containing $M_{n+1}$. Here $R_n$ represents the lowest order contributions for which Equation \eqref{Eq: renormalized equation} holds by the inductive hypothesis.
	We scrutinize thoroughly the remaining terms. To this avail, recall that the vector space $\mathcal{O}\subset\mathcal{A}$ is built out of elements $\tau\in\mathcal{A}$ such that $\tau^{(2n)}(\cdot;0)=0$ for all $n\in\mathbb{N}_0$. The proof of Lemma \ref{Lem: expectation value of solution} entails that $\Psi\in\mathcal{O}$ and we observe that all contributions under analysis are of the form $P_\chi\circledast F$ where $F\in\mathcal{O}$. This is a by product of the left hand side of Equation \eqref{Eq: renormalized equation} since $\Psi\in\mathcal{O}$, \textit{cf.} Lemma \ref{Lem: expectation value of solution}. This holds true also for $M_{k_1}\Gamma_{\cdot_Q}(\Psi_{k_2})$ on account of the inductive hypothesis for $M_k$.

	Observe that any element $\tau\in\mathcal{O}$ can be written as $\tau=L\Phi$ where, for all $\varphi\in\mathcal{E}(\mathbb{R}^{d+1})$, $L(\varphi)\colon\mathcal{E}(\mathbb{R}^{d+1})\to\mathcal{E}(\mathbb{R}^{d+1})$, moreover, $L(\varphi)$ has a polynomial dependence on $\varphi$ with $L^{(2j+1)}(0)=0$ for all $j\in\mathbb{N}$. The superscript still indicates the functional derivative.
	Overall we obtained
	\begin{align*}
		\Psi_{\cdot_Q}-\Phi+\lambda P_\chi\circledast \Psi_{\cdot_Q}^3+P_\chi\circledast M\Psi_{\cdot_Q}
		=\lambda^{n+1}[P_\chi\circledast L\Phi-P_\chi\circledast M_{n+1}\Phi]
		+O(\lambda^{n+2})\,.
	\end{align*}
	We set $M_{n+1}:=L$, which satisfies \textit{(i-ii)}. Hence the induction step is complete and we have proven the sought after result.
\end{proof}

\begin{remark}
	It is worth computing $M$ at second order in perturbation theory, so to pinpoint the non-local behaviour encoded in the operator $M_2$. To this end observe that, up to $O(\lambda^3)$, Equation \eqref{Eq: renormalized equation} leads to
	\begin{align*}
		&\textrm{LHS}
		=R_1
		+3\lambda^2P_\chi\circledast\Gamma_{\cdot_Q}(\Phi^2P_\chi\circledast\Phi^3)
		+O(\lambda^3)\\
		&\textrm{RHS}
		=R_1
		+3\lambda^2P_\chi\circledast(\Phi^2\Gamma_{\cdot_Q}(P_\chi\circledast\Phi^3))
		+\lambda^2 P_\chi\circledast( M_1\Gamma_{\cdot_Q}(P_\chi\circledast\Phi^3))
		-\lambda^2P_\chi\circledast M_2\Phi
		+O(\lambda^3)\,,
	\end{align*}
	where, as above, $M_1=3C_1$, with $C_1(x)=\chi(x)(\widehat{P}_2\circledast\chi)(x)$ while $R_1$ are lower order contributions.
	Fulfilling Equation \eqref{Eq: renormalized equation} modulo $O(\lambda^3)$ entails that
	\begin{align*}
		P_\chi\circledast M_2\Phi=-18P_\chi\circledast\big[[(P_\chi\circ P_\chi)\cdot P_\chi\circledast(\Phi^2+C_1)+C_2]\Phi\big]\,,
	\end{align*}
where $C_2(x,y)=\widehat{P_\chi\cdot(P_\chi\circ P_\chi)^2}(x,y)\in\mathcal{D}'((\mathbb{R}^{d+1})^2)$ denotes the chosen extension of the bi-distribution $P_\chi \cdot(P_\chi\circ P_\chi)^2\in\mathcal{D}'((\mathbb{R}^{d+1})^2\setminus\mathrm{Diag}_2)$ on the whole space.
This leads to setting $M_2$ as
	\begin{align*}
		M_2=-18[(P_\chi\circ P_\chi)P_\chi\circledast(\Phi^2+C_1)+C_2]\,,
	\end{align*}
	which is a non local operator.
\end{remark}

\begin{remark}\label{Rem: comparison with Hairer}
	A reader might wonder whether the renormalized equation that we construct is connected with the one derived in \cite[Prop 4.9]{Hairer15}. This is not a straightforward comparison since, in \cite{Hairer15}, the r\^{o}le of $\widehat{\xi}$ in Equation \eqref{Eq: SPDE} is played by a smooth function, which could be chosen for example as an $\varepsilon$-regularized smooth version of a white noise. On the contrary our analysis is devised intrinsically to encode the singularities of the white noise in the product of the algebra of functionals and, in order to connect the two approaches it would be necessary to start from \cite{Hairer15} and to discuss the limit as $\varepsilon\to 0$ of the renormalized equation. We postpone such analysis to future work. 
\end{remark}

\subsection{Diverging diagrams in the sub-critical case}\label{Sec: Diverging diagrams in the sub-critical case}

In the perturbative analysis of the stochastic $\Phi^3_d$ model outlined above, the only occurring divergence was related to $P^2$. As a consequence we had to account for a so-called renormalization freedom encoded in the function $C$ -- \textit{cf.} Equation \eqref{Eq: perturbative solution SPDE}.

It is thus natural to wonder whether, carrying the perturbative analysis to all orders in $\lambda$, the number of divergences, which need to be tamed, becomes infinite. In this section we 
show that in the so called sub-critical regime, corresponding to $d\leq3$, this is not the case. Observe that this scenario is the same which can be discussed in the framework of \cite{Hairer14,Hairer15}.

To give an answer to this issue, it is of paramount relevance observing that we do not need to consider the construction of the map $\Gamma_{\cdot_Q}$, {\it cf.} Theorem \ref{Thm: Gamma cdotQ existence}, on the whole pointwise algebra $\mathcal{A}$. On the contrary Equation \eqref{Eq: algebraic SPDE} and its perturbative expansion involve only a suitable limited number of elements of $\mathcal{A}_{\cdot_Q}$.

\begin{remark}\label{Rem: A convincing example}
To convince the reader of the last assertion observe that at order $\lambda^2$, the perturbative expansion of Equation \eqref{Eq: algebraic SPDE} leads to
\begin{align*}
	\Psi[\![\lambda]\!]=\sum_{j\geq 0}\lambda^j F_j\,,\quad
	F_0=\Phi,\quad
	F_j=-\sum_{j_1+j_2+j_3=j-1}P_\chi\circledast(F_{j_1}F_{j_2}F_{j_3})\,,\qquad j\in\mathbb{N}\,.
\end{align*}	
This formula shows manifestly that, increasing the order in $\lambda$ always leads to an increase of the number of fields $\Phi$ involved.
For example, using the notation of Remark \ref{Rem: graded algebra}, no elements of $\mathcal{M}_{2k}$ for any $k\in\mathbb{N}$ occurs in the perturbative expansion (see Lemma \ref{Lem: expectation value of solution}).
\end{remark}

To account for this new feature, we introduce a collection $\mathcal{U}$ of formal expressions which is the smallest collection of elements of $\mathcal{A}$ containing $\Phi$ and $\mathbf{1}$ and such that the following implication holds true
\begin{align*}
\tau_1,\tau_2,\tau_3\in\mathcal{U}\quad\Rightarrow\quad P_\chi\circledast(\tau_1\tau_2\tau_3)\in\mathcal{U}.
\end{align*}
Accordingly we set
\begin{align*}
\mathcal{W}\vcentcolon=\lbrace\tau_1\tau_2\tau_3\,:\,\tau_i\in\mathcal{U}\rbrace,\quad\textrm{and}\quad T\vcentcolon=\mathrm{Span}_{\mathcal{E}(\mathbb{R}^{d+1})}\lbrace\mathcal{W}\rbrace\subset\mathcal{A}.
\end{align*}
The vector space $T$ contains all possible elements needed in the description of the right hand side of Equation \eqref{Eq: algebraic SPDE}.

\begin{remark}
In order to analyze all possible divergences generated by applying the map $\Gamma_{\cdot_Q}$ to elements of $T$ it is convenient, as customary in this kind of problems, to introduce a graph representation. In particular
\begin{itemize}
	\item we associate with $\Phi$ the symbol \fiammifero,
	\item we join at their roots any two graphs to denote their pointwise product as functionals,
	\item we denote by an edge \,\propagatore\, the composition by $P_\chi$.
%	Recall that, in the example here considered, beside $P_\chi$, it is necessary also to introduce a cut-off function $\chi$ to avoid potential infrared divergences.
\end{itemize}
As concrete examples, consider
\begin{center}
	$\Phi^2=$\begin{tikzpicture}[thick,scale=1.2]
	\filldraw (.15,.25)circle (1pt);
	\filldraw (-.15,.25)circle (1pt);
	\draw (0,0) -- (.15,.25);
	\draw (0,0) -- (-.15,.25);
	\end{tikzpicture}\,,\hspace{15mm}
	$P_\chi\circledast\Phi^2=$\begin{tikzpicture}[thick,scale=1.2]
	\filldraw (.15,.35)circle (1pt);
	\filldraw (-.15,.35)circle (1pt);
	\draw (0,.2) -- (.15,.35);
	\draw (0,.2) -- (-.15,.35);
	\draw (0,0) -- (0,.2);
	\end{tikzpicture}\,.
\end{center}
The whole notation is inspired by that of \cite{Hairer14}.
Notice in particular that $\Phi=$ \fiammifero\, is motivated by the fact that the functional $\Phi$ encodes the expectation value of $P_\chi\circledast\xi$ and, in this sense, the symbol $\bullet$ on top of \,\fiammifero\, can be seen as denoting the noise $\xi$.
\end{remark}

As next step we need to encode at the level of diagrams the action of the map $\Gamma_{\cdot_Q}$. In view of the analysis in Sections \ref{Sec: Basic definitions} and \ref{Sec: construction of AcdotQ}, this can be translated as collapsing two leaves in an integration vertex. As an example, \emph{cf}. Equation \eqref{Eq: definition of Phi2},
\begin{center}
$\Gamma_{\cdot_Q}(\Phi^2)=$\begin{tikzpicture}[thick,scale=1.2]
\filldraw (.15,.25)circle (1pt);
\filldraw (-.15,.25)circle (1pt);
\draw (0,0) -- (.15,.25);
\draw (0,0) -- (-.15,.25);
\end{tikzpicture}\,+
\begin{tikzpicture}[thick,scale=1.5]
\draw (0,0) edge [out=30,in=-30] node[above] {} (0,.25);
\draw (0,0) edge [out=150,in=210] node[above] {} (0,.25);
\end{tikzpicture}\,.
\end{center}

The general strategy for proving that only a finite number of diagrams needs to be renormalized if $d\leq 3$ goes as follows. When acting with $\Gamma_{\cdot_Q}$ on an element of $T$ we work with $t\in\mathcal{D}'(U)$ where $U\subseteq\mathbb{R}^{(d+1)N}$ for a given $N$.

As we have seen in Theorem \ref{Thm: Gamma cdotQ existence}, tipically $U=\mathbb{R}^{(d+1)N}\setminus\operatorname{Diag}_{(d+1)N}$, being $\operatorname{Diag}_{(d+1)N}$ the total diagonal of $\mathbb{R}^{(d+1)N}$. It follows that we may use the results of Appendix \ref{App: Scaling degree} to discuss the existence of an extension $\widehat{t}$ of $t$ to the whole $\mathbb{R}^{(d+1)N}$.

Using the pictorial representation introduced above, we associate to each distribution $t\equiv t_\mathcal{G}$ a graph $\mathcal{G}$ with the following features:
\begin{itemize}
	\item
		$\mathcal{G}$ has $N$ vertices of valency at most $4$;
	\item
		each edge $e$ of $\mathcal{G}$ corresponds to a propagator $P_\chi(x_{s(e)},x_{t(e)})$, where $s(e)$ (\textit{resp}. $t(e)$) denotes the source (\textit{resp}. the target) of the edge $e$;
	\item
		for a given $\mathcal{G}$, the integral kernel of the distribution $t_{\mathcal{G}}$ reads
		\begin{align*}
			t_{\mathcal{G}}(x_1,\ldots,x_N)
			:=\prod_{e\in \mathcal{E}}P_\chi(x_{s(e)},x_{t(e)})\,,
		\end{align*}
		where $\mathcal{E}$ denotes the set of edges of $\mathcal{G}$.
\end{itemize}

In what follows, for a given graph $\mathcal{G}$ we shall denote by $L$ (\textit{resp.} $V$) the number of edges (\textit{resp}. vertices) of $\mathcal{G}$.
Recalling that the scaling degree of $P_\chi$ on the total diagonal of $\mathbb{R}^{2(d+1)}$ is $d$, it follows that the degree of divergence of the distribution $t_{\mathcal{G}}$ -- \textit{cf.} Theorem \ref{Thm: extension with scaling degree} -- is

\begin{align}\label{Equation: degree of divergence of (L,V)}
	\rho(t_{\mathcal{G}})
	=Ld-(N-1)(d+2)
	=:\rho(\mathcal{G})\,,
\end{align}
since $(N-1)(d+2)$ is the effective codimension of the total diagonal of $\mathbb{R}^{(d+1)N}$ -- \textit{cf.} Remark \ref{Rmk: extension with sd for submanifolds}.

Observe that, if $\rho(\mathcal{G})<0$, Theorem \ref{Thm: extension with scaling degree} applies and the distribution $t_{\mathcal{G}}$ associated with $\mathcal{G}$ admits a unique extension $\widehat{t}_{\mathcal{G}}\in\mathcal{D}'(\mathbb{R}^{dN})$ which preserves the scaling degree of $t_{\mathcal{G}}$.

In what follows we shall show that, provided $\mathcal{G}$ has a sufficiently high number of vertices $N$, then the associated distribution $t_{\mathcal{G}}$ will be such that $\rho(\mathcal{G})<0$, that is, no ambiguities occur.

\begin{remark}
When considering graphs, divergences will only occur from closed subgraphs appearing as a consequence of the action $\Gamma_{\cdot_Q}$, which collapses two leaves into a vertex.
For this reason we will be only interested in leafless graphs.
As a clarification, consider the example
\begin{center}
$\Gamma_{\cdot_Q}(\Phi^2P_\chi\circledast\Phi^2)=$
\begin{tikzpicture}[thick,scale=1.2]
\filldraw (.1,0.45)circle (1pt);
\filldraw (-.1,0.45)circle (1pt);
\filldraw (.1,0.15)circle (1pt);
\filldraw (-.1,0.15)circle (1pt);
\draw (0,0) -- (0,0.3);
\draw (0,0.3) -- (.1,0.45);
\draw (0,0.3) -- (-.1,0.45);
\draw (0,0) -- (.1,0.15);
\draw (0,0) -- (-.1,0.15);
\end{tikzpicture}\,$+$
\begin{tikzpicture}[thick,scale=1.2]
\filldraw (.1,0.45)circle (1pt);
\filldraw (-.1,0.45)circle (1pt);
\draw (0,0) -- (0,0.3);
\draw (0,0.3) -- (.1,0.45);
\draw (0,0.3) -- (-.1,0.45);
\draw (0,0) edge [out=90,in=185] node[above] {} (0.15,0.25);
\draw (0,0) edge [out=0,in=-75] node[above] {} (0.15,0.25);
\end{tikzpicture}\,$+$
\begin{tikzpicture}[thick,scale=1.2]
\filldraw (.1,0.15)circle (1pt);
\filldraw (-.1,0.15)circle (1pt);
\draw (0,0) -- (0,0.3);
\draw (0,0) -- (.1,0.15);
\draw (0,0) -- (-.1,0.15);
\draw (0,0.3) edge [out=90,in=185] node[above] {} (0.15,0.55);
\draw (0,0.3) edge [out=0,in=-75] node[above] {} (0.15,0.55);
\end{tikzpicture}\,$+$
\begin{tikzpicture}[thick,scale=1.2]
\filldraw (-.1,0.45)circle (1pt);
\filldraw (-.1,0.15)circle (1pt);
\draw (0,0) -- (0,0.3);
\draw (0,0.3) -- (.15,0.15);
\draw (0,0.3) -- (-.1,0.45);
\draw (0,0) -- (.15,0.15);
\draw (0,0) -- (-.1,0.15);
\end{tikzpicture}\,$+2$
\begin{tikzpicture}[thick,scale=1.2]
\draw (0,0) -- (0,0.3);
\draw (0,0) edge [out=90,in=185] node[above] {} (0.15,0.25);
\draw (0,0) edge [out=0,in=-75] node[above] {} (0.15,0.25);
\draw (0,0.3) edge [out=90,in=185] node[above] {} (0.15,0.55);
\draw (0,0.3) edge [out=0,in=-75] node[above] {} (0.15,0.55);
\end{tikzpicture}\,$+$
\begin{tikzpicture}[thick,scale=1.2]
\draw (0,0) -- (0,0.3);
\draw (0,0.3) -- (.15,0.15);
\draw (0,0.3) -- (-.15,0.15);
\draw (0,0) -- (.15,0.15);
\draw (0,0) -- (-.15,0.15);
\end{tikzpicture}$\,.$
\end{center}
Once we have discussed the closed diagrams \fish and \rombo, all the other terms contributing to $\Gamma_{\cdot_Q}(\Phi^2P_\chi\circledast\Phi^2)$ are known. To this end, recall that any branch \,\fiammifero\, contributes to a diagram with a smooth factor $\varphi$ and thus no divergence occurs.  As a consequence, when considering a graph, if it contains branches with a leaf, we will {\em cut} them out.\\
Moreover, if we have an admissible diagram of the form $P_\chi\circledast\mathcal{G}$, for some other graph $\mathcal{G}$, then we can work directly with $\mathcal{G}$ since $\Gamma_{\cdot_Q}(P_\chi\circledast\tau)(f;\varphi)=\Gamma_{\cdot_Q}(\tau)(P_\chi\circledast f;\varphi)$ - \emph{cf}. Equation \eqref{Eq: Gamma on Ptau}.

\end{remark}

\noindent We introduce the notion of \emph{admissible graph}.

\begin{definition}
We say that $\mathcal{G}=(L,N)$ is an \emph{admissible graph} if it can be obtained from a tree in $T$ by collapsing some or all of its leaves into vertices.
\end{definition}

\begin{example}
Examples of admissible graphs are: \fish, \rombo, \begin{tikzpicture}[thick,scale=1.2]
\draw (0,0) -- (0,0.3);
\draw (0,0) edge [out=90,in=185] node[above] {} (0.15,0.25);
\draw (0,0) edge [out=0,in=-75] node[above] {} (0.15,0.25);
\draw (0,0.3) edge [out=90,in=185] node[above] {} (0.15,0.55);
\draw (0,0.3) edge [out=0,in=-75] node[above] {} (0.15,0.55);
\end{tikzpicture}, \begin{tikzpicture}[thick,scale=1.2]
\draw (0,0) -- (0,0.3);
\draw (0,0.3) -- (.15,0.15);
\draw (0,0) -- (.15,0.15);
\draw (0,0.3) edge [out=30,in=-30] node[above] {} (0,.55);
\draw (0,0.3) edge [out=150,in=210] node[above] {} (0,.55);
\end{tikzpicture}. The first one arises from \begin{tikzpicture}[thick,scale=1.2]
\filldraw (.1,0.15)circle (1pt);
\filldraw (-.1,0.15)circle (1pt);
\draw (0,0) -- (.1,0.15);
\draw (0,0) -- (-.1,0.15);
\end{tikzpicture}; the second and the third ones from \begin{tikzpicture}[thick,scale=1.2]
\filldraw (.1,0.35)circle (1pt);
\filldraw (-.1,0.35)circle (1pt);
\filldraw (.1,0.15)circle (1pt);
\filldraw (-.1,0.15)circle (1pt);
\draw (0,0) -- (0,0.2);
\draw (0,0.2) -- (.1,0.35);
\draw (0,0.2) -- (-.1,0.35);
\draw (0,0) -- (.1,0.15);
\draw (0,0) -- (-.1,0.15);
\end{tikzpicture} whereas the last one from \begin{tikzpicture}[thick,scale=1.2]
\filldraw (.1,0.35)circle (1pt);
\filldraw (-.1,0.35)circle (1pt);
\filldraw (.1,0.15)circle (1pt);
\filldraw (0,0.45)circle (1pt);
\draw (0,0) -- (0,0.2);
\draw (0,0.2) -- (.1,0.35);
\draw (0,0.2) -- (-.1,0.35);
\draw (0,0) -- (.1,0.15);
\draw (0,0.2) -- (0,0.45);
\end{tikzpicture}.\\
On the other hand, examples of non-admissible graphs are 
\begin{tikzpicture}[thick,scale=1.2]
\draw (0,0) edge [out=90,in=185] node[above] {} (0.15,0.25);
\draw (0,0) edge [out=0,in=-75] node[above] {} (0.15,0.25);
\draw (0,0) edge [out=180,in=255] node[above] {} (-0.15,0.25);
\draw (0,0) edge [out=90,in=5] node[above] {} (-0.15,0.25);
\end{tikzpicture} and \begin{tikzpicture}[thick,scale=1.2]
\draw (0,0) -- (0,0.3);
\draw (0,0.3) -- (.15,0.15);
\draw (0,0.3) -- (-.15,0.15);
\draw (0,0) -- (.15,0.15);
\draw (0,0) -- (-.15,0.15);
\draw (0,0.3) edge [out=30,in=-30] node[above] {} (0,.55);
\draw (0,0.3) edge [out=150,in=210] node[above] {} (0,.55);
\end{tikzpicture}.
\end{example}

\begin{remark}
As one can infer from the previous examples, to prove that only a finite number of \emph{admissible graphs} needs to be renormalized, we focus our attention only on admissible graphs since any tree in $T$ can be constructed as a finite linear combination of admissible graphs.
\end{remark}

To prove the finiteness of the number of graphs which require to be renormalized, we need an estimate of the form $L\leq pN$, for some $p>0$. This tells us how many lines one can expect, at most, for an admissible graph with a given number $N$ of vertices.

Before obtaining such estimate we stress that 
\begin{itemize}
\item all graphs considered are \emph{connected};
\item with $\mathrm{Val}(v)$ we indicate the {\em valency} of a vertex $v$, namely the number of lines incident to $v$. Recall that $\mathrm{Val}(v)\in\lbrace2,3,4\rbrace$;
\end{itemize}

\begin{remark}\label{Remark: first estimate on p}
Notice that the property of a graph being connected in combination with $\mathrm{Val}(v)\leq4$ implies a first non sharp estimate for the sought after coefficient: $p\leq2$. This can be realized considering the following example:
\begin{center}
\begin{tikzpicture}[thick,scale=1.2]
\draw (0,0) -- (0,1);
\draw (0,0) -- (1,0);
\draw (0,1) -- (1,1);
\draw (1,0) -- (1,1);
\draw (0,0) edge [out=150,in=210] node[above] {} (0,1);
\draw (0,0) edge [out=-60,in=240] node[above] {} (1,0);
\draw (0,1) edge [out=60,in=120] node[above] {} (1,1);
\draw (1,0) edge [out=30,in=-30] node[above] {} (1,1);
\end{tikzpicture}
\end{center}
Regardless of its actual realization in the perturbative analysis of Equation \eqref{Eq: algebraic SPDE}, this is the graph with $N=4$ and with the greatest possible number of edges, since the valency of each vertex can be at most $4$.
In this example we considered $N=4$, but the result holds true for any $N\in\mathbb{N}$. In other words the upper bound $\mathrm{Val}(v)\leq4$ implies $L\leq2N$.
\end{remark}

The estimate on $p$ can be improved by exploiting that graphs as those in Remark \ref{Remark: first estimate on p} are actually not admissible. As a matter of fact each graph must have at least a vertex of valency 2, coming from the contraction of two leaves into a vertex.

\begin{lemma}\label{Lemma: valency 2}
Let $\mathcal{G}=(L,N)$ be an admissible graph. Then it has at least $\lceil\frac{N}{3}\rceil$ vertices of valency 2, where $\lceil\frac{N}{3}\rceil$ denotes the smallest integer bigger than $\frac{N}{3}$.
\begin{proof}
We consider the following family of trees, which can be constructed by iteration. The first ones are
\begin{center}
$\mathcal{T}_0=$\begin{tikzpicture}[thick,scale=1.2]
\filldraw (.15,.15)circle (1pt);
\filldraw (-.15,.15)circle (1pt);
\draw (0,0) -- (.15,.15);
\draw (0,0) -- (-.15,.15);
\end{tikzpicture},\qquad\qquad
$\mathcal{T}_1=$\begin{tikzpicture}[thick,scale=1.2]
\filldraw (.40,.40)circle (1pt);
\filldraw (.10,.40)circle (1pt);
\filldraw (-.40,.40)circle (1pt);
\filldraw (-.10,.40)circle (1pt);
\draw (0,0) -- (.25,.25);
\draw (0,0) -- (-.25,.25);
\draw (.25,.25) -- (.40,.40);
\draw (.25,.25) -- (.10,.40);
\draw (-.25,.25) -- (-.40,.40);
\draw (-.25,.25) -- (-.10,.40);
\end{tikzpicture}, \qquad\qquad$\mathcal{T}_2=$\begin{tikzpicture}[thick,scale=1.2]
\filldraw (.90,.65)circle (1pt);
\filldraw (.6,.65)circle (1pt);
\filldraw (-.90,.65)circle (1pt);
\filldraw (-.6,.65)circle (1pt);
\filldraw (.40,.65)circle (1pt);
\filldraw  (.1,.65)circle (1pt);
\filldraw (-.40,.65)circle (1pt);
\filldraw (-.10,.65)circle (1pt);
\draw (0,0) -- (.5,.25);
\draw (0,0) -- (-.5,.25);
\draw (.5,.25) -- (.75,.50);
\draw (.5,.25) -- (.25,.50);
\draw (-.5,.25) -- (-.75,.50);
\draw (-.5,.25) -- (-.25,.50);
\draw (.75,.50) -- (.90,.65);
\draw (.75,.50) -- (.6,.65);
\draw (-.75,.5) -- (-.90,.65);
\draw (-.75,.5) -- (-.6,.65);
\draw (.25,.5) -- (.40,.65);
\draw (.25,.5) -- (.1,.65);
\draw (-.25,.5) -- (-.40,.65);
\draw (-.25,.5) -- (-.10,.65);
\end{tikzpicture},
\end{center}
and so on. Inductively, this can be realized as $\mathcal{T}_0=\Phi^2$ and $\mathcal{T}_{n+1}=[P_\chi\circledast\mathcal{T}_{n}]^2$ showing that $\mathcal{T}_n$ is admissible for all $n\in\mathbb{N}_0$.  

Once we collapse two by two all the leaves in vertices in any possible way, this is the family of trees contained in $T$ having the smallest possible number of vertices of valency two.  %Hence, we conclude the proof of this lemma if we prove that this family of graphs, letting $n$ vary, have at least a fraction $\frac{1}{3}$ of vertices of valency 2.
The ratio $r_n$ between the number of vertices of valency 2 and the number $N$ of vertices, for the $n$-th iteration, is 
\begin{align*}
r_n=\frac{2^n+1}{{2^n}+\sum_{j=0}^n2^j}=\frac{2^n+1}{2^n3-1}.
\end{align*}
This is a decreasing function of $n$, whose limit for $n\to\infty$ is $r=\frac{1}{3}$. As a consequence, $1/3$ is a lower bound for the fraction of vertices of valency 2 for any admissible graph.
\end{proof}
\end{lemma}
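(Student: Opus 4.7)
I would prove the bound $V_2 \geq \lceil N/3 \rceil$ by a direct combinatorial counting argument on the structure of the underlying tree, rather than trying to justify rigorously the claim that the family $\{\mathcal{T}_n\}$ identified in the lemma statement is extremal: the paper's heuristic that ``$r_n$ decreases to $1/3$'' does not by itself supply a lower bound for arbitrary admissible graphs.

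\emph{Structural setup.} Every admissible graph $\mathcal{G}$ arises from a tree $\mathbf{t} \in \mathcal{W}$ by pairwise contraction of leaves. I would decompose its vertex set into $T$ triple-product vertices (one for each application of the ternary product in the recursive construction of $\mathbf{t}$) and $C$ collapsed-pair vertices, so that $N = T + C$. The key observation is that the TP-vertices, joined by the $P_\chi$-edges, form a subtree of $\mathcal{G}$; hence the number of $P_\chi$-edges is exactly $T - 1$, while the remaining edges of $\mathcal{G}$ are the $2C$ leaf-edges, each joining a TP-vertex to one of the collapsed-pair vertices.

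\emph{Handshake and linear elimination.} Every collapsed-pair vertex has valency $2$ by construction, while admissibility forces each TP-vertex to have valency in $\{2,3,4\}$. Writing $a_k$ for the number of TP-vertices of valency $k$, the handshake lemma restricted to TP-vertices together with the cardinality relation gives the pair of identities
\begin{align*}
  a_2 + a_3 + a_4 = T, \qquad 2 a_2 + 3 a_3 + 4 a_4 = 2(T - 1) + 2C.
\end{align*}
Elementary elimination yields $V_2 = C + a_2 = T + 1 - a_3/2$, whereas $a_2, a_4 \geq 0$ forces $a_3 \leq 2 \min(C - 1,\,T + 1 - C)$. Therefore
\begin{align*}
  V_2 \geq \max\bigl(T + 2 - C,\,C\bigr) \geq \frac{T + C}{3} = \frac{N}{3},
\end{align*}
the last inequality following from a short case split on the sign of $C - (T + 2)/2$. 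Since $V_2$ is an integer, this upgrades to $V_2 \geq \lceil N/3 \rceil$.

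\emph{Main obstacle.} The genuine subtlety lies in the structural bookkeeping rather than in the arithmetic: one has to extract from the recursive definitions of $\mathcal{U}$ and $\mathcal{W}$ that every leafless admissible graph does canonically split as $N = T + C$ with the TP-vertices forming a subtree under the $P_\chi$-edges, and that every collapse is indeed pairwise (so that each new contracted vertex is of valency $2$; were larger clusters allowed the lemma could be violated, as a single $\Phi^3$ contracted to one valency-$3$ vertex already shows). The extremal family $\{\mathcal{T}_n\}$ saturates the inequality: computing gives $V_2(\mathcal{T}_n) = 2^n + 1$ and $N(\mathcal{T}_n) = 3 \cdot 2^n - 1$, hence $\lceil N/3 \rceil = 2^n$ and $V_2 - \lceil N/3 \rceil = 1$, confirming that the factor $1/3$ cannot be improved.
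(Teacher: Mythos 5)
Your proposal is correct, and it takes a genuinely different route from the paper. The paper's own proof is an extremal-family argument: it exhibits the binary-tree family $\mathcal{T}_0=\Phi^2$, $\mathcal{T}_{n+1}=[P_\chi\circledast\mathcal{T}_n]^2$, asserts (without justification) that after pairwise collapse this family minimizes the fraction of valency-$2$ vertices, computes $r_n=\frac{2^n+1}{3\cdot 2^n-1}\searrow\frac13$, and concludes. Your criticism of that logic is accurate: the monotone ratio of one family does not by itself bound arbitrary admissible graphs, so the paper's argument is heuristic where yours is a proof. Your double-counting replaces the extremality claim by the two identities $a_2+a_3+a_4=T$ and $2a_2+3a_3+4a_4=2(T-1)+2C$, and the resulting bound $V_2=T+1-\tfrac{a_3}{2}\geq\max(T+2-C,\,C)\geq \tfrac{N}{3}$ is correct (I checked the elimination and the case split; on the dumbbell/rombo graphs, $T=C=2$, it is even attained, and in fact your $\max$ bound gives the strict inequality $V_2>N/3$, so it also covers the paper's literal reading of $\lceil N/3\rceil$ as the smallest integer exceeding $N/3$). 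What each approach buys: the paper's computation identifies the asymptotically extremal family and shows the constant $1/3$ is sharp — which your closing check with $V_2(\mathcal{T}_n)=2^n+1$, $N(\mathcal{T}_n)=3\cdot2^n-1$ reproduces — while your argument actually establishes the lower bound for every admissible graph and yields the sharper intermediate estimate in terms of $T$ and $C$. The one caveat, which you correctly single out as the main obstacle, is the structural bookkeeping: that every collapse is pairwise (so collapsed vertices have valency exactly $2$), that the surviving TP-vertices carry exactly $T-1$ propagator edges forming a tree plus $2C$ edges to collapsed vertices, and that after cutting uncollapsed leaves (and stripping dangling $P_\chi\circledast$ factors) all TP-vertices have valency in $\{2,3,4\}$; the last point is an assumption the paper itself makes explicitly (``$\mathrm{Val}(v)\in\{2,3,4\}$''), so your proof operates under exactly the conventions of the text, but a fully self-contained treatment would derive it from the recursive definitions of $\mathcal{U}$ and $\mathcal{W}$, as you note.
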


In order to further improve our estimate of $p$, we can also compute an upper bound for the fraction of vertices of valency 4 for any admissible graph.

\begin{lemma}\label{Lemma: valency 4}
Let $\mathcal{G}=(L,N)$ be an admissible graph. Then, at most half of its vertices are of valency 4.
\begin{proof}
We adopt a strategy similar to the one used in Lemma \ref{Lemma: valency 2}. Hence, we introduce a family of trees which maximizes the fraction of vertices of valency 4.
We consider again a sequence of admissible graphs of the following form: $\mathcal{T}_0=P_\chi\circledast\Phi^3$ and $\mathcal{T}_{n+1}=P_\chi\circledast(\mathcal{T}_n)^3$.
From a graphical point of view, this translates to
\begin{center}
$\mathcal{T}_0=$\begin{tikzpicture}[thick,scale=1.2]
\filldraw (.15,.35)circle (1pt);
\filldraw (-.15,.35)circle (1pt);
\filldraw  (0,.4)circle(1pt);
\draw (0,0) -- (0,.15);
\draw (0,.15) -- (.15,.35);
\draw (0,.15) -- (-.15,.35);
\draw (0,.15) -- (0,.4);
\end{tikzpicture}, \qquad\qquad
$\mathcal{T}_1=$\begin{tikzpicture}[thick,scale=1.2]
\filldraw (.75,.5)circle (1pt);
\filldraw (.25,.5)circle (1pt);
\filldraw (-.75,.50)circle (1pt);
\filldraw (-.25,.50)circle (1pt);
\filldraw (.25,.85)circle (1pt);
\filldraw (-.25,.85)circle (1pt);
\filldraw (.5,.65)circle (1pt);
\filldraw (-.5,.65)circle (1pt);
\filldraw (0,1)circle (1pt);
\draw (0,-.2) -- (0,0);
\draw (0,0) -- (.5,.25);
\draw (0,0) -- (-.5,.25);
\draw (0,0) -- (0,.6);
\draw (0,.6) -- (.25,.85);
\draw (0,.6) -- (0,1);
\draw (0,.6) -- (-.25,.85);
\draw (.5,.25) -- (.75,.50);
\draw (.5,.25) -- (.5,.65);
\draw (.5,.25) -- (.25,.50);
\draw (-.5,.25) -- (-.75,.50);
\draw (-.5,.25) -- (-.5,.65);
\draw (-.5,.25) -- (-.25,.50);
\end{tikzpicture},
\end{center}
and so on.

Notice that, for all $n$, these trees have an odd number of leaves. As a consequence, in order to maximize the vertices of valency 4, instead of cutting a branch we add a branch at the lowest vertex. In other words, we consider $\mathcal{Q}_n=\Phi\mathcal{T}_n$. At a graphical level this amounts to
\begin{center}
$\mathcal{Q}_0=$\begin{tikzpicture}[thick,scale=1.2]
\filldraw (.15,.35)circle (1pt);
\filldraw (-.15,.35)circle (1pt);
\filldraw  (0,.4)circle(1pt);
\filldraw  (.15,.15)circle(1pt);
\draw (0,0) -- (.15,.15);
\draw (0,0) -- (0,.15);
\draw (0,.15) -- (.15,.35);
\draw (0,.15) -- (-.15,.35);
\draw (0,.15) -- (0,.4);
\end{tikzpicture}, \qquad\qquad$\mathcal{Q}_1=$\begin{tikzpicture}[thick,scale=1.2]
\filldraw (.75,.5)circle (1pt);
\filldraw (.25,.5)circle (1pt);
\filldraw (-.75,.50)circle (1pt);
\filldraw (-.25,.50)circle (1pt);
\filldraw (.25,.85)circle (1pt);
\filldraw (-.25,.85)circle (1pt);
\filldraw (.5,.65)circle (1pt);
\filldraw (-.5,.65)circle (1pt);
\filldraw (0,1)circle (1pt);
\filldraw (.5,-.05)circle (1pt);
\draw (0,-.3) -- (.5,-.05);
\draw (0,-.3) -- (0,0);
\draw (0,0) -- (.5,.25);
\draw (0,0) -- (-.5,.25);
\draw (0,0) -- (0,.6);
\draw (0,.6) -- (.25,.85);
\draw (0,.6) -- (0,1);
\draw (0,.6) -- (-.25,.85);
\draw (.5,.25) -- (.75,.50);
\draw (.5,.25) -- (.5,.65);
\draw (.5,.25) -- (.25,.50);
\draw (-.5,.25) -- (-.75,.50);
\draw (-.5,.25) -- (-.5,.65);
\draw (-.5,.25) -- (-.25,.50);
\end{tikzpicture},
\end{center}
and so on and so forth. This family of trees, once we collapse the leaves into vertices, maximizes the fraction of vertices of valency 4. We compute such a fraction as a function of $n$. The number of vertices of valency 4 in $\mathcal{Q}_n$ is $\sum_{j=0}^n3^j$, whereas the total number of vertices in $\mathcal{Q}_n$, after the collapse of the leaves into vertices, is $3+\sum_{j=0}^n3^{j+1}-\frac{1}{2}(3^{n+1}+1)$. The ratio $r_n$ is
\begin{align*}
r_n=\frac{\sum_{j=0}^n3^j}{3+\sum_{j=0}^n3^{j+1}-\frac{1}{2}(3^{n+1}+1)}=\frac{3^{n+1}-1}{3^{n+1}2+2}.
\end{align*}
This is an increasing function of $n$ and its limit as $n\to\infty$ is $r=\frac{1}{2}$. As a consequence, $1/2$ is an upper bound for the fraction of vertices of valency 4 for any admissible graph.
\end{proof}
\end{lemma}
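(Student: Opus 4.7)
The plan is to identify the family of admissible graphs in which the fraction of valency-$4$ vertices is maximized, compute the limit of that fraction explicitly, and conclude that $1/2$ is an upper bound. Structurally, any tree in $T$ is generated by iterated applications of the rule $P_\chi\circledast(\tau_1\tau_2\tau_3)$ together with the basic generator $\Phi$; before collapsing leaves, an internal vertex (one arising from a $P_\chi$ application) has one edge going up to its parent and three edges going down to its children, so it carries valency $4$, while a leaf carries valency $1$. The operation of $\Gamma_{\cdot_Q}$ merges pairs of leaves into vertices of valency $2$.

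First, I would single out the extremal family $T_0=P_\chi\circledast\Phi^3$, $T_{n+1}=P_\chi\circledast(T_n)^3$, since these recursively pack the maximum number of valency-$4$ vertices per level of depth: every $P_\chi$ application is saturated by exactly three subtrees. The number of internal valency-$4$ vertices in $T_n$ is $\sum_{j=0}^{n}3^j=(3^{n+1}-1)/2$, while the number of leaves is $3^{n+1}$. Because an odd number of leaves cannot be fully paired under $\Gamma_{\cdot_Q}$, I correct the parity by considering instead $Q_n:=\Phi\,T_n$, so that the total leaf count becomes $3^{n+1}+1$, which is even and hence can be entirely absorbed into valency-$2$ vertices.

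Next, I would carry out the bookkeeping. After collapsing all leaves pairwise in $Q_n$, the resulting admissible graph has $(3^{n+1}-1)/2$ valency-$4$ vertices and $(3^{n+1}+1)/2$ valency-$2$ vertices, giving a total of $3^{n+1}$ vertices. Hence the fraction of valency-$4$ vertices is
\begin{equation*}
r_n=\frac{3^{n+1}-1}{2(3^{n+1}+1)},
\end{equation*}
which is strictly increasing in $n$ and satisfies $\lim_{n\to\infty}r_n=\tfrac12$, so $r_n<\tfrac12$ for every $n$.

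To close the argument I would need to show that no other admissible graph improves on this ratio. The idea is an exchange/compression argument: any admissible graph $\mathcal{G}$ arises from a tree built with $P_\chi\circledast(\cdot)^k$ with $k\leq 3$, followed by pairwise leaf collapses; if any $P_\chi$-vertex is not saturated with exactly three descendants, replacing it with a saturated block strictly increases the valency-$4$ count without increasing the number of valency-$2$ vertices beyond what is dictated by the parity of leaves. The main obstacle I foresee is turning this informal compression into a rigorous monotonicity statement, since the leaf-pairing step interacts non-trivially with the tree structure; careful handling of the parity correction (the role played by $\Phi$ in $Q_n$) will be needed to ensure that no admissible graph lies strictly above the curve $r_n$.
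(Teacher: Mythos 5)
Your construction is essentially the paper's own proof: the same extremal family $\mathcal{T}_n$ with the parity correction $\mathcal{Q}_n=\Phi\,\mathcal{T}_n$, the same count of $\tfrac{3^{n+1}-1}{2}$ valency-$4$ vertices, and the same increasing ratio tending to $1/2$ (note only that the total vertex count after collapsing the leaves is $3^{n+1}+1$, not $3^{n+1}$, since the root survives as a valency-$2$ vertex; your displayed formula for $r_n$ already uses the correct denominator, so the conclusion is unaffected). The extremality step you flag as the remaining obstacle is handled in the paper at exactly the same level of rigor -- it is simply asserted that this family maximizes the fraction of valency-$4$ vertices -- so your proposal does not fall short of the paper's argument.
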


\begin{proposition}\label{Proposition: bound on p}
With the notation above, $p\leq\frac{19}{12}$, \emph{i.e.}, $L\leqslant\frac{19}{12}N$.
\end{proposition}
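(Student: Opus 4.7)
The plan is to turn the bound on $L$ into a linear program in the valency statistics of the graph. By the handshake lemma, if $N_2, N_3, N_4$ denote respectively the number of vertices of valency $2, 3, 4$ in an admissible graph $\mathcal{G}$ (no other valencies occur, since each vertex of $\mathcal{G}$ either comes from a $\Phi$ leaf, from a cubic vertex $P_\chi \circledast(\cdot\cdot\cdot)$, or from the collapse of a pair of leaves onto such a vertex), one has $N = N_2 + N_3 + N_4$ and $2L = 2N_2 + 3N_3 + 4N_4$.

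The first step is to eliminate $N_3$ using these two relations. Substituting $N_3 = N - N_2 - N_4$ into the expression for $L$ yields the identity
\begin{align*}
L = \frac{3}{2}N - \frac{1}{2}N_2 + \frac{1}{2}N_4 \,.
\end{align*}
Hence maximizing $L$ amounts to minimizing $N_2$ and maximizing $N_4$ under the constraints provided by the previous two lemmas, namely $N_2 \geq N/3$ (Lemma \ref{Lemma: valency 2}) and $N_4 \leq N/2$ (Lemma \ref{Lemma: valency 4}).

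The second step is a one-line optimization: plugging in the extremal values $N_2 = N/3$ and $N_4 = N/2$ gives
\begin{align*}
L \leq \frac{3}{2}N - \frac{1}{2}\cdot\frac{N}{3} + \frac{1}{2}\cdot\frac{N}{2} = \frac{18 - 2 + 3}{12}\,N = \frac{19}{12}\,N \,,
\end{align*}
which is the claim. I do not expect any real obstacle here, since the two preceding lemmas already do the combinatorial work; the only subtlety is to check that the two bounds can be imposed simultaneously, which is automatic because they are bounds on disjoint quantities and $N_3 = N/6 \geq 0$ in the extremal configuration.
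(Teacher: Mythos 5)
Your argument is correct and rests on exactly the same combinatorial inputs as the paper, namely Lemma \ref{Lemma: valency 2} ($N_2\geq N/3$) and Lemma \ref{Lemma: valency 4} ($N_4\leq N/2$), together with the standing assumption $\mathrm{Val}(v)\in\{2,3,4\}$. The difference is purely in the bookkeeping: the paper argues by a greedy edge-adding narrative (start with $N$ edges to give every vertex valency $2$, add $N/2$ edges to saturate half the vertices at valency $4$, then $N/12$ more to use up the remaining allowed valency-$3$ vertices), whereas you eliminate $N_3$ via the handshake lemma and read off $L=\tfrac{3}{2}N-\tfrac{1}{2}N_2+\tfrac{1}{2}N_4$, after which the bound follows by monotonicity in $N_2$ and $N_4$. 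Your formulation is arguably the cleaner one, since it manifestly yields an upper bound valid for \emph{every} admissible graph rather than describing an extremal construction. One small remark: the closing worry about whether the two constraints can be ``imposed simultaneously'' is not needed for the inequality itself --- since $L$ is decreasing in $N_2$ and increasing in $N_4$, the two bounds can be inserted independently; simultaneity (equivalently $N_3=N/6\geq 0$ in the extremal configuration) would only be relevant if you wanted to argue sharpness of the constant $\tfrac{19}{12}$, which neither you nor the paper claims.
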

\begin{proof}
The proof is based on the previous results, in particular Lemma \ref{Lemma: valency 2} and Lemma \ref{Lemma: valency 4}. Consider a graph with $N$ vertices. In order to have all of them of valency at least 2, we need $N$ lines. Then, on account of Lemma \ref{Lemma: valency 4}, we can add at most $N/2$ lines in order to get the upper bound of half of the vertices having valency 4. In addition, on account of Lemma \ref{Lemma: valency 2}, the vertices of valency 2 must be at least $N/3$. As a consequence, in order to maximize the number of lines, we can still add $\frac{1}{2}(\frac{N}{2}-\frac{N}{3})=\frac{N}{12}$ lines, where the factor $\frac{1}{2}$ comes from the fact that each line accounts for two vertices. Summarizing, we started with $N$ lines, adding additional $N/2$ lines in order to saturate the vertices of valency four and $N/12$ lines to account that, in the worst case scenario, we can only have $N/3$ vertices of valency two. In other words
\begin{align*}
L\leq N+\frac{N}{2}+\frac{N}{12}=\frac{19}{12}N.
\end{align*}
%See Remark \ref{Remark: graphical proof} to see a graphical example of this proof.
\end{proof}

\begin{remark}\label{Remark: graphical proof}
In order to better grasp the idea underneath Proposition \ref{Proposition: bound on p}, we discuss an explicit example. Consider the case $N=9$. The first step consists of adding $9$ lines so that all the vertices are of valency at least 2. This procedure brings no arbitrariness.
\begin{center}
\begin{tikzpicture}[thick,scale=1.2]
\draw (0,0) -- (1,0);
\draw (1,0) -- (1,.717);
\draw (1,.717) -- (1.717,.717);
\draw (1.717,.717) -- (1.717,1.717);
\draw (1.717,1.717)-- (1,2.434);
\draw (1,2.434) -- (0,2.434);
\draw (0,2.434) -- (-.717,1.717);
\draw (-.717,1.717) -- (-.717,.717);
\draw (-.717,.717) -- (0,0);
\end{tikzpicture}
\end{center}
Next, we add $4$ lines in order to account for worst case scenario, namely $\lfloor N/2\rfloor=4$ vertices of valency 4:
\begin{center}
\begin{tikzpicture}[thick,scale=1.2]
\draw (0,0) -- (1,0);
\draw (1,0) -- (1,.717);
\draw (1,.717) -- (1.717,.717);
\draw (1.717,.717) -- (1.717,1.717);
\draw (1.717,1.717)-- (1,2.434);
\draw (1,2.434) -- (0,2.434);
\draw (0,2.434) -- (-.717,1.717);
\draw (-.717,1.717) -- (-.717,.717);
\draw (-.717,.717) -- (0,0);
\draw (0,0) edge [out=210,in=240] node[above] {} (-.717,.717);
\draw (-.717,.717) edge [out=165,in=195] node[above] {} (-.717,1.717);
\draw (-.717,1.717) edge [out=120,in=150] node[above] {} (0,2.434);
\draw (0,0) -- (0,2.434);
\end{tikzpicture}
\end{center}
The next step consists of implementing Lemma \ref{Lemma: valency 2}. This entails that we must have at least $\lceil N/3\rceil=3$ vertices of valency 2. As a consequence, since in the last graph we have $5$ vertices of valency 2, we can add one line, obtaining at the end of the day
\begin{center}
\begin{tikzpicture}[thick,scale=1.2]
\draw (0,0) -- (1,0);
\draw (1,0) -- (1,.717);
\draw (1,.717) -- (1.717,.717);
\draw (1.717,.717) -- (1.717,1.717);
\draw (1.717,1.717)-- (1,2.434);
\draw (1,2.434) -- (0,2.434);
\draw (0,2.434) -- (-.717,1.717);
\draw (-.717,1.717) -- (-.717,.717);
\draw (-.717,.717) -- (0,0);
\draw (0,0) edge [out=210,in=240] node[above] {} (-.717,.717);
\draw (-.717,.717) edge [out=165,in=195] node[above] {} (-.717,1.717);
\draw (-.717,1.717) edge [out=120,in=150] node[above] {} (0,2.434);
\draw (0,0) -- (0,2.434);
\draw (1,0) edge [out=165,in=195] node[above] {} (1,.717);
\end{tikzpicture}
\end{center}
Observe that the outcome abides to the conclusions of Proposition \ref{Proposition: bound on p}, since here $L=14<\frac{19}{12}N=\frac{57}{4}$. 
\end{remark}

\begin{theorem}\label{Theorem: finite number of graphs}
If $d\leq 3$, only a finite number of admissible graph needs to be renormalized.
\begin{proof}
On account of the previous results, given an admissible graph $\mathcal{G}=(L,N)$, and recalling both Equation \eqref{Equation: degree of divergence of (L,V)} and Proposition \ref{Proposition: bound on p}, the associated degree of divergence is
\begin{align*}
\rho(\mathcal{G})=Ld-(N-1)(d+2)\leq\frac{19}{12}Nd-(N-1)(d+2)=N\bigg(\frac{7}{12}d-2\bigg)+d+2.
\end{align*}
It turns out that, only if $d\leq 3$, hence only in the subcritical regime, $\rho(\mathcal{G})$ becomes negative increasing sufficiently $N$.
Thus only a finite number of admissible graphs have a non negative degree of divergence and needing possibly to be renormalized.
\end{proof}
\end{theorem}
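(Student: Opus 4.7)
The plan is essentially to combine the two ingredients already at hand: the exact formula for the degree of divergence in Equation \eqref{Equation: degree of divergence of (L,V)}, namely $\rho(\mathcal{G}) = Ld - (N-1)(d+2)$, and the structural upper bound $L \leq \frac{19}{12}N$ from Proposition \ref{Proposition: bound on p}. Substituting the latter into the former yields
\begin{align*}
\rho(\mathcal{G}) \leq \frac{19}{12}Nd - (N-1)(d+2) = N\Bigl(\frac{7}{12}d - 2\Bigr) + (d+2)\,.
\end{align*}
The coefficient $\frac{7}{12}d - 2$ is strictly negative precisely when $d < 24/7$, which, for integer $d$, corresponds to $d \leq 3$. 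Thus in the subcritical regime the upper bound on $\rho(\mathcal{G})$ tends to $-\infty$ as $N\to\infty$, and therefore there exists $N_0 = N_0(d)$ such that $\rho(\mathcal{G}) < 0$ for every admissible graph with $N > N_0$ vertices.

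Once the divergence is strictly negative, Theorem \ref{Thm: extension with scaling degree} guarantees the existence and uniqueness of a scaling-degree-preserving extension of $t_{\mathcal{G}}$ to the whole of $\mathbb{R}^{(d+1)N}$, so no renormalization freedom arises for those graphs. It therefore remains to observe that for each fixed $N \leq N_0$ there are only finitely many admissible graphs on $N$ vertices: each vertex has valency at most $4$, so $L \leq 2N$, and the number of multigraphs with $N$ labelled vertices and at most $2N$ edges is finite. Hence the set of admissible graphs $\mathcal{G}$ with $\rho(\mathcal{G}) \geq 0$ is finite, which is the statement to be proved.

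No step poses a real obstacle, the content of the theorem is already reduced to arithmetic once Proposition \ref{Proposition: bound on p} is in place; the only non-trivial input was the combinatorial bound $L \leq \frac{19}{12}N$, whose proof occupied Lemmas \ref{Lemma: valency 2}--\ref{Lemma: valency 4} and Proposition \ref{Proposition: bound on p}. The remaining writing task is simply to assemble the inequality, point out the threshold $d = 24/7$, and invoke the finiteness of admissible graphs of bounded size.
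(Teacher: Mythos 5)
Your proposal is correct and follows essentially the same route as the paper: substituting the bound $L\leq\frac{19}{12}N$ of Proposition \ref{Proposition: bound on p} into Equation \eqref{Equation: degree of divergence of (L,V)} to get $\rho(\mathcal{G})\leq N(\tfrac{7}{12}d-2)+d+2$, which is eventually negative exactly for $d\leq 3$. Your added remarks (the explicit threshold $d<24/7$, the appeal to Theorem \ref{Thm: extension with scaling degree} for $\rho<0$, and the finiteness of admissible graphs with a bounded number of vertices) only make explicit what the paper leaves implicit.
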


\begin{remark}\label{Rem: not sharp but enough}
This above proof does not give a sharp estimate on the number of diagrams which needs to be renormalized. It is actually an overestimate since only if a graph lies in $T$, it needs to be accounted for. In the above analysis we discarded this fact and we considered a larger class of graphs which is not necessarily subordinated to Equation \eqref{Eq: algebraic SPDE}, though all those lying in $T$ are included.
\end{remark}

\appendix
\section{Wavefront Set: basic definitions and results}\label{App: Wave Front Set kurzgesagt}

In this appendix we recollect some basic concepts and results proper of microlocal analysis, which play a key r\^ole in the paper. We base our summary on \cite[Ch. 8]{Hormander-I-03} and we give no proof of the various statements since they are standard and they can be found in the mentioned reference. Our goal is only to facilitate a smooth reading of the main body of this work in which we often make use of several microlocal techniques, without the need of constantly consulting the literature. See also the review \cite{BDH14}.

%\begin{definition}\label{Def: Fourier transform}
%	Let $U\subseteq\mathbb{R}^n$ $t\in\mathcal{E}'(U)$ be a distribution of compact support.
%	The Fourier transform of $t$ is the smooth function $\widehat{t}\in\mathcal{E}(\mathbb{R}^n)$ defined by
%	\begin{align}\label{Eq: Fourier transform of distribution}
%		\widehat{t}(\xi):=t(e_\xi)\,,
%	\end{align}
%	being $e_\xi(x):=e^{ix\xi}$.
%\end{definition}
\begin{definition}\label{Def: Wave Front Set}
	Let $U\subseteq\mathbb{R}^n$ and $t\in\mathcal{D}'(U)$.
	The wave front set of $t$ is the subset $\operatorname{WF}(t)\subseteq T^*U\setminus\{0\}$ such that $(x_0,\xi_0)\in T^*U\setminus\{0\}$ is \textit{not} in $\operatorname{WF}(t)$ if and only if there exists:
	\begin{enumerate}[(i)]
		\item
		an open neighbourhood $U_{x_0}\subseteq U$ containing $x_0$,
		\item
		an open conic neighbourhood $V_{\xi_0}\subseteq\mathbb{R}^n$ of $\xi_0\in\mathbb{R}^n$,
		\item
		a test function $f\in\mathcal{D}(U_{x_0})$ with $f(x_0)\neq 0$,
	\end{enumerate}	
	such that
	\begin{align*}
		\sup_{\xi\in V_{\xi_0}}|\xi|^k\,|\mathcal{F}(ft)(\xi)|<+\infty\,,\qquad \forall \, k\in\mathbb{Z}_+\,,
	\end{align*}
	where $\mathcal{F}$ denotes the Fourier transform.
\end{definition}

\begin{remark}
	Observe that, in view of Definition \ref{Def: Wave Front Set}, $\operatorname{WF}(t)$, $t\in\mathcal{D}^\prime(U)$, is a closed subset of $T^*U\setminus\{0\}$ which is diffeomorphism invariant. In view of this last property Definition \ref{Def: Wave Front Set} can be naturally generalized to elements of $\mathcal{D}^\prime(M)$, where $M$ is a smooth manifold.
\end{remark}

\begin{example}\label{Ex: WF of Dirac delta on total diagonal}
	Let $\mathrm{Diag}_n\subseteq\mathbb{R}^n$ the total diagonal of $\mathbb{R}^n$ -- \textit{i.e.} $(x_1,\ldots,x_n)\in\mathrm{Diag}_n$ if and only if $x_1=\ldots=x_n$.
	We denote by $\delta_{\mathrm{Diag}_n}\in\mathcal{D}'(\mathbb{R}^n)$ the Dirac delta distribution centred at $\operatorname{Diag}_n$, that is,
	\begin{align*}
		\delta_{\mathrm{Diag}_n}(f)
		:=\int_{\mathbb{R}}f(x,\ldots,x)\mathrm{d}x
		\qquad\forall\, f\in\mathcal{D}(\mathbb{R}^n)\,.
	\end{align*}
	A direct inspection entails that
	\begin{align}\label{Eq: WF of Dirac delta on total diagonal}
		\operatorname{WF}(\delta_{\mathrm{Diag}_n})
		=\{(x_1,\ldots,x_n,\xi_1,\ldots,\xi_n)\in T^*\mathbb{R}^n\setminus\{0\}\,|\,x_1=\ldots=x_n\,,\,\sum_{\ell=1}^n\xi_\ell=0
		\}\,.
	\end{align}
\end{example}

\noindent In the following theorem we list some of the main properties of the wavefront set of a distribution:

\begin{theorem}\label{Thm: WF results}
	Let $U\subseteq\mathbb{R}^n$ be an open subset. The following statements hold true:
	\begin{enumerate}
		\item
		If $D$ is a differential operator and $t\in\mathcal{D}'(U)$,
		\begin{align}\label{Eq: WF bound for derivatives}
			\operatorname{WF}(Dt)
			\subseteq\operatorname{WF}(t)
			\subseteq\operatorname{WF}(Dt)
			\cup\operatorname{Char}(D)\,,
		\end{align}
		where $\operatorname{Char}(D)$ denotes the characteristic set of $D$, namely
		\begin{align*}
			\operatorname{Char}(D)
			:=\{(x,\xi)\in T^*U\setminus\{0\}\,|\,\sigma_D(x,\xi)=0\}\,,
		\end{align*}
		$\sigma_D$ being the principal symbol of $D$.
		\item - \cite[Thm.8.2.10]{Hormander-I-03}:
		Let $t_1,t_2\in\mathcal{D}'(U)$.
		If
		\begin{align}\label{Eq: WF product condition}
			(x,0)\notin\{(x,\xi_1+\xi_2)\in T^*U\;|\;(x,\xi_1)\in\operatorname{WF}(t_1)\,,\,(x,\xi_2)\in\operatorname{WF}(t_2)\}\,,
		\end{align}
		then the product $t_1\cdot t_2:=(t_1\otimes t_2)\cdot\delta_{\mathrm{Diag}_2}\in\mathcal{D}'(U)$ is well-defined and
		\begin{align}\label{Eq: WF product bound}
			\operatorname{WF}(t_1\cdot t_2)
			\subseteq\{(x,\xi_1+\xi_2)\in T^*U\;|\;(x,\xi_1)\in\operatorname{WF}(t_1)\,,\,(x,\xi_2)\in\operatorname{WF}(t_2)\}\,.
		\end{align}
%		(Notice that the last set is contained in $T^*U\setminus\{0\}$ on account of equation \eqref{Eq: WF product condition}.)
		\item - \cite[Thm. 8.2.12-13]{Hormander-I-03}.
		Let $V\subseteq\mathbb{R}^k$ be an open subset, let $K\in\mathcal{D}^\prime(U\times V)$, $t\in\mathcal{E}^\prime(V)$ and set
		\begin{align}\label{Eq: WF 2-component projection}
			\operatorname{WF}^\prime_2(K)
			:=\{(x_2,\xi_2)\in T^*V\,|\,
			\exists x_1\in U\,,\,(x_1,x_2,0,-\xi_2)\in\operatorname{WF}(K)\}\,.
		\end{align}
		It holds that, if
		\begin{align}\label{Eq: WF convolution condition}
			\operatorname{WF}^\prime_2(K)\cap\operatorname{WF}(t)=\emptyset\,,
		\end{align}
		then $K\circledast t\in{D}^\prime(U)$ where
		\begin{align}\label{Eq: convolution definition}
			[K\circledast t](f)
			:=K(f\otimes t)
			:=[K\cdot(1_n\otimes t)](f\otimes 1_k)\,,\qquad\forall\, f\in\mathcal{D}(U)\,.
		\end{align}
%		\footnote{
%			Here $K\cdot(1_n\otimes t)$ denotes the product between the distribution $K$ of integral kernel $K(\widehat{y}_n,\widehat{z}_k)$ and the distribution $1_n\otimes t$ with integral kernel $1_n(\widehat{y}_n)t(\widehat{z}_k)$.
% 		}
		Furthermore 
		\begin{align}\label{Eq: WF convolution bound}
			\operatorname{WF}(K\circledast t)
			\subseteq\operatorname{WF}_1(K)
			\cup\operatorname{WF}'(K)\circ\operatorname{WF}(t)\,,
		\end{align}
		where
		\begin{align}\label{Eq: WF 1-component projection}
			\operatorname{WF}_1(K)
			&:=\{(x_1,\xi_1)\in T^*U\setminus\{0\}\;|\;
			\exists x_2\in V\,,\,(x_1,x_2,\xi_1,0)\in\operatorname{WF}(K)
			\}\,,\\
			\label{Eq: Primed WF}
			\operatorname{WF}^\prime(K)&:=\{(x_1,x_2,\xi_1,\xi_2)\in T^*(U\times U)\setminus\{0\}\;|\;(x_1,x_2,\xi_1,-\xi_2)\in\operatorname{WF}(K)\}\\
			\label{Eq: WF composition}
			\operatorname{WF}^\prime(K)\circ\operatorname{WF}(t)
			&:=\{
			(x_1,\xi_1)\in T^*U\setminus\{0\}\;|\;
			\exists (x_2,\xi_2)\in\operatorname{WF}(t)\,,\,(x_1,x_2,\xi_1,-\xi_2)\in\operatorname{WF}(K)
			\}\,.
		\end{align}
		\item
		- \cite[Thm. 8.2.14]{Hormander-I-03}:
		Let $V\subseteq\mathbb{R}^k$ be an open subset and let $K_1\in\mathcal{D}'(U\times V)$ and $K_2\in\mathcal{D}'(V\times U)$ be such that
		\begin{align}\label{Eq: WF composition condition}
			\operatorname{WF}^\prime_2(K_1)\cap\operatorname{WF}_1(K_2)=\emptyset\,.
		\end{align}
		In addition assume that $\operatorname{pr}_1(\operatorname{supp}(K_2))\cap\operatorname{pr}_2(\operatorname{supp}(K_1))$ is compact, where $\operatorname{pr}_1\colon V\times U\to V$, $\operatorname{pr}_2\colon U\times V\to V$ are the canonical projections.		
		Then $K_1\circ K_2\in\mathcal{D}^\prime(U\times U)$ is completely defined by
		\begin{align}\label{Eq: distribution composition}
			(K_1\circ K_2)(f_1\otimes f_2)
			:=[(K_1\otimes K_2)\cdot(1_n\otimes\delta_{\mathrm{Diag}_2}\otimes 1_n)](f_1\otimes 1_{2k}\otimes f_2)\,\quad\forall\, f_1,f_2\in\mathcal{D}(U)\,,
		\end{align}
		where $\delta_{\mathrm{Diag}_2}\in\mathcal{D}'(\mathbb{R}^n\times\mathbb{R}^n)$ is the Dirac delta distribution centred on the diagonal $\mathrm{Diag}_2\subset\mathbb{R}^n\times\mathbb{R}^n$ of $\mathbb{R}^n\times\mathbb{R}^n$ -- \textit{i.e.} $\mathrm{Diag}_2:=\{(x,x)\in\mathbb{R}^n\times\mathbb{R}^n\,|\,x\in\mathbb{R}^n\}$.
		In addition it holds that
		\begin{align}\label{Eq: WF composition bound}
			\operatorname{WF}'(K_1\circ K_2)
			\subseteq\operatorname{WF}'(K_1)\circ\operatorname{WF}'(K_2)
			\cup(\operatorname{WF}_1(K_1)\times U\times\{0\})
			\cup(U\times\{0\}\times\operatorname{WF}_2(K_2))\,,
		\end{align}
		where
		\begin{align*}
			\operatorname{WF}^\prime(K)
			:=\{&(x_1,x_2,\xi_1,\xi_2)\in T^*(U\times U)\setminus\{0\}\;|\;(x_1,x_2,\xi_1,-\xi_2)\in\operatorname{WF}(K)\}\,,\\
			\operatorname{WF}'(K_1)\circ\operatorname{WF}^\prime(K_2)
			:=\{&(x_1,x_2,\xi_1,\xi_2)\in T^*(U\times U)\setminus\{0\}\;|\;\exists(z_3,\zeta_3)\in T^*V,\,\\
			&(x_1,z_3,\xi_1,\zeta_3)\in\operatorname{WF}^\prime(K_1)\,,
			(z_3,x_2,\zeta_3,\xi_2)\in\operatorname{WF}^\prime(K_2)
			\}\,.
		\end{align*}
		\item
		- \cite[Thm. 8.2.9]{Hormander-I-03}:
		Let $V\subseteq\mathbb{R}^k$ be an open subset and let $t_1\in\mathcal{D}^\prime(U)$ and $t_2\in\mathcal{D}^\prime(V)$, then
		\begin{align}\label{Eq: WF of the tensor product}
		\operatorname{WF}(t_1\otimes t_2)\subset\big(\operatorname{WF}(t_1)\times\operatorname{WF}(t_2)\big)\cup\big((\operatorname{supp}(t_1)\times\{0\})\times\operatorname{WF}(t_2)\big)\cup\big(\operatorname{WF}(t_1)\times(\operatorname{supp}(t_2)\times\{0\})\big)
		\end{align}
	\end{enumerate}
\end{theorem}

\begin{example}\label{Ex: WF of parametrix}
	As a useful application of Equation \eqref{Eq: WF bound for derivatives} we compute the wave front set of a parametrix $P$ of an elliptic operator $D$. In other words we consider $P\in\mathcal{D}'(U\times U)$ which satisfies
	\begin{align*}
		P\circledast D=DP\circledast=\delta_{\mathrm{Diag}_2}\quad\mod \mathcal{E}(U)\,,
	\end{align*}
	that is, $P\circledast Df-f,DP\circledast f-f\in \mathcal{E}(U)$.
	Since $D$ is assumed to be elliptic, the principal symbol $\sigma_D$ is nowhere vanishing. As a consequence $\operatorname{Char}(D)=\emptyset$. Equation \eqref{Eq: WF bound for derivatives} entails
	\begin{align}\label{Eq: WF of parametrix}
		\operatorname{WF}(P)
		=\operatorname{WF}(DP)
		=\operatorname{WF}(\delta_{\mathrm{Diag}_2})\,.
	\end{align}
\end{example}

\begin{remark}\label{Rmk: convolution with smooth function}
	As a particular case of Equation \eqref{Eq: WF convolution bound} consider $K\in\mathcal{D}'(U\times V)$ and $t\in\mathcal{D}(V)$.
	In this case $\operatorname{WF}(t)=\emptyset$ and therefore the condition stated in Equation \eqref{Eq: WF convolution condition} is fulfilled and $K\circledast t\in\mathcal{D}'(U)$ is well-defined.
	Moreover equation \eqref{Eq: WF convolution bound} entails that
	\begin{align}
		\operatorname{WF}(K\circledast t)
		\subseteq\operatorname{WF}_1(K)\,.
	\end{align}
\end{remark}

\section{Scaling degree}\label{App: Scaling degree}

In this appendix we recall the key tools and results related to the theory of the scaling degree of distributions.
These play a pivotal r\^ole in the main body of the section and, similarly to Appendix \ref{App: Wave Front Set kurzgesagt}, our goal is to allow a smooth reading of this work without the need of consulting continuously the literature.
In the remainder of this section we refer mainly to \cite{Brunetti-Fredenhagen-00}, in which all proofs of the following statements are present. For an extension of the range of applicability of these results, we refer also to \cite{Dang-16}.

\begin{definition}\label{Def: scaling degree}
	Let $U\subseteq \mathbb{R}^d$ be a conic open subset of $\mathbb{R}^d$ and, for any $x_0\in \mathbb{R}^d$, consider $U_{x_0}:=U+x_0$.
	For $f\in\mathcal{D}(U)$ and $\lambda>0$ let $f^\lambda_{x_0}:=\lambda^{-d}f(\lambda^{-1}(x-x_0))\in\mathcal{D}(U_{x_0})$.
	Similarly, for $t\in\mathcal{D}'(U_{x_0})$ denote as $t^\lambda_{x_0}\in\mathcal{D}'(U)$ the distribution  $t^\lambda_{x_0}(f):=t(f^\lambda_{x_0})$ for all $f\in\mathcal{D}(U)$. The scaling degree of $t$ at $x_0$ is 
	\begin{align}\label{Eq: scaling degree at x0}
	\operatorname{sd}_{x_0}(t)
	:=\inf\bigg\lbrace\omega\in\mathbb{R}\,|\,\lim_{\lambda\to 0^+}\lambda^\omega t^\lambda_{x_0}=0\bigg\rbrace\,.
	\end{align}
\end{definition}

\begin{example}\label{Ex: sd of delta on a point}
	We shall consider $\delta_{x}\in\mathcal{D}'(\mathbb{R}^d)$.
	A direct computation shows that $\delta_x^\lambda=\lambda^{-d}\delta_x$ so that $\operatorname{sd}_x(\delta_x)=d$.
\end{example}

\begin{remark}\label{Rmk: sd for submanifolds}
	Definition \ref{Def: scaling degree} can be generalized so to encompass the notion of scaling degree with respect to an embedded submanifold $N\subseteq M$ -- \textit{cf.} \cite{Brunetti-Fredenhagen-00}.
	For our purposes it suffices to consider only the case $N=\mathrm{Diag}_n\subseteq M^n$.
	Hence, let $g$ be an arbitrary Riemanniann metric on $M$ and let $U$ be a star-shaped neighbourhood of the zero section of $T_NM:=TM|_N$ and let $\alpha\colon U\to N\times M$ be the smooth map
	\begin{align*}
	\alpha(x,\xi):=(x,\exp_x(\xi))\qquad\forall (x,\xi)\in U\,,
	\end{align*}
	where $\exp_x$ is the exponential map of $M$ centered at $x$.
	Notice that $\alpha(x,0)=(x,x)$ so that $\operatorname{pr}_2(\alpha(U))$ is an open neighbourhood of $N$ in $M$, where $\operatorname{pr}_2\colon N\times M\to M$.
	Let $t\in\mathcal{D}'(\operatorname{pr}_2(\alpha(U)))$ and let $t^\alpha:=(1\otimes t)\circ\alpha_*\in\mathcal{D}'(U)$, where $\alpha_\ast\colon\mathcal{D}(U)\to\mathcal{D}(\alpha(U))$ is defined by $(\alpha_\ast f)(z):=f(\alpha^{-1}(z))$. Considering a reference top density $\mu_U$ on $TU$, if $t$ is generated by a smooth function we have
	\begin{align*}
	t^\alpha(f)=
	\int_{U} t(\exp_x\xi)f_{\mu_U}(x,\xi)\,,
	\end{align*}	
	where $f_{\mu_U}:=f\mu_U$. Similarly, for all $0<\lambda\leq 1$, we define $t^\alpha_\lambda\in\mathcal{D}'(U)$ via $t^\alpha_\lambda(f):=t^\alpha(f^\lambda)$ where $f^\lambda(x,\xi):=\lambda^{-d}f(x,\lambda^{-1}\xi)$ for all $f\in\mathcal{D}(U)$. Once more, if $t$ is generated by a smooth function, it holds that
	\begin{align*}
	t^\alpha_\lambda(f)=\int_{U} t(\exp_x\lambda\xi)f_{\mu_U}\,.
	\end{align*}	
	The scaling degree of $t$ with respect to $N$ is 
	\begin{align}
	\operatorname{sd}_N(t):=\inf\bigg\lbrace
	\omega\in\mathbb{R}\,|\,\lim_{\lambda\to 0^+}\lambda^\omega t^\alpha_\lambda=0
	\bigg\rbrace\,.
	\end{align}
%	Roughly speaking $\operatorname{sd}_N(t)$ can be understood as $\inf_{x\in N}\operatorname{sd}_x(t)$.
\end{remark}

\begin{remark}[Parabolic Case]\label{Rmk: weighted scaling degree}
	Whenever we consider a product manifold $M=\mathbb{R}\times\Sigma$ together with a hypoelliptic operator of the form $\partial_t-E$, being $E$ a $t$-independent elliptic operator on $\Sigma$, the notion of scaling degree introduced in Definition \ref{Def: scaling degree} has to be modified.
	In particular one has to rely on the notion of weighted scaling degree, the $t$-coordinate has a different scaling with respect to the other coordinates.
To consider a concrete example, if $M=\mathbb{R}\times\mathbb{R}^{d-1}$, Definition \ref{Def: scaling degree} remains the same expect for the scaling of the test function $f\in\mathcal{D}(\mathbb{R}^d)$ which is replaced with
	\begin{align*}
		f_{t,x}^\lambda(s,y):=\frac{1}{\lambda^{d+1}}f\left(\frac{s-t}{\lambda^2},\frac{z-x}{\lambda}\right)\,.
	\end{align*}
To distinguish this scenario from the standard one, we shall introduce the symbol $wsd$ referring to it as ``weighted scaling degree''. From a practical viewpoint, hopping from the elliptic to the parabolic case amounts to replacing the dimension of the underlying manifold $M$, say $d$, with $d+1$. We shall refer to the latter as ``effective dimension" of $M$. 
\end{remark}

\begin{example}\label{Ex: sd of delta on total diagonal}
	Following the same pattern on Example \ref{Ex: sd of delta on a point}, it holds that $\operatorname{sd}_{\mathrm{Diag}_n}(\delta_{\mathrm{Diag}_{n}})=(n-1)d$. On the other hand, if we consider the parabolic case, $\operatorname{wsd}_{\mathrm{Diag}_n}(\delta_{\mathrm{Diag}_{n}})=(n-1)(d+1)$.
\end{example}

\begin{example}\label{Ex: sd of parametrix}
	For any elliptic operator $E$ on a smooth manifold $M$ of dimension $\dim M=d$ it holds $\operatorname{sd}_{\mathrm{Diag}_2}P=d-2$, $P$ being any parametrix of $E$.
	Similarly, for the case of a hypoelliptic operator of the form $\partial_t-E$ on $M=\mathbb{R}\times\Sigma$ it holds $\operatorname{wsd}_{\mathrm{Diag}_2}P=d-1$, where $P$ is a parametrix of $\partial_t-E$ while $\operatorname{wsd}$ denotes the weighted scaling degree introduced in Remark \ref{Rmk: weighted scaling degree}.
\end{example}

\begin{remark}\label{Rmk: sd product estimate}
	The scaling degree is multiplicative with respect to tensor product, namely \cite[Lem. 5.1]{Brunetti-Fredenhagen-00}
	\begin{align}\label{Eq: sd of tensor product}
		\operatorname{sd}_{(x_1,x_2)}(T_1\otimes T_2)
		=\operatorname{sd}_{x_1}(T_1)
		+\operatorname{sd}_{x_2}(T_2)\,.
	\end{align}
	Moreover \cite[Lem. 6.6]{Brunetti-Fredenhagen-00} entails (in particular) that if $T_1,T_2\in\mathcal{D}'(\mathbb{R}^d)$ satisfy condition \eqref{Eq: WF product condition}, then $T_1T_2\in\mathcal{D}'(\mathbb{R}^d)$ satisfies
	\begin{align}\label{Eq: sd of product}
		\operatorname{sd}_x(T_1T_2)
		\leq\operatorname{sd}_x(T_1)
		+\operatorname{sd}_x(T_2)\,.
	\end{align}
\end{remark}

The following theorem has been proved in \cite[Thm. 5.2-5.3]{Brunetti-Fredenhagen-00} -- see also \cite[Thm 6.9]{Brunetti-Fredenhagen-00} for the corresponding result for submanifolds.

\begin{theorem}\label{Thm: extension with scaling degree}
	Let $x\in\mathbb{R}^d$ and $t\in\mathcal{D}'(\mathbb{R}^d_x)$ where $\mathbb{R}^d_x:=\mathbb{R}^d\setminus\{x\}$ and set $\rho:=\operatorname{sd}_x(t)-d$.
	Then
	\begin{enumerate}
		\item
		if $\rho<0$ then there exists a unique $\widehat{t}\in\mathcal{D}'(\mathbb{R}^d)$ such that $t\subseteq\widehat{t}$ as well as $\operatorname{sd}_x(\widehat{t})=\operatorname{sd}_x(t)$.
		\item
		if $\rho\geq 0$ then all distributions $\widehat{t}\in\mathcal{D}'(\mathbb{R}^d)$ such that $t\subseteq\widehat{t}$ as well as $\operatorname{sd}_x(\widehat{t})=\operatorname{sd}_x(t)$ are of the form
		\begin{align*}
			\widehat{t}
			=t\circ W_\rho
			+\sum_{|\alpha|\leq\rho}a_\alpha\partial^\alpha\delta_x\,,
		\end{align*}
		where $\{a_\alpha\}_\alpha\subset\mathbb{C}$ while $W_\rho\colon\mathcal{D}(\mathbb{R}^d)\to\mathcal{D}(\mathbb{R}^d)$ is defined by
		\begin{align*}
			W_\rho f
			:=f
			-\sum_{|\alpha|\leq \rho}\frac{1}{\alpha!}(\partial^\alpha f)(x)\psi_\alpha\,,
		\end{align*}
		being $\{\psi_\alpha\}_\alpha\subseteq\mathcal{D}(\mathbb{R}^d)$ any family of smooth compactly supported functions such that $\partial^\beta\psi_\alpha(x)=\delta^\beta_\alpha$.
		\item
		if $\rho=+\infty$ then there are no extensions $\widehat{t}\in\mathcal{D}'(\mathbb{R}^d)$ of $t$.
	\end{enumerate}
\end{theorem}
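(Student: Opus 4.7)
The plan is to implement the classical Steinmann--Epstein--Glaser extension procedure, following the strategy of Brunetti and Fredenhagen. The key observation is that the scaling degree of $t$ controls how singular $t$ is near $x$ and that extensions are parametrized by distributions supported at the single point $\{x\}$, which are precisely finite linear combinations of derivatives of $\delta_x$ with well-understood scaling behaviour.

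First, I would handle the case $\rho<0$ via a cut-off limit. Let $\chi\in\mathcal{D}(\mathbb{R}^d)$ satisfy $\chi\equiv 1$ near the origin and $\operatorname{supp}(\chi)\subset B_1(0)$, and set $\chi_\lambda(y):=\chi(\lambda^{-1}(y-x))$. The candidate extension is defined by
\[
\widehat{t}(f):=\lim_{\lambda\to 0^+} t\big((1-\chi_\lambda)f\big).
\]
To show the limit exists, I would rewrite $t(\chi_\lambda f)$ after a change of variables as a constant times $\lambda^d\,t^\lambda_x(h_\lambda)$, where $h_\lambda$ remains bounded in all relevant seminorms of $\mathcal{D}(\mathbb{R}^d)$ uniformly in $\lambda\in(0,1]$. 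The hypothesis $\operatorname{sd}_x(t)<d$ then lets us pick $\omega\in(\operatorname{sd}_x(t),d)$ so that $\lambda^\omega t^\lambda_x\to 0$, and the prefactor $\lambda^{d-\omega}$ kills the remaining discrepancy, yielding both convergence and continuity in $f$. A direct rescaling computation, using that cut-off subtraction is compatible with dilations up to a term that vanishes in the limit, shows $\operatorname{sd}_x(\widehat{t})=\operatorname{sd}_x(t)$.

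For $\rho\geq 0$ I would reduce to the previous case via the Taylor projector $W_\rho$. Given a system $\{\psi_\alpha\}\subset\mathcal{D}(\mathbb{R}^d)$ with $\partial^\beta\psi_\alpha(x)=\delta^\beta_\alpha$, the subtracted test function $W_\rho f$ vanishes to order $\lfloor\rho\rfloor+1$ at $x$, so its dilations $(W_\rho f)^\lambda_x$ pick up a factor $\lambda^{\lfloor\rho\rfloor+1}$ relative to $f^\lambda_x$. Consequently the map $f\mapsto t(W_\rho f)$ is effectively a distribution of scaling degree at most $\operatorname{sd}_x(t)-(\lfloor\rho\rfloor+1)<d$ at $x$, and the previous step produces one extension $\widehat{t}_\rho:=t\circ W_\rho$. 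The classification rests on the fact that any two extensions $\widehat{t}_1,\widehat{t}_2$ with equal scaling degree differ by a distribution supported at $\{x\}$, which by the standard structure theorem is a finite sum $\sum_\alpha a_\alpha\partial^\alpha\delta_x$. Since $\operatorname{sd}_x(\partial^\alpha\delta_x)=d+|\alpha|$ by Example \ref{Ex: sd of delta on a point} and Equation \eqref{Eq: WF bound for derivatives}, the scaling-degree constraint forces $|\alpha|\leq\rho$ --- vacuous when $\rho<0$ (giving uniqueness) and yielding the announced description when $\rho\geq 0$.

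The main obstacle lies in the reduction step for $\rho\geq 0$: controlling the scaling degree of $t\circ W_\rho$ requires a careful bookkeeping of the Taylor remainder under dilation, since $W_\rho$ is not homogeneous but only becomes so after subtracting the $\lambda$-rescaled derivatives $\partial^\alpha f(x)$ times a $\lambda$-dependent profile; one must verify that the residual terms carry enough positive powers of $\lambda$ to decay. A related technical point is the continuity of the cut-off limit in $f$, which ultimately reduces to equicontinuity bounds on $\{(1-\chi_\lambda)f\}_\lambda$ in the inductive-limit topology of $\mathcal{D}(\mathbb{R}^d)$. Finally, the case $\rho=+\infty$ is handled by contradiction: if a scaling-degree-preserving extension $\widehat{t}$ existed, it would be a distribution on a neighbourhood of $x$ whose local scaling degree is necessarily finite by standard seminorm estimates, contradicting $\operatorname{sd}_x(t)=+\infty$.
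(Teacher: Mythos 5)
Your proposal is correct and follows essentially the same route as the paper, which does not prove the theorem itself but quotes it from \cite[Thms.~5.2--5.3]{Brunetti-Fredenhagen-00}: the cited proof uses exactly your cut-off limit (with Banach--Steinhaus/equicontinuity) for $\rho<0$, the Taylor-subtraction operator $W_\rho$ to reduce $\rho\geq 0$ to the previous case, the support/structure theorem plus $\operatorname{sd}_x(\partial^\alpha\delta_x)=d+|\alpha|$ for the classification and uniqueness, and the finite-local-order argument for $\rho=+\infty$. The technical points you single out (continuity of the cut-off limit and the non-homogeneous scaling of $W_\rho$ under dilations) are precisely where the cited proof does its work, so your sketch identifies the right obstacles.
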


\begin{remark}[Parabolic Case]
In the parabolic case, given $t\in\mathcal{D}^\prime(M)$, its \emph{weighted degree of divergenge} at $x\in M$, $\operatorname{w\rho}(t)$, is defined as $\operatorname{w\rho}(t)=\operatorname{wsd}_x(t)-(d+1)$. 
\end{remark}
\begin{remark}\label{Rmk: extension with sd for submanifolds}
	The results of Theorem \ref{Thm: extension with scaling degree} can be generalized to the problem of extending a given distribution $t\in\mathcal{D}'(M_N)$ -- where $N$ is a smooth submanifold of $M$ and $M_N:=M\setminus N$ -- to a distribution $\widehat{t}\in\mathcal{D}'(M)$ such that $\operatorname{sd}_N(\widehat{t})=\operatorname{sd}_N(t)$ -- \textit{cf.} remark \ref{Rmk: sd for submanifolds}.
	In this latter case the parameter $\rho$ appearing in Theorem \ref{Thm: extension with scaling degree} has to be replaced with
	\begin{align*}
		\rho:=\operatorname{sd}_N(t)-\operatorname{codim}(N)\,,
	\end{align*}
	which, in the particular case of $M=Z^n$, $N=\operatorname{Diag}_n$ reduces to $\operatorname{\operatorname{sd}_{\mathrm{Diag}_n}}(t)-(n-1)\dim Z$.
	For all details we refer to \cite[Thm. 6.9]{Brunetti-Fredenhagen-00}.
\end{remark}

We highlight a few results on the scaling degree of the convolution $K\circledast t$ and of the composition $K_1\circ K_2$ for suitable $K,K_1,K_2\in\mathcal{D}'(\mathbb{R}^d\times \mathbb{R}^d)$ and $t\in\mathcal{D}'(\mathbb{R}^d)$.

\begin{lemma}\label{Lemma: finite sd of convolution}
Let $K\in\mathcal{D}^\prime(\mathbb{R}^d\times\mathbb{R}^d)$ and $t_\ell\in\mathcal{D}^\prime(\underbrace{\mathbb{R}^d\times\ldots\times\mathbb{R}^d}_{\ell-\mathrm{times}})=\mathcal{D}^\prime((\mathbb{R}^d)^\ell)$ be such that
\begin{align*}
\mathrm{WF}_2^\prime(K)\cap\mathrm{WF}_1(t_\ell)=\emptyset\,,\qquad\mathcal{K}\vcentcolon=\operatorname{pr}_2(\mathrm{supp}(K))\cap\operatorname{pr}_1(\mathrm{supp}(t_\ell))\;\mathrm{is\: compact\:in\:\mathbb{R}^d}\,,
\end{align*}
where we employ the same notation of item \textit{4}. of Theorem \ref{Thm: WF results}. Moreover, let $\mathrm{sd}_{\mathrm{Diag}_2}(K)<\infty$, $\mathrm{sd}_{\mathrm{Diag}_{\ell}}(t_\ell)<\infty$ and, defining
\begin{align}\label{Eq: definition of T}
T\vcentcolon=(K\otimes t_\ell)\cdot(1\otimes\delta_{\mathrm{Diag}_2}\otimes1_{\ell-1})\in\mathcal{D}^\prime((\mathbb{R}^d)^{\ell+2})\,,
\end{align}
let
\begin{align}\label{Eq: good WF condition}
	\mathrm{WF}(T)\cap\{(x,y,z,\hat{x};\xi,0,0,\hat{\eta}_\ell)\in T^*(\mathbb{R}^{d})^{\ell+2}\setminus\{0\}
	\,|\,x\neq y\,,\,x\neq z\}=\emptyset\,,
\end{align}
where $\hat{x}=\underbrace{(x,\ldots,x)}_{\ell-\mathrm{times}}$ and similarly $\hat{\eta}_\ell=(\eta_1,\ldots,\eta_\ell)$. Then 
\begin{align}\label{Eq: sd is finite}
\mathrm{sd}_{\mathrm{Diag}_{k}}(K\circledast t_\ell)<+\infty\,.
\end{align}
\end{lemma}
\begin{proof}
First of all, on account of Theorem \ref{Thm: WF results}, we notice that $T$ as in Equation \eqref{Eq: definition of T} identifies an element of $\mathcal{D}^\prime((\mathbb{R}^d)^{\ell+2})$. Moreover, for any $f,h\in\mathcal{D}(\mathbb{R}^d)$,
\begin{align*}
(K\circledast t_\ell)(f\otimes h^{\otimes\ell})=T(f\otimes1_2\otimes h^{\otimes\ell})\,.
\end{align*}
Let $\chi\in\mathcal{D}(\mathbb{R}^d)$ and $\lambda\in(0,1)$, then
\begin{align*}
\lambda^\omega(K\circledast t_\ell)^\lambda_\alpha(\chi\otimes f\otimes h^{\otimes\ell})&=\lambda^\omega\int_{\mathbb{R}^d}T[f_x^\lambda\otimes1_2\otimes (h^{\otimes\ell})^\lambda_{\hat{x}}]\chi(x)\,dx\\
&=\lambda^{\omega+2d}\int_{\mathbb{R}^d}T[f_x^\lambda\otimes (g^{\otimes2})_{(x,x)}^\lambda\otimes (h^{\otimes\ell})^\lambda_{\hat{x}}]\chi(x)\,dx\\
&+\lambda^\omega\int_{\mathbb{R}^d}T[f_x^\lambda\otimes(1_2-\lambda^{2d}(g^{\otimes2})_{(x,x)}^\lambda)\otimes (h^{\otimes\ell})^\lambda_{\hat{x}}]\chi(x)\,dx\,,
\end{align*}
where $g\in\mathcal{D}(\mathbb{R}^d)$ is such that $\mathrm{supp}(g)\subset B(0,1)$ and $g|_{B(0,1/2)}=1$. After a few direct, algebraic manipulations, we can rewrite the above expression as
\begin{align}\label{Eq: appearence of lambda_0}
\nonumber
\lambda^\omega(K\circledast t_\ell)^\lambda_\alpha(\chi\otimes f\otimes h^{\otimes\ell})&=\underbrace{\lambda^{\omega+2d}T^\lambda_\alpha(\chi\otimes f\otimes g^{\otimes2}\otimes h^{\otimes\ell})}_A\\
\nonumber
&+\underbrace{\lambda^\omega\int_{\mathbb{R}^d}T[f_x^\lambda\otimes(1_2-\lambda_0^{2d}(g^{\otimes2})_{(x,x)}^{\lambda_0})\otimes (h^{\otimes\ell})^\lambda_{\hat{x}}]\chi(x)\,dx}_B\\
&+\underbrace{\lambda^\omega\int_{\mathbb{R}^d}\chi(x)\,dx\int_\lambda^{\lambda_0}T[f_x^\lambda\otimes(Dg^{\otimes2})_{(x,x)}^\mu)\otimes (h^{\otimes\ell})^\lambda_{\hat{x}}]\mu^{2d-1}\,d\mu}_C\,,
\end{align}
where $\lambda_0\in(\lambda,1)$ will be fixed later in the proof, while $(Dg)(z) := (z\cdot\nabla g)(z)$. \\
Focusing on $A$, it holds $A\to0$ for $\lambda\to0^+$ for any $\omega>\mathrm{sd}_{\mathrm{Diag}_{\ell+2}}(T)-2d<\infty$, where in the last inequality we exploited the bound $\mathrm{sd}_{\mathrm{Diag}_{\ell+2}}(T)\leq\mathrm{sd}_{\mathrm{Diag}_{2}}(K)+\mathrm{sd}_{\mathrm{Diag}_{\ell}}(t_\ell)+d$ (\textit{cf}. Remark \ref{Rmk: sd product estimate} and Equation \eqref{Eq: definition of T}).\\
Considering now $B$, it descends that
\begin{align*}
\lim_{\lambda\to0^+}\int_{\mathbb{R}^d}T[f_x^\lambda\otimes(1_2-\lambda_0^{2d}(g^{\otimes2})_{(x,x)}^\lambda)\otimes (h^{\otimes\ell})^\lambda_{\hat{x}}]\chi(x)\,dx=\int_{\mathbb{R}^d}T_{(x,\hat{x})}[1_2-\lambda_0^{2d}(g^{\otimes2})_{(x,x)}^{\lambda_0}]\chi(x)\,dx\,,
\end{align*}
where $T_{(x,\hat{x})}\in\mathcal{E}^\prime(\mathbb{R}^d\times\mathbb{R}^d\setminus\{(x,\hat{x})\})$ is such that, for any $w\in\mathcal{E}(\mathbb{R}^d\times\mathbb{R}^d\setminus\{(x,\hat{x})\})$, $T_{(x,\hat{x})}(w)\vcentcolon=T(\delta_x\otimes w\otimes\delta_x^{\otimes\ell})$. Well-definiteness of $T_{(x,\hat{x})}$ as a distribution is guaranteed by the hypothesis of Equation \eqref{Eq: good WF condition} together with Theorem \ref{Thm: WF results}. As a consequence, we see that, for any value of $\lambda_0$,  $B\to0$ for $\lambda\to0^+$ whenever $\omega>0$.\\
Eventually, we focus on $C$. As a starting point, we recall that, if $\omega>\mathrm{sd}_{\mathrm{Diag}_{\ell+2}}(T)$, then for any compact set $\mathfrak{K}\subset(\mathbb{R}^d)^{\ell+3}$ there exists a polynomial function $P$ of degree $p\in\mathbb{N}$ such that
\begin{align}\label{Eq: bound for sd with polynomial}
|\lambda^\omega T^\lambda_\alpha(\chi\otimes f)|\leq\sup_{\mathfrak{K}}|P(\partial)(\chi\otimes f)|\,,
\end{align}
for any $\chi\otimes f\in\mathcal{D}((\mathbb{R}^d)^{\ell+3})$ such that $\mathrm{supp}(\chi\otimes f)\subset\mathfrak{K}$. Focusing on $C$, it holds that
\begin{align*}
C=\lambda^\omega\int_\lambda^{\lambda_0}T^\mu_\alpha(\chi\otimes f^{\frac{\lambda}{\mu}}\otimes(Dg)^{\otimes2}\otimes(h^{\otimes\ell})^{\frac{\lambda}{\mu}})\mu^{2d-1}d\mu\,.
\end{align*}
Since $0<\lambda/\mu<1$, we have
\begin{align}\label{Eq: uniformly compact support}
\mathrm{supp}(\chi\otimes f^{\frac{\lambda}{\mu}}\otimes(Dg)^{\otimes2}\otimes(h^{\otimes\ell})^{\frac{\lambda}{\mu}})\subset\mathbf{K}\subset(\mathbb{R}^d)^{\ell+3}\,,
\end{align}
with $\mathbf{K}$ compact. To conclude the proof, it suffices to prove that the compact set $\mathbf{K}$ is independent of $\chi$. As a matter of fact, if this is the case, then we can apply uniformly the bound of Equation \eqref{Eq: bound for sd with polynomial}. This yields that, for any $\varepsilon>0$ and uniformly in $\lambda$,
\begin{align*}
|C|\leq\lambda^\omega\int^{\lambda_0}_\lambda\mu^{-\mathrm{sd}_{\mathrm{Diag}_{\ell+2}}(T)-\varepsilon}\mu^{2d+p}\lambda^{-2d-p}\mu^{2d-1}\,d\mu=\mathcal{O}(\lambda^{\omega-p-2d})+\mathcal{O}(\lambda^{\omega-\mathrm{sd}_{\mathrm{Diag}_{\ell+2}}(T)-\varepsilon+2d})\,.
\end{align*}
It follows that $C\to0$ for $\lambda\to0^+$ and for $\omega>\max\{2d+p,\mathrm{sd}_{\mathrm{Diag}_{\ell+2}}(T)-2d\}$. As a consequence, recollecting the estimates for $A$, $B$ and $C$, we conclude that 
\begin{align*}
\lim_{\lambda\to0^+}\lambda^\omega(K\circledast t_\ell)^\lambda_\alpha(\chi\otimes f\otimes h^{\otimes\ell})=0\,,\qquad\forall\,\omega>\max\{2d+p,\mathrm{sd}_{\mathrm{Diag}_{\ell+2}}(T)-2d\}<\infty\,,
\end{align*}
\emph{i.e.}, $\mathrm{sd}(K\circledast t_\ell)<\infty$.

To conclude, we focus on the compact set $\mathbf{K}$ introduced in Equation \eqref{Eq: uniformly compact support} and we show independence from choice of $\chi$. This is a consequence of the hypothesis of $\mathcal{K}\vcentcolon=\operatorname{pr}_2(\mathrm{supp}(K))\cap\operatorname{pr}_1(\mathrm{supp}(t_\ell))$ being compact, since the only non vanishing contribution to $C$ comes from the portion of $\mathrm{supp}(\chi)$ contained in $\mathcal{K}$. More precisely, consider $C$ at the level of integral kernels
\begin{align*}
C=\lambda^\omega\int_{\lambda}^{\lambda_0}\int_{(\mathbb{R}^d)^{\ell+3}}K(x+\lambda z,y)t_{\ell}(y, \hat{x}+\lambda\hat{z}_\ell)f(z)h^{\otimes\ell}(\hat{z}_\ell)(Dg)^{\otimes2}\big(\frac{y-x}{\mu}\big)\chi(x)\mu^{2d-1}\,dx\,dy\,dz\,d\hat{z}_\ell\,d\mu\,.
\end{align*}
We recall that $Dg$ is supported in an annulus whose external radius is $\mu$. If we consider the case $\mathrm{supp}(\chi)\cap\mathcal{K}=\emptyset$, then $d(\mathrm{supp}(\chi),\mathcal{K})\vcentcolon=R>0$, where $d(\mathrm{supp}(\chi),\mathcal{K})$ denotes the Euclidean distance between the compact sets. As a consequence, it suffices to choose $\lambda_0$ small enough in Equation \eqref{Eq: appearence of lambda_0} so that the above contribution vanishes for any $\mu\in(\lambda,\lambda_0)$. This concludes the proof. 
\end{proof}

The above result has been formulated for the $\circledast$ operation between distributions. Nonetheless, it can be derived also for the composition $\circ$ of distributions introduced in item \textit{4}. of Theorem \ref{Thm: WF results}.

\begin{corollary}\label{Cor: sd of composition}
	Let $K_1, K_2\in\mathcal{D}'(\mathbb{R}^d\times\mathbb{R}^d)$ be such that:
	\begin{enumerate}[(i)]
		\item
		$\mathrm{sd}_{\mathrm{Diag}_2}(K_1)<\infty$, $\mathrm{sd}_{\mathrm{Diag}_2}(K_2)<\infty$;
		\item
		$\operatorname{WF}^\prime_2(K_1)\cap\operatorname{WF}_1(K_2)=\emptyset$;
		\item
		$\operatorname{pr}_2(\mathrm{supp}(K_1))\cap\operatorname{pr}_1(\mathrm{supp}(K_2))$ is compact;
		\item $\mathrm{WF}(T)\cap\{(x,y,z,x;\xi,0,0,\eta)\in T^*\mathbb(\mathbb{R}^{d}\times\mathbb{R}^d)^2\setminus\{0\}\,|\,x\neq y\,,\,x\neq z\,\}=\emptyset\,,$ with $T\vcentcolon=(K_1\otimes K_2)\cdot(1\otimes\delta_{\mathrm{Diag}_2}\otimes1)$
	\end{enumerate}
	Then $K_1\circ K_2\in\mathcal{D}'(\mathbb{R}^d\times\mathbb{R}^d)$ and we have
	\begin{align}\label{Eq: sd of composition}
		\operatorname{sd}_{\mathrm{Diag}_2}(K_1\circ K_2)<\infty\,.
	\end{align}
\end{corollary}
\begin{proof}
	Well-definiteness of $K_1\circ K_2$ is a consequence of item \textit{4}. of Theorem \ref{Thm: WF results}.
	To prove the estimate \eqref{Eq: sd of composition} it is enough to observe that, for any $f_1,f_2\in\mathcal{D}(\mathbb{R}^d)$,
	\begin{align*}
		(K_1\circ K_2)(f_1\otimes f_2):=[(K_1\otimes K_2)\cdot(1_d\otimes\delta_{\mathrm{Diag}_2}\otimes 1_d)](f_1\otimes 1_{2d}\otimes f_2)\,.
	\end{align*}
	The statement descends applying Lemma \ref{Lemma: finite sd of convolution} together with the identifications
	\begin{align*}
		\ell=2\,,\qquad
		T=(K_1\otimes K_2)\cdot(1_d\otimes\delta_{\mathrm{Diag}_2}\otimes 1_d)\,.
	\end{align*}
\end{proof}

\begin{remark}[Parabolic Case]
		The results of Theorem \ref{Thm: extension with scaling degree} and Lemma \ref{Lemma: finite sd of convolution} remain valid with minor modifications also in this scenario. Notice that, on account of the \emph{weighted scaling}, the effective dimension of $M=\mathbb{R}\times\Sigma$, with $\dim(\Sigma)=d-1$, is $d+1$.
	A systematic analysis of the relation between the notion of weighted scaling degree and the standard one will be investigated elsewhere.
\end{remark}

%\begin{remark}
%	With minor adjustments lemma \ref{Cor: sd of convolution} can be extended to the case of scaling degree with respect to submanifolds -- \textit{cf.} remark \ref{Rmk: sd for submanifolds}.
%	We shall state the result without proof.
%	Let $K\in\mathcal{D}'(M^{k+\ell})$ and $t\in\mathcal{E}'(M^\ell)$ be such that $\operatorname{WF}(K)\subseteq\operatorname{WF}(\delta_{\mathrm{Diag}_{k+\ell}})$.
%	Then $K\circledast t\in\mathcal{D}'(M^k)$ is well-defined and
%	\begin{align}\label{Eq: sd convolution for submanifolds}
%		\operatorname{sd}_{\mathrm{Diag}_k}(K\circledast t)
%		\leq\max\{0\,,\,
%		\operatorname{sd}_{\mathrm{Diag}_{k+\ell}}(K)
%		+\operatorname{sd}_{\mathrm{Diag}_\ell}(t)
%		-kd\}
%	\end{align}
%\end{remark}

\end{document}